\numberwithin{equation}{section}
\DeclareFontFamily{OT1}{pzc}{}
\DeclareFontShape{OT1}{pzc}{m}{it}{<-> s * [1.10] pzcmi7t}{}
\DeclareMathAlphabet{\mathpzc}{OT1}{pzc}{m}{it}
\newcommand{\Or}{\mathcal{O}}
\newcommand{\Ai}{\mathrm{Ai}}
\newcommand{\const}{\mathrm{const}}
\newcommand{\Pb}{\mathbbm{P}}
\newcommand{\Id}{\mathbbm{1}}
\newcommand{\Tr}{\mathrm{Tr}}
\newcommand{\e}{\varepsilon}
\newcommand{\D}{\mathrm{d}}
\newcommand{\C}{\mathbb{C}}
\newcommand{\R}{\mathbb{R}}
\newcommand{\Z}{\mathbb{Z}}
\renewcommand{\Re}{\mathrm{Re}}
\renewcommand{\Im}{\mathrm{Im}}
\renewcommand{\cal}{\mathcal}
\DeclareMathOperator{\sgn}{sgn}
\DeclareMathOperator\supp{supp}
\newtheorem{prop}{Proposition}[chapter]
\newtheorem{thm}[prop]{Theorem}
\newtheorem{lem}[prop]{Lemma}
\newtheorem{defin}[prop]{Definition}
\newtheorem{cor}[prop]{Corollary}
\newtheorem{rem}[prop]{Remark}
\newenvironment{remark}{\begin{rem}\normalfont}{\end{rem}}
\renewcommand{\textrm}{\textnormal}
\newcommand{\hf}{\textnormal{half-flat}}
\newcommand{\hst}{\textnormal{half-stat}}
\newcommand{\stf}{\textnormal{stat-flat}}
\newcommand{\shf}{\textnormal{\,half-flat}}
\newcommand{\shst}{\textnormal{\,half-stat}}
\DeclareRobustCommand{\gobblefive}[5]{}
\newcommand*{\SkipTocEntry}{\addtocontents{toc}{\gobblefive}}
\newlength{\drop}
\newcommand*{\titleGM}{\thispagestyle{empty}
\begingroup
\drop = 0.1\textheight
\vspace*{-2\baselineskip}
\hbox{%
\hspace*{0\textwidth}%
\rule{1pt}{\textheight}
\hspace*{0.05\textwidth}%
\parbox[b]{1\textwidth}{
\vbox{%
{\noindent\HUGE\bfseries Reflected Brownian motions\\[0.5\baselineskip]
in the KPZ universality class}\\[2\baselineskip]
\\[4\baselineskip]
{\Large Thomas Weiss\\[0.3\baselineskip]
Patrik Ferrari\\[0.3\baselineskip]
Herbert Spohn}\par
\vspace{0.45\textheight}
{\noindent September 2016}\\[\baselineskip]
}
}
}
\null
\endgroup}
\begin{document}
\frontmatter

\titleGM

\SkipTocEntry\chapter{Abstract}

This book presents a detailed study of a system of interacting Brownian motions in one dimension. The interaction is point-like such that the $n$-th Brownian motion is reflected from the Brownian motion with label $n-1$. This model belongs to the Kardar-Parisi-Zhang (KPZ) universality class. In fact, because of the singular interaction, many universal properties can be established with rigor. They depend on the choice of initial conditions. Discussion addresses packed and periodic initial conditions (Chapter~\ref{secPP}), stationary initial conditions (Chapter~\ref{secPoi}), and mixtures thereof (Chapter~\ref{secMixed}). The suitably scaled spatial process will be proven to converge to an Airy process in the long time limit. A chapter on determinantal random fields and another one on Airy processes are added to have the notes self-contained. This book serves as an introduction to the KPZ universality class, illustrating the main concepts by means of a single model only. It will be of interest to readers from interacting diffusion processes and non-equilibrium statistical mechanics.

\clearpage
\SkipTocEntry\tableofcontents

\mainmatter

\chapter{Introduction}

Back in 1931 Hans Bethe diagonalized the hamiltonian of the one-dimensional Heisenberg  spin chain through what is now called the
``Bethe ansatz''~\cite{Bet31}. At that time physicists were busy with other important developments  and hardly realized the monumental step:
for the first time a strongly interacting many-body system had been ``solved exactly''.
In the 1960ies Lieb~\cite{LL63}, Yang~\cite{Yang67}, and many more ~\cite{Sut04} discovered  that other quantum systems can be handled via Bethe ansatz, which triggered a research area known as
quantum integrability. More details on the history of the Bethe ansatz can be found in~\cite{Batch}. Even with the  Bethe ansatz at one's disposal,  it is a highly non-trivial task to arrive at predictions of physical interest. This is why  efforts in quantum integrability continue even today, reenforced by the experimental realization of such chains  through an array of cold atoms~\cite{Sim11}.

On a mathematical level, quantum hamiltonians and generators of Markov processes have a comparable structure. Thus one could
imagine that the Bethe ansatz is equally useful for interacting stochastic systems with many particles. The first indication came somewhat indirectly from the Kardar-Parisi-Zhang equation~\cite{KPZ86}, for short KPZ, in one dimension. We refer to books~\cite{BaSt95,Me98}, lecture notes~\cite{Jo05,Sp06,QRev,BoGo12,BoPe14,Sp16a}, and survey
articles~\cite{HH95, Kr97, SaSp11a,FS11,CorwinRev,Ta15,QuSp15,HaKa15}.
The KPZ equation is a stochastic PDE governing the time-evolution
of a height function $h(x,t)$ at spatial point $x$ and time $t$, $h  \in \mathbb{R}$, $x \in \mathbb{R}$, $t \geq 0$. The equation reads
\begin{equation}\label{1}
\partial_t h =  \tfrac{1}{2}(\partial_x h)^2 + \tfrac{1}{2}\partial_x^2 h + W
\end{equation}
with $W(x,t)$ normalized space-time white noise. We use here units of height, space, and time such that all coupling parameters take a definite value. For a solution of \eqref{1}, the function $x\mapsto h(x,t)$ is locally like a Brownian motion, which is too singular for the nonlinearity to be well-defined as written. This difficulty was resolved through the regularity structures of Hairer~\cite{Hai11}, see also~\cite{GP15}
for a somewhat different approach. Kardar~\cite{Kar87} noted one link to quantum integrability.
He considered the moments of $\mathrm{e}^h$ and established that they are related to the $\delta$-Bose gas with attractive
interactions, which is an integrable quantum many-body system~\cite{LL63}. More precisely, one defines
\begin{equation}\label{2}
\mathbb{E}\Big( \prod_{\alpha=1}^n \mathrm{e}^{h(y_\alpha,t)}\Big) = f_t(\vec{y})
\end{equation}
with $\vec{y} = (y_1,...,y_n)$. Then
\begin{equation}\label{3}
\partial_t f_t = -H_n f_t,
\end{equation}
where $H_n$ is the $n$-particle Lieb-Liniger hamiltonian
\begin{equation}\label{4}
H_n =  -\tfrac{1}{2} \sum_{\alpha=1}^n \partial_{y_\alpha}^2  - \tfrac{1}{2} \sum_{\alpha \neq \alpha' = 1}^n\delta(y_\alpha - y_{\alpha'}).
\end{equation}

Almost thirty years later the generator of the asymmetric simple exclusion process (ASEP) was diagonalized through the Bethe ansatz. In case of $N$ sites, the ASEP configuration  space is $\{0,1\}^N$ signalling a similarity with quantum spin chains. In fact, the ASEP generator can be viewed as the
Heisenberg chain with an imaginary XY-coupling. For the totally asymmetric limit (TASEP) and half filled lattice, Gwa and Spohn~\cite{GS92} established that the spectral gap of the generator is of order $N^{-3/2}$. The same order is argued for the KPZ equation. This led to the strong belief that, despite their very different set-up, both models have the same statistical properties on large space-time scales. In the usual jargon of statistical mechanics, both
models are expected to  belong to the same universality class, baptized  KPZ universality class according to its most prominent representative.

The KPZ equation is solved with particular initial conditions. Of interest are (i) \textit{sharp wedge}, $h(x,0)= -c_0|x|$ in the limit $c_0 \to \infty$,
(ii) \textit{flat}, $h(x,0) = 0$, and (iii) \textit{stationary}, $h(x,0) = B(x)$ with $B(x)$ two sided Brownian motion.
The quantity of prime interest is the distribution of $h(0,t)$ for large $t$.
More ambitiously, but still feasible in some models, is the large time limit of  the joint distribution of $\{h(x_\alpha,t), \alpha = 1,...,n\}$.

In our notes we consider an integrable system of interacting diffusions, which is governed by
the coupled stochastic differential equations
\begin{equation}\label{5}
\mathrm{d}x_j(t) = \beta\mathrm{e}^{-\beta (x_j(t) - x_{j-1}(t))}\mathrm{d}t + \mathrm{d}B_j(t),
\end{equation}
where $\{B_j(t)\}$ a collection of independent standard Brownian motions. For the parameter $\beta >0$ we will eventually consider only the limit $\beta \to \infty$. But for the purpose of our discussion we keep $\beta$ finite for a while.
In fact  there is no choice, no other system of this structure is known to be integrable.
The index set depends on the problem, mostly we choose $j \in \mathbb{Z}$.  Note that $x_j$ interacts only with its left index neighbor $x_{j-1}$.  The drift depends  on the slope, as it should be for a proper height function. But the exponential dependence on $x_j - x_{j-1}$ is very special, however
familiar from other integrable systems. The famous Toda chain~\cite{Tod67} is a classical integrable system with exponential nearest neighbour interaction. Its quantized version is also integrable~\cite{Sut78}.

Interaction with only the left neighbor corresponds to the total asymmetric version. Partial asymmetry would read
\begin{equation}\label{5z}
\mathrm{d}x_j(t) = \big(p\beta\mathrm{e}^{-\beta (x_j - x_{j-1})} - (1-p)\beta\mathrm{e}^{-\beta (x_{j+1} - x_{j})}\big)\mathrm{d}t
 + \mathrm{d}B_j(t)
\end{equation}
with $0 \leq p \leq 1$. These are non-reversible diffusion processes. Only in the symmetric case, $p =\tfrac{1}{2}$, the drift is the gradient of a potential and the diffusion process is reversible. Then the model is no longer in the KPZ universality class and has very distinct large scale properties, see ~\cite{GPV88,CY92}, for example.

Eqs. \eqref{5} and \eqref{5z} should be viewed as a discretization of \eqref{1}. The independent Brownian motions  are the natural spatial discretization of the white noise $W(x,t)$. For the drift one might have expected the form $(x_j - x_{j-1})^2 + x_{j+1} - 2 x_j +x_{j-1}$,  but integrability  forces another dependence on the slope. For $p=1$ the similarity with the KPZ equation is even stronger when considering the exponential moments
\begin{equation}\label{5a}
\mathbb{E}\Big( \prod_{\alpha=1}^n \mathrm{e}^{\beta x_{m_\alpha}(t)}\Big) = f_t(\vec{m}),
\end{equation}
$\vec{m} \in \mathbb{Z}^n$. Differentiating in $t$ one obtains
\begin{equation}\label{5b}
\beta^{-2} \frac{\mathrm{d}}{\mathrm{d}t} f_t(\vec{m})= \sum_{\alpha = 1}^n \partial_\alpha f_t(\vec{m}) +  \tfrac{1}{2}\sum_{\alpha, \alpha' =1}^n
\delta (m_{\alpha} - m_{\alpha'})f_t(\vec{m}) - n f_t(\vec{m}),
\end{equation}
where $\delta$ is the Kronecker delta and $\partial_\alpha f(\vec{m}) = f(...,m_\alpha,...) -  f(...,m_\alpha - 1,...)$. When comparing with
\eqref{4}, instead of $\partial_\alpha$ one could have guessed the discrete Laplacian $\Delta_\alpha = -\partial_\alpha^{\mathrm{T}}\partial_\alpha$ with ${}^\mathrm{T}$ denoting the transpose. To obtain such a result, the drift in \eqref{5} would have to be replaced by
$\beta\mathrm{e}^{-\beta (x_j - x_{j-1})} + \beta\mathrm{e}^{-\beta (x_j - x_{j+1})}$. But the linear equations for the exponential moments are no longer Bethe integrable.
Note however that the semigroups $\exp[\partial_\alpha t]$ and $\exp[\Delta_\alpha t]$ differ on a large scale only by a uniform translation proportional to $t$. With such a close correspondence one would expect
system \eqref{5} to be in the KPZ universality class for any $\beta >0$, as has been verified to some extent~\cite{ACQ10,SS10b,BCFV14}.
Also partial asymmetry, $\tfrac{1}{2} <p <1$, should be in the KPZ universality class. In fact, the true claim is
by many orders more sweeping: the exponential in \eqref{5z}, $p \neq \tfrac{1}{2}$, can be replaced by ``any'' function of $x_j - x_{j-1}$, except for the linear one, and the system is still in the KPZ universality class. There does not seem to be a  promising idea around of how to prove such a property. Current techniques  heavily rely on integrability.

This is a good opportunity to reflect another difficulty. Bethe ansatz is like a first indication. But an interesting asymptotic analysis is yet another
huge step. This is well illustrated by the KPZ equation. Solving the  $n$-particle  equations \eqref{2} yields the exponential moment $\mathbb{E}\big(\mathrm{e}^{nh(0,t)}\big)$. However these moments grow rapidly as
$\exp(n^3)$, much too fast to determine the distribution of $h(0,t)$. Because of the underlying lattice, for system \eqref{5}  the exponential moments grow only as $\exp(n^2)$, still too fast.
In replica computations one nevertheless continues formally, often with correct results~\cite{CDR10, IS13, Dot10}. A proof must exploit integrability, but cannot use exponential moments directly~\cite{BC11}.

The system \eqref{5} simplifies substantially in the limit $\beta \to \infty$. Then one arrives at interacting Brownian motions, where
the Brownian motions maintain their ordering and
Brownian motion with label $j$ is reflected from its left neighboring Brownian motion with label  $j-1$, see~\cite{SS15}, Appendix B. These are the reflected Brownian motions of the title. The proper definition of their dynamics requires martingales involving local time, as will be discussed in Chapter
~\ref{secRBMaR}. Our note discusses exclusively this limit case. Thereby we arrive at a wealth of results on universal statistical properties. Only for the TASEP a comparably  detailed
analysis has been carried out ~\cite{FS11}, which does not come as a surprise, since in the limit of low density, under diffusive rescaling of space-time and switching to a moving frame of reference, the TASEP converges to system  \eqref{5}~\cite{KPS12}. We will not exploit this limit. Our philosophy
is to work in a framework which uses only interacting diffusions.

The limit $\beta \to \infty$ is meaningful also for $p \neq 0,1$. Then the order of Brownian particles is still preserved, but the reflection between neighbors is oblique. The symmetric version, $p = \tfrac{1}{2}$, corresponds to independent Brownian motions, maintaining their order, a case which has been studied quite some time ago~\cite{Har65}.
The partially asymmetric version of the model is still Bethe integrable, but less tractable. Only for the half-Poisson initial condition, an expression  sufficiently compact for asymptotic analysis has been obtained~\cite{SS15}.

The notion of integrability was left on purpose somewhat vague. In the $\beta = \infty$ limit for \eqref{5}, integrability can be more concretely illustrated. For this case, let us set $j = 2,...,n$ with $x_1(t)$ a standard Brownian motion. Then the transition probability from $\vec{x}$ to $\vec{y}$ at time $t$ is given by
\begin{equation}\label{6}
\mathbb{P}\big( \vec{x}(t) \in \mathrm{d}\vec{y} \big| \vec{x}(0) = \vec{x}\,\big) = \det \big{\{} \Phi_t^{(i-j)}(y_j - x_i)\big{\}}_{1 \leq i,j \leq n}\,\mathrm{d}\vec{y},
\end{equation}
where
\begin{equation}\label{7}
 \Phi_t^{(j)}(\xi,{t})=\frac{1}{2\pi\mathrm{i}}\int_{\mathrm{i}\mathbb{R}+\delta}\mathrm{d} w\,e^{{t} w^2/2+\xi w}w^{-j}
\end{equation}
with $\delta >0$ as first established by Sasamoto and Wadati~\cite{SW98}. There is a similar formula for the TASEP~\cite{Sch97}. Such formuli nourish the hope to uncover interesting features of the model.

The three initial conditions of particular interest, wedge, flat, and stationary, are easily transcribed to system \eqref{5} and become (i) \textit{packed}, half-infinite system with $x_j(0) = 0$ for
$j = 1,2,...$, (ii) \textit{periodic}, $x_j(0) = j$ for $j\in\mathbb{Z}$, (iii) \textit{Poisson}, $\{x_j(0), j\in \mathbb{Z}\}$ is a Poisson process with constant density.
According to our discussion, in the latter case one might think that the quantity of prime interest is $x_1(t)$. But the reflection induces a propagation of statistical fluctuations, as can also be
seen from
\eqref{5a} together with \eqref{5b}. Their propagation speed is $1$ and the correct quantity is $x_{\lfloor t \rfloor}(t)$ with $\lfloor t \rfloor$ denoting integer part. Along other
space-time observation rays a central limit type behavior would be observed.

As for other models in the KPZ universality class, our asymptotic analysis is limited to a single time and arbitrary number of spatial, resp. index  points. Only recently Johansson~\cite{Jo15}   posted a result on the joint distribution
of $(x_{\lfloor t \rfloor}(t), x_{\lfloor \alpha t \rfloor}(\alpha t))$, $\alpha >0$, and identified its universal limit. Possibly such progress will lead eventually to a complete understanding of the Airy sheet and the KPZ fixed point~\cite{CQR15}.

Let us explain of how our material is organized as a whole. The following three chapters provide background material. In Chapter~\ref{secRBMaR} we properly define the infinite system of reflected Brownian motions as the solution of a martingale problem and provide a variational formulation of this solution. Also the  uniform Poisson process is identified as stationary measure. In Chapter~\ref{secDPP} we introduce the theory of determinantal point processes and some related material on Fredholm determinants. At first sight this looks unconnected.
But to study the quantities of prime interest one first identifies a ``hidden'' signed determinantal process which leads to an analytically more tractable representation.
In the long time limit we will arrive at a stochastic process which describes the limiting spatial statistics. Such a process has been baptised Airy process, in analogy to the Airy kernel and Airy operator
which are one of the defining elements. In fact there are several Airy processes depending on the initial conditions and on the window of observation. The literature on Airy processes is somewhat dispersed.
Chapter~\ref{secAiryP} provides a streamlined account.

In Chapter~\ref{secPP} we investigate the two deterministic initial data, packed and periodic,  while in Chapter~\ref{secPoi} we study random initial data as defined through a Poisson process.
These results will be used to discuss more general  initial data, which should be viewed as an open ended enterprise. One natural choice is to have in the left half lattice either packed, periodic, or Poisson join up with either one of them in the right half lattice.  An example would be to have periodic to the left and Poisson to the right. This then leads to distinct cross over processes. The mixed cases are studied in Chapter~\ref{secMixed}. Slow decorrelation, referring to space-like paths more general than fixed time,
 is a further topic. Each core chapter builds on a suitable asymptotic analysis. In the early days the required techniques were developed ad hoc
 for the particular model. Over the years a common strategy based on contour integrations has been established, which will  be also used here. Thus on the basis of a specific example one can learn a technique applicable also to other models.

\chapter{One-sided reflected Brownian motions and related models}\label{secRBMaR}

For  reflected Brownian motions with index set $\mathbb{Z}_+$ we properly define the dynamics using an iterated Skorokhod construction.
The full index set $\mathbb{Z}$ requires a proof of well-posedness. In addition, we establish that the uniform Poisson process is stationary under the dynamics and discuss previous results for the model.


\section{Skorokhod construction}
Our definition of reflected Brownian  motions is through the so\nobreakdash-called Skorokhod representation~\cite{Sko61,AO76}, which is a deterministic function of the driving Brownian motions. This representation is the following: the process $x(t)$, driven by the Brownian motion $B(t)$, starting from $x(0)\in\R$ and being reflected (in the positive direction) at some continuous function $f(t)$ with $f(0)\leq x(0)$ is defined as:
\begin{equation}\begin{aligned}
x(t)&=x(0)+B(t)-\min\big\{0,\inf_{0\leq s \leq t}(x(0)+B(s)-f(s))\big\} \\
&= \max\big\{x(0)+B(t),\sup_{0\leq s \leq t}(f(s)+B(t)-B(s))\big\}.
\end{aligned}\end{equation}

Let $B_n$, $n\in\Z$, be independent standard Brownian motions starting at $0$. Throughout this work, $B_n$ always denotes these Brownian motions, allowing for coupling arguments. The non-positive indices will be used from Section~\ref{secInfPS} on.

\begin{defin}
The half-infinite system of one-sided reflected Brownian motions $\{x_n(t),n\geq1\}$ with initial condition $\vec{x}(0)=\vec{\zeta}$, $\zeta_n\leq\zeta_{n+1}$ is defined recursively by \mbox{$x_1(t)=\zeta_1+B_1(t)$} and, for $n\geq2$,
\begin{equation}
 x_n(t)=\max\big\{\zeta_n+B_n(t),\sup_{0\leq s \leq t}(x_{n-1}(s)+B_n(t)-B_n(s))\big\}.
\end{equation}
\end{defin}

Introducing the random variables
\begin{equation}\label{eq2.2}
Y_{k,n}(t)=\sup_{0\leq s_{k}\leq \ldots\leq s_{n-1}\leq t}\sum_{i=k}^n (B_i(s_{i})-B_i(s_{i-1}))
\end{equation}
for $k\leq n$, with the convention $s_{k-1}=0$ and $s_n=t$, allows for an equivalent explicit expression:
\begin{equation}\label{eq2.4}
x_n(t)= \max_{k\in [1,n]}\{Y_{k,n}(t)+\zeta_k\}.
\end{equation}
Although we are constructing an infinite system of particles, well-definedness is clear in this case, as each process $x_n(t)$ is a deterministic function of only finitely many Brownian motions $B_k$, $1\leq k\leq n$.

Adopting a stochastic analysis point of view, the system \mbox{$\{x_n(t),n\geq1\}$} satisfies
\begin{equation}
 x_n(t)=\zeta_n+B_n(t)+L^n(t),\quad\text{for } n\geq0,
 \end{equation}
Here, $L^1(t)=0$, while $L^n$, $n\geq2$, are continuous non-decreasing processes increasing only when $x_n(t)=x_{n-1}(t)$. In fact, $L^n$ is twice the semimartingale local time at zero of $x_n-x_{n-1}$.

\section{Packed initial conditions}
A canonical and in fact the most studied initial condition for the system \mbox{$\{x_n(t),n\geq1\}$} is the one where all particles start at zero. We call this \emph{packed initial condition}:
\begin{equation}
 \vec{x}(0)=\vec{\zeta}^\textrm{packed}=0.
\end{equation}
Using the monotonicity,
\begin{equation}\begin{aligned}
Y_{k-1,n}(t)&=\sup_{0\leq s_{k-1}\leq s_{k}\leq \ldots\leq s_{n-1}\leq t}\sum_{i=k-1}^n (B_i(s_i)-B_i(s_{i-1}))\\
&\geq\sup_{0= s_{k-1}\leq s_{k}\leq \ldots\leq s_{n-1}\leq t}\sum_{i=k}^n (B_i(s_i)-B_i(s_{i-1}))=Y_{k,n}(t),
\end{aligned}\end{equation}
and inserting this initial condition into \eqref{eq2.4}, leads to:
\begin{equation}
 x_n(t)=Y_{1,n}(t)=\sup_{0\leq s_{1}\leq \ldots\leq s_{n-1}\leq t}\sum_{i=1}^n (B_i(s_i)-B_i(s_{i-1})),
\end{equation}
again with the convention $s_0=0$ and $s_{n}=t$.

\subsection{Queues, last passage percolation and directed polymers}\label{secStepOther}
There are other interpretations for the quantity $x_n(t)$ than the system of reflected Brownian motions focused on in this work, and each interpretation has inspired different results over the last decades. One of these is seeing it as a sequence of \emph{Brownian queues in series}. A famous theorem of queueing theory, \emph{Burke's theorem}, which states that the output of a stable, stationary M/M/1 queue is Poisson, can be adapted to the Brownian setting~\cite{OCY01}. This will be employed in Section~\ref{secStationarity}.

Furthermore, $x_n(t)$ can be viewed as a model of directed last-passage percolation through a random medium, or equivalently a zero-temperature directed polymer in a random environment. This model is constructed as follows:
\begin{figure}
\begin{center}
 \psfrag{0}[lb]{$1$}
 \psfrag{1}[lb]{$2$}
 \psfrag{2}[lb]{$3$}
 \psfrag{3}[lb]{$4$}
 \psfrag{4}[lb]{$5$}
 \psfrag{s0}[lb]{$s_1$}
 \psfrag{s1}[lb]{$s_2$}
 \psfrag{s2}[lb]{$s_3$}
 \psfrag{s3}[lb]{$s_4$}
 \psfrag{t}[lb]{$t$}
\includegraphics[height=5cm]{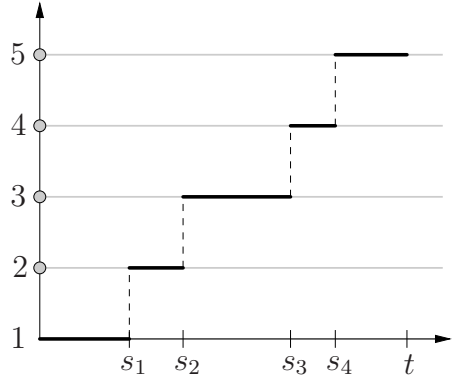}
 \caption{A path $\pi\in\Pi(0,1;t,5)$ (thick black) and the random background noise (gray).}
 \label{FigDP}
\end{center}
\end{figure}

Consider the space $\R_+\times\Z$ and assign white noise $\mathrm{d} B_n$ as random background weight on each line $\R_+\times \{n\}$ for $n\geq1$.
An up-right path is characterized by its jumping points $s_i$ and consists of line segments $[s_{n-1},s_n]\times \{n\}$, see Figure~\ref{FigDP}. The set of up-right paths going from $(t_1,n_1)$ to $(t_2,n_2)$ can then be parameterized by
\begin{equation}
 \Pi(t_1,n_1;t_2,n_2) = \left\{\vec{s}\in\R^{n_2-n_1+2}|t_1=s_{n_1-1}\leq s_{n_1}\leq \dots\leq s_{n_2}=t_2\right\}.
\end{equation}
The \emph{percolation time} or \emph{weight} of a path $\vec\pi\in\Pi$ is the integral over the background weights along the path. Explicitly, we have:
\begin{equation}
 w(\vec{\pi})=\sum_{i=n_1}^{n_2} \left(B_i(s_{i})-B_i(s_{i-1})\right).
\end{equation}

The \emph{last passage percolation time} is given by the supremum over all such paths:
\begin{equation}
 L_{(t_1,n_1)\to(t_2,n_2)}:=\sup_{\vec{\pi}\in\Pi(t_1,n_1;t_2,n_2)}w(\vec{\pi}).
\end{equation}
The supremum is almost surely attained by a unique path $\vec\pi^*$, called the maximizer. It exists because the supremum can be rewritten as a composition of a finite maximum and a supremum of a continuous function over a compact set. Uniqueness follows from elementary properties of the Brownian measure. Most importantly, from the definition, we have
\begin{equation}
 x_n(t)=L_{(0,1)\to(t,n)}.
\end{equation}

This representation will be used repeatedly throughout this work as it nicely visualizes coupling arguments, however, it also gives some connections to different works. Our model can be seen as the semi-continuous limit of a more widely studied discrete last passage percolation model (see for example~\cite{Jo00b,Jo03b}).

This last passage percolation model is also the zero temperature limit of a directed polymer model, which has been studied thoroughly in the recent past~\cite{SV10,BCFV14}. In the directed polymer setting we have a parameter $\beta$ representing the inverse temperature, consider $w(\vec{\pi})$ as an \emph{energy} and assign a Gibbs measure on the set of paths according to the density $e^{\beta w(\vec{\pi})}$, i.e.\ paths with higher energy have a higher probability. The partition function of the polymer is given by
\begin{equation}
 Z_{(t_1,n_1)\to(t_2,n_2)}(\beta)=\int_{\Pi(t_1,n_1;t_2,n_2)}\D\vec{\pi}\, e^{\beta w(\vec{\pi})},
\end{equation}
and satisfies the limit
\begin{equation}
 \lim_{\beta\to\infty}\frac{1}{\beta}\log Z_{(t_1,n_1)\to(t_2,n_2)}(\beta)=L_{(t_1,n_1)\to(t_2,n_2)}.
\end{equation}
Although apparently more difficult to handle, attention has been turning to these positive temperature models recently, because, among other things, they allow for a scaling limit to the KPZ equation by tuning the parameter $\beta$ in the right way.
\subsection{Previous results}
The behaviour of $x_n(t)$ under the packed initial condition is quite well understood by now. Notice that by Brownian scaling we have the distributional identity $\{x_n(t),n\geq1\}\stackrel{d}{=}\{\sqrt{t}x_n(1),n\geq1\}$. The first result has been a law of large numbers, i.e.\ $x_n(1)/\sqrt{n}$ converges to a constant~\cite{GlWh91}, therein it was already conjectured it equals $2$, which has been proven subsequently~\cite{Sep97}. Much of the results that followed exploited connections to a random matrix model, so let us introduce it in full generality right away:

Let $b_{i,i}(t)$ for $1\leq i\leq N$ and $b_{i,j}(t)$, $b'_{i,j}(t)$ for $1\leq i<j\leq N$, be independent Brownian motions. Define a stochastic process $H(t)$, $t\geq0$, on the space of $N\times N$ Hermitian matrices by
\begin{equation}\begin{aligned}
 H_{i,i}(t)&=b_{i,i}(t) & &\text{for } 1\leq i \leq N \\
 H_{i,j}(t)&=\sqrt{2}\left(b_{i,j}(t)+\mathrm{i} b'_{i,j}(t)\right) & &\text{for } 1\leq i<j \leq N \\
 H_{i,j}(t)&=\sqrt{2}\left(b_{i,j}(t)-\mathrm{i} b'_{i,j}(t)\right) & &\text{for } 1\leq j<i \leq N
\end{aligned}\end{equation}
Denote by $\lambda_1^{n}(t)\leq \lambda_2^n(t)\leq\dots\leq\lambda_n^n(t)$ the ordered eigenvalues of the $n\times n$ principal minor of the matrix $H(t)$. The process $\{\lambda_k^n(t),1\leq k\leq n\leq N\}$ is called the \emph{Dyson Brownian minor process}.
It is a classical result, that the eigenvalues of consecutive minors are interlaced, \mbox{$\lambda_k^{n+1}(t)\leq\lambda_k^n(t)\leq\lambda_{k+1}^{n+1}(t)$}, so this process lives in the Gelfand-Tsetlin cone:
\begin{equation}
 \textrm{GT}_N =\left\{x_{k}^n\in\R, 1\leq k \leq n \leq N, x_k^{n+1}\leq x_k^n\leq x_{k+1}^{n+1}\right\}.
\end{equation}

Restricted to one layer $n$, the process $\{\lambda_k^n(t),1\leq k\leq n\}$ is called \emph{Dyson's Brownian motion}~\cite{Dys62}. It is a Markov process and satisfies the coupled stochastic differential equation
\begin{equation}\label{eqDyson}
 \D\lambda_k^n=\mathrm{d} B_k+\sum_{i\neq k} \frac{\mathrm{d} t}{\lambda_k^n-\lambda_i^n}.
\end{equation}
Its fixed time distribution is the eigenvalue distribution of the Gaussian unitary ensemble (GUE):
\begin{equation}
 \Pb\left(\lambda_k^n(t)\in\mathrm{d} y_k,1\leq k\leq n\right)=\frac{1}{Z_n(t)}\prod_{k=1}^Ne^{-\frac{y_k^2}{2t}}\prod_{k<i}(y_i-y_k)^2\D\vec{y}.
\end{equation}
Recognizing the second product as the square of the Vandermonde determinant, one is able to describe this eigenvalue distribution as a determinantal point process governed by the Hermite kernel. Asymptotic analysis of the distribution of the largest eigenvalue in the appropriate \emph{edge scaling} gives the GUE Tracy-Widom distribution~\cite{TW94}.

The first connections between our system of reflecting Brownian motions and the matrix diffusion were found by~\cite{GTW00}, proving that for every $n\geq1$, $x_n(1)\stackrel{d}{=}\lambda_n^n(1)$, and~\cite{Bar01} generalizing this to $\{x_n(1),n\geq1\}\stackrel{d}{=}\{\lambda_n^n(1),n\geq1\}$ by a combinatorial procedure originating from group representation theory, called the Robinson-Schensted-Knuth (RSK) correspondence.~\cite{Bar01} also showed the remarkable fact that conditioned on the top layer \mbox{$\{\lambda_k^N(1),1\leq k\leq N\}$} the distribution of $\{\lambda_k^n(1),1\leq k\leq n<N\}$ is uniform on the compact set given by the Gelfand-Tsetlin interlacing inequality. Restricting the Dyson Brownian minor process to a fixed time gives the \emph{GUE minor process} $\{\lambda_k^n(1),1\leq k\leq n\leq N\}$, whose full distribution has been found in~\cite{JN06}, again in the form of a determinantal point process.

There is a natural extension of the system of reflecting Brownian motions $\{x_n(t),n\geq1\}$ to a process in the Gelfand-Tsetlin cone that is constructed in the following way: Let $B_1^1(t)$ be a Brownian motion. Let $B_1^2(t)$ and $B_2^2(t)$ be Brownian motions, which are reflected downwards resp. upwards from $B_1^1(t)$. Iteratively construct $B^n_k(t)$ as a Brownian motion reflected downwards from $B^{n-1}_k(t)$ and upwards from $B^{n-1}_{k-1}(t)$, with the peripheral processes $B^{n}_k(t)$ for $k=1$ or $k=n$ being reflected from one process only. By construction, we have $x_n(t)=B^n_n(t)$. The process $\{B^n_k(t),1\leq k\leq n\leq N\}$ is called \emph{Warren's process}, and has been introduced and studied in~\cite{War07}. Restricted to one layer $n$, it is distributed as a Dyson's Brownian motion. Warren's process shares the fixed time distribution with the GUE minor process,
\begin{equation}
 \{\lambda_k^n(1),1\leq k\leq n\leq N\}\stackrel{d}{=}\{B_k^n(1),1\leq k\leq n\leq N\}.
\end{equation}
There are also formulas for the transition density of the system along the edge, $\{B^n_n(t),1\leq n\leq N\}$ as well as for the system of two consecutive layers $\{B^k_n(t),1\leq k\leq n, N-1\leq n\leq N\}$.

The connection between Warren's process and the Dyson Brownian minor process does not, however, hold in full generality. The common dynamics of any amount of consecutive layers of Warrens process, i.\ e.\ the process $\{B^k_n(t),1\leq k\leq n, N_1\leq n\leq N_2\}$, is simply given by Dyson's SDE \eqref{eqDyson} for the layer $n=N_1$  and the reflection SDE's for the higher order layers. In the Dyson Brownian minor process, on the other hand, the common dynamics of two consecutive layers is given by a more complicated SDE (see (2.30) in~\cite{ANvM10b}) and the evolution of three or more consecutive layers is not even a Markov process anymore. Interestingly, both processes still show the same distribution along so-called space-like paths, i.e.\ sequences of points $(n_i,t_i)$ satisfying $t_i\leq t_{i+1}$ and $n_i\geq n_{i+1}$, in which case also determinantal formulas exist~\cite{ANvM10,FF10}.

The determinantal formulas coming from the random matrix model are suitable for asymptotic analysis to show multi-point scaling limits, where the Airy$_2$ process arises. It appears both for correlations of $x_n(t)$ along the $n$ direction and the $t$ direction, as well as along general space-like paths.~\cite{Jo03b} gives a sketch of the proof for both directions,~\cite{AvM03} prove the scaling limit with correlations along $t$ rigorously. A complete proof for correlations along $n$ is given in Section~\ref{secStep}.

\section{Infinite particle systems}\label{secInfPS}
The main focus of this work is showing determinantal formulas and scaling limits for other initial conditions. Unfortunately the connection to random matrices breaks down in this case. In fact, neither the Airy$_1$ process nor the Airy$_\textrm{stat}$ process have ever been found in a scaling limit of a random matrix model. This has been rather surprising, as the one-point distribution of the Airy$_1$ process, the GOE Tracy-Widom distribution, \emph{does} arise in such a model, namely as the limiting distribution of the largest eigenvalue of a Gaussian real symmetric matrix, the GOE ensemble~\cite{BFP08}.

\subsection{Definition}
At first sight it might seem trivial to extend the definition of the half-infinite system of one-sided reflected Brownian motions to an infinite number of particles \mbox{$\{x_n(t),n\in\Z\}$}, by letting the index $k$ run over $(-\infty,n]$ in \eqref{eq2.4}. However, it has to be shown that this maximum is finite, which is only the case for initial conditions which are not too closely spaced together. Roughly said, the growth rate of $-\zeta_{-k}$ has to be faster than $\sqrt{k}$ for large $k$. We call those initial conditions \emph{admissible}.

Knowing the law of large numbers under packed initial conditions, it is reasonable to expect such a behaviour. It implies that $Y_{k,n}(t)$ grows roughly as $2\sqrt{(n-k)t}$, so in order for the maximum in \eqref{eq2.3.7} being attained by a finite $k$ for we need $-\zeta_{-k}$ to grow faster than $\sqrt{k}$.

\begin{defin}
 A random vector $\vec{\zeta}\in\R^\Z$ with $\zeta_n\leq\zeta_{n+1}$ for all $n\in\Z$ is called an \emph{admissible initial condition}, if there exists a $\chi>\frac{1}{2}$ such that for any $n\in\Z$ the sum
 \begin{equation}
  \sum_{M\geq0}\Pb\left(\zeta_n-\zeta_{-M}\leq M^\chi\right)
 \end{equation}
is finite.
\end{defin}

\begin{defin}
Let $\vec{\zeta}\in\R^\Z$ be an admissible initial condition. The infinite system of one-sided reflected Brownian motions $\{x_n(t),n\in\Z\}$ with initial condition $\vec{x}(0)=\vec{\zeta}$ is defined by
\begin{equation}\label{eq2.3.7}
x_n(t)=\max_{k\leq n}\{Y_{k,n}(t)+\zeta_k\}.
\end{equation}
\end{defin}
By Proposition~\ref{propMconv} below, which is proven in Section~\ref{secWellD}, this maximum exists and is finite. More specifically, we will show that for $\vec{\zeta}$ being any admissible initial condition, the infinite system \mbox{$\{x_n(t),n\in\Z\}$} is the limit of certain half-infinite systems $\{x_n^{(M)}(t),n\geq -M\}$ as $M\to\infty$, where
\begin{equation}\label{eq3a}
x_n^{(M)}(t)=\max_{k\in [-M,n]}\{Y_{k,n}(t)+\zeta_k\},\quad n\geq-M.
\end{equation}
Notice that these processes indeed satisfy the Skorokhod equation,
\begin{equation}\label{eqMsko}
x_n^{(M)}(t)= \max\big\{\zeta_n+B_n(t),\sup_{0\leq s \leq t}(x_{n-1}^{(M)}(s)+B_n(t)-B_n(s))\big\},
\end{equation}
for $n>-M$, while the leftmost process is simply
\begin{equation}
 x_{-M}^{(M)}(t)=\zeta_{-M}+B_{-M}(t).
\end{equation}
Thus as desired $x_n^{(M)}(t)$ is a Brownian motion starting from $\zeta_n$ and reflected off by $x_{n-1}^{(M)}$ for $n>-M$. 

\begin{prop}\label{propMconv}
For any $t>0$, $n\in\Z$ there exists almost surely a $k\leq n$ maximizing $Y_{k,n}(t)+\zeta_k$, i.e.\ the maximum in \eqref{eq2.3.7} exists. Furthermore, for any $T>0$,
\begin{equation}\label{eqUniBound}
	\sup_{t\in[0,T]}|x_n(t)|<\infty,\quad \textrm{a.s.},
\end{equation}
as well as
\begin{equation}
	\lim_{M\to\infty}\sup_{t\in[0,T]}|x_n^{(M)}(t)-x_n(t)|=0,\quad \textrm{a.s.}.
\end{equation}
\end{prop}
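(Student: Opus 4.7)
The plan is to show that almost surely, for $M$ large enough, the maximum defining $x_n(t)$ cannot be attained at any index $k\leq-M$, uniformly in $t\in[0,T]$. Once this is established, all three claims follow at once: the maximum is attained in the finite range $(-M,n]$, it agrees pointwise with $x_n^{(M)}(t)$ there, the uniform convergence in fact holds with zero error for $M$ large, and $\sup_{t\in[0,T]}|x_n(t)|$ is finite because $x_n^{(M)}$ is a finite Skorokhod reflection of continuous drivers. First I would reduce to a pointwise-in-$t$ statement via monotonicity: isolating the last term in \eqref{eq2.2} shows that $t\mapsto Y_{-M,n}(t)-B_n(t)$ is nondecreasing, so
\begin{equation*}
\sup_{t\in[0,T]}Y_{-M,n}(t) \leq Y_{-M,n}(T) + \big(\sup_{t\in[0,T]}B_n(t) - B_n(T)\big),
\end{equation*}
and the parenthesised correction is an a.s.\ finite random constant not depending on $M$.

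Next I would derive a summable tail bound on the single random variable $Y_{-M,n}(T)$. By Brownian scaling and the RSK identity recalled in the excerpt, $Y_{-M,n}(T)\stackrel{d}{=}\sqrt{T}\,\lambda_{M+n+1}^{M+n+1}(1)$, the top eigenvalue of an $(M+n+1)$-dimensional GUE matrix, after relabelling to the packed setting. Standard large deviation estimates for the GUE largest eigenvalue then yield
\begin{equation*}
\Pb\big(Y_{-M,n}(T) \geq 3\sqrt{(M+n+1)T}\big) \leq C e^{-cM}
\end{equation*}
for all large $M$, with $c,C>0$ depending only on $n,T$. Since $\chi>1/2$, one has $3\sqrt{(M+n+1)T}\leq M^\chi/2$ eventually, so $\sum_M\Pb(Y_{-M,n}(T)>M^\chi/2)<\infty$. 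Combining with the admissibility sum $\sum_M\Pb(\zeta_n-\zeta_{-M}\leq M^\chi)<\infty$, the first Borel--Cantelli lemma produces an a.s.\ finite random $M_0$ such that for every $M\geq M_0$,
\begin{equation*}
Y_{-M,n}(T)+\zeta_{-M} \leq \zeta_n - M^\chi/2.
\end{equation*}
Using the monotonicity bound of the previous step together with the trivial lower bound $x_n^{(M)}(t)\geq \zeta_n+B_n(t)$, and enlarging $M_0$ once more (using that $\inf_{t\in[0,T]}B_n(t)$ is a.s.\ finite while $M^\chi/2\to\infty$), I obtain that a.s.\ for every $M\geq M_0$ and every $t\in[0,T]$,
\begin{equation*}
\sup_{k<-M}\big(Y_{k,n}(t)+\zeta_k\big) < \zeta_n + \inf_{s\in[0,T]}B_n(s) \leq x_n^{(M)}(t),
\end{equation*}
which is exactly what is required.

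The main obstacle is the tail estimate on $Y_{-M,n}(T)$: one needs a bound summable in $M$ for deviations of order $M^\chi$, $\chi>1/2$, above the mean $\sim 2\sqrt{MT}$. The GUE route above is the cleanest, but a self-contained alternative is to bound polynomial moments $\mathbb{E}[Y_{-M,n}(T)^p]\leq C_p(M+n+1)^{p/2}$ by induction on $n+M$, using the recursion $Y_{k,n}(T)=\sup_{s\leq T}(Y_{k,n-1}(s)+B_n(T)-B_n(s))$ together with Doob's $L^p$ inequality applied to the last coordinate, and then invoking Markov's inequality with $p>1/(\chi-1/2)$ to close the argument.
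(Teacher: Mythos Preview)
Your argument is correct and takes a genuinely different route from the paper on both of its key steps. For the concentration of $Y_{-M,n}(T)$, the paper invokes its own Proposition~\ref{concIn}, whose proof is deferred to Chapter~\ref{secPP} and relies on the Fredholm determinant bound (Proposition~\ref{propStepK0Bound}); you instead appeal to the GUE identification $Y_{1,N}(T)\stackrel d=\sqrt{T}\,\lambda_N^N(1)$ and standard large-deviation tails for the top eigenvalue, which is more self-contained at this stage of the exposition. For the passage from $t=T$ to all $t\in[0,T]$, the paper proves it pointwise for each $t$ and then invokes Lemma~\ref{lemMaxPath}, a last-passage-percolation non-crossing argument, to show that the $M_T$ that works at the endpoint works throughout. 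Your observation that $t\mapsto Y_{k,n}(t)-B_n(t)$ is nondecreasing (immediate from isolating the $i=n$ term in \eqref{eq2.2}) replaces this by a one-line reduction and an a.s.\ finite additive constant. The paper's route has the side benefit of simultaneously handling the drifted auxiliary system $\widetilde x_n^{(M)}$, which it needs later for Proposition~\ref{propBM}; your route is leaner but would have to revisit that point separately.

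One caveat on your alternative tail estimate: the inductive recursion plus Doob's inequality, as you describe it, does not obviously yield $\mathbb{E}[Y_{1,N}(T)^p]\leq C_p N^{p/2}$. The naive step $\|Y_{1,N}(T)\|_p\leq\|Y_{1,N-1}(T)\|_p+C_p$ only gives growth of order $N$, which is insufficient for $\chi\in(\tfrac12,1]$. To get the $\sqrt{N}$ scaling one needs more, for example the Wigner moment method via $\mathbb{E}\,\mathrm{Tr}(H^{2k})=O(N^{k+1})$ after the GUE identification. Since your primary GUE route already closes the argument, this is a minor point, but the Doob alternative as written is a gap.
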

The convergence result allows for taking the limit in \eqref{eqMsko}, implying that the system $\{x_n(t),n\in\Z\}$ satisfies the Skorokhod equation, too.

The initial conditions we are actually interested in are corresponding to the two remaining fundamental geometries in the KPZ universality class. The first one is the flat surface, which translates into periodic initial conditions $\vec{x}(0)=\vec{\zeta}^\textrm{flat}$, defined by
\begin{equation}
 \zeta_n^\textrm{flat}=n, \text{ for } n\in\Z,
\end{equation}
which is obviously admissible.

Finally we will study the case where the model starts in its random stationary distribution, which is in our case a Poisson point process on the real line. However, as already familiar from other models in the KPZ universality class~\cite{BFP09,BCFV14,FS05a,SI04}, Theorem~\ref{thmAsymp0} will be proven via a sequence of approximating initial conditions.

Let therefore be $\{\textrm{Exp}_n,n\in\Z\}$ be i.i.d. random variables with exponential distribution with parameter $1$. For parameters $\lambda>0$ and $\rho>0$ define the initial condition $\vec{x}(0)=\vec{\zeta}^\textrm{stat}(\lambda,\rho)$ by
\begin{equation}\label{statModel}
 \begin{aligned}
  \zeta^\textrm{stat}_0&=0,\\
  \zeta^\textrm{stat}_n-\zeta^\textrm{stat}_{n-1}&=
  \begin{cases}\lambda^{-1}\,\textrm{Exp}_n, \quad &\text{for } n>0,\\\rho^{-1}\,\textrm{Exp}_n, &\text{for } n\leq0.\end{cases}
 \end{aligned}
\end{equation}
Admissibility of this initial condition is also not hard to prove.

To recover the uniform Poisson process on the whole real line, we will set $\lambda=1$ and carefully take the limit $\rho\to1$ in the determinantal formulas that hold in the case $\rho<\lambda$. Finally, setting $\zeta_0=0$ will induce a difference of order one as compared to the true Poisson process case. By Proposition~\ref{propBoundMod} this difference will stay bounded at all times, and consequently be irrelevant in the scaling limit. Thus it is enough to prove Theorem~\ref{thmAsymp0} for the initial conditions $\vec{x}(0)=\vec{\zeta}^\textrm{stat}(1,1)$.


\subsection{Well-definedness}\label{secWellD}
For the proof of Proposition~\ref{propMconv} we first need the following concentration inequality:
\begin{prop}\label{concIn}
For each $T>0$ there exists a constant $C>0$ such that for all \mbox{$k<m$}, $\delta>0$,
\begin{equation}
	\Pb\bigg(\frac{Y_{k,m}(T)}{\sqrt{(m-k+1)T}}\geq 2+\delta\bigg)\leq \const \cdot e^{-(m-k+1)^{2/3}\delta}.
\end{equation}
\end{prop}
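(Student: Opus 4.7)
My plan is to combine Brownian scaling, the identification of $Y_{k,m}(T)$ with the top eigenvalue of a GUE matrix, and a standard non-asymptotic upper-tail bound for this eigenvalue.

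\textbf{Step 1 (reduction).} By Brownian scaling and by shifting the index $i\mapsto i-k+1$ (the $B_i$ being i.i.d.), one has $Y_{k,m}(T)\stackrel{d}{=}\sqrt{T}\,Y_{1,N}(1)$, where $N:=m-k+1\geq 2$. It therefore suffices to prove, uniformly in $N$ and $\delta>0$,
\begin{equation*}
\Pb\!\big(Y_{1,N}(1)\geq (2+\delta)\sqrt{N}\big)\leq \const\cdot e^{-N^{2/3}\delta}.
\end{equation*}

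\textbf{Step 2 (GUE identification).} Under packed initial conditions the formula derived above gives $x_N(1)=Y_{1,N}(1)$, and by the distributional identity recalled in the text (\cite{GTW00}, generalized in \cite{Bar01}), $x_N(1)\stackrel{d}{=}\lambda_N^N(1)$, the largest eigenvalue of the $N\times N$ GUE matrix $H(1)$ in the paper's normalization. In this normalization $\lambda_N^N(1)/\sqrt{N}\to 2$ almost surely, which matches the constant $2$ in the statement.

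\textbf{Step 3 (concentration).} Apply a non-asymptotic upper-tail bound for the largest GUE eigenvalue: there exist constants $c,C>0$ such that for all $N\geq 1$ and $\delta>0$,
\begin{equation*}
\Pb\!\big(\lambda_N^N(1)\geq (2+\delta)\sqrt{N}\big)\leq C\,e^{-cN^{2/3}\delta}.
\end{equation*}
Such a bound is available either from Ledoux-type concentration results for the soft edge of the Hermite ensemble, or can be obtained directly from the Fredholm-determinant / Hermite-kernel representation of $\Pb(\lambda_N^N(1)\leq s)$ via a steep-descent estimate on the double-integral kernel. Absorbing the factor $c$ into $\delta$ (or into $\const$) finishes the proof.

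\textbf{Main obstacle.} The difficulty is producing the exponent $N^{2/3}\delta$ uniformly in both $N$ and $\delta$. For large $\delta$ the Tracy--Widom-type tail $\exp(-\tfrac{4}{3}(N^{2/3}\delta)^{3/2})$ is much stronger, and one can alternatively invoke Borell's inequality using that $Y_{1,N}(1)$ is the supremum of centered Gaussians each of variance $1$. In the intermediate regime $\delta\sim N^{-2/3}$, however, neither of these bounds suffices, and one genuinely needs the $N^{2/3}\delta$ rate that matches the soft-edge fluctuation scale $N^{-1/6}$. Making the constants uniform across the crossover between these regimes is the delicate part; this is precisely where the determinantal (or orthogonal-polynomial) structure of the GUE must be used.
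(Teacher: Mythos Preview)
Your reduction in Step~1 and the GUE identification in Step~2 are correct and match what the paper records in Section~2.2.2. However, the paper's own proof (given in Section~\ref{secStepAsy}) does \emph{not} invoke the GUE connection or any external concentration result. Instead it stays entirely within the machinery already built: it writes $\Pb(x_t(t)\leq\xi)$ as the Fredholm series from Proposition~\ref{propStepKernel}, rescales $\xi=2t+t^{1/3}s$ so that the rescaled kernel $\mathcal{K}_0^{\textrm{resc}}$ appears, applies the uniform bound $|\mathcal{K}_0^{\textrm{resc}}(0,\sigma_k;0,\sigma_l)|\leq\tfrac12 e^{-(\sigma_k+\sigma_l)}$ of Proposition~\ref{propStepK0Bound}, and then uses Hadamard's inequality on the $N\times N$ determinant. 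This yields $\Pb(x_t(t)/t\geq 2+t^{-2/3}s)\leq\sum_{N\geq1}\const^N N^{N/2}e^{-Ns}/N!\leq\const\cdot e^{-s}$, after which Brownian scaling and the substitution $s=m^{2/3}\delta$ finish. In other words, the paper produces the exact constant $1$ in the exponent directly, with no crossover regimes to patch together.

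Your Step~3 is where the routes diverge. If you cite a Ledoux-type soft-edge bound, you are importing a result whose proof is of comparable depth to what the paper proves anyway; and your remark about ``absorbing the factor~$c$ into $\delta$ (or into $\const$)'' is not quite right---a constant in the exponent cannot be moved into the prefactor. Your alternative suggestion, ``directly from the Fredholm-determinant\,/\,Hermite-kernel representation via a steep-descent estimate'', is in fact exactly what the paper does, only phrased in the language of the kernel $\mathcal{K}_0$ of Proposition~\ref{propStepKernel} rather than the Hermite kernel (they agree up to conjugation). So your proposal is correct but outsources the one nontrivial step; the paper's version is self-contained and avoids the issue you flag in your ``Main obstacle'' paragraph, because the kernel bound of Proposition~\ref{propStepK0Bound} already holds uniformly for all $s>0$.
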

This proposition is proven in Section~\ref{secStepAsy}.
Another necessary lemma, that will be proven an intuitive way in Section~\ref{secInfLPP}, is:
\begin{lem}\label{lemMaxPath}
Consider $0\leq t_1\leq t_2$ and $m$, $M_{t_1}$, $M_{t_2}$ such that
\begin{equation}\label{eq10a}
	x_m(t_i)=x_m^{(M_{t_i})}(t_i)=\widetilde{x}_m^{(M_{t_i})}(t_i),\quad\text{for }i=1,2.
\end{equation}
Then
\begin{equation}\label{eq11}
	x_m(t_1)=x_m^{(M_{t_2})}(t_1)=\widetilde{x}_m^{(M_{t_2})}(t_1).
\end{equation}
\end{lem}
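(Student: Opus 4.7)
The plan is to use the last-passage percolation (LPP) representation \eqref{eq2.3.7} together with a path-swapping argument to prove the stronger fact that the optimal starting index is monotone non-increasing in $t$. Let $k^*(t)$ be a maximizer of $k\mapsto Y_{k,m}(t)+\zeta_k$ over $k\leq m$; this maximizer is a.s.\ unique by the remark on LPP maximizers in Section~\ref{secStepOther}. The hypothesis $x_m(t_i)=x_m^{(M_{t_i})}(t_i)$ is equivalent to $k^*(t_i)\geq -M_{t_i}$, so it suffices to show $k^*(t_1)\geq k^*(t_2)$. Indeed this immediately yields $k^*(t_1)\geq k^*(t_2)\geq -M_{t_2}$ and hence $x_m(t_1)=Y_{k^*(t_1),m}(t_1)+\zeta_{k^*(t_1)}=x_m^{(M_{t_2})}(t_1)$. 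The same argument is to be applied to the $\widetilde{x}$-system, which by its construction enjoys the same variational LPP representation with additive Brownian weights.

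The core step is therefore the monotonicity of $t\mapsto k^*(t)$, which I would establish by tail swapping. Assume for contradiction that $t_1<t_2$ and $k^*(t_1)<k^*(t_2)$, and let $\pi_i$ be the a.s.\ unique LPP-maximizer from $(0,k^*(t_i))$ to $(t_i,m)$. The up-right path $\pi_1$ starts at a strictly smaller spatial index than $\pi_2$ yet reaches $m$ by the earlier time $t_1$, so their occupancy profiles $I_i(\cdot)$ satisfy $I_1(0)<I_2(0)$ and $I_1(t_1)=m\geq I_2(t_1)$. Being integer-valued non-decreasing step functions that each advance by $+1$ at a time, $I_1$ and $I_2$ must coincide at some common index $j\in\{k^*(t_2),\dots,m\}$ over a nonempty time interval. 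Picking any $t^*$ in this overlap and swapping the tails of $\pi_1$ and $\pi_2$ after $(t^*,j)$ yields up-right paths $\pi_1'$ from $(0,k^*(t_1))$ to $(t_2,m)$ and $\pi_2'$ from $(0,k^*(t_2))$ to $(t_1,m)$ with $w(\pi_1)+w(\pi_2)=w(\pi_1')+w(\pi_2')$. Suboptimality of $\pi_1',\pi_2'$ then gives
\begin{equation*}
Y_{k^*(t_1),m}(t_1)+Y_{k^*(t_2),m}(t_2)\leq Y_{k^*(t_1),m}(t_2)+Y_{k^*(t_2),m}(t_1).
\end{equation*}
Adding $\zeta_{k^*(t_1)}+\zeta_{k^*(t_2)}$ to both sides and using that $k^*(t_i)$ maximizes at $t_i$ forces equality throughout, contradicting the a.s.\ uniqueness of the maximizer.

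The main obstacle I foresee is making the path swap fully rigorous in the semi-continuous LPP on $\R\times\Z$: one must choose $t^*$ inside the intersection $[s^1_{j-1},s^1_j]\cap[s^2_{j-1},s^2_j]$ of the occupancy intervals of a common index $j$, and verify that the Brownian increments on line $j$ split exactly across $t^*$ so that the total weights match without double-counting. The existence of such a shared interval is guaranteed by the monotone step-function argument for $I_1-I_2$, but a careful proof must handle the edge cases where the two paths share several consecutive indices (the first such $j$ can be taken by convention) and where $\pi_2$ has not yet entered $m$ by time $t_1$.
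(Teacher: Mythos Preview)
Your approach is essentially the same as the paper's: both arguments reduce the claim to monotonicity of the exit point $k^*(t)$ (the paper writes $e_i$) and obtain a contradiction from $k^*(t_1)<k^*(t_2)$ via a path-crossing argument and the a.s.\ uniqueness of LPP maximizers. The only cosmetic difference is that the paper, instead of your symmetric tail swap and the inequality $Y_{k^*(t_1),m}(t_1)+Y_{k^*(t_2),m}(t_2)\leq Y_{k^*(t_1),m}(t_2)+Y_{k^*(t_2),m}(t_1)$, directly builds a single competing maximizer for $x_m(t_1)$ by concatenating the initial segment $(0,e_1)\to(0,e_2)$ along the time-$0$ boundary, the portion of $\pi_2$ from $(0,e_2)$ to the crossing point $(t^*,m^*)$, and the portion of $\pi_1$ from $(t^*,m^*)$ to $(t_1,m)$; the dynamic-programming principle then forces this path to be optimal as well, contradicting uniqueness. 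Your version is perhaps cleaner algebraically, while the paper's avoids explicitly checking the additivity of the Brownian increments at the swap point. For the $\widetilde{x}^{(M_{t_2})}$ equality both you and the paper simply rerun the same crossing argument inside the modified LPP with the extra Lebesgue density on line $-M_{t_2}$; one small point worth making explicit is that the hypothesis $\widetilde{x}_m^{(M_{t_2})}(t_2)=x_m^{(M_{t_2})}(t_2)$ forces the $\widetilde{x}$-maximizer at $t_2$ to have $s_{-M_{t_2}}=0$, hence exit point $\geq -M_{t_2}+1$, so monotonicity then gives the same for $t_1$.
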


\begin{proof}[Proof of Proposition~\ref{propMconv}]
Let us define an auxiliary system of processes, which we will use later in proving Proposition~\ref{propBM}, by
\begin{equation}
 \widetilde{x}_{-M}^{(M)}(t)=\zeta_{-M}+B_{-M}(t)+\rho t,
\end{equation}
and
\begin{equation}
\widetilde{x}_n^{(M)}(t)= \max\big\{\zeta_n+B_n(t),\sup_{0\leq s \leq t}(\widetilde{x}_{n-1}^{(M)}(s)+B_n(t)-B_n(s))\big\}
\end{equation}
for $n>-M$. This system differs from $x_n^{(M)}(t)$ just in the drift of the leftmost particle, which of course influences all other particles as well (the choice of the extra drift is because the system with infinite many particles in $\R_-$ generates a drift $\rho$). This system of particles satisfies
\begin{equation}\label{eq3}
\widetilde{x}_n^{(M)}(t)=\max\big\{\widetilde{Y}_{-M,n}(t)+\zeta_{-M},\max_{k\in [-M+1,n]}\{Y_{k,n}(t)+\zeta_k\}\big\},
\end{equation}
with
\begin{equation}
\widetilde{Y}_{k,n}(t)=\sup_{0\leq s_{k+1}\leq \ldots\leq s_m\leq t}\Big(\rho s_{k+1}+\sum_{i=k}^n (B_i(s_{i+1})-B_i(s_i))\Big).
\end{equation}
Also, we have the inequalities
\begin{equation}\label{eq2.16}
 Y_{k,n}(t)\leq\widetilde{Y}_{k,n}(t)\leq Y_{k,n}(t)+\rho t.
\end{equation}

Consider the event
\begin{equation}\begin{aligned}\label{eq3.15}
	A_M&:=\{Y_{-M,n}(T)\geq3\sqrt{(M+n+1)T}\}\cup\{\zeta_n-\zeta_{-M}\leq M^\chi\}\\&\qquad\cup\{Y_{n,n}(T)\leq \rho T+3\sqrt{(M+n+1)T}-M^\chi\}.
\end{aligned}\end{equation}
It is now straightforward to show $\sum_{M=0}^\infty\Pb(A_M)<\infty$. In fact, summability of the probabilities of the first set in \eqref{eq3.15} is a consequence of Proposition~\ref{concIn}, applied with $\delta=1$, while the second set is covered by the definition of an admissible initial condition. For the third set, notice that the left hand side is a Gaussian distribution independent of $M$, while the right hand side is dominated by the $M^\chi$ term for large $M$. Finiteness of the sum allows applying Borel-Cantelli, i.e.\ $A_M$ occurs only finitely many times almost surely. This means, that a.s. there exists a $M_T$, such that for all $M\geq M_T$ the following three inequalities hold:
\begin{equation}\begin{aligned}
Y_{-M,n}(T)&<3\sqrt{(M+n+1)T}\\
M^\chi&<\zeta_n-\zeta_{-M} \\
-Y_{n,n}(T)&< -\rho T-3\sqrt{(M+n+1)T}+M^\chi
\end{aligned}\end{equation}
Adding up these, $Y_{-M,n}(T)+\zeta_{-M}+\rho T<Y_{n,n}(T)+\zeta_n$ for all $M\geq M_T$ and dropping the term $\rho T$ shows us that the maximizing element in \eqref{eq2.3.7} cannot be a $k\leq-M_T$, or
\begin{equation}
	x_n(T)=x_n^{(M_T)}(T).
\end{equation}
Moreover, applying \eqref{eq2.16}, gives
\begin{equation}
 \widetilde{Y}_{-M_T,n}(t)+\zeta_{-M_T}\leq Y_{-M_T,n}(t)+\zeta_{-M_T}+\rho T<Y_{n,n}(T)+\zeta_n,
\end{equation}
resulting in
\begin{equation}
	\widetilde{x}_n^{(M_T)}(T)=x_n^{(M_T)}(T).
\end{equation}

Repeating the same argument, we see that for every $t\in[0,T]$ there exists $M_t$ such that $x_n(t)=x_n^{(M_t)}(t)=\widetilde{x}_n^{(M_t)}(t)$. Applying Lemma~\ref{lemMaxPath} then gives $x_n(t)=x_n^{(M_T)}(t)=\widetilde{x}_n^{(M_T)}(t)$ for every $t\in[0,T]$. This settles the convergence and the existence of a finite maximizing $k$ in \eqref{eq2.3.7}.

To see \eqref{eqUniBound}, which is equivalent to $\sup_{t\in[0,T]}|x_m^{(M_T)}(t)|<\infty$, we apply the bound
\begin{equation}
	|Y_{k,n}(t)|\leq\sum_{i=k}^n\Big(\sup_{0\leq s\leq t}B_i(s)-\inf_{0\leq s\leq t}B_i(s)\Big)<\infty.
\end{equation}
\end{proof}
\subsection{Last passage percolation}\label{secInfLPP}
It is also possible to extend the last passage percolation interpretation to nontrivial initial conditions. In order to do this, add non-negative Dirac background weights $\zeta_k-\zeta_{k-1}$ on $(0,k)$, $k\in\Z$. The weight of a path is explicitly given by
\begin{equation}
 w(\vec{\pi})=\sum_{i=n_1}^{n_2} \left( B_i(s_{i})-B_i(s_{i-1})+(\zeta_i-\zeta_{i-1})\Id_{s_{i-1}=0}\right),
\end{equation}
and the percolation time $L_{(0,n_1)\to(t,n_2)}$ again as the supremum over the weight of all paths. As $t\to0$ it is clear that any contribution from the Brownian background weight will converge to $0$, so the path tries to accumulate as much of the Dirac weights as possible, i.e.\ we have the initial condition
\begin{equation}
 \lim_{t\to0} L_{(0,n_1)\to(t,n_2)} = \sum_{i=n_1}^{n_2}(\zeta_i-\zeta_{i-1}) = \zeta_{n_2}-\zeta_{n_1-1}.
\end{equation}

By defining a normalized percolation time,
\begin{equation}
 \widehat{L}_{(0,n_1)\to(t,n_2)}=L_{(0,n_1)\to(t,n_2)}+\zeta_{n_1-1},
\end{equation}
we recover the system
\begin{equation}
 x^{(M)}_n(t)=\widehat{L}_{(0,-M)\to(t,n)}.
\end{equation}

For any $M\leq n$, $M=-\infty$ included, we can define an \emph{exit point} of a path $\pi=(\dots,s_{n_2-1},s_{n_2})\in\Pi(0,-M;t,n)$ by
\begin{equation}
 \inf\{k\in[n_1,n_2],s_k>0\},
\end{equation}
which is of course the maximizing index $k$ in \eqref{eq3a}.

We also can reproduce the system $\widetilde{x}^{(M)}_n(t)$ by adding a Lebesgue measure of density $\rho$ on the line $\{-M\}\times\R_+$.


\begin{proof}[Proof of Lemma~\ref{lemMaxPath}]
 For $t_1=t_2$ there is nothing to prove, so let \mbox{$t_1<t_2$}. For each $i$, the equation $x_m(t_i)=x_m^{(M_{t_i})}(t_i)$ implies that the maximizing paths of the LHS and the RHS are equal on the restriction to $(s_i,i\geq M_i)$, i.e.\ they have the same exit point $e_i$ that satisfies $e_i\geq M_{t_i}$. Now if $e_1\geq e_2$, then also $e_1\geq M_{t_2}$, which means that the path maximizing $x_m(t_1)$ is contained in the set $\Pi(0,-M_{t_2};t_1,m)$, resulting in $x_m(t_1)=x_m^{(M_{t_2})}(t_1)$.

 If, however, $e_1< e_2$, then the maximizing path segments $(0,e_1)\to(t_1,m)$ and $(0,e_2)\to(t_2,m)$ would need to have an intersection point $(t^*,m^*)$. We can then construct a new maximizing path for $x_m(t_1)$ by stringing together the segments $(0,e_1)\to(0,e_2)\to(t^*,m^*)\to(t_1,m)$, where the middle segment is part of the $x_m(t_2)$-maximizing path and the last segment is part of the original $x_m(t_1)$-maximizing path. This contradicts the uniqueness of the maximizing path.

 The equality $x_m(t_1)=\widetilde{x}_m^{(M_{t_2})}(t_1)$ is shown in the same way.
\end{proof}

\subsection{Stationarity}\label{secStationarity}
We establish a useful property which will allow us to study our system of interacting Brownian motions through a system with a left-most Brownian particle.
\begin{prop}\label{propBM}
Under the initial condition $\vec{x}(0)=\vec{\zeta}=\vec{\zeta}^\mathrm{stat}(\lambda,\rho)$ defined in \eqref{statModel}, for each $n\leq0$ the process
\begin{equation}\label{eq2.9}
 x_n(t)-\zeta_n-\rho t
\end{equation}
is a standard Brownian motion.
\end{prop}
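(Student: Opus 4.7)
My plan is to prove the statement first for the truncated auxiliary system $\widetilde x_n^{(M)}$ by iterating the Brownian analogue of Burke's theorem of O'Connell--Yor~\cite{OCY01}, and then pass to the limit $M \to \infty$ using the a.s.\ stabilization established in the proof of Proposition~\ref{propMconv}. The reason the auxiliary system is the right vehicle is that the extra drift $\rho$ placed on the leftmost particle is tuned exactly to the rate $\rho$ of the initial exponential gaps; this is the stability condition needed for $\textrm{Exp}(\rho)$ to be the stationary queue-length distribution of the pairwise Brownian queue.

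The induction starts with the base case $\widetilde x_{-M}^{(M)}(t) - \zeta_{-M} - \rho t = B_{-M}(t)$, which is a standard Brownian motion by construction. For the inductive step, assume $\widetilde x_{k-1}^{(M)}(t) - \zeta_{k-1} - \rho t$ is a standard Brownian motion for some $-M < k \leq 0$. Under $\vec\zeta^{\,\textrm{stat}}(\lambda,\rho)$, the new gap $\zeta_k - \zeta_{k-1} \sim \textrm{Exp}(\rho)$ is independent of the past $\{B_i,\ i \leq k-1\}$, of the earlier gaps, and of $B_k$, hence independent of the whole past of $\widetilde x_{k-1}^{(M)}$ and of the fresh noise. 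The Skorokhod representation
\begin{equation*}
\widetilde x_k^{(M)}(t) = \max\Big\{\zeta_k + B_k(t),\,\sup_{0 \leq s \leq t}\big(\widetilde x_{k-1}^{(M)}(s) + B_k(t) - B_k(s)\big)\Big\}
\end{equation*}
is then precisely the Brownian queue with arrival process of drift $\rho$, fresh noise $B_k$, and stationary initial queue length $\zeta_k - \zeta_{k-1}$. Brownian Burke yields that the departure process $\widetilde x_k^{(M)}(t) - \zeta_k - \rho t$ is again a standard Brownian motion, and iterating up to $k = n$ establishes the claim for $\widetilde x_n^{(M)}$.

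To finish, I would take $M \to \infty$. The proof of Proposition~\ref{propMconv} shows that for any fixed $T > 0$ and $n \in \Z$ there almost surely exists a random $M_T < \infty$ with $\widetilde x_n^{(M)}(t) = x_n(t)$ for every $t \in [0,T]$ and every $M \geq M_T$; indeed the same Borel--Cantelli estimate used there rules out the leftmost term being the maximizer for all $M \geq M_T$. Hence the sequence $\widetilde x_n^{(M)}(\cdot) - \zeta_n - \rho\cdot$ stabilizes on $[0,T]$ to $x_n(\cdot) - \zeta_n - \rho\cdot$; since every term is a standard Brownian motion on $[0,T]$, so is its almost sure limit. Letting $T \to \infty$ concludes the proof. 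The main obstacle will be the inductive step: invoking Brownian Burke in exactly the right form, verifying the independence of the fresh gap from the history of the queue (which follows from the product structure of $\vec\zeta^{\,\textrm{stat}}(\lambda,\rho)$), and checking that our upward-reflection convention together with the drift $\rho$ on the input indeed makes $\textrm{Exp}(\rho)$ the stationary queue-length distribution.
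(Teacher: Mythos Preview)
Your proposal is correct and follows essentially the same approach as the paper: induct along the auxiliary drifted system $\widetilde x_n^{(M)}$ using the Brownian Burke theorem (the paper states this as Proposition~\ref{propNeil}, a slight adaptation of~\cite{OCY01}) at each step, then invoke the a.s.\ stabilization $\widetilde x_n^{(M_T)}(t)=x_n(t)$ from the proof of Proposition~\ref{propMconv} to transfer the conclusion to $x_n$.
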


\begin{remark}\label{RemarkBurke}
Proposition~\ref{propBM} allows us to restrict our attention to the half-infinite system. In fact, conditioned on the path of $x_0$, the systems of particles $\{x_n(t),n<0\}$ and $\{x_n(t),n>0\}$ are independent, as it is clear by the definition of the system. Then (\ref{eq2.9}) implies that the law of \mbox{$\{x_n(t), n>0\}$} is the same as the one obtained replacing the infinitely many particles $\{x_m(t), m\leq 0\}$ with a single Brownian motion $x_0(t)$ which has a drift $\rho$. This property will be used to derive our starting result, Proposition~\ref{propKernel}.
\end{remark}

\begin{remark}From a stochastic analysis point of view, we find that the system \mbox{$\{x_n(t),n\geq0\}$} satisfies
\begin{equation}\begin{aligned}
 x_n(t)&=\zeta_n+B_n(t)+L^n(t),\quad\text{for } n\geq1,\\
 x_0(t)&=\widetilde{B}_0(t)+\rho t.
 \end{aligned}\end{equation}
Here $L^n$, $n\geq2$, are continuous non-decreasing processes increasing only when $x_n(t)=x_{n-1}(t)$. In fact, $L^n$ is twice the semimartingale local time at zero of $x_n-x_{n-1}$.
Notice that $\widetilde{B}_0(t)$ is a standard Brownian motion independent of $\{\zeta_n,B_n(t),n\geq1\}$, but not equal to $B_0(t)$.\end{remark}

\begin{proof}[Proof of Proposition~\ref{propBM}]
First notice that for any $M$,
\begin{equation}
 \widetilde{x}_{-M}^{(M)}(t)-\zeta_{-M}-\rho t,
\end{equation}
is a Brownian motion. Now assume $\widetilde{x}_{n-1}^{(M)}(t)-\zeta_{n-1}-\rho t$ is a Brownian motion. By definition,
\begin{equation}\begin{aligned}
\widetilde{x}_n^{(M)}&(t)-\zeta_{n-1} \\&=\max\big\{\zeta_n-\zeta_{n-1}+B_n(t),\sup_{0\leq s \leq t}(\widetilde{x}_{n-1}^{(M)}(s)-\zeta_{n-1}+B_n(t)-B_n(s))\big\},
\end{aligned}\end{equation}
which allows us to apply Proposition~\ref{propNeil}, i.e., we have that
\begin{equation}
 \widetilde{x}_n^{(M)}(t)-\zeta_{n-1}-(\zeta_n-\zeta_{n-1})-\rho t=\widetilde{x}_n^{(M)}(t)-\zeta_n-\rho t
\end{equation}
is a Brownian motion. Since $\widetilde{x}_n^{(M_T)}(t)=x_n(t)$ the proof is completed.
\end{proof}

It is clear, that in the case $\lambda=\rho$ the process \eqref{eq2.9} is a Brownian motion for $n>0$, too, i.e.,  the system is stationary in $n$. We also have stationarity in $t$, in the sense that for each $t\geq0$ the random variables \mbox{$\{x_n(t)-x_{n-1}(t),n\in\Z\}$} are independent and distributed exponentially with parameter $\rho$. The following result is a small modification of Theorem 2 in~\cite{OCY01}.

\begin{prop}[Burke's theorem for Brownian motions]\label{propNeil}
Fix $\rho>0$ and let $B(t)$, $C(t)$ be standard Brownian motions, as well as $\zeta\sim\exp(\rho)$, independent. Define the process
\begin{equation}
D(t)=\max\big\{\zeta+C(t),\sup_{0\leq s \leq t}(B(s)+\rho s+C(t)-C(s))\big\}.
\end{equation}
Then
\begin{equation}\label{eq2.29}
 D(t)-\zeta-\rho t
\end{equation}
is distributed as a standard Brownian motion.
\end{prop}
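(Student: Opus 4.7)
The plan is to rewrite $D$ in Skorokhod form, identify the gap $D-B-\rho\cdot$ as a stationary reversible reflected Brownian motion, and invoke a Burke-type argument via time reversal (as in~\cite{OCY01}).

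First I would pull $C(t)$ out of the maximum in the definition of $D$ and observe
\begin{equation*}
D(t)=C(t)+\max\big\{\zeta,\sup_{0\leq s\leq t}(B(s)+\rho s-C(s))\big\}=\zeta+C(t)+L(t),
\end{equation*}
where $L(t):=\big(\sup_{0\leq s\leq t}(B(s)+\rho s-C(s))-\zeta\big)^+$ is continuous, non-decreasing, satisfies $L(0)=0$, and grows only on $\{t:D(t)=B(t)+\rho t\}$. In particular, $M(t):=D(t)-\zeta-\rho t=C(t)-\rho t+L(t)$ is a continuous semimartingale with quadratic variation $[M]_t=t$.

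Second, I would introduce the gap $Q(t):=D(t)-B(t)-\rho t\geq 0$, which by the identity above satisfies $Q(t)=\zeta+(C(t)-B(t))-\rho t+L(t)$. This presents $Q$ as the Skorokhod reflection at $0$ of a Brownian motion with drift $-\rho$ and diffusion coefficient $\sqrt{2}$, started from $\zeta\sim\mathrm{Exp}(\rho)$. A standard generator calculation gives $\mathrm{Exp}(\rho)$ as the invariant distribution on $[0,\infty)$, and since the drift $-\rho$ is the gradient of the potential $\rho x$, this invariant process is in fact reversible. The hypothesis on $\zeta$ therefore places $Q$ in its stationary and reversible regime.

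The core of the argument is then a Burke-type identification via time reversal. Fix $T>0$ and define $\widehat{Q}(t):=Q(T-t)$, $\widehat{B}(t):=B(T)-B(T-t)$, $\widehat{C}(t):=C(T)-C(T-t)$, and $\widehat{L}(t):=L(T)-L(T-t)$. Reversibility of $Q$ gives $\widehat{Q}\stackrel{d}{=}Q$ as processes, while $\widehat{B},\widehat{C}$ remain independent standard Brownian motions. A time-reversed It\^o calculation shows that $\widehat{L}$ is the Skorokhod local time at $0$ of $\widehat{Q}$ with respect to the reversed noise, so $\widehat{Q}$ satisfies a Skorokhod equation identical in form to that of $Q$. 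Matching the two decompositions identifies what plays the role of the reversed ``input'' $\widehat{B}(\cdot)+\rho\cdot$, which is by construction a Brownian motion with drift $\rho$, with $M(T)-M(T-\cdot)$. Subtracting the drift shows $M(T)-M(T-t)$ is a standard Brownian motion on $[0,T]$, whence $M$ itself is a standard Brownian motion on $[0,T]$. Since $T$ is arbitrary, this completes the proof.

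The main obstacle will be the rigorous execution of the time-reversal step: one must verify that $\widehat{L}=L(T)-L(T-\cdot)$ is in fact the Skorokhod local time of $\widehat{Q}$ at $0$ in the reversed filtration, and then carry out the matching that identifies $M(T)-M(T-\cdot)$ with $\widehat{B}+\rho\cdot$. It is worth noting that a short-cut via L\'evy's characterization in the joint filtration $\sigma(\zeta,B,C)$ is not available: the compensator of $L$ in this filtration depends on the current (non-stationary) value $Q(s)$, so $L(t)-\rho t$ fails to be a martingale and $M$ is not a Brownian motion with respect to $\sigma(\zeta,B,C)$. The Burke-type character of $M$ is visible only in its own, smaller filtration, and is genuinely a consequence of the stationarity and reversibility of $Q$.
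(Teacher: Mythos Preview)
Your approach is sound and shares the same core idea as the paper's proof: both exploit the Burke property of the stationary Brownian queue via reversibility of the reflected Brownian motion $Q$. The difference is in execution. The paper extends $B,C$ to two-sided Brownian motions, defines the stationary queue $q(t)=\sup_{s\le t}\{B(t)-B(s)+C(t)-C(s)-\rho(t-s)\}$, checks that $q(0)\sim\exp(\rho)$ is independent of $\{B(t),C(t),t\ge 0\}$ (so it may be identified with $\zeta$), and then simply \emph{cites} Theorem~2 of~\cite{OCY01} to conclude that $d(t)=B(t)+q(0)-q(t)$ is Brownian. A two-line computation then identifies $-d$ with $D-\zeta-\rho t$ after flipping the sign of $B$. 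Your route instead re-derives the cited theorem via the time-reversal argument, which is more self-contained but heavier.

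The step where your sketch is thinnest is exactly the one you flag. Equating the two Skorokhod decompositions of $\widehat Q$ gives only $\widehat W'=-\widehat W-2\widehat L$, with $\widehat W'$ the driving noise of $\widehat Q$ in the reversed filtration; this does not by itself isolate $\widehat M=\widehat C-\rho\cdot+\widehat L$ as a Brownian motion, since the splitting of a variance-$2$ Brownian motion into two independent unit-variance pieces is not canonical. The actual mechanism in~\cite{OCY01} is a specific pathwise map $(B,C)\mapsto(d,e)$ (with $d+e=B+C$) that intertwines with time reversal; reversibility of $Q$ then transports the law of $B$ to that of $d$. A useful shortcut for your write-up: from your own identities $M=C-\rho\cdot+L$ and $Q=\zeta+C-B-\rho\cdot+L$ one gets $M(t)=B(t)+Q(t)-Q(0)$, which is precisely the departure process of~\cite{OCY01} after $B\mapsto -B$; at that point you may as well cite their theorem rather than redo its proof.
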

\begin{proof}
Extend the processes $B(t)$, $C(t)$ to two-sided Brownian motions indexed by $\R$. Defining
\begin{equation}
 q(t)=\sup_{-\infty<s\leq t}\{B(t)-B(s)+C(t)-C(s)-\rho(t-s)\}
\end{equation}
and
\begin{equation}
 d(t)=B(t)+q(0)-q(t),
\end{equation}
we can apply Theorem 2~\cite{OCY01}, i.e., $d(t)$ is a Brownian motion. Now,
\begin{equation}
 q(0)=\sup_{s\leq0}\{-B(s)-C(s)+\rho s\}\stackrel{d}{=}\sup_{s\geq0}\{\sqrt{2}B(s)-\rho s\}\stackrel{d}{=}\sup_{s\geq0}\big\{B(s)-\frac{\rho}{2} s\big\},
\end{equation}
so by Lemma~\ref{lemBMExp} it has exponential distribution with parameter $\rho$. As it is independent of the processes $\{B(t),C(t),t\geq0\}$ we can write $q(0)=\zeta$. Dividing the supremum into $s<0$ and $s\geq0$ we arrive at:
\begin{equation}\begin{aligned}
 -d(t)&=q(t)-B(t)-q(0)\\
 &=\max\Big\{C(t)-\rho t,\sup_{0\leq s\leq t}\{-B(s)+C(t)-C(s)-\rho(t-s)\}-\zeta\Big\},
\end{aligned}\end{equation}
which is \eqref{eq2.29} up to a sign flip of $B(s)$.
\end{proof}

\begin{lem}\label{lemBMExp}
Fix $\rho>0$ and let $B(t)$ be a standard Brownian motion. Then
 \begin{equation}
  \sup_{s\geq0}(B(s)-\rho s)\sim\exp(2\rho).
 \end{equation}
\end{lem}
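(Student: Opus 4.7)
The plan is to use the exponential martingale approach, which is the standard route for computing the distribution of the all-time supremum of a Brownian motion with negative drift. First I would observe that by the strong law of large numbers for Brownian motion ($B(t)/t\to 0$ a.s.), we have $B(s)-\rho s\to-\infty$ almost surely as $s\to\infty$, so the supremum $M:=\sup_{s\geq 0}(B(s)-\rho s)$ is finite a.s.\ and non-negative (taking $s=0$).

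Next, for $a>0$ introduce the stopping time
\begin{equation}
\tau_a=\inf\{t\geq 0 : B(t)-\rho t=a\},
\end{equation}
so that $\{M\geq a\}=\{\tau_a<\infty\}$. The process
\begin{equation}
N(t)=\exp\bigl(2\rho(B(t)-\rho t)\bigr)=\exp\bigl(2\rho B(t)-2\rho^2 t\bigr)
\end{equation}
is a standard exponential martingale (Girsanov/Cameron--Martin), and on $\{t\leq\tau_a\}$ it is bounded by $e^{2\rho a}$. Applying optional stopping at $\tau_a\wedge T$ gives $E[N(\tau_a\wedge T)]=1$ for every $T>0$.

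The key step is passing to $T\to\infty$. On $\{\tau_a<\infty\}$ the integrand is eventually equal to $e^{2\rho a}$; on $\{\tau_a=\infty\}$, we have $B(T)-\rho T\to-\infty$, so $N(T)\to 0$. Since $N(\tau_a\wedge T)\leq e^{2\rho a}$ uniformly, dominated convergence yields
\begin{equation}
1=e^{2\rho a}\,\Pb(\tau_a<\infty)=e^{2\rho a}\,\Pb(M\geq a),
\end{equation}
so $\Pb(M\geq a)=e^{-2\rho a}$ for all $a>0$, which is exactly the tail of an $\exp(2\rho)$ random variable. The only point that requires care is the dominated convergence argument when $\tau_a=\infty$, but since $\sup_{t\geq 0}|B(t)-\rho t|<\infty$ is not known uniformly, what is needed is only the pointwise bound $N(\tau_a\wedge T)\leq e^{2\rho a}$, which holds by definition of $\tau_a$; so there is no real obstacle, just bookkeeping.
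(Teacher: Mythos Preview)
Your proof is correct and coincides with the paper's approach (b): hitting time plus exponential martingale $e^{2\rho B(t)-2\rho^2 t}$ and optional stopping, with your dominated-convergence justification being a bit more explicit than the paper's. The paper additionally offers an alternative argument (a), identifying $\sup_{0\leq s\leq t}(B(s)-\rho s)$ with a Brownian motion reflected at $0$ with drift $-\rho$ and reading off the stationary law as $t\to\infty$.
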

\begin{proof}
This formula can be found in~\cite{BS96}, Part II, Section 2, Eq.\ (1.1.4).  We provide here  two different proofs of the claim.\\ (a) The random variable
 \begin{equation}
  \sup_{0\leq s\leq t}(B(s)-\rho s)
 \end{equation}
is distributed as $\tilde B(t)$, where $\tilde B$ is a Brownian motion starting at $0$, reflected (upwards) at zero and drift $-\rho$. This follows from the Skorokhod construction. Indeed,
\begin{equation}
\begin{aligned}
\tilde B(t)&\stackrel{d}{=}\sup_{0\leq s\leq t} (B(t)-B(s) - \rho(t-s)) \\
&= \sup_{0\leq s\leq t} (B(t)-B(t-s) - \rho s) \stackrel{d}{=}\sup_{0\leq s\leq t} (B(s) - \rho s),
\end{aligned}
\end{equation}
where we changed $s$ into $t-s$ and used the fact that $B(t)-B(t-s)$ has the same distribution of $B(s)$. As $t\to\infty$, this converges to the stationary distribution of this process, which is the exponential distribution with parameter $2\rho$.\\
(b) For $x\geq 0$, define the stopping time $\tau_x=\inf\{s\geq 0\,|\, B(s)-s\rho/2\geq x\}$. Then,
\begin{equation}
\Pb\left(\sup_{s\geq 0}(B(s)-\rho s)\geq x\right)=\Pb(\tau_x<\infty).
\end{equation}
The process $s\mapsto M_s=e^{\rho B(s)-\tfrac12 \rho^2 s}$ is a martingale (the geometric Brownian motion), with $M_0=1$. Thus by applying the optional sampling theorem (apply it first to $\tau_x\wedge T$ and then take $T\to\infty$) one obtains
\begin{equation}
1=\mathbbm{E}(M_{\tau_x}) = e^{\rho x}\Pb(\tau_x<\infty)+0\Pb(\tau_x=\infty),
\end{equation}
from which $\Pb(\tau_x<\infty)=e^{-\rho x}$ as claimed (just replace $\rho$ by $2\rho$).
\end{proof}

\chapter{Determinantal point processes}\label{secDPP}
Determinantal point processes are the main tool for the study of reflected Brownian motions. Thereby marginal distributions can be expressed in terms of Fredholm determinants, a form which is well suited for an asymptotic analysis. However, only partial aspects of the underlying theory
of determinantal point processes is needed for our purposes and we merely introduce the main definitions including the crucial Lemma~\ref{lemDetMeasure}.
Up to minor modifications, we follow~\cite{Jo05} as a very accessible introduction to the topic.

\section{Definition}

\begin{defin}
 Let $\Lambda$ be a complete separable metric space and let $\mathcal{N}(\Lambda)$ denote the space of all counting measures $\mu$ on $\Lambda$ which are boundedly finite, i.e.\ $\mu(X)<\infty$ for all bounded $X\subseteq\Lambda$. Define a $\sigma$-algebra $\mathcal{F}$ on $\mathcal{N}(\Lambda)$ by taking the smallest $\sigma$-algebra for which $X\mapsto\mu(X)$ is measurable for all Borel sets $X$ in $\Lambda$.

 A \emph{point process} is a probability measure on $\mathcal{N}(\Lambda)$. A \emph{signed point process} is a signed measure on $\mathcal{N}(\Lambda)$ that is normalized to $1$.
\end{defin}
Typically, $\Lambda$ will be either $\R$ or $S\times\R$ for some discrete set $S$. For any realization $\mu$ of a point process $\Xi$, $\mu(X)$ is interpreted as the number of points in the set $X$. For each bounded set $X$, the restriction of the point process to this set is given by a finite sum of Dirac measures:
\begin{equation}
 \mu|_X=\sum_{i=1}^{\mu(X)}\delta(x_i).
\end{equation}

For a given point process $\Xi$ we can construct a measure $M_n$ over $\Lambda^n$, called the factorial moment measure, by
\begin{equation}
 M_n=\mathbbm{E}\left[\sum_{x_{i_1}\neq\dots\neq x_{i_n}}\delta(x_{i_1},\dots,x_{i_n})\right],
\end{equation}
where we abused the notation for the expectation in the case of a signed point process. The measure $M_n$ is an intensity measure for $n$-tuples of distinct points in the original process.

\begin{defin}
 If $M_n$ is absolutely continuous with respect to the Lebesgue measure, i.e.
 \begin{equation}
  M_n(X_1,\dots,X_n)=\int_{X_1\times\dots\times X_n}\rho_n(x_1,\dots,x_n)\,\mathrm{d} x_1\dots\mathrm{d} x_n
 \end{equation}
for all Borel sets $X_i$ in $\Lambda$, we call $\rho_n(x_1,\dots,x_n)$ the $n$-th \emph{correlation function} or \emph{joint intensity}.
\end{defin}

Often times we will construct a point process from a symmetric measure on $\R^N$ with a density $p_N(x_1,\dots,x_N)$ by employing the canonical map \mbox{$(x_1,\dots,x_N)\mapsto\sum_{i=1}^N\delta(x_i)$}, and speak interchangeably of the measure and its associated point process. The correlation functions are then given by
\begin{equation}
 \rho_n(x_1,\dots,x_n)=\frac{N!}{(N-n)!}\int_{\R^{N-n}} p_N(x_1,\dots,x_N)\,\mathrm{d} x_{n+1}\dots\mathrm{d} x_N.
\end{equation}
If the density $p_N$ is not symmetric, but instead normalized over the Weyl chamber $W^N=\{\vec{x}\in\R^N|x_1\leq\dots\leq x_N\}$, it can be transformed into a symmetric one with density
\begin{equation}
\frac{1}{N!}\sum_{\sigma\in\mathcal{S}_N}p_N(x_{\sigma(1)},\dots,x_{\sigma(N)}).
\end{equation}

We will study point processes whose correlation function are given by determinants:
\begin{defin}
 Consider a (signed) point process on $\Lambda$, all of whose correlation functions exist. If there is a function $K\colon\Lambda\times\Lambda\to\C$ such that
 \begin{equation}
  \rho_n(x_1,\dots,x_n)=\det_{1\leq i,j\leq n}[K(x_i,x_j)],
 \end{equation}
for all $x_1,\dots,x_n\in\Lambda$, $n\geq1$, then we say that it is a \emph{(signed) determinantal process}, and call $K$ its \emph{correlation kernel}.
\end{defin}

Correlation functions allow for a convenient calculation of \emph{hole probabilities}, i.e.\ the probability of finding no particle in some set $X$:
\begin{equation}\label{eqhole}
 \Pb\left(\mu(X)=0\right)=\sum_{n\geq0}\frac{(-1)^n}{n!}\int_{X^n}\rho_n(x_1,\dots,x_n)\mathrm{d} x_1\dots\mathrm{d} x_n.
\end{equation}
If $\Lambda=\R$, then choosing $X=(s,\infty)$ gives the distribution function of the rightmost particle $\Pb(x_\textrm{max}\leq s)$, provided the series converges absolutely.

\section{Fredholm determinants}
Let $\mathcal{H}$ be a separable Hilbert space with scalar product
denoted by $\langle\cdot,\cdot\rangle$ and $A$ be a bounded linear operator acting on $\mathcal{H}$. Let $|A|=\sqrt{A^*A}$ be the unique square root of the operator $A^*A$, and $\{e_i,i\in I\}$ be an orthonormal basis of $\mathcal{H}$. The \emph{trace norm} of $A$ is given by $||A||_1=\sum_i\langle e_i,|A|e_i\rangle$, and $A$ is called \emph{trace class}, if $||A||_1<\infty$. Similarly, the \emph{Hilbert-Schmidt norm} of $A$ is given by $||A||_2=\left(\,\sum_i||Ae_i||^2\right)^{1/2}$, and $A$ is called \emph{Hilbert-Schmidt}, if $||A||_2<\infty$. Both norms are independent of the choice of the basis, and, with $||\cdot||_\textrm{op}$ denoting the usual operator norm, satisfy:
\begin{equation}
 ||A||_\textrm{op}\leq||A||_2\leq||A||_1.
\end{equation}
With $B$ being another bounded linear operator acting on $\mathcal{H}$ we also have the inequalities
\begin{equation}
||AB||_1\leq||A||_2||B||_2,
\end{equation}
as well as
\begin{equation}
 ||AB||_1\leq||A||_1||B||_\textrm{op}.
\end{equation}

We call $A$ an \emph{integral operator} on the space $L^2(\Lambda)$ if there is a function \mbox{$A\colon\Lambda\times\Lambda\to\R$}, called its \emph{integral kernel}, so that $(Af)(x)=\int_\Lambda A(x,y)f(y)\mathrm{d} y$. We abuse notation by denoting the operator and its integral kernel by the same letter. The Hilbert-Schmidt norm of an integral operator is given by
\begin{equation}
 ||A||_2^2=\int_{\Lambda^2}|A(x,y)|^2\mathrm{d} x\mathrm{d} y.
\end{equation}

\begin{defin}\label{defFredhD}
 Let $A\colon L^2(\Lambda)\to L^2(\Lambda)$ be a trace class operator with integral kernel $A(x,y)$. Then the \emph{Fredholm determinant} of $A$ is given by
 \begin{equation}
  \det(\Id+A)_{L^2(\Lambda)}=\sum_{n=0}^\infty\frac{1}{n!} \int_{\Lambda^n}\mathrm{d} x_1\dots\mathrm{d} x_N\det_{1\leq i,j\leq n}[A(x_i,x_j)].
 \end{equation}
\end{defin}
Instead of using the series expansion, one can define the Fredholm determinant in a more abstract way for operators on a general separable Hilbert space $\mathcal{H}$, see~\cite{Sim00}. It can be seen as a natural generalization of the ordinary determinant as it satisfies $\det(\Id+A)=\prod_n(1+\lambda_n)$, with $\lambda_n$ being the eigenvalues of $A$, and has also the following properties:
\begin{itemize}
 \item Continuity, specifically: \begin{equation}|\det(\Id+A)-\det(\Id+B)|\leq||A-B||\exp(||A||_1+||B||_1+1),\end{equation}
 \item Multiplicativity: \begin{equation}\det(\Id+A+B+AB)=\det(\Id+A)\det(\Id+B),\end{equation}
 \item Sylvester's determinant theorem: \begin{equation}\det(\Id+AB)=\det(\Id+BA).\end{equation}
\end{itemize}
By the last identity, the Fredholm determinant is invariant under conjugations $A\mapsto U^{-1}AU$. Definition~\ref{defFredhD} can thus be extended to operators that are not necessarily trace class themselves, but have a conjugate that is trace class.

For a determinantal point process, the formula \eqref{eqhole} for the hole probability can be written as a Fredholm determinant:
\begin{equation}\begin{aligned}
 \Pb\left(\mu(X)=0\right)&=\sum_{n\geq0}\frac{(-1)^n}{n!}\int_{X^n}\det_{1\leq i,j\leq n}[K(x_i,x_j)]\mathrm{d} x_1\dots\mathrm{d} x_n\\
 &=\det(\Id-\Id_XK\Id_X)_{L^2(\Lambda)},
\end{aligned}\end{equation}
where $\Id_X$ denotes the projection operator on the set $X$. In the case $\Lambda=\R$ we recover the distribution function of the rightmost particle by choosing $X=(s,\infty)$.

In the case $\Lambda=S\times\R$, for a discrete set $S$, we have a point process with particles in each layer $\{r_k\}\times\R$, $r_k\in S$. The choice $X=\bigcup_k \{r_k\}\times(s_k,\infty)$ then gives the joint distribution of the rightmost particles in the layers $r_k$. For this choice of $X$ we use the shorthand $\Id_X=\chi_s(r_k,x)$. The integrals over the discrete measure can be written out explicitly as sums, resulting in the formula

\begin{equation}
\begin{aligned}
	\Pb&\bigg(\bigcap_{k=1}^m\big\{x_\textrm{max}(r_k)\leq s_k\big\}\bigg)=\det(\Id-\chi_sK\chi_s)_{L^2(S\times\R)}\\
	&\quad=\sum_{n\geq0}\frac{(-1)^n}{n!}\sum_{i_1,\dots,i_n=1}^m\int_{\R^n} \det_{1\leq k,l\leq n}\left[K(r_{i_k},x_k;r_{i_l},x_l)\right] \prod_{k=1}^n \Id_{x_k>s_{i_k}}\mathrm{d} x_k.
\end{aligned}\end{equation}

\section{Correlation kernel}
Whenever a point process comes from a measure that is given by the product of two or more determinants in a certain way, the correlation functions are determinantal and there is an explicit formula for the correlation kernel. This has been discovered in~\cite{EM97} and generalized to various similar settings later on~\cite{FNH99,FS03,Jo03b}. This explicit formula, however, involves the inverse of a quite complicated matrix. Instead of finding this inverse, which is usually not feasible, one relies on the \emph{orthogonal polynomial method}, i.e.\ chooses functions in the right way, such that the resulting matrix is diagonal. A variant of this method is the basis of our analysis:
\begin{lem}[Corollary of Theorem~4.2~\cite{BF07}]\label{lemDetMeasure}
Assume we have a signed measure on $\{x_i^n,n=1,\dotsc,N,i=1,\dotsc,n\}$ given in the form,
\begin{equation}\label{Sasweight}
 \frac{1}{Z_N}\prod_{n=1}^{N} \det[\phi_n(x_i^{n-1},x_j^n)]_{1\leq i,j\leq n} \det[\Psi_{N-i}^{N}(x_{j}^N)]_{1\leq i,j \leq N},
\end{equation}
where $x_{n+1}^n$ are some ``virtual'' variables and $Z_N$ is a normalization constant. If $Z_N\neq 0$, then the correlation functions are determinantal.

To write down the kernel we need to introduce some notations. Define
\begin{equation}\label{Sasdef phi12}
\phi^{(n_1,n_2)}(x,y)=
\begin{cases} (\phi_{n_1+1} \ast \dotsb \ast \phi_{n_2})(x,y),& n_1<n_2,\\
0,& n_1\geq n_2,
\end{cases}.
\end{equation}
where $(a* b)(x,y)=\int_\R \mathrm{d} z\, a(x,z) b(z,y)$, and, for $1\leq n<N$,
\begin{equation}\label{Sasdef_psi}
\Psi_{n-j}^{n}(x) := (\phi^{(n,N)} * \Psi_{N-j}^{N})(y), \quad j=1,\dotsc,N.
\end{equation}
Then the functions
\begin{equation}
\{ \phi^{(0,n)}(x_1^0,x), \dots,\phi^{(n-2,n)}(x_{n-1}^{n-2},x), \phi_{n}(x_{n}^{n-1},x)\}
\end{equation}
are linearly independent and generate the $n$-dimensional space $V_n$. Define a set of functions $\{\Phi_{n-j}^{n}(x), j=1,\dotsc,n\}$ spanning $V_n$ defined by the orthogonality relations
\begin{equation}\label{Sasortho}
\int_\R \mathrm{d} x\, \Phi_{n-i}^n(x) \Psi_{n-j}^n(x) = \delta_{i,j}
\end{equation}
for $1\leq i,j\leq n$.

Further, if $\phi_n(x_n^{n-1},x)=c_n \Phi_0^{n}(x)$, for some $c_n\neq 0$, \mbox{$n=1,\dotsc,N$}, then the kernel takes the simple form
\begin{equation}\label{SasK}
K(n_1,x_1;n_2,x_2)= -\phi^{(n_1,n_2)}(x_1,x_2)+ \sum_{k=1}^{n_2} \Psi_{n_1-k}^{n_1}(x_1) \Phi_{n_2-k}^{n_2}(x_2).
\end{equation}
\end{lem}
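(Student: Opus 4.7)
The plan is to derive the statement as a corollary of the biorthogonal-ensemble form of the Eynard–Mehta theorem, namely Theorem 4.2 of \cite{BF07}. The general theorem asserts that any signed measure of the product-of-determinants shape \eqref{Sasweight} gives rise to a determinantal point process whose kernel, after integrating out every layer except $n_1$ and $n_2$ via successive applications of the Cauchy–Binet formula, admits the representation
\begin{equation*}
K(n_1,x_1;n_2,x_2) = -\phi^{(n_1,n_2)}(x_1,x_2) + \sum_{i,j=1}^{n_2} \Psi_{n_1-i}^{n_1}(x_1)\,[G^{-1}]_{i,j}\,F_j^{n_2}(x_2),
\end{equation*}
where $\{F_j^{n_2}\}_{j=1}^{n_2}$ is any basis of the space $V_{n_2}$ generated by the functions $\phi^{(k,n_2)}(x_{k+1}^k,\cdot)$ for $0\le k\le n_2-1$, and $G_{i,j}$ is the corresponding Gram-type matrix pairing these basis functions with $\Psi_{n_2-i}^{n_2}$. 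The hypothesis $Z_N\neq 0$ is equivalent to invertibility of $G$, so the right-hand side is well-defined.

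The next step is purely linear-algebraic. I would observe that the orthogonality relations \eqref{Sasortho} define $\Phi_{n-k}^n$ to be precisely the basis of $V_n$ dual to $\{\Psi_{n-i}^n\}$ under the $L^2$ pairing. In matrix terms, the change of basis from any spanning family $\{F_j^{n_2}\}$ to $\{\Phi_{n_2-k}^{n_2}\}$ is implemented by $G^{-1}$, so the double sum above collapses into $\sum_{k=1}^{n_2}\Psi_{n_1-k}^{n_1}(x_1)\Phi_{n_2-k}^{n_2}(x_2)$, yielding the target formula \eqref{SasK}.

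The subtle point where the additional hypothesis $\phi_n(x_n^{n-1},x) = c_n \Phi_0^n(x)$ enters is a consistency check: the spanning family of $V_n$ appearing in the statement already contains $\phi_n(x_n^{n-1},\cdot)$, which the hypothesis identifies with $c_n\Phi_0^n$, while $\Phi_0^n$ is itself defined through orthogonality against the $\Psi$'s using this very family. I would verify the resulting fixed-point relation directly, extracting the diagonal entry $c_n$ of $G$ in the process and thereby guaranteeing that the dual-basis interpretation is internally consistent. The main obstacle I anticipate is therefore not the final simplification, which is linear algebra, but the careful bookkeeping in the Cauchy–Binet reduction across layers — in particular tracking the role of the virtual variables $x_{n+1}^n$ as they propagate through the convolutive structure \eqref{Sasdef_psi} — the substance of which is already carried out in \cite{BF07}.
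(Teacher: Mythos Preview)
The paper does not supply its own proof of this lemma; it is stated as a direct corollary of Theorem~4.2 in \cite{BF07} and no argument is given beyond the citation. Your sketch is the standard derivation one carries out to pass from that general Eynard--Mehta statement to the simplified kernel \eqref{SasK}, and it is correct in outline: invoke the general kernel with the inverse Gram matrix, then choose the $\Phi$'s as the dual basis so that $G$ becomes the identity.

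One small point of emphasis: your description of the hypothesis $\phi_n(x_n^{n-1},\cdot)=c_n\Phi_0^n$ as a ``consistency check'' undersells its role. In the general kernel of \cite{BF07} the second sum runs up to $N$ and involves the virtual-variable functions $\phi^{(k-1,n_2)}(x_k^{k-1},\cdot)$ explicitly; the hypothesis is precisely what forces the change-of-basis matrix from $\{\phi^{(k-1,n_2)}(x_k^{k-1},\cdot)\}$ to $\{\Phi_{n_2-k}^{n_2}\}$ to be triangular with the $c_n$'s on the diagonal, so that the sum truncates at $n_2$ and the virtual variables drop out entirely. This is the substantive simplification, not merely a compatibility condition. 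Otherwise your plan matches the intended route.
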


\section{Strategy  for future proofs}
The basic concepts of the asymptotic behaviour of our system of one-sided reflected Brownian motions in the three fundamental initial conditions as well as the three mixed initial ones have quite some overlap.

The starting point is always the formula for the transition density provided by Proposition~\ref{PropWarren}. After that one can insert resp. integrate over the initial condition, subject to some simplifications using stationarity. This is trivial for packed and periodic initial conditions, but involves some subtle algebraic handling in the Poisson case. After obtaining the fixed time measure in this way, one has to introduce virtual particles to obtain a version of the measure that allows applying Lemma~\ref{lemDetMeasure}.

The resulting Fredholm determinant expression for the joint distribution of particles at a fixed time can then be analyzed asymptotically using steep descent. In order for the Fredholm determinants to converge, one has to show not only pointwise convergence of the kernel but also some uniform exponential bounds.

The situation is more complex for the Poisson case, as a determinantal structure exists only for a positive difference $\lambda-\rho$ of the Poisson densities on the positive and negative half-axis. The rigorous limit $\rho\to\lambda$ is the topic of Section~\ref{SectAnCont}.

\chapter{Airy processes}\label{secAiryP}

Airy processes arise through a scaling limit of our system of reflected Brownian motions.
Their detailed structure depends on the initial conditions. For the various Airy processes appearing in the text we list here the definitions and, in particular, discuss their interrelation.

\section{Elementary Airy processes}
\subsection{Airy\texorpdfstring{$_2$}{\_2} process}
The first appearance of an Airy-type process has been the Airy$_2$ process as the limit of the top layer in the polynuclear growth model~\cite{PS02}. In this initial paper, it has already been noted that this process is stationary, has continuous sample paths, its one-point distribution
is the GUE Tracy-Widom distribution, as well as that the correlation $\mathbbm{E}\left(\mathcal{A}_2(r)\mathcal{A}_2(0)\right)-\mathbbm{E}(\mathcal{A}_2(r))\mathbbm{E}(\mathcal{A}_2(0))$ is positive and decays as $r^{-2}$. Later it was proven that its sample paths are locally absolutely continuous to Brownian motion~\cite{CH11}, which implies they are H\"{o}lder continuous with exponent $\frac{1}{2}-$.

The Airy$_2$ process is defined by its finite-dimensional distribution function:
\begin{defin}[Airy$_2$ process]
The \emph{Airy$_2$ process, $\mathcal{A}_2$}, is the process with
$m$-point joint distributions at $r_1<r_2<\cdots<r_m$ given by
the Fredholm determinant

\begin{equation}
\Pb \Biggl(\bigcap_{k=1}^m \bigl\{
\mathcal{A}_2(r_k)\leq s_k +r_k^2 \bigr\} \Biggr)=
\det (\Id-\chi_s K_{\mathcal{A}_2}\chi_s
)_{L^2(\{
r_1,\ldots,r_m\}\times\R)},
\end{equation}
where $\chi_s(r_k,x)=\Id_{x>s_k}$. The correlation kernel $K_{\mathcal{A}_2}$ is given by
\begin{equation}\label{eqAiry2Def}
 K_{\mathcal{A}_2}(s_1,r_1;s_2,r_2)=-V_{r_1,r_2}(s_1,s_2)\Id_{r_1<r_2}+K_{r_1,r_2}(s_1,s_2),
\end{equation}
with
\begin{equation}\label{eqKernel2Def}
\begin{aligned}
  V_{r_1,r_2}(s_1,s_2)&=\frac{e^{-\frac{(s_2-s_1)^2}{4(r_2-r_1)}}}{\sqrt{4\pi (r_2-r_1)}}\\
  K_{r_1,r_2}(s_1,s_2)&=\frac{e^{\frac{2}{3}r_2^3+r_2s_2}}{e^{\frac{2}{3}r_1^3+r_1s_1}}
    \int_0^\infty\mathrm{d} x\, e^{x(r_2-r_1)}\Ai(r_1^2+s_1+x)\Ai(r_2^2+s_2+x).
\end{aligned}
\end{equation}
\end{defin}

\begin{remark}
The parts of the kernel, \eqref{eqKernel2Def}, have alternative representations:
\begin{equation}\label{contIntK}\begin{aligned}
	K_{r_1,r_2}(s_1,s_2)&=\frac{-1}{(2\pi\mathrm{i})^2}\int_{e^{-2\pi\mathrm{i}/3}\infty}^{e^{2\pi\mathrm{i}/3}\infty}\hspace{-2pt}\mathrm{d} W\int_{e^{\pi\mathrm{i}/3}\infty}^{e^{-\pi\mathrm{i}/3}\infty}\hspace{-2pt}\mathrm{d} Z\frac{e^{Z^3/3+r_2Z^2-s_2Z}}{e^{W^3/3+r_1W^2-s_1W}}\frac{1}{Z-W}\\
 V_{r_1,r_2}(s_1,s_2)&=\frac{e^{\frac{2}{3}r_2^3+r_2s_2}}{e^{\frac{2}{3}r_1^3+r_1s_1}}\int_\R\mathrm{d} x\, e^{-x(r_1-r_2)}\Ai(r_1^2+s_1+x)\Ai(r_2^2+s_2+x).
\end{aligned}\end{equation}
In the integral defining $K$, the path for $W$ and $Z$ do not have to intersect.
\end{remark}

\begin{remark}
Using the integral representation \eqref{contIntK} for $V$, inserting the parameter shift $s_i\to s_i-r_i^2$ and finally applying a conjugation, leads to another way to describe the Airy$_2$ process. With
 \begin{equation}\label{eqKAi2'}
 K'_{\mathcal{A}_2}(s_1,r_1;s_2,r_2)=\begin{cases}\int_0^\infty\mathrm{d} x\, e^{-x(r_1-r_2)}\Ai(s_1+x)\Ai(s_2+x)&\text{for }r_1\geq r_2\\-\int^0_{-\infty}\mathrm{d} x\, e^{-x(r_1-r_2)}\Ai(s_1+x)\Ai(s_2+x)&\text{for }r_1<r_2.\end{cases}
\end{equation}
its multi-dimensional distributions are given by
\begin{equation}
\Pb \Biggl(\bigcap_{k=1}^m \bigl\{
\mathcal{A}_2(r_k)\leq s_k \bigr\} \Biggr)=
\det (\Id-\chi_s K'_{\mathcal{A}_2}\chi_s
)_{L^2(\{
r_1,\ldots,r_m\}\times\R)}.
\end{equation}
\end{remark}

\subsection{Airy\texorpdfstring{$_1$}{\_1} process}
The Airy$_1$ process has been discovered three years later as the limit of the TASEP with flat initial condition~\cite{Sas05}. It is also stationary with the one\nobreakdash-point distribution now being the GOE Tracy-Widom distribution, i.e.\ \mbox{$\Pb({\mathcal A}_1(r)\leq s)=F_\textrm{GOE}(2s)$}, which has already been conjectured in the initial paper, and soon been confirmed~\cite{FS05b}. More recently, it was proven that the sample paths have locally Brownian fluctuations~\cite{QR12}, and are consequently H\"{o}lder continuous with exponent $\frac{1}{2}-$, too.

The Airy$_1$ process is also defined in terms of its finite-dimensional distributions:
\begin{defin}[Airy$_1$ process]
Let
 \begin{equation}\begin{aligned}
 K_{\mathcal{A}_1}(s_1,r_1;s_2,r_2) =&-V_{r_1,r_2}(s_1,s_2)\Id_{r_1<r_2}\\
&+\Ai \bigl(s_1+s_2+(r_2-r_1)^2
\bigr) e^{(r_2-r_1) (s_1+s_2)+
\frac{2}{3}(r_2-r_1)^3}.
\end{aligned}\end{equation}
The \emph{Airy$_1$ process, denoted by $\mathcal{A}_1$}, is the process with
$m$-point joint distributions at $r_1<r_2<\cdots<r_m$ given by
the Fredholm determinant
%
\begin{equation}
\Pb \Biggl(\bigcap_{k=1}^m \bigl\{
\mathcal{A}_1(r_k)\leq s_k \bigr\} \Biggr)=
\det (\Id-\chi_s K_{\mathcal{A}_1}\chi_s
)_{L^2(\{
r_1,\ldots,r_m\}\times\R)},
\end{equation}
where $\chi_s(r_k,x)=\Id_{x>s_k}$.
\end{defin}

\begin{remark}
 The kernel can also be stated in a different way.
Let therefore $B_0(x,y)=\Ai(x+y)$ and
$\Delta
$ be the one-dimensional Laplacian. Then the kernel $K_{\mathcal{A}_1}$ is given by
%
\begin{equation}\begin{aligned}
\label{defairykernel}
K_{\mathcal{A}_1}(s_1,r_1;s_2,r_2)=-\bigl(e^{(r_2-r_1)\Delta} \bigr) (s_1,s_2)\Id_{r_2>r_1}+ \bigl(e^{-r_1\Delta}B_0e^{r_2\Delta}
\bigr) (s_1,s_2).
\end{aligned}\end{equation}
The equivalence of these formulas is proven in Appendix A~\cite{BFPS06}.
\end{remark}

\subsection{Airy\texorpdfstring{$_\textrm{stat}$}{\_stat} process}
In spite of what the name might suggest, the Airy$_\textrm{stat}$ process is \emph{not} stationary. The name is derived from the fact that it arises as limit process of models started in their respective stationary initial condition. The one\nobreakdash-point distribution ${\mathcal A}_\textrm{stat}(0)$ has been identified in~\cite{BR00} as the limit of the PNG model, it has mean zero and is often called Baik-Rains distribution. The multi-point distribution has been discovered some time later in the TASEP~\cite{BFP09}.

The increments of the Airy$_\textrm{stat}$ process are identical to those of a Brownian motion, this is straightforward to see by indirect arguments:

Since $\{x_{n}(t), n \in \mathbb{Z}\}$ is a Poisson point process for every $t\geq0$,
the process $X_t^\textrm{stat}(r) - X_t^\textrm{stat}(0)$ is a scaled Poisson jump process up to a linear part and
\begin{equation}
 \lim_{t \to \infty}\big(X_t^\textrm{stat}(r)-X_t^\textrm{stat}(0)\big)\stackrel{d}{=}B(2r)\,.
\end{equation}
By Theorem~\ref{thmAsymp0} the limit process $\mathcal{A}_\mathrm{stat}(r) -\mathcal{A}_\mathrm{stat}(0)$ must also have the statistics of two-sided Brownian motion.
This property is not so easily inferred from the formulas in Definition~\ref{DefAstat}. We will provide a direct proof of this fact in Section~\ref{secGaussFl}.
The structure of the Airy$_\textrm{stat}$ is nevertheless quite rich, as these Brownian increments are non-trivially correlated with the random height shift $\mathcal{A}_\mathrm{stat}(0)$.

\begin{defin}[Airy$_\textrm{stat}$ process]\label{DefAstat}
Let $P_s$ be the projection operator on $[s,\infty)$ and $\bar{P}_s=\Id-P_s$ the one on $(-\infty,s)$. With $V_{r_1,r_2}(s_1,s_2)$ as in \eqref{eqKernel2Def}, define
\begin{equation}
	\mathcal{P}=\Id-\bar{P}_{s_1}V_{r_1,r_2}\bar{P}_{s_2}\cdots V_{r_{m-1},r_m}\bar{P}_{s_m}V_{r_m,r_1},
\end{equation}
as well as an operator $K$ with integral kernel
\begin{equation}
	K(s_1,s_2)=K_{r_1,r_1}(s_1,s_2)=e^{r_1(s_2-s_1)}\int_{r_1^2}^\infty\mathrm{d} x\, \Ai(s_1+x)\Ai(s_2+x).
\end{equation}
Further, define the functions
\begin{equation}\begin{aligned}
	\mathcal{R}&=s_1+e^{\frac{2}{3}r_1^3}\int_{s_1}^\infty \mathrm{d} x\int_x^\infty \mathrm{d} y\, \Ai(r_1^2+y)e^{r_1y},\\
	f^*(s)&=-e^{-\frac{2}{3}r_1^3}\int_s^\infty \mathrm{d} x\, \Ai(r_1^2+x)e^{-r_1x},\\
	g(s)&=1-e^{\frac{2}{3}r_1^3}\int_s^\infty \mathrm{d} x\, \Ai(r_1^2+x)e^{r_1x}.
\end{aligned}\end{equation}
With these definitions, set
\begin{equation}\label{eqGm} G_m(\vec{r},\vec{s})=\mathcal{R}-\left\langle(\Id-\mathcal{P}K)^{-1}\left(\mathcal{P}f^*+\mathcal{P}KP_{s_1}\mathbf{1}+(\mathcal{P}-P_{s_1})\mathbf{1}\right),g\right\rangle,
\end{equation}
where $\langle\cdot,\cdot\rangle$ denotes the inner product on $L^2(\R)$.
Then, the \emph{Airy$_\textrm{{stat}}$ process}, $\mathcal{A}_\mathrm{stat}$, is the process with $m$-point joint distributions at $r_1<r_2<\dots<r_m$ given by
\begin{equation}\label{eqAiryDef}
	\Pb\bigg(\bigcap_{k=1}^m\{\mathcal{A}_\mathrm{stat}(r_k)\leq s_k\}\bigg) = \sum_{i=1}^m\frac{\D}{\mathrm{d} s_i}\left(G_m(\vec{r},\vec{s})\det\left(\Id-\mathcal{P}K\right)_{L^2(\R)}\right).
\end{equation}
\end{defin}
\begin{remark}
In this definition there appears actually yet another representation for the joint distributions of the Airy$_2$ process:
\begin{equation}
 \Pb\bigg(\bigcap_{k=1}^m\{\mathcal{A}_2(r_k)\leq s_k+r_k^2\}\bigg)=\det(\Id-\mathcal{P}K)_{L^2(\R)}.
\end{equation}
This is actually the version first obtained in~\cite{PS02} for $m=2$. Equivalence of this formula to our definition is proven in~\cite{BCR13}.
\end{remark}
For well-definedness of the formula \eqref{eqGm} we need the following lemma:
\begin{lem}\label{lemInv}
 The operator $\Id-\mathcal{P}K$ is invertible.
\end{lem}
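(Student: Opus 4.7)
The plan is to reduce invertibility of $\mathrm{Id}-\mathcal{P}K$ to non-vanishing of its Fredholm determinant, and then to identify that determinant as a strictly positive probability.

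First I would check that $\mathcal{P}K$ is of trace class on $L^2(\R)$, so that the Fredholm determinant is well-defined and the standard analytic Fredholm alternative applies: for a trace class operator $A$, the operator $\mathrm{Id}-A$ is invertible if and only if $\det(\mathrm{Id}-A)\neq 0$. The trace class property follows from routine kernel estimates. The operator $K$ is (up to the conjugation factor $e^{r_1(s_2-s_1)}$) the Airy kernel restricted by the cutoff $x\geq r_1^2$; using the super-exponential decay $\mathrm{Ai}(y)\sim e^{-\frac{2}{3}y^{3/2}}$ for $y\to+\infty$ together with the projections $\bar P_{s_k}$ present in the product defining $\mathcal{P}$, one obtains uniform Hilbert--Schmidt bounds on each factor and hence trace class for the composition. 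The Gaussian operators $V_{r_i,r_{i+1}}$ are bounded contractions on $L^2$, and writing $\mathcal{P}K=K-(\bar P_{s_1}V_{r_1,r_2}\bar P_{s_2}\cdots V_{r_m,r_1})K$, the second summand is a product of a bounded operator with a trace class operator, hence trace class; the first is already trace class when multiplied by the cutoffs inherent in the Airy kernel.

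Second, I would invoke the identity stated in the remark following Definition~\ref{DefAstat}, namely
\begin{equation}
\det(\mathrm{Id}-\mathcal{P}K)_{L^2(\R)}=\Pb\bigg(\bigcap_{k=1}^m\big\{\mathcal{A}_2(r_k)\leq s_k+r_k^2\big\}\bigg),
\end{equation}
which was proven in~\cite{BCR13}. Since the Airy$_2$ process has almost surely finite values, this probability is strictly positive for every finite vector $(s_1,\dots,s_m)$: by choosing a sufficiently large threshold $T$, the probability that $\mathcal{A}_2(r_k)\leq T+r_k^2$ for all $k$ is arbitrarily close to $1$, and since the joint distribution of $(\mathcal{A}_2(r_1),\dots,\mathcal{A}_2(r_m))$ has full support on $\R^m$ (inherited from the full support of the GUE Tracy--Widom one-point marginal together with the locally Brownian absolute continuity of the sample paths~\cite{CH11}), the event with the given thresholds $s_k$ has strictly positive probability as well.

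Combining the two steps, $\det(\mathrm{Id}-\mathcal{P}K)>0$, and by the Fredholm alternative for trace class operators this forces $\mathrm{Id}-\mathcal{P}K$ to be invertible. The main technical nuisance is the trace class verification: the conjugation factor $e^{r_1(s_2-s_1)}$ in $K$ grows exponentially in one variable, so one has to keep track of the placement of the projections $\bar P_{s_k}$ carefully or, equivalently, conjugate out the exponential weights by passing to a weighted $L^2$ space in which both $K$ and the factors of $\mathcal{P}$ have well-behaved kernels before estimating Hilbert--Schmidt norms.
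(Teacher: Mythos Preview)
Your proof is correct and follows the same overall strategy as the paper: identify $\det(\Id-\mathcal{P}K)$ with the joint distribution function of the Airy$_2$ process (via~\cite{BCR13}) and then argue that this probability is strictly positive, so that the Fredholm alternative gives invertibility. Your extra discussion of the trace class property is a useful technical check that the paper leaves implicit.

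The one genuine difference is in how positivity of the probability is established. You argue via full support of the finite-dimensional distributions of $\mathcal{A}_2$, invoking the one-point Tracy--Widom law together with the locally Brownian structure from~\cite{CH11}. The paper instead bounds the probability below by
\[
\Pb\Big(\max_{r\in\R}\big(\mathcal{A}_2(r)-r^2\big)\leq s_{\min}\Big)=F_{\mathrm{GOE}}(2^{2/3}s_{\min}),
\]
using the variational identity of Section~\ref{secVarId}, and then appeals to the known strict positivity of $F_{\mathrm{GOE}}$. The paper's route is shorter and reduces everything to a single explicit one-point distribution, whereas your full-support argument, while valid, requires a bit more care to make rigorous (absolute continuity of the Airy law with respect to Brownian bridge gives you that null sets transfer in one direction, so deducing full support needs the conditional Brownian Gibbs structure rather than just ``locally Brownian'' in the vague sense). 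Also, the first half of your positivity paragraph about choosing a large threshold $T$ is not actually used in the argument and can be dropped.
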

\begin{proof}
 We employ the same strategy as in Appendix~B~\cite{BFP09}. For that purpose we use the following equivalence
 \begin{equation}
  \det(\Id+A)\neq 0\ \Longleftrightarrow\ \Id+A \text{ is invertible}.
 \end{equation}
Let $s_\textrm{min}=\min_ks_k$.
 \begin{equation}\begin{aligned}
  \det&(\Id-\mathcal{P}K)=\Pb\bigg(\bigcap_{k=1}^m\{\mathcal{A}_2(r_k)-r_k^2\leq s_k\}\bigg)\\
  &\geq \Pb\bigg(\bigcap_{k=1}^m\{\mathcal{A}_2(r_k)-r_k^2\leq s_\textrm{min}\}\bigg)\geq \Pb\Big(\max_{r\in\R}(\mathcal{A}_2(r)-r^2)\leq s_\textrm{min}\Big)\\
  &=F_\textrm{GOE}(2^{2/3}s_\textrm{min})>0
  \end{aligned}\end{equation}
for any $s_\textrm{min}>-\infty$, where $F_\textrm{GOE}$ is the GOE Tracy-Widom distribution function. For the last equality see Section~\ref{secVarId}. The tails of the GOE Tracy-Widom distribution have been studied in great detail in various publications, see for instance~\cite{BBdF08}.
\end{proof}

Our Definition~\ref{DefAstat} is actually an alternative formula for the joint distributions of the Airy$_\textrm{stat}$ process compared to the one given in~\cite{BFP09} (Definition 1.1 and Theorem 1.2 therein). The main difference is that in~\cite{BFP09} the joint distributions are given in terms of a Fredholm determinant on $L^2(\{1,\ldots,m\}\times\R)$, while here we have a Fredholm determinant on $L^2(\R)$. A similar twist was already visible in~\cite{PS02} and has been generalized in~\cite{BCR13}.

\section{Crossover Processes}
In this section we introduce the three mixed Airy processes that are transition processes between two of the elementary Airy processes, as well as a new process with a parameter that interpolates between an elementary and a crossover Airy process.
\subsection{Airy\texorpdfstring{$_{2\to1}$}{\_2to1} process}
The Airy$_{2\to1}$ process has been discovered as the limit of the TASEP under half-periodic initial conditions~\cite{BFS07}. It is again defined in terms of its finite-dimensional distributions:
\begin{defin}[Airy$_{2\to1}$ process]\label{defAiry2to1}
The  Airy$_{2\to1}$ process, $\mathcal{A}_{2\to1}$, is the process with m-point joint distributions at $r_1<r_2<\dots<r_m$ given by
\begin{equation}
	\Pb\bigg(\bigcap_{k=1}^m\{\mathcal{A}_{2\to1}(r_k)\leq s_k+r_k^2\,\Id_{r_k\leq0}\}\bigg) = \det\left(\Id-\chi_s K_{\mathcal{A}_{2\to1}}\chi_s\right)_{L^2(\{r_1,\dots,r_m\}\times\R)},
\end{equation}
where $\chi_s(r_k,x)=\Id_{x>s_k}$ and the kernel $K_{\mathcal{A}_{2\to1}}$ is defined by
\begin{equation}\label{eqKAi2to1}
\begin{aligned}
	K_{\mathcal{A}_{2\to1}}&(r_1,s_1;r_2,s_2)=K_{\mathcal{A}_2}(s_1,r_1;s_2,r_2)\\
	&+\frac{e^{\frac{2}{3}r_2^3+r_2s_2}}{e^{\frac{2}{3}r_1^3+r_1s_1}}\int_0^\infty\mathrm{d} x\, e^{x(r_1+r_2)}\Ai(r_1^2+s_1-x)\Ai(r_2^2+s_2+x),
\end{aligned}
\end{equation}
with $K_{\mathcal{A}_2}$ as in \eqref{eqAiry2Def}. 
\end{defin}
The Airy$_{2\to1}$ process satisfies the limits:
\begin{equation}\begin{aligned}
 \lim_{w\to\infty}\mathcal{A}_{2\to1}(r+w)&=2^{-1/3}\mathcal{A}_1(2^{2/3}r),\\
 \lim_{w\to\infty}\mathcal{A}_{2\to1}(r-w)&=\mathcal{A}_2(r).
\end{aligned}\end{equation}

\begin{remark}
 There exists a contour integral representation, which will be used in proving Theorem~\ref{thmAsympHf}:
\begin{equation}\label{eqKAi2to1Int}\begin{aligned}
K_{\mathcal{A}_{2\to1}}(r_1,s_1;r_2,s_2)&=-V_{r_1,r_2}(s_1,s_2)\Id_{r_1<r_2}\\&
\quad+\frac{1}{(2\pi\mathrm{i})^2}\int_{\gamma_W} \mathrm{d} W\int_{\gamma_Z} \mathrm{d} Z\frac{e^{Z^3/3+r_2Z^2-s_2Z}}{e^{W^3/3+r_1W^2-s_1W}}\frac{2Z}{W^2-Z^2}.
\end{aligned}\end{equation}
The contours $\gamma_W\colon e^{-2\pi\mathrm{i}/3}\infty \to e^{2\pi\mathrm{i}/3}\infty$ and $\gamma_Z\colon e^{\pi\mathrm{i}/3}\infty \to e^{-\pi\mathrm{i}/3}\infty$ are chosen in such a way that both $\gamma_W$ and $-\gamma_W$ pass left of $\gamma_Z$. \eqref{eqKAi2to1} can be derived from \eqref{eqKAi2to1Int}  by noticing the identity
\begin{equation}
 \frac{2Z}{W^2-Z^2}=\frac{1}{W-Z}-\frac{1}{W+Z}=-\int_0^\infty\mathrm{d} x\, e^{x(W-Z)}-\int_0^\infty\mathrm{d} x\,e^{-x(W+Z)},
\end{equation}
and employing the definition of the Airy function.
\end{remark}

\subsection{Airy\texorpdfstring{$_{2\to\textrm{BM}}$}{\_2toBM} process}
The Airy$_{2\to\textrm{BM}}$ process has been discovered in~\cite{SI04}. Therein it was already shown that $\mathcal{A}_{2\to\textrm{BM}}(r)$ converges to the GUE Tracy Widom distribution for $r\to-\infty$ and to a Gaussian distribution as $r\to\infty$. They also identified the distribution at $r=0$, which is given by $\Pb(\mathcal{A}_{2\to\textrm{BM}}(0)\leq s)=\left(F_\textrm{GOE}(s)\right)^2$, i.e.\ the distribution of the maximum of two independent GOE-distributed random variables.

\begin{defin}[Airy$_{2\to\textrm{BM}}$ process]
The  Airy$_{2\to\textrm{BM}}$ process, $\mathcal{A}_{2\to\textrm{BM}}$, is the process with m-point joint distributions at $r_1<r_2<\dots<r_m$ given by
\begin{equation}
	\Pb\bigg(\bigcap_{k=1}^m\{\mathcal{A}_{2\to\textrm{BM}}(r_k)\leq s_k\}\bigg) = \det\left(\Id-\chi_s K_{\mathcal{A}_{2\to\textrm{BM}}}\chi_s\right)_{L^2(\{r_1,\dots,r_m\}\times\R)},
\end{equation}
where $\chi_s(r_k,x)=\Id_{x>s_k}$ and the kernel $K_{\mathcal{A}_{2\to\textrm{BM}}}$ is defined by
\begin{equation}\begin{aligned}
K_{\mathcal{A}_{2\to\textrm{BM}}}(r_1,s_1;r_2,s_2)&=K'_{\mathcal{A}_2}(r_1,s_1;r_2,s_2)\\&
\quad+\left(e^{-\frac{1}{3}r_1^3+r_1s_1}-\int_0^\infty\mathrm{d} x\, \Ai(s_1+x)e^{-r_1x}\right)\Ai(s_2).
\end{aligned}\end{equation}
Here, $K'_{\mathcal{A}_2}$ is defined as in (\ref{eqKAi2'}).
\end{defin}

\subsection{Airy\texorpdfstring{$_{\textrm{BM}\to1}$}{\_BMto1} process}
The last one of the mixed Airy processes is the Airy$_{\textrm{BM}\to1}$ process, which first appeared in~\cite{BFS09}. The limit process that appears therein is actually more general, denoted by $\mathcal{A}_{2\to1,M,\kappa}$ with parameters $M\in\Z_{\geq0}$ and $\kappa\in\R$. In our case this more general process would appear through modifying the model asymptotically analyzed in Section~\ref{secStFl}: Instead of applying the drift $\rho$ to the particle $x_1(t)$ only, the first $M$ particles have drift $\rho$. Furthermore, $\rho$ is not equal to $1$, but scaled critically as $\rho=1-\kappa t^{-1/3}$.

We will not give this general definition, but only the version that appears in this work, i.e.\ $\mathcal{A}_{\textrm{BM}\to1}(r)=\mathcal{A}_{2\to1,1,0}(r)$.
\begin{defin}[Airy$_{\textrm{BM}\to1}$ process]\label{defAiBto1}
The  Airy$_{\textrm{BM}\to1}$ process, $\mathcal{A}_{\mathrm{BM}\to1}$, is the process with m-point joint distributions at $r_1<r_2<\dots<r_m$ given by
\begin{equation}
	\Pb\bigg(\bigcap_{k=1}^m\{\mathcal{A}_{\mathrm{BM}\to1}(r_k)\leq s_k\}\bigg) = \det\left(\Id-\chi_s K_{\mathcal{A}_{\mathrm{BM}\to1}}\chi_s\right)_{L^2(\{r_1,\dots,r_m\}\times\R)},
\end{equation}
where $\chi_s(r_k,x)=\Id_{x>s_k}$ and the kernel $K_{\mathcal{A}_{\mathrm{BM}\to1}}$ is defined by
\begin{equation}\begin{aligned}
&K_{\mathcal{A}_{\mathrm{BM}\to1}}(r_1,s_1;r_2,s_2)=K_{\mathcal{A}_{2\to1}}(r_1,s_1;r_2,s_2)\\&
\qquad+e^{-\frac{2}{3}r_1^3-r_1s_1}\Ai(s_1+r_1^2)\left(1-2e^{\frac{2}{3}r_2^3+r_2s_2}\int_0^\infty\mathrm{d} x\, \Ai(r_2^2+s_2+x)e^{r_2x}\right).
\end{aligned}\end{equation}
Here, $K_{\mathcal{A}_{2\to1}}$ is defined as in (\ref{eqKAi2to1}).
\end{defin}
\begin{remark}
 In the original definition of the Airy$_{2\to1,M,\kappa}$ process, Definition~18~\cite{BFS09}, there is an issue with incompletely specified contours, that can be misleading. For the triple contour integral (5.8), with integration variables $u\in\Gamma_\kappa$, $w_1\in\gamma_1$ and $w_2\in\gamma_2$, it is only required that $\gamma_1$, $\gamma_2$ pass on the left of $\Gamma_\kappa$. In fact, for the formula to be correct, both $\gamma_2$ and $-\gamma_2$ have to pass to the left of $u$. As $\Gamma_\kappa$ is a loop around $\kappa$, this requires some intricate choices especially in the case $\kappa=0$. A good way to avoid this problem is by calculating one of the residues, so that $\gamma_2$ can be chosen in a more usual form, resulting in Definition~\ref{defAiBto1} after specifying $M=1$ and $\kappa=0$.
\end{remark}

\subsection{Finite-step Airy\texorpdfstring{$_\textrm{stat}$}{\_stat} process}
The finite-step Airy$_\textrm{stat}$ process is a new presumably universal limit process that appears in the course of proving Theorem~\ref{thmAsymp0}. It has a parameter $\delta$, which one can interpret as a step size, that gives the transition from the standard Airy$_\textrm{{stat}}$ process at $\delta=0$ towards the Airy$_{2\to\textrm{BM}}$ process as $t\to\infty$. It is defined again by its finite-dimensional distribution:
\begin{defin}[Finite-step Airy$_\textrm{stat}$ process]\label{DefAiryStat}
The finite-step Airy$_\textrm{{stat}}$ process with parameter $\delta>0$, $\mathcal{A}_\mathrm{stat}^{(\delta)}$, is the process with m-point joint distributions at $r_1<r_2<\dots<r_m$ given by
\begin{equation}\begin{aligned}\label{eq3.6}
	\Pb\bigg(\bigcap_{k=1}^m&\{\mathcal{A}_\mathrm{stat}^{(\delta)}(r_k)\leq s_k\}\bigg) \\
	&= \bigg(1+\frac{1}{\delta}\sum_{i=1}^m\frac{\D}{\mathrm{d} s_i}\bigg)\det\left(\Id-\chi_s K^\delta\chi_s\right)_{L^2(\{r_1,\dots,r_m\}\times\R)},
\end{aligned}\end{equation}
where $\chi_s(r_k,x)=\Id_{x>s_k}$ and the kernel $K^\delta$ is defined by
\begin{equation}\label{eqKdelta}
K^\delta(r_1,s_1;r_2,s_2)=-V_{r_1,r_2}(s_1,s_2)\Id_{r_1<r_2}+K_{r_1,r_2}(s_1,s_2)+\delta f_{r_1}(s_1)g_{r_2}(s_2).
\end{equation}
Here, $V_{r_1,r_2}$ and $K_{r_1,r_2}(s_1,s_2)$ are defined as in (\ref{eqKernel2Def}), and
\begin{equation}\label{eqKernelDef}
\begin{aligned}
		f_{r_1}(s_1)&=1-e^{-\frac{2}{3}r_1^3-r_1s_1}\int_0^\infty\mathrm{d} x\, \Ai(r_1^2+s_1+x)e^{-r_1x}\\
		g_{r_2}(s_2)&=e^{ \delta^3/3+r_2\delta^2-s_2\delta}-e^{\frac{2}{3}r_2^3+r_2s_2}\int_0^\infty\mathrm{d} x\, \Ai(r_2^2+s_2+x)e^{(\delta+r_2)x}.
\end{aligned}
\end{equation}
\end{defin}

\begin{remark}
Instead of integrals over Airy functions, the functions \eqref{eqKernelDef} can also be written as contour integrals:
\begin{equation}\label{contIntfg}\begin{aligned}
		f_{r_1}(s_1)&=\frac{1}{2\pi\mathrm{i}}\int_{e^{-2\pi\mathrm{i}/3}\infty,\text{ right of }0}^{e^{2\pi\mathrm{i}/3}\infty}   \mathrm{d} W\frac{e^{-(W^3/3+r_1W^2-s_1W)}}{W}\\
		g_{r_2}(s_2)&=\frac{1}{2\pi\mathrm{i}}\int_{e^{\pi\mathrm{i}/3}\infty,\text{ left of }\delta}^{e^{-\pi\mathrm{i}/3}\infty} \mathrm{d} Z\frac{e^{Z^3/3+r_2Z^2-s_2Z}}{Z-\delta}.
\end{aligned}\end{equation}
\end{remark}
\begin{remark}
The identity $\mathcal{A}^{(0)}_\textrm{stat}(r)=\mathcal{A}_\mathrm{stat}(r)$ is a consequence of Proposition~\ref{propAnalyt}. The limit
\begin{equation}\label{eq2.20bb}
\lim_{\delta\to \infty}\mathcal{A}^{(\delta)}_\textrm{stat}(r) = \mathcal{A}_{2\to\textrm{BM}}(r)-r^2
\end{equation}
can be seen as follows. Using the identity (D.3) from~\cite{FS05a} on $g_{r_2}(s_2)$ we obtain that
\begin{equation}\begin{aligned}
\lim_{\delta\to\infty}\delta\cdot g_{r_2}(s_2)&=e^{\frac{2}{3}r_2^3+r_2s_2}\lim_{\delta\to\infty}\int^0_{-\infty}\mathrm{d} x\, \Ai(r_2^2+s_2+x)\delta e^{(\delta+r_2)x}\\
&=e^{\frac{2}{3}r_2^3+r_2s_2}\lim_{\delta\to\infty}\int^0_{-\infty}\mathrm{d} y\, \Ai(r_2^2+s_2+y/\delta) e^{y(1+r_2/\delta)}\\
&=e^{\frac{2}{3}r_2^3+r_2s_2}\Ai(r_2^2+s_2).
\end{aligned}\end{equation}

This means precisely,
\begin{equation}
\lim_{\delta\to\infty} K^\delta(r_1,s_1;r_2,s_2)=\frac{e^{\frac{2}{3}r_2^3+r_2s_2}}{e^{\frac{2}{3}r_1^3+r_1s_1}} K_{{\cal A}_{2\to\textrm{BM}}}(r_1,s_1+r_1^2,;r_2,s_2+r_2^2),
\end{equation}
Finally, taking $\delta\to\infty$ in (\ref{eq3.6}) implies that all the terms with the derivatives vanish, giving (\ref{eq2.20bb}).
\end{remark}
\section{Variational identities}\label{secVarId}
In the work~\cite{Jo03b}, an interesting variational identity was given that connects the Airy$_2$ and the Airy$_1$ process:
\begin{equation}
 \Pb\left(\sup_{x\in\R}\left\{\mathcal{A}_2(x)-x^2\right\}\leq s\right)=F_\textrm{GOE}(2^{2/3}s).
\end{equation}
Its proof was however very indirect, and it took some time until the first direct proof appeared~\cite{CQR11}. It was conjectured that identities of this structure hold for all Airy-type processes~\cite{QR13}, as the Airy$_2$ process corresponds to delta initial conditions for the KPZ equation, and it should be possible to derive all other initial profiles by integrating this delta initial condition.

The first proof of all of these identities was given recently in~\cite{CLW14} through studying TASEP with general initial conditions:
\begin{align}
   \max_{x \in \R} \left( \mathcal{A}_2(x) - (x - r)^2 \right) \stackrel{d}{=} {}& 2^{1/3} \mathcal{A}_1(2^{-2/3} r)  
  \\
   \max_{x \in \R} \left( \mathcal{A}_2(x) - (x - r)^2 + \sqrt{2}B(x) \right) \stackrel{d}{=} {}& \mathcal{A}_\mathrm{stat}(r) , 
  \\
   \max_{x \ge 0} \left( \mathcal{A}_2(x) - (x -  r)^2 \right) \stackrel{d}{=} {}& \mathcal{A}_{2 \to 1}(r) - r^2 \Id_{r \leq 0}, 
  \\
   \max_{x \ge 0} \left( \mathcal{A}_2(x) - (x-r)^2 +\sqrt{2} B(x) \right) \stackrel{d}{=} {}& \mathcal{A}_\mathrm{BM \to 2}(-r) - r^2, 
  \\
   \max_{x \in \R} \left( \mathcal{A}_2(x) - (x-r)^2 +\sqrt{2} \Id_{s\ge 0} B(x) \right) \stackrel{d}{=} {}& \mathcal{A}_{2 \to 1, 1, 0}(-r) - r^2 \Id_{r \geq 0}. 
\end{align}
Notice that all equalities hold for the one-point distribution only.

\section{Gaussian fluctuations of the Airy\texorpdfstring{$_\textrm{stat}$}{\_stat} process}\label{secGaussFl}
In this section we show that the Airy$_\textrm{stat}$ process has Brownian increments for nonnegative arguments. As a rigorous proof of this fact is already established by indirect arguments, we do not care for full mathematical precision, but rather give a sketch of a proof:
\begin{thm}\label{thmGFl}
 Let $0\leq r_1<r_2<\dots<r_m$. Then
 \begin{equation}
  \Pb\bigg(\bigcap_{k=2}^{m}\{\mathcal{A}_\mathrm{stat}(r_{k})-\mathcal{A}_\mathrm{stat}(r_{k-1})\in\D\sigma_k\}\bigg) =\prod_{k=2}^{m}\frac{e^{-\sigma_k^2/4(r_k-r_{k-1})}}{\sqrt{4\pi(r_k-r_{k-1})}}\,\D\vec\sigma.
 \end{equation}
\end{thm}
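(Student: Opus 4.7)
The plan is to extract the joint density of the increments directly from the formula in Definition~\ref{DefAstat} and exploit the differential structure $F_m = \bigl(\sum_{i=1}^m \partial_{s_i}\bigr)(G_m\,\mathcal{D})$, where I write $\mathcal{D}(\vec s) = \det(\Id - \mathcal{P}K)_{L^2(\mathbb{R})}$. Let $\mathcal{H}(\vec s) = \partial_{s_1}\cdots\partial_{s_m}(G_m\,\mathcal{D})$. Since the $\partial_{s_i}$ commute, the joint density of $(\mathcal{A}_\mathrm{stat}(r_1),\dots,\mathcal{A}_\mathrm{stat}(r_m))$ is $p_m(\vec s) = \partial_{s_1}\cdots\partial_{s_m} F_m = \sum_{i=1}^m \partial_{s_i}\mathcal{H}$. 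To marginalize out $s_1$, parametrize the density along the locus $s_k = s_1 + \sigma_2 + \cdots + \sigma_k$; by the chain rule, on this locus with $\vec\sigma$ held fixed, $\sum_{i=1}^m \partial_{s_i}(\cdot) = \tfrac{d}{d s_1}(\cdot)$. Hence the integrand is an exact $s_1$-derivative and the marginal is a pure boundary term
\begin{equation*}
	p(\vec\sigma) \;=\; \lim_{s_1\to +\infty}\mathcal{H} \;-\; \lim_{s_1\to -\infty}\mathcal{H},
\end{equation*}
with both limits along the diagonal ray $s_k = s_1 + \sigma_2 + \cdots + \sigma_k$.

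The lower limit vanishes: as all $s_k\to -\infty$ the joint distribution $F_m\to 0$ so $G_m\mathcal{D}\to 0$ (fixing the integration constant implicit in the structure $F_m = \sum_i\partial_{s_i}(G_m\mathcal{D})$), and the mixed partials decay using the left-tail bounds for the Airy$_2$-type kernel that underlies $\mathcal{D}$. For the upper limit, the Airy asymptotic $\Ai(x)\sim (2\sqrt\pi x^{1/4})^{-1}e^{-\tfrac23 x^{3/2}}$ forces the ``Airy'' contributions to $\mathcal R$, $f^*$, $g$ and to the kernel $K(s_1,s_2)=K_{r_1,r_1}(s_1,s_2)$ to decay super-exponentially, while the projections $\bar P_{s_k}$ tend to the identity. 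Therefore $\mathcal P\to \Id - V_{r_1,r_2}V_{r_2,r_3}\cdots V_{r_m,r_1}$ and $K$ drops out of the determinant, so $\mathcal{D}\to 1$. Combined with $\mathcal R \sim s_1 + o(1)$, one finds $G_m\mathcal{D} = s_1 + \Psi(\vec\sigma) + o(1)$, where the linear piece comes from $\mathcal R$ and $\Psi$ is a bounded function of the increments only.

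The mixed derivative kills the $s_1$ term, leaving $\lim_{s_1\to +\infty}\mathcal{H} = \partial_{\sigma_2}\cdots\partial_{\sigma_m}\Psi(\vec\sigma)$, and it remains to identify $\Psi$ as the ``Gaussian potential'' for the chain $r_1<r_2<\cdots<r_m$. Using the contour representations \eqref{contIntK}--\eqref{contIntfg} together with the translation invariance $V_{r_a,r_b}(s,t) = V_{r_a,r_b}(0,t-s)$ and the heat-semigroup identity $V_{r_a,r_b}*V_{r_b,r_c} = V_{r_a,r_c}$, the surviving $\mathcal P$-contribution to the inner product $\langle(\Id-\mathcal{P}K)^{-1}(\cdots),g\rangle$ telescopes to a Gaussian convolution $V_{r_1,r_2}*V_{r_2,r_3}*\cdots*V_{r_{m-1},r_m}$ evaluated along the diagonal. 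Taking the $m-1$ mixed $\sigma$-derivatives then gives precisely $\prod_{k=2}^m V_{r_{k-1},r_k}(0,\sigma_k)$, which is the claimed product of Gaussian densities. The main obstacle is exactly this last identification: the operator $\Id-\mathcal{P}K$ is non-local and its inverse is only implicit, so one must track a delicate cancellation between ``Airy'' and ``Gaussian'' contributions. The cleanest route is to expand the Fredholm determinant and its resolvent as a Neumann series in the kernel and match term by term, exploiting that the Gaussian saddles of the $V$-integrals lie away from the Airy saddles so each Airy-type term is subleading by a factor of order $\exp(-c\,s_1^{3/2})$ as $s_1\to +\infty$; only the zeroth-order (purely $V$-)term survives in the limit and produces the Gaussian answer.
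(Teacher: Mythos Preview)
Your skeleton matches the paper's: reduce the increment density to boundary values of $\prod_i\partial_{s_i}(G_m\mathcal D)$ along the diagonal ray, argue the lower limit vanishes, and extract the Gaussians from the upper limit. The divergence is entirely in how the upper limit is handled, and there your argument has a real gap.

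The paper uses two devices you do not. First, it repackages the inner product in $G_m$ back into a rank-one-perturbed Fredholm determinant, splitting $\Lambda:=G_m\mathcal D$ as $\Lambda_1+\Lambda_2$ with $\Lambda_1=(\mathcal R-1)\mathcal D$ and $\Lambda_2=\det\bigl(\Id-\mathcal P K+(\cdots)\otimes g\bigr)$; the $\Lambda_1$-derivatives are then Airy$_2$ densities times polynomial factors and vanish at both ends. Second, and this is the crucial trick, it conjugates $\Lambda_2(\vec s+\sigma_1)$ by the shift $(Sf)(x)=f(x+\sigma_1)$: the projections now sit at the \emph{fixed} thresholds $s_k$, while all $\sigma_1$-dependence is pushed into $SKS^{-1}\to 0$ and $Sg\to\mathbf 1$. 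This yields the clean limit $\lim_{\sigma_1\to+\infty}\Lambda_2(\vec s+\sigma_1)=1-\langle(\mathcal P-P_{s_1})\mathbf 1,\mathbf 1\rangle$, after which the expansion \eqref{eqPExp} and one application of the fundamental theorem of calculus per projection produce the product $V_{r_1,r_2}(s_1,s_2)\cdots V_{r_{m-1},r_m}(s_{m-1},s_m)$.

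Without the shift, your route is more delicate than you suggest. Saying $\bar P_{s_k}\to\Id$ and $\mathcal D\to 1$ is correct, but then every individual piece of the inner product in $G_m$ converges pointwise to zero---the mass slides off to $+\infty$---so writing $G_m\mathcal D=s_1+\Psi(\vec\sigma)+o(1)$ and differentiating the asymptotic expansion is exactly the interchange of limit and derivative that the shift trick is designed to circumvent. More concretely, your identification of $\Psi$ as ``a Gaussian convolution $V_{r_1,r_2}*\cdots*V_{r_{m-1},r_m}$ evaluated along the diagonal'' cannot be right as stated: by the semigroup property that convolution collapses to the single kernel $V_{r_1,r_m}$, and mixed $\sigma$-derivatives of a function of $\sigma_2+\cdots+\sigma_m$ alone do not factor. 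What actually survives is the iterated integral \emph{with the projections $\bar P_{s_k},P_{s_m}$ still present}; it is the derivatives hitting those projections that generate the product of heat kernels. Your Neumann-series proposal could perhaps be made to reproduce this mechanism, but as written it does not locate where the projections re-enter, and that is the whole point.
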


\begin{proof}
 Without loss of generality we can assume that $r_1=0$. Denoting the partial derivative with respect to the $i$-th coordinate by $\partial_i$, we have
 \begin{equation}
	\Pb\bigg(\bigcap_{k=1}^m\{\mathcal{A}_\mathrm{stat}(r_k)\leq s_k\}\bigg) = \sum_{i=1}^m\partial_i\Lambda\left(s_1,\dots,s_m\right),
\end{equation}
with
 \begin{equation}
	\Lambda\left(s_1,\dots,s_m\right)=G_m(\vec{r},\vec{s})\det\left(\Id-\mathcal{P}K\right)_{L^2(\R)}.
\end{equation}
With a small abuse of notations, in what follows we will write
\begin{equation}
\Pb\bigg(\bigcap_{k=1}^m\{\mathcal{A}_\mathrm{stat}(r_k)\in\mathrm{d} s_k\}\bigg) \equiv \Pb\bigg(\bigcap_{k=1}^m\{\mathcal{A}_\mathrm{stat}(r_k)=s_k\}\bigg) \mathrm{d} s_1\cdots \mathrm{d} s_m.
\end{equation}
Then,
\begin{equation}\label{eq8.3}
\Pb\bigg(\bigcap_{k=1}^m\{\mathcal{A}_\mathrm{stat}(r_k)=s_k\}\bigg) = \prod_{i=1}^m\partial_i\sum_{j=1}^m\partial_j\Lambda\left(s_1,\dots,s_m\right).
\end{equation}
Inserting (\ref{eq8.3}) into
 \begin{equation}\label{eq8.3b}
 \begin{aligned}
	\Pb&\bigg(\bigcap_{k=2}^{m}\{\mathcal{A}_\mathrm{stat}(r_{k})-\mathcal{A}_\mathrm{stat}(r_{k-1})= \sigma_k\}\bigg) \\
	&=\int_\R\D\sigma_1  \Pb\bigg(\bigcap_{k=1}^m\{\mathcal{A}_\mathrm{stat}(r_k)= \sigma_1+\dots+\sigma_k\}\bigg)
\end{aligned}\end{equation}
we get
 \begin{equation}\label{eq8.4}
 \begin{aligned}
(\ref{eq8.3b}) &=\int_\R\D\sigma_1\bigg(\prod_{i=1}^m\partial_i\sum_{j=1}^m\partial_j\bigg)\Lambda\left(\sigma_1,\sigma_1+\sigma_2,\dots,\sigma_1+\dots+\sigma_m\right)\\ &=\int_\R\D\sigma_1\frac{\D}{\D\sigma_1}\bigg(\prod_{i=1}^m\partial_i\bigg)\Lambda\left(\sigma_1,\sigma_1+\sigma_2,\dots,\sigma_1+\dots+\sigma_m\right)\\
&=\bigg(\prod_{i=1}^m\partial_i\bigg)\Lambda\left(\sigma_1,\sigma_1+\sigma_2,\dots,\sigma_1+\dots+\sigma_m\right)\bigg|_{\sigma_1=-\infty}^{\sigma_1=\infty}.
\end{aligned}\end{equation}
We therefore have to study the asymptotics of $\Lambda$ as $\sigma_1\to\pm\infty$.

First we decompose $\Lambda$ as
\begin{equation}\begin{aligned}
 \Lambda&=\Lambda_1+\Lambda_2,\\
 \Lambda_1&:=\left(\mathcal{R}-1\right)\det\left(\Id-\mathcal{P}K\right)_{L^2(\R)},\\
 \Lambda_2&:=\det\Big(\Id-\mathcal{P}K-\left(\mathcal{P}f^*+\mathcal{P}KP_{s_1}\mathbf{1}+(\mathcal{P}-P_{s_1})\mathbf{1}\right)\otimes g\Big)_{L^2(\R)}.
\end{aligned}\end{equation}
Since $r_1=0$ some functions simplify as
\begin{equation}\begin{aligned}
	\mathcal{R}&=s_1+\int_{s_1}^\infty \mathrm{d} x\int_x^\infty \mathrm{d} y\, \Ai(y),\\
	f^*(s)&=-\int_s^\infty \mathrm{d} x\, \Ai(x),\quad g(s)=1-\int_s^\infty \mathrm{d} x\, \Ai(x),\\
	K(s_1,s_2)&=\int_0^\infty\mathrm{d} x\, \Ai(s_1+x)\Ai(s_2+x),
\end{aligned}\end{equation}
where we used the identity (D.3) from~\cite{FS05a}.
Now consider $\Lambda_1$.
\begin{equation}
 \bigg(\prod_{i=1}^m\partial_i\bigg)\Lambda_1(\vec{s})=(\mathcal{R}-1)\bigg(\prod_{i=1}^m\partial_i\bigg)
 \det(\Id-\mathcal{P}K)+\partial_1\mathcal{R}\bigg(\prod_{i=2}^m\partial_i\bigg)\det(\Id-\mathcal{P}K).
\end{equation}
Regarding the first term, notice that the multiple derivative of the Fredholm determinant gives exactly the multipoint density of the Airy$_2$ process, which is known to decay exponentially for both large positive and negative arguments. This exponential decay dominates over the linear growth of $\mathcal{R}$. Similarly, the $(m-1)$-fold derivative is smaller the $(m-1)$-point density of the Airy$_2$ process, so this contribution vanishes in the limit, too.

Continuing to $\Lambda_2$, using $f^*=-K\mathbf{1}$, we first simplify the expression
\begin{equation}
 \Lambda_2=\det\Big(\Id-\mathcal{P}K+\left(\mathcal{P}K\bar{P}_{s_1}\mathbf{1}-(\mathcal{P}-P_{s_1})\mathbf{1}\right)\otimes g\Big)_{L^2(\R)}.
\end{equation}
We introduce the shift operator $S$, $(Sf)(x)=f(x+\sigma_1)$, which satisfies $SV_{r_i,r_j}S^{-1}=V_{r_i,r_j}$ and $P_{a+\sigma_1}=S^{-1}P_aS$, and consequently also
\begin{equation}
 \Id-\bar{P}_{s_1+\sigma_1}V_{r_1,r_2}\bar{P}_{s_2+\sigma_1}\cdots V_{r_{m-1},r_m}\bar{P}_{s_m+\sigma_1}V_{r_m,r_1}=S^{-1}\mathcal{P}S.
\end{equation}
Using $\det(\Id-AB)=\det(\Id-BA)$, we have
\begin{equation}
 \Lambda_2(\vec{s}+\sigma_1)=\det\Big(\Id-\mathcal{P}SKS^{-1}+\left(\mathcal{P}SKS^{-1}\bar{P}_{s_1}\mathbf{1}-(\mathcal{P}-P_{s_1})\mathbf{1}\right)\otimes Sg\Big)_{L^2(\R)}.
\end{equation}
Now the dependence on the vector $\vec{s}$ is only in the projection operators, while the dependence on $\sigma_1$ is only in these two operators:
\begin{equation}\begin{aligned}
	(Sg)(s)&=\int_{-\infty}^{s+\sigma_1} \mathrm{d} x\, \Ai(x),\\
	(SKS^{-1})(s_1,s_2)&=\int_{\sigma_1}^\infty\mathrm{d} x\, \Ai(s_1+x)\Ai(s_2+x).
\end{aligned}\end{equation}
For large $\sigma_1$, we have $Sg\to\mathbf{1}$ and $SKS^{-1}\to0$ (both strong types of convergence from the superexponential Airy decay). So
\begin{equation}
 \lim_{\sigma_1\to\infty}\Lambda_2(\vec{s}+\sigma_1)=\det\left(\Id-(\mathcal{P}-P_{s_1})\mathbf{1}\otimes \mathbf{1}\right)_{L^2(\R)}=1-\langle(\mathcal{P}-P_{s_1})\mathbf{1}, \mathbf{1}\rangle_{L^2(\R)}
\end{equation}
Applying the expansion \eqref{eqPExp}, we arrive at:
\begin{equation}\label{eq8.13}\begin{aligned}
 \prod_{i=1}^m\partial_i\lim_{\sigma_1\to\infty}\Lambda_2(\vec{s}+\sigma_1)&=
 -\prod_{i=1}^m\partial_i\sum_{k=2}^m\langle\bar{P}_{s_1}V_{r_1,r_2}\dots \bar{P}_{s_{k-1}}V_{r_{k-1},r_k}P_{s_k}\mathbf{1}, \mathbf{1}\rangle\\
 &=-\prod_{i=1}^m\partial_i\langle\bar{P}_{s_1}V_{r_1,r_2}\dots \bar{P}_{s_{m-1}}V_{r_{m-1},r_m}P_{s_m}\mathbf{1}, \mathbf{1}\rangle
\end{aligned}\end{equation}
Writing out this scalar product and applying the fundamental theorem of calculus leads to:
\begin{equation}
 \eqref{eq8.13}=V_{r_1,r_2}(s_1,s_2)V_{r_2,r_3}(s_2,s_3)\dots V_{r_{m-1},r_m}(s_{m-1},s_m),
\end{equation}
which is the desired Gaussian density after setting $s_i=\sum_{k=2}^i\sigma_k$ as in \eqref{eq8.4}.

For large negative $\sigma_1$, we have $Sg\to\mathbf{0}$ and $SKS^{-1}\to\Id$. The rank one contribution is thus
\begin{equation}
 \left(\mathcal{P}\bar{P}_{s_1}\mathbf{1}-(\mathcal{P}-P_{s_1})\mathbf{1}\right)\otimes 0.
\end{equation}
We have to be somewhat careful here, as the convergence is weak (only pointwise) and $(Sg)$ is not even $L^2$-integrable. But the first factor decays superexponentially on both sides for finite $\sigma_1$ and also in the limiting case $\mathcal{P}\bar{P}_{s_1}\mathbf{1}-(\mathcal{P}-P_{s_1})\mathbf{1}=(1-\mathcal{P})P_{s_1}\mathbf{1}$, so one should be able to derive nice convergence properties. Neglecting this rank one contribution we are left with
\begin{equation}
 \lim_{\sigma_1\to-\infty}\Lambda_2(\vec{s}+\sigma_1)=\det\left(\Id-\mathcal{P}\Id\right)_{L^2(\R)}=0.
\end{equation}
\end{proof}

\chapter{Packed and Periodic initial conditions}\label{secPP}
Packed initial conditions turn out to be the most readily accessible case. Although such initial conditions have been studied extensively in the literature, we provide here a complete proof, introducing essentially all methods required later on in more complicated situations. The second treatable case of deterministic initial data are periodic initial conditions.
Here we will be rather brief, in stating only the main results, since the tools
used in~\cite{FSW13} are comparable to the ones employed in case of packed initial conditions.

\section{Packed initial conditions}\label{secStep}

We start with the simplest case of packed initial conditions. Our first result is a determinantal expression for the fixed time distribution.

\begin{prop}\label{propStepKernel}
Let $\{x_n(t),n\geq1\}$ be the system of one-sided reflected Brownian motions with initial condition $\vec{x}(0)=\vec{\zeta}^\textrm{packed}$. Then for any finite subset $S$ of $\Z_{>0}$, it holds
\begin{equation}
\Pb\bigg(\bigcap_{n\in S} \{x_n(t)\leq a_n\}\bigg)=\det(\Id-\chi_a \mathcal{K}_\textrm{packed} \chi_a)_{L^2(S\times \R)},
\end{equation}
where $\chi_a(n,\xi)=\Id_{\xi>a_n}$. The kernel $\mathcal{K}_\textrm{packed}$ is given by
\begin{equation}\label{eqKtStep}
\mathcal{K}_\textrm{packed}(n_1,\xi_1;n_2,\xi_2)=-\phi_{n_1,n_2}(\xi_1,\xi_2)\Id_{n_2>n_1}+\mathcal{K}_0(n_1,\xi_1;n_2,\xi_2),
\end{equation}
with
\begin{equation}
\begin{aligned}	\phi_{n_1,n_2}(\xi_1,\xi_2)&=\frac{(\xi_2-\xi_1)^{n_2-n_1-1}}{(n_2-n_1-1)!}\Id_{\xi_1\leq \xi_2}\\
\mathcal{K}_0(n_1,\xi_1;n_2,\xi_2)&=\frac{1}{(2\pi\mathrm{i})^2}\int_{\mathrm{i}\R-\e}\mathrm{d} w\,\oint_{\Gamma_0}\mathrm{d} z\,\frac{e^{{t} w^2/2+\xi_1w}}{e^{{t} z^2/2+\xi_2 z}}\frac{(-w)^{n_1}}{(-z)^{n_2}}\frac{1}{w-z}.
\end{aligned}
\end{equation}
Here $\Gamma_0$ is a simple loop around $0$ and $\e>0$ is chosen large enough that both contours do not touch.
\end{prop}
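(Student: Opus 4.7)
The plan is to start from the transition density of Proposition~\ref{PropWarren} (the Sasamoto--Wadati formula stated as (\ref{6})--(\ref{7}) in the introduction), which for packed initial condition degenerates to
\[
p_t(\vec 0,\vec y)=\det\bigl[\Phi_t^{(i-j)}(y_j)\bigr]_{1\le i,j\le N},
\qquad y_1\le\dots\le y_N,
\]
where $N=\max S$. Since $x_n(t)$ is measurable with respect to $B_1,\dots,B_n$, working with the finite truncation to $N$ particles is harmless.

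To bring this into the product-of-determinants form required by Lemma~\ref{lemDetMeasure}, I would enlarge the state space by introducing auxiliary variables $\{x^n_i:1\le i\le n\le N\}$ living in the Gelfand--Tsetlin cone, with the convention $x^n_n=x_n(t)$ (so the edge of the cone carries the joint law we care about). The interlacing constraint $x^{n-1}_i\le x^n_j\le x^{n-1}_{i-1}$ is encoded, via the Lindstr\"om--Gessel--Viennot lemma, as a determinant $\det[\phi_n(x^{n-1}_i,x^n_j)]$ with $\phi_n(x,y)=\mathbf 1_{x\le y}$; adding virtual entries $x^n_{n+1}=+\infty$ handles the ``free'' side of each layer, and the initial column of zeros is absorbed into $\phi_1$. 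The remaining top-layer factor is exactly $\det[\Psi^N_{N-i}(x^N_j)]$ with $\Psi^N_{N-i}(x):=\Phi_t^{(N-i)}(x)$. Marginalising the resulting Gelfand--Tsetlin density over the bulk variables reproduces the one-time law of $(x_1(t),\dots,x_N(t))$ by construction.

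With the weight in the form \eqref{Sasweight}, I would apply Lemma~\ref{lemDetMeasure}. The convolution $\phi^{(n_1,n_2)}$ of $n_2-n_1$ copies of $\mathbf 1_{x\le y}$ is immediately
\[
\phi^{(n_1,n_2)}(\xi_1,\xi_2)=\frac{(\xi_2-\xi_1)^{n_2-n_1-1}_+}{(n_2-n_1-1)!}\,\mathbf 1_{n_1<n_2},
\]
which is the $\phi_{n_1,n_2}$ appearing in \eqref{eqKtStep}. The $\Psi^n_{n-j}$ are obtained by iterating this convolution into $\Phi_t^{(N-j)}$ and read
\[
\Psi^n_{n-j}(x)=\frac{1}{2\pi\mathrm i}\int_{\mathrm i\R-\e}\mathrm dw\,e^{tw^2/2+xw}\,(-w)^{n-j}\cdot(\text{prefactor}),
\]
with the prefactor chosen so that the dual $\Phi^n_{n-j}$ is a loop integral around $0$:
\[
\Phi^n_{n-j}(x)=\frac{1}{2\pi\mathrm i}\oint_{\Gamma_0}\mathrm dz\,e^{-tz^2/2-xz}\,(-z)^{-n+j-1}.
\]
Biorthogonality \eqref{Sasortho} is then a residue computation: the $x$-integral produces $\delta(w+z)$ after Fourier inversion, so that only the pole at $z=-w$ contributes to the loop integral, and matching the powers of $(-w)$ against $(-z)^{-1}$ yields $\delta_{i,j}$. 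A small separate check is needed to verify the side condition $\phi_n(x^{n-1}_n,x)=c_n\Phi^n_0(x)$, which becomes a standard identity between a step function and its contour-integral realisation via $(-z)^{-1}$.

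Plugging these into \eqref{SasK} leaves the sum $\sum_{k=1}^{n_2}\Psi^{n_1}_{n_1-k}(\xi_1)\Phi^{n_2}_{n_2-k}(\xi_2)$, which on the contour side is
\[
\frac{1}{(2\pi\mathrm i)^2}\int\mathrm dw\oint\mathrm dz\,\frac{e^{tw^2/2+\xi_1 w}}{e^{tz^2/2+\xi_2 z}}\frac{(-w)^{n_1}}{(-z)^{n_2}}\sum_{k=1}^{n_2}\Bigl(\frac{z}{w}\Bigr)^{k-1}\frac{1}{w}.
\]
With the vertical $w$-contour shifted left of $\Gamma_0$ so that $|z|<|w|$, the geometric series converges and collapses (after passing $n_2\to\infty$, or by noting the residual terms vanish by analyticity inside $\Gamma_0$) to $1/(w-z)$, producing exactly $\mathcal K_0$. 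The deterministic term $-\phi^{(n_1,n_2)}$ from \eqref{SasK} is the $\phi_{n_1,n_2}$ contribution in \eqref{eqKtStep}.

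The main obstacles I anticipate are \emph{(i)} checking biorthogonality cleanly, since the contour orientations and the placement of the vertical line relative to $\Gamma_0$ have to be handled so that neither pole nor convergence issues appear; and \emph{(ii)} justifying the collapse of the geometric sum to $1/(w-z)$, which requires the strict inequality $|z|<|w|$ on the chosen contours and hence the size condition on~$\e$. A third minor point is checking that the normalisation $Z_N$ is nonzero so that Lemma~\ref{lemDetMeasure} applies; this follows from the fact that the underlying probability density is genuinely a density, not merely a signed measure that integrates to a non-unit constant.
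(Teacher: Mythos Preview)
Your outline follows the paper's proof essentially step for step: specialise Proposition~\ref{PropWarren} to $\vec\mu=0$, $\vec\zeta=0$; lift to a Gelfand--Tsetlin array via repeated integration of $F_k$ and encode the interlacing as $\prod_n\det[\phi_n]$; apply Lemma~\ref{lemDetMeasure}; identify $\phi^{(n_1,n_2)}$; extend the $k$-sum to infinity by analyticity and collapse the geometric series on the contours $|z|<|w|$ to get $1/(w-z)$.

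The one concrete error is in your biorthogonality sketch. The integrand of $\int_\R\Psi^n_{n-k}(\xi)\Phi^n_{n-\ell}(\xi)\,\mathrm d\xi$ carries $e^{\xi(w-z)}$, not $e^{\xi(w+z)}$, so there is no ``pole at $z=-w$''. More importantly, $\int_\R e^{\xi(w-z)}\,\mathrm d\xi$ simply diverges for the given contours, so the Fourier-inversion heuristic does not go through as stated. What the paper actually does (and what you should do) is split the scalar product into $\int_{\R_-}+\int_{\R_+}$ and exploit the fact that for $n-k\ge 0$ the $\Psi$-integrand has no $w$-poles: shift the vertical $w$-contour so that $\Re(w-z)>0$ for the $\R_-$ piece and $\Re(w-z)<0$ for the $\R_+$ piece. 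Each half-line integral then converges and produces the same double integral up to sign; their difference is the residue at $w=z$, which evaluates to $\tfrac{1}{2\pi\mathrm i}\oint_{\Gamma_0}z^{\ell-k-1}\,\mathrm dz=\delta_{k,\ell}$. Your anticipated obstacle~(i) is exactly this manoeuvre, but the mechanism is contour shifting plus a residue at $w=z$, not a delta function from Fourier inversion.

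A minor point: the side condition $\phi_n(\xi_n^{n-1},x)=c_n\Phi_0^n(x)$ is not an identity between a step function and a contour integral; with the virtual variable convention, $\phi_n(\xi_n^{n-1},x)\equiv 1$ and $\Phi_0^n$ is the constant residue $\tfrac{1}{2\pi\mathrm i}\oint_{\Gamma_0}e^{-tz^2/2-\xi z}z^{-1}\,\mathrm dz=1$, so both sides are literally constant.
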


This proposition is the basis for proving the asymptotic theorem:
\begin{thm}\label{thmAsympStep}
With $\{x_n(t),n\geq1\}$ being the system of one-sided reflected Brownian motions with initial condition $\vec{x}(0)=\vec{\zeta}^\textrm{packed}$, define the rescaled process
\begin{equation}\label{eqStepScaledProcess}
	r\mapsto X_t^\textrm{packed}(r) = t^{-1/3}\big(x_{\lfloor t+2rt^{2/3}\rfloor}(t)-2t-2rt^{2/3} \big).\,
\end{equation}
In the sense of finite-dimensional distributions,
\begin{equation}
	\lim_{t\to\infty}X_t^\textrm{packed}(r)\stackrel{d}{=}\mathcal{A}_2(r)-r^2.
\end{equation}
\end{thm}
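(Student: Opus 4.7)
The plan is to start from Proposition~\ref{propStepKernel} and perform a steep-descent asymptotic analysis of the finite-$t$ kernel, showing both pointwise convergence of a conjugated and rescaled version of $\mathcal{K}_\textrm{packed}$ to $K_{\mathcal{A}_2}$ and uniform exponential tail bounds sufficient to pass the limit through the Fredholm determinant. Concretely, I fix $r_1<\dots<r_m$ and corresponding $s_1,\dots,s_m$, and define the index/space change of variables
\begin{equation}
n_i = \lfloor t + 2 r_i t^{2/3}\rfloor,\qquad a_{n_i} = 2t + 2 r_i t^{2/3} + t^{1/3}(s_i - r_i^2),
\end{equation}
so that $\Pb(X_t^\textrm{packed}(r_i)\leq s_i - r_i^2\text{ for all }i) = \det(\Id-\chi_a\mathcal{K}_\textrm{packed}\chi_a)$. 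Pulling through the Jacobian $t^{1/3}$ from the change of integration variables in the Fredholm expansion turns this into $\det(\Id-\chi_s K_t \chi_s)_{L^2(\{r_1,\dots,r_m\}\times\R)}$ where $K_t(r_i,\xi_i;r_j,\xi_j) := t^{1/3}\mathcal{K}_\textrm{packed}(n_i,a_{n_i}+t^{1/3}\xi_i;n_j,a_{n_j}+t^{1/3}\xi_j)$.

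Next I carry out the steep-descent analysis on the two pieces of $\mathcal{K}_\textrm{packed}$. For $\mathcal{K}_0$, I write the double-contour integrand as $e^{t f_t(w)-t f_t(z)}/(w-z)$ times polynomial prefactors where, after the rescaling, $f_t$ has a double critical point at $w=-1$. I deform $\Gamma_0$ and the vertical line so that both pass through $-1$ along the locally steepest-descent directions (angles $\pm\pi/3$ for $z$, $\pm 2\pi/3$ for $w$), and substitute $w = -1 + t^{-1/3}W$, $z=-1+t^{-1/3}Z$. Expanding the exponent gives $e^{Z^3/3 + r_j Z^2 - s_j Z - W^3/3 - r_i W^2 + s_i W}/(Z-W)$ in the limit, which after the conjugation $e^{\frac23 r_i^3 + r_i s_i - \frac23 r_j^3 - r_j s_j}$ (harmless inside a Fredholm determinant) matches the contour-integral representation of $K_{r_i,r_j}$ given in \eqref{contIntK}. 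For the deterministic part $\phi_{n_1,n_2}(\xi_1,\xi_2) = (\xi_2-\xi_1)^{n_2-n_1-1}/(n_2-n_1-1)!\, \Id_{\xi_1\leq\xi_2}$ multiplied by $\Id_{n_1<n_2}$, I apply Stirling's formula: with $n_2-n_1 \sim 2(r_j-r_i)t^{2/3}$ and $\xi_2-\xi_1 \sim 2(r_j-r_i)t^{2/3} + t^{1/3}(s_j-s_i)$, a direct expansion yields the Gaussian kernel $V_{r_i,r_j}(s_i,s_j)$ from \eqref{eqKernel2Def} in the limit, matching the $V$ part of $K_{\mathcal{A}_2}$.

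To turn pointwise convergence into convergence of the Fredholm determinant, I need uniform exponential tail bounds of the form $|K_t(r_i,\xi_i;r_j,\xi_j)| \leq C e^{-c(\xi_i+\xi_j)}$ on $\xi_i,\xi_j \geq L$ for some $L\in\R$, uniformly in large $t$. For $\mathcal{K}_0$, this follows from refining the steep-descent contours: along the chosen descent paths the real part of $f_t(w)-f_t(z)$ picks up decay $-c\,\xi_i - c\,\xi_j$ from the linear $-s_i W, +s_j Z$ pieces as $s_i,s_j\to+\infty$, and away from the critical point the integrand is super-exponentially suppressed in $t$. For $\phi_{n_1,n_2}$, the Gaussian decay in $(s_j-s_i)$ comes directly from the Stirling expansion. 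These bounds, combined with Hadamard's inequality, yield a dominating integrable function that justifies passing the limit through the Fredholm determinant series, so $\det(\Id-\chi_s K_t \chi_s)\to \det(\Id-\chi_s K_{\mathcal{A}_2}\chi_s)$ and hence the finite-dimensional distributions of $X_t^\textrm{packed}$ converge to those of $\mathcal{A}_2(r)-r^2$.

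The main obstacle is not the formal saddle-point identification (which is standard once the critical point is located) but rather setting up the steep-descent contours globally so that (i) they remain valid deformations of $\mathrm{i}\R-\varepsilon$ and of $\Gamma_0$ without crossing each other, and (ii) the required uniform exponential bounds hold simultaneously on all the tails; these contours must behave correctly both near the double critical point $w=-1$ (where they have prescribed angles) and away from it (where one needs $\Re f_t(z)$ and $-\Re f_t(w)$ to be monotone along the contour). Ensuring these properties globally, uniformly in $t$ and in the rescaled spectral variables, is the technical core of the argument.
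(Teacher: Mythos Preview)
Your overall strategy is the same as the paper's: start from the Fredholm determinant of Proposition~\ref{propStepKernel}, prove pointwise convergence of the rescaled kernel via steep descent at the double critical point $-1$, and justify taking the limit inside the Fredholm series via Hadamard plus dominated convergence. Your identification of the critical point, the contour angles, and the limiting integrand for $\mathcal{K}_0$ are all correct, and the Stirling analysis for $\phi_{n_1,n_2}$ is also the right way to get the heat kernel $V_{r_i,r_j}$.

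There is, however, a genuine gap in your uniform-bound step. You write that you need $|K_t(r_i,\xi_i;r_j,\xi_j)|\leq C e^{-c(\xi_i+\xi_j)}$, but then immediately observe (correctly) that the $\phi$-part only has ``Gaussian decay in $(s_j-s_i)$''. These two statements are incompatible: the bound you claim is simply false for the $\phi$ term. Decay in the difference alone is not enough for dominated convergence after Hadamard. If entry $(k,l)$ is bounded by $e^{-|\sigma_k-\sigma_l|}\Id_{i_k<i_l}+e^{-(\sigma_k+\sigma_l)}$, then whenever some $i_l>i_k$ and $\sigma_l$ is close to $\sigma_k$, the $(k,l)$ entry is $\approx 1$ and row $k$ carries no decay in $\sigma_k$; the resulting bound on the $N$-fold integral is not integrable.

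The fix, which the paper carries out explicitly, is to insert a further conjugation inside the determinant by factors $(1+\sigma_l^2)^{m+1-i_l}/(1+\sigma_k^2)^{m+1-i_k}$. Since $i_k<i_l$ forces the exponent in the numerator to be strictly smaller, the elementary estimate
\[
\frac{(1+x^2)^a}{(1+y^2)^b}\,e^{-|x-y|}\leq \frac{C}{1+y^2}\qquad (a<b)
\]
turns the $\phi$-contribution into something decaying in each variable, while the analogous estimate with $e^{-(x+y)}$ handles the $\mathcal{K}_0$-contribution. Only after this second conjugation does Hadamard give an honestly integrable bound $\prod_k C/(1+\sigma_k^2)$. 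You should add this step; without it the dominated-convergence argument does not close.
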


\subsection{Determinantal structure}
The first step in proving Proposition~\ref{propStepKernel} is a formula for the transition density of a finite system of one-sided reflected Brownian motions. It generalizes Proposition 4.1~\cite{FSW13}, which has been first shown in~\cite{War07}, to the case of non-zero drifts.
\begin{prop}\label{PropWarren}
Let $W^N=\{\vec{x}\in\R^N|x_1\leq\dots\leq x_N\}$ be the Weyl chamber. The transition probability density of $N$ one-sided reflected Brownian motions with drift $\vec{\mu}$ from $\vec{x}(0)=\vec{\zeta}\in W^N$  to \mbox{$\vec{x}({t})=\vec{\xi}\in W^N$} at time ${t}$ has a continuous version, which is given as follows:
\begin{equation}\label{transDens}
	\Pb\left(\vec{x}({t})\in \D\vec{\xi}\,|\vec{x}(0)=\vec{\zeta}\,\right)=r_{t}(\vec{\zeta},\vec{\xi})\,\D\vec{\xi},
\end{equation}
where
\begin{equation}\label{eqrtau}
r_{t}(\vec{\zeta},\vec{\xi})=\bigg(\prod_{n=1}^Ne^{\mu_n(\xi_n-\zeta_n)-{t}\mu_n^2/2}\bigg)\det_{1\leq k,l\leq N}[F_{k,l}(\xi_{N+1-l}-\zeta_{N+1-k},{t})],
\end{equation}
and
\begin{equation}\label{eqFkl}
	F_{k,l}(\xi,{t})=\frac{1}{2\pi\mathrm{i}}\int_{\mathrm{i}\R+\mu}\mathrm{d} w\,e^{{t} w^2/2+\xi w}\frac{\prod_{i=1}^{k-1}(w+\mu_{N+1-i})}{\prod_{i=1}^{l-1}(w+\mu_{N+1-i})},
\end{equation}
with $\mu>-\min\{\mu_1,\dots,\mu_N\}$.
\end{prop}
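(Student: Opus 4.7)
The plan is to verify the proposed formula directly as the transition density by checking the three standard characterizing properties: (i) the Fokker--Planck equation for $N$ independent drifted Brownian motions in the interior of the Weyl chamber $W^N$, (ii) the one-sided reflection boundary condition on each wall $\{\xi_n=\xi_{n-1}\}$ for $n=2,\ldots,N$, and (iii) the initial condition $r_t\,\D\vec{\xi}\to\delta_{\vec{\zeta}}$ as $t\to 0^+$. Since the Skorokhod construction yields a pathwise-unique strong solution of the reflected SDE on $W^N$, any continuous candidate density satisfying (i)--(iii) must coincide with the true transition density by uniqueness of the associated martingale problem; in particular there is no need to separately verify non-negativity or unit mass.

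For step (i), each $F_{k,l}(\xi,t)$ is a contour integral of $e^{tw^2/2+\xi w}$ against a factor depending only on $w$, so trivially $\partial_t F_{k,l}=\tfrac{1}{2}\partial_\xi^2 F_{k,l}$. This propagates through the determinant $D(\vec{\zeta},\vec{\xi},t):=\det[F_{k,l}(\xi_{N+1-l}-\zeta_{N+1-k},t)]$ to give the free $N$-dimensional heat equation, and the Gaussian gauge $\prod_n e^{\mu_n(\xi_n-\zeta_n)-t\mu_n^2/2}$ is precisely the transformation turning this into the Fokker--Planck operator with drifts $\vec{\mu}$. The algebraic key for the subsequent boundary analysis is the one-step recursion
\begin{equation*}
(\partial_\xi+\mu_{N+1-l})F_{k,l+1}(\xi,t)=F_{k,l}(\xi,t),
\end{equation*}
obtained by writing $w=(w+\mu_{N+1-l})-\mu_{N+1-l}$ inside the contour integral defining $F_{k,l+1}$.

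For step (ii), passing to the gap coordinate $u_n=\xi_n-\xi_{n-1}$ and requiring the probability flux of $u_n$ to vanish at $u_n=0$ produces, after transforming back to the $\vec{\xi}$ variables, a first-order linear relation between $\partial_{\xi_n}r_t$, $\partial_{\xi_{n-1}}r_t$, and $r_t$ at the wall; dividing out the gauge factor reduces this to the corresponding relation on $D$ alone. At $\xi_n=\xi_{n-1}$ the two adjacent columns of the determinant (indices $l_1=N+1-n$ and $l_1+1$) share the same argument, so the recursion can be used to rewrite the derivative of each such column as a neighboring column minus a $\mu$-multiple of the original. The terms proportional to the unchanged column then cancel by the identity that a determinant with two equal columns vanishes, leaving precisely the required boundary relation. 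Step (iii) follows from a saddle-point analysis of the contour integrals as $t\to 0^+$: each diagonal entry $F_{k,k}$ reduces to a one-dimensional Gaussian heat kernel concentrating to $\delta(\xi_n-\zeta_n)$, while off-diagonal entries contribute only to lower-order terms that drop out of the determinant in the limit.

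The main technical obstacle is step (ii): the precise matching of the reflection condition (a prescribed combination of $\partial_{\xi_n}$, $\partial_{\xi_{n-1}}$, $\mu_n$ and $\mu_{n-1}$) against the determinantal cancellation coming from the recursion requires careful bookkeeping of column indices and drift labels, and the reduction is not symmetric in $n$ and $n-1$. A secondary subtlety appears at the edge case $n=N$, where the recursion formally generates an undefined $F_{k,0}$ factor; this must be handled by a separate argument, e.g.\ by deforming the contour to absorb the missing denominator factor, or by exploiting that $x_N$ is reflected only from below so that the top row of the determinant has a simpler structure that matches the relevant boundary identity.
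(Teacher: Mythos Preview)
Your plan via the forward Fokker--Planck equation is in principle viable, but the paper takes a considerably simpler route: it verifies the \emph{backward} equation and boundary condition in the starting point $\vec{\zeta}$, not in $\vec{\xi}$. For one-sided reflection the backward boundary condition is simply $\partial_{\zeta_i} r_t=0$ at $\zeta_i=\zeta_{i-1}$, and this follows from a single row manipulation: absorbing $e^{-\mu_i\zeta_i}$ into row $N+1-i$ and differentiating in $\zeta_i$ turns $F_{k,l}$ into $-F_{k+1,l}$ (the numerator picks up one extra factor $(w+\mu_i)$), so at $\zeta_i=\zeta_{i-1}$ the differentiated row coincides with the negative of the adjacent row and the determinant vanishes. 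It\^o's formula applied to $t\mapsto F(T+\e-t,\vec{x}(t))$ with $F(t,\vec{\zeta})=\int r_t(\vec{\zeta},\vec{\xi})f(\vec{\xi})\,\D\vec{\xi}$ then gives a martingale (the $\D L^n$ terms are killed precisely by the backward boundary condition), whence $\mathbb{E}[f(\vec{x}(T))]=F(T,\vec{\zeta})$ directly.

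By contrast, your forward boundary condition is a flux condition in the gap $u_n=\xi_n-\xi_{n-1}$, which after undoing the gauge becomes an asymmetric first-order relation on the determinant involving $\partial_{\xi_n}$, $\partial_{\xi_{n-1}}$, $\mu_n$ and $\mu_{n-1}$. You identify this as the ``main technical obstacle'' but do not carry it out. The column recursion you state, $(\partial_\xi+\mu_{N+1-l})F_{k,l+1}=F_{k,l}$, runs the wrong way for differentiating column $l_1=N+1-n$ directly (one would need $\partial_\xi F_{k,l_1}$ expressed via $F_{k,l_1-1}$), and your own worry about an undefined $F_{k,0}$ at $n=N$ is exactly the symptom of this mismatch. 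The cancellation can presumably be pushed through, but it is genuinely more intricate than the one-line backward argument, and the proposal stops short of doing it.

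There is also a gap in your claim that normalization need not be checked. The paper's It\^o argument uses test functions $f$ supported away from $\partial W^N$, so it only identifies $r_t$ with the density on the interior; the paper then explicitly verifies $\int_{W^N} r_t(\vec{\zeta},\vec{\xi})\,\D\vec{\xi}=1$ by successive integrations (each $\xi$-integration collapses one column to a single nonzero entry) to rule out mass on the boundary. An abstract appeal to martingale-problem uniqueness does not by itself identify a candidate PDE solution with the transition density; you would still need a uniqueness statement for the forward problem in an appropriate function class, which you have not supplied.
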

\begin{proof}
We follow the proof of Proposition~8 in~\cite{War07}. The strategy is to show that the transition density satisfies three equations, the backwards equation, boundary condition and initial condition, the latter one being contained in Lemma~\ref{lemInCond}. These equations are then used for It\^o's formula to prove that it indeed is the transition density.

We start with the backwards equation and boundary condition:
\begin{align}
	\frac{\partial r_{t}}{\partial{t}}&=\sum_{n=1}^N\left(\frac{1}{2}\frac{\partial^2}{\partial\zeta_n^2}+\mu_n\frac{\partial }{\partial\zeta_n}\right)r_{t}.\label{eqBackwards}\\
	\frac{\partial r_{t}}{\partial\zeta_i}&=0,\qquad\text{whenever } \zeta_i=\zeta_{i-1},\ 2\leq i\leq N\label{eqBoundCond}
\end{align}
To obtain \eqref{eqBoundCond}, move the prefactor $e^{-\mu_{i}\zeta_{i}}$ inside the integral in the \mbox{$(N+1-i)$-}th row of the determinant and notice that the differential operator transforms $F_{k,l}$ into $-F_{k+1,l}$.
 Consequently, $\zeta_i=\zeta_{i-1}$ implies the $(N+1-i)$-th being the negative of the $(N+2-i)$-th row.
\eqref{eqBackwards} can be seen from the computation
\begin{equation}
	\frac{\partial r_{t}}{\partial{t}}=\frac{1}{2}\sum_{n=1}^N\left(-\mu_n^2+e^{-\mu_n\zeta_n}\frac{\partial^2 }{\partial\zeta_n^2}e^{\mu_n\zeta_n}\right)r_{t}.
\end{equation}

Let \mbox{$f\colon W^N\rightarrow\R$} be a $C^\infty$ function, whose support is compact and has a distance of at least some $\e>0$ to the boundary of $W^N$. Define a function $F\colon(0,\infty)\times W^N\rightarrow\R$ as
\begin{equation}
	F({t},\vec{\zeta}\,)=\int_{W^N}r_{t}(\vec{\zeta},\vec{\xi})f(\vec{\xi})\,\mathrm{d} \vec{\xi}.
\end{equation}
The previous identities \eqref{eqBoundCond} and \eqref{eqBackwards} carry over to the function $F$ in the form of:
\begin{align}
	\frac{\partial F}{\partial\zeta_i}&=0,\qquad\text{for } \zeta_i=\zeta_{i-1},\ 2\leq i\leq N\label{eqBoundCond2}\\
	\frac{\partial F}{\partial{t}}&=\sum_{n=1}^N\left(\frac{1}{2}\frac{\partial^2}{\partial\zeta_n^2}+\mu_n\frac{\partial }{\partial\zeta_n}\right)F.\label{eqBackwards2}
\end{align}
Our processes satisfy $x_n({t})=\zeta_n+\mu_n{t}+B_n({t})+L^n({t})$, where $B_n$ are independent Brownian motions, $L^1\equiv0$ and $L^n$, $n\geq2$, are continuous non-decreasing processes increasing only when $x_n({t})=x_{n-1}({t})$. In fact, $L^n$ is twice the semimartingale local time at zero of $x_n-x_{n-1}$. Now fix some $\e>0$, $T>0$, define a process $F(T+\e-{t},\vec{x}({t}))$ for ${t}\in[0,T]$ and apply It\^o's formula:
\begin{equation}\label{eqIto}\begin{aligned}
	F&\left(T+\e-{t},\vec{x}({t})\right)\\&=F\left(T+\e,\vec{x}(0)\right)+\int_0^{t}-\frac{\partial}{\partial s}F\left(T+\e-s,\vec{x}(s)\right)\mathrm{d} s
	\\&\quad+\sum_{n=1}^N\int_0^{t}\frac{\partial}{\partial\zeta_n}F\left(T+\e-s,\vec{x}(s)\right)\mathrm{d} x_n(s)
	\\&\quad+\frac{1}{2}\sum_{m,n=1}^N\int_0^{t}\frac{\partial^2}{\partial\zeta_m\partial\zeta_n}F\left(T+\e-s,\vec{x}(s)\right)\mathrm{d} \left\langle x_m(s),x_n(s)\right\rangle.
\end{aligned}\end{equation}
From the definition it follows that $\mathrm{d} x_n(t)=\mu_n\mathrm{d} t +\mathrm{d} B_n(t)+\mathrm{d} L^n(t)$ and
\begin{equation}
	\mathrm{d} \left\langle x_m(t),x_n(t)\right\rangle=\mathrm{d} \left\langle B_m(t),B_n(t)\right\rangle=\delta_{m,n}\mathrm{d} t,
\end{equation}
because continuous functions of finite variation do not contribute to the quadratic variation. Inserting the differentials, by \eqref{eqBackwards2} the integrals with respect to $\mathrm{d} s$ integrals cancel, which results in:
\begin{equation}\begin{aligned}
	\eqref{eqIto}&=F\left(T+\e,\vec{x}(0)\right)+\sum_{n=1}^N\int_0^{t}\frac{\partial}{\partial\zeta_n}F\left(T+\e-s,\vec{x}(s)\right)\mathrm{d} B_n(s)
	\\&\quad+\sum_{n=1}^N\int_0^{t}\frac{\partial}{\partial\zeta_n}F\left(T+\e-s,\vec{x}(s)\right)\mathrm{d} L^n(s).
\end{aligned}\end{equation}
Since the measure $\mathrm{d} L^n(t)$ is supported on $\{x_n(t)=x_{n-1}(t)\}$, where the spatial derivative of $F$ is zero (see (\ref{eqBoundCond2})), the last term vanishes, too. So $F\left(T+\e-{t},\vec{x}({t})\right)$ is a local martingale and, being bounded, even a true martingale. In particular its expectation is constant, i.e.:
\begin{equation}
	F(T+\e,\vec{\zeta}\,)=\mathbbm{E}\left[F\left(T+\e,\vec{x}(0)\right)\right]=\mathbbm{E}\left[F\left(\e,\vec{x}(T)\right)\right].
\end{equation}
Applying Lemma~\ref{lemInCond} we can take the limit $\e\to0$, leading to
\begin{equation}
	F(T,\vec{\zeta}\,)=\mathbbm{E}\left[f\left(\vec{x}(T)\right)\right].
\end{equation}
Because of the assumptions we made on $f$ it is still possible that the distribution of $\vec{x}(T)$ has positive measure on the boundary. We thus have to show that $r_{t}(\vec{\zeta},\vec{\xi})$ is normalized over the interior of the Weyl chamber.

Start by integrating \eqref{eqrtau} over $\xi_N\in[\xi_{N-1},\infty)$. Pull the prefactor indexed by $n=N$ as well as the integration inside the $l=1$ column of the determinant. The $(k,1)$ entry is then given by:
\begin{equation}
 \begin{aligned}
  &e^{-\mu_N\zeta_N-{t}\mu_N^2/2}\int^{\infty}_{\xi_{N-1}}\D\xi_N e^{\mu_N\xi_N}F_{k,1}(\xi_N-\zeta_{N+1-k},{t})\\
  &\quad=e^{-\mu_N\zeta_N-{t}\mu_N^2/2}  e^{\mu_N x}F_{k,2}(x-\zeta_{N+1-k},{t})\Big|^{x=\infty}_{x=\xi_{N-1}}.
 \end{aligned}
\end{equation}
The contribution from $x=\xi_{N-1}$ is a constant multiple of the second column and thus cancels out. The remaining terms are zero for $k\geq2$, since all these functions $F_{k,2}$ have Gaussian decay. The only non-vanishing term comes from $k=1$ and returns exactly $1$ by an elementary residue calculation.

The determinant can thus be reduced to the index set $2\leq k,l\leq N$. Successively carrying out the integrations of the remaining variables in the same way, we arrive at the claimed normalization. This concludes the proof.
\end{proof}
\begin{lem}\label{lemInCond}
For fixed $\vec{\zeta}\in W^N$, the transition density $r_{t}(\vec{\zeta},\vec{\xi})$ as given by \eqref{eqrtau}, satisfies
\begin{equation}
	\lim_{{t}\to0}\int_{W^N}r_{t}(\vec{\zeta},\vec{\xi})f(\xi)\,\mathrm{d} \vec{\xi}=f(\vec{\zeta})
\end{equation}
for any $C^\infty$ function \mbox{$f\colon W^N\rightarrow\R$}, whose support is compact and has a distance of at least some $\e>0$ to the boundary of $W^N$.
\end{lem}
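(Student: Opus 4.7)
The plan is to exploit the determinantal form of $r_t$ and show, via the Leibniz expansion, that only the identity permutation survives the limit $t\to 0$, and that this surviving term is an approximate $N$-dimensional delta at $\vec\zeta$.

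First I would dispose of the prefactor. The factor $P(\vec\zeta,\vec\xi,t):=\prod_n e^{\mu_n(\xi_n-\zeta_n)-t\mu_n^2/2}$ is continuous in $(\vec\xi,t)$, equals $1$ at $(\vec\zeta,0)$, and is uniformly bounded on $\mathrm{supp}(f)\times[0,T]$; so it is harmless in the eventual limit. Next I would split into two cases. If $\vec\zeta$ lies within $\e$ of $\partial W^N$, then $f(\vec\zeta)=0$ and I need only show $\int r_t f\,\mathrm{d}\vec\xi\to 0$. If instead $\vec\zeta$ is strictly interior with gap $\zeta_n-\zeta_{n-1}\geq\delta>0$, I aim to recover $f(\vec\zeta)$.

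The main step is the Leibniz expansion
\begin{equation}
\det\bigl[F_{k,l}(\xi_{N+1-l}-\zeta_{N+1-k},t)\bigr]=\sum_{\sigma\in\mathcal{S}_N}\mathrm{sgn}(\sigma)\prod_{k=1}^N F_{k,\sigma(k)}(\xi_{N+1-\sigma(k)}-\zeta_{N+1-k},t).
\end{equation}
For $\sigma=\mathrm{id}$ the ratio of products in \eqref{eqFkl} is $1$, so $F_{k,k}(\xi,t)=(2\pi t)^{-1/2}e^{-\xi^2/(2t)}$, giving the identity contribution $\prod_n p_t(\xi_n-\zeta_n)$. This is a standard $N$-dimensional approximate identity whose integral against $f$ converges to $f(\vec\zeta)$ by continuity of $f$; combined with $P\to 1$, this produces the desired limit in the interior case and contributes $0$ in the boundary case.

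For each non-identity $\sigma$ the key observation is that there must exist $k$ with $\sigma(k)<k$, making the factor $F_{k,\sigma(k)}$ a ``derivative-of-Gaussian'' kernel of the form $\partial_\xi^{k-\sigma(k)}p_t$ up to drift corrections, hence of size $O(t^{-(k-\sigma(k)+1)/2}e^{-\xi^2/(2t)})$. Its argument $\xi_{N+1-\sigma(k)}-\zeta_{N+1-k}$ is, for $\vec\xi$ within $\delta/2$ of $\vec\zeta$, bounded below in absolute value by $\delta/2$ by the gap assumption, so this entry is $O(e^{-c/t})$. The remaining off-diagonal entries with $\sigma(j)>j$ are antiderivatives of Gaussians and can be estimated uniformly on compacta by shifting the contour in \eqref{eqFkl} and picking up residues at the finitely many poles $w=-\mu_{N+1-i}$. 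The product of all factors is therefore $o(1)$ (in fact exponentially small) uniformly on this neighbourhood of $\vec\zeta$; on its complement the diagonal Gaussians in the identity term are themselves $O(e^{-c'/t})$, which bounds the non-identity contributions there too.

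The hard part I anticipate is the uniform control of the off-diagonal $F_{k,l}$: one must show that the polynomial $t^{-\alpha}$ blow-ups coming from sub-diagonal entries never overwhelm the exponential smallness $e^{-c/t}$ produced by the non-zero argument, uniformly over $\vec\xi\in\mathrm{supp}(f)$ and $\sigma\in\mathcal{S}_N$. Once these bounds are organized (they follow by routine saddle-point or contour-shift estimates on the integral defining $F_{k,l}$), summing over $\sigma$ and combining with the identity contribution yields $\lim_{t\to 0}\int_{W^N} r_t(\vec\zeta,\vec\xi)f(\vec\xi)\,\mathrm{d}\vec\xi=f(\vec\zeta)$, completing the proof.
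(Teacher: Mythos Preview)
Your overall strategy matches the paper's: expand the determinant via Leibniz, identify the identity permutation as an approximate delta at $\vec\zeta$, and show each non-identity permutation contributes $o(1)$. The gap is in how you kill the non-identity terms.

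Your argument for $\sigma\neq\mathrm{id}$ splits the $\vec\xi$-integration into a $\delta/2$-neighbourhood of $\vec\zeta$ and its complement. Near $\vec\zeta$ you use that one sub-diagonal factor has argument bounded away from zero by the gap assumption on $\vec\zeta$. On the complement you claim the identity-term Gaussians ``bound the non-identity contributions there too''. This is not right: the non-identity product $\prod_k F_{k,\sigma(k)}(\xi_{N+1-\sigma(k)}-\zeta_{N+1-k},t)$ does not contain those diagonal Gaussians, and its own sub-diagonal factors can have argument near zero even when $\vec\xi$ is far from $\vec\zeta$. For $N=2$, $\sigma=(12)$, take $\xi_2=\zeta_1$: the sole sub-diagonal factor $F_{2,1}(\xi_2-\zeta_1,t)$ then has argument $0$ and is of size $t^{-1/2}$ rather than $e^{-c/t}$, while the super-diagonal $F_{1,2}$ is merely bounded. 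So the far-region bound is missing. The same problem recurs in your boundary case: when $\vec\zeta\in\partial W^N$ the near-region mechanism collapses entirely, and you give no alternative.

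The paper avoids both issues by never using a gap in $\vec\zeta$; it uses only the support condition on $f$. For each $\sigma\neq\mathrm{id}$ one finds $i<j$ with $\sigma(j)\leq i<\sigma(i)$; on $\mathrm{supp}(f)$ one must then have either $\xi_{N+1-\sigma(i)}-\zeta_{N+1-i}<-\e/2$ or $\xi_{N+1-\sigma(j)}-\zeta_{N+1-j}>\e/2$, since otherwise $\xi_{N+1-\sigma(i)}\geq\xi_{N+1-\sigma(j)}-\e$ forces $\vec\xi$ within $\e$ of $\partial W^N$. On each of these two regions the remaining sub-diagonal factors are integrated by parts onto $f$, turning them into Gaussians with bounded $L^1$ norm, and the single distinguished factor with argument bounded away from $0$ provides the vanishing via Lemma~\ref{lemFkl}. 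This works uniformly for all $\vec\zeta\in W^N$.
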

\begin{proof}
At first consider the contribution to the determinant in \eqref{eqrtau} coming from the diagonal. For $k=l$ the products in \eqref{eqFkl} cancel out, so we are left with a simple Gaussian density. This contribution is thus given by the multidimensional heat kernel, which is well known to converge to the delta distribution. The remaining task is to prove that for all other permutations the integral vanishes in the limit.

Let $\sigma$ be such a permutation. Its contribution is
\begin{equation}\label{eqL2.2a}
	\int_{\R^N}\D\vec{\xi}\,f(\vec{\xi})\prod_{k=1}^NF_{k,\sigma(k)}(\xi_{N+1-\sigma(k)}-\zeta_{N+1-k},{t}),
\end{equation}
where we have extended the domain of $f$ to $\R^N$, being identically zero outside of $W_N$. We also omitted the prefactor since it is bounded for $\xi$ in the compact domain of $f$.

There exist $i<j$ with $\sigma(j)\leq i<\sigma(i)$. Let
\begin{equation}\begin{aligned}
	\widetilde{W}_1&=\{\vec{\xi}\in \R^N\colon \xi_{N+1-\sigma(i)}-\zeta_{N+1-i}<-\e/2\}\\
	\widetilde{W}_2&=\{\vec{\xi}\in \R^N\colon \xi_{N+1-\sigma(j)}-\zeta_{N+1-j}>\e/2\}.
\end{aligned}\end{equation}
It is enough to restrict the area of integration to these two sets, since on the complement of $\widetilde{W}_1\cup\widetilde{W}_2$, we have
\begin{equation}
	\xi_{N+1-\sigma(i)}\geq\zeta_{N+1-i}-\e/2\geq\zeta_{N+1-j}-\e/2\geq\xi_{N+1-\sigma(j)}-\e,
\end{equation}
so we are not inside the support of $f$.

We start with the contribution coming from $\widetilde{W}_1$. Notice that by
\begin{equation}
	F_{k,l}(\xi,{t})=e^{-\xi\mu_{N+1-l}}\frac{\D}{\D\xi}\Big(e^{\xi\mu_{N+1-l}}F_{k,l+1}(\xi,{t})\Big),
\end{equation}
all functions $F_{k,l}$ with $k>l$ can be written as iterated derivatives of $F_{k,k}$ and some exponential functions. For each $k\neq i$ with $k>\sigma(k)$ we write $F_{k,\sigma(k)}$ in this way and then use partial integration to move the exponential factors and derivatives onto $f$. The result is
\begin{equation}\begin{aligned}\label{eq2.20}
	\int_{\widetilde{W}_1}\D\vec{\xi}\,&\tilde{f}(\vec{\xi})F_{i,\sigma(i)}(\xi_{N+1-\sigma(i)}-\zeta_{N+1-i},{t})\\&\prod_{k\neq i}F_{k,\max\{k,\sigma(k)\}}(\xi_{N+1-\sigma(k)}-\zeta_{N+1-k},{t})
\end{aligned}\end{equation}
for a new $C^\infty$ function $\tilde{f}$, which has compact support and is therefore bounded, too. We can bound the contribution by first integrating the variables $\xi_{N+1-\sigma(k)}$ with $k\geq\sigma(k)$, $k\neq i$, where we have a Gaussian factor $F_{k,k}$:
\begin{equation}\begin{aligned}
	\left|\eqref{eq2.20}\right|\leq\sup_{\vec{x}}&\big|\tilde{f}(\vec{x})\big|\int_{\widetilde{W}_1'}\left|F_{i,\sigma(i)}(\xi_{N+1-\sigma(i)}-\zeta_{N+1-i},{t})\right|\D\xi_{N+1-\sigma(i)}\,\\&\prod_{k< \sigma(k),k\neq i}\left|F_{k,\sigma(k)}(\xi_{N+1-\sigma(k)}-\zeta_{N+1-k},{t})\right|\D\xi_{N+1-\sigma(k)}.
\end{aligned}\end{equation}
$\widetilde{W}_1'$ consists of the yet to be integrated $\xi$-components that are contained in the set \mbox{$\widetilde{W}_1\cap\supp(\widetilde{f})$}. In particular, $\widetilde{W}_1'$ is compact, so the functions $F_{k,\sigma(k)}$, $k\neq i$, are bounded uniformly in ${t}$ by Lemma~\ref{lemFkl}. The remaining integral gives:
\begin{equation}
	\left|\eqref{eq2.20}\right|\leq\const\int^{-\e/2}_{-\infty}\left|F_{i,\sigma(i)}(x,{t})\right|\mathrm{d} x,
\end{equation}
which converges to $0$ as ${t}\to0$ by \eqref{eqFkllimn}.

The contribution of $\widetilde{W}_2$ can be bounded analogously with $j$ playing the role of $i$. The final convergence is then given by \eqref{eqFkllimp}.
\end{proof}

\begin{lem}\label{lemFkl}
For each $\e>0$ we have
\begin{align}
	\lim_{{t}\to0}\int_\e^\infty\left|F_{k,l}(x,{t})\right|\mathrm{d} x&=0, & 1&\leq l\leq k\leq N,\label{eqFkllimp}\\
	\lim_{{t}\to0}\int^{-\e}_{-\infty}\left|F_{k,l}(x,{t})\right|\mathrm{d} x&=0, & 1&\leq k,l\leq N.\label{eqFkllimn}
\end{align}
In addition, for each $1\leq k<l\leq N$ the function $F_{k,l}(x,{t})$ is bounded uniformly in ${t}$ on compact sets.
\end{lem}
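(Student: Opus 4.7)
The plan is to split by the sign of $k-l$ and reduce every case to Gaussian tail bounds for the heat kernel $\phi_t(x) = (2\pi t)^{-1/2} e^{-x^2/(2t)}$. First I would observe that in \eqref{eqFkl} the products in numerator and denominator share common factors, leaving the rational function $R(w) = \prod_{i=k}^{l-1}(w+\mu_{N+1-i})^{-1}$ when $k<l$, a polynomial of degree $k-l$ when $k>l$, or just $1$ when $k=l$. In particular $F_{k,k}(\xi,t) = \phi_t(\xi)$ (the contour can be deformed to $\mathrm{i}\R$ since $R\equiv 1$), and for $k>l$ one has $F_{k,l}(\xi,t) = \prod_{i=l}^{k-1}(\partial_\xi + \mu_{N+1-i})\phi_t(\xi)$, which has the form $P(\xi,t)\phi_t(\xi)$ with $P$ a polynomial in $\xi$ of degree at most $k-l$ whose coefficients are polynomials in $1/t$.

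For the limit \eqref{eqFkllimp} and the $l\le k$ part of \eqref{eqFkllimn}, the Gaussian factor does everything: for $|x|\ge \e$ one has $e^{-x^2/(2t)} \le e^{-\e^2/(4t)} e^{-x^2/(4t)}$, and the prefactor $e^{-\e^2/(4t)}$ is super-polynomially small in $1/t$, killing all negative powers of $t$ in the coefficients of $P$. The remaining integral $\int_\e^\infty (1+|x|^{k-l}) e^{-x^2/(4t)}\,\D x$ is bounded uniformly in $t\in(0,1]$ (indeed it is $O(t^{(k-l+1)/2})$ after rescaling $x=\sqrt{t}\,y$), so the product tends to $0$.

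The case $k<l$ of \eqref{eqFkllimn} and the uniform boundedness claim both rely on the convolution representation
\begin{equation*}
F_{k,l}(\xi,t) = \int_0^\infty h_0(\eta)\phi_t(\xi-\eta)\,\D\eta,
\end{equation*}
where $h_0$ is the inverse bilateral Laplace transform of $R$. Formally this comes from writing $R(w) = \int_0^\infty h_0(\eta)e^{-w\eta}\,\D\eta$ (valid on our contour since $\mu+\mu_{N+1-i}>0$ for all $i$) and swapping the order of integration; alternatively, both sides solve the heat equation $\partial_t F = \tfrac12 \partial_\xi^2 F$ with the same initial datum $h_0(\xi)\Id_{\xi>0}$, the latter obtained by closing the contour at $t=0$ to the left and summing residues. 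The function $h_0$ is a finite linear combination of exponentials $e^{-\mu_{N+1-i}\xi}$ (with polynomial corrections if some $\mu_i$ coincide), so $|h_0(\eta)|\le Ce^{M\eta}$ for a constant $M$ depending only on the $\mu_i$. Fubini then gives $\int_{-\infty}^{-\e}|F_{k,l}(x,t)|\,\D x \le \int_0^\infty |h_0(\eta)|\,\Pb(N(0,t)\le -\e-\eta)\,\D\eta$, and the Gaussian bound $(\e+\eta)^2\ge \e^2 + 2\e\eta$ yields an extra decay factor $e^{-\e\eta/t}$ that swamps the growth $e^{M\eta}$ for small $t$, proving the limit. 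Uniform boundedness on compact sets follows immediately from $\int \phi_t = 1$ together with the moment generating function identity $\int e^{M\eta}\phi_t(\xi-\eta)\,\D\eta = e^{M\xi + M^2 t/2}$, giving $|F_{k,l}(\xi,t)| \le Ce^{M|\xi|}e^{M^2T/2}$ for $\xi$ in a compact set and $t\in[0,T]$. The one point requiring care — and the main routine nuisance rather than a serious obstacle — is the case of coincident $\mu_i$, where partial fractions introduce polynomial factors in $h_0$; these only inflate $M$ slightly and do not affect any of the estimates.
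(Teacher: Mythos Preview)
Your proof is correct and takes a genuinely different route from the paper's. The paper works directly with the contour integral: it rescales $w=v/\sqrt{t}$ to extract a prefactor $\sqrt{t}^{\,l-k-1}e^{x\mu/\sqrt{t}}$ multiplied by a $t$-independent integral, and then exploits the freedom in the sign of $\mu$ (shifting the contour left or right of the poles according to the sign of $x$) to get the decay; the uniform bound for $k<l$ comes from shifting to negative $\mu$ and controlling the finitely many residues picked up. Your argument instead identifies $F_{k,l}$ structurally---as $\prod(\partial_\xi+\mu_{N+1-i})\phi_t$ when $k\ge l$ and as the heat semigroup applied to an explicit one-sided datum $h_0$ when $k<l$---and then reduces everything to standard Gaussian tail and moment-generating-function estimates. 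What your approach buys is a clearer picture of why the bounds hold (they are just heat-kernel smoothing estimates) and it avoids any contour deformation; what the paper's approach buys is uniformity of method (one rescaling handles all cases, with only the choice of $\mu$ varying) and no need to invert the Laplace transform of $R(w)$ or worry about the structure of $h_0$. Your remark that coincident drifts only introduce harmless polynomial factors in $h_0$ is correct, as is the Fubini justification via $\mu>M=-\min_i\mu_{N+1-i}$.
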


\begin{proof}
Let $x<-\e$, and choose a $\mu$ which is positive. By a transformation of variable we have
\begin{equation}\begin{aligned}\label{eqFklbound}
	\left|F_{k,l}(x,{t})\right|&=\bigg|\frac{1}{2\pi\mathrm{i}}\int_{\mathrm{i}\R+\mu}\mathrm{d} w\,e^{{t} w^2/2+x w}\frac{\prod_{i=1}^{k-1}(w+\mu_{N+1-i})}{\prod_{i=1}^{l-1}(w+\mu_{N+1-i})}\bigg|\\
	&=\bigg|\frac{1}{2\pi\mathrm{i}}\int_{\mathrm{i}\R+\mu}\mathrm{d} v\,\sqrt{{t}}^{l-k-1}e^{ v^2/2+x v/\sqrt{{t}}}\frac{\prod_{i=1}^{k-1}(v+\sqrt{{t}}\mu_{N+1-i})}{\prod_{i=1}^{l-1}(v+\sqrt{{t}}\mu_{N+1-i})}\bigg|\\
	&\leq(2\pi)^{-1}\sqrt{{t}}^{l-k-1}e^{x\mu/\sqrt{{t}}}\int_{\mathrm{i}\R+\mu}|\mathrm{d} v|\,e^{ \Re(v^2/2)}g(|v|),
\end{aligned}\end{equation}
where $g(|v|)$ denotes a bound on the fraction part of the integrand, which grows at most polynomial in $|v|$. Convergence of the integral is ensured by the exponential term, so integrating and taking the limit ${t}\to0$ gives \eqref{eqFkllimn}. To see \eqref{eqFkllimp}, notice that by $l\leq k$ the integrand has no poles, so we can shift the contour to the right, such that $\mu$ is negative, and obtain the convergence analogously.

We are left to prove uniform boundedness of $F_{k,l}$ on compact sets for $k<l$. For $x\leq0$ we can use \eqref{eqFklbound} to get
\begin{equation}
	\left|F_{k,l}(x,{t})\right|\leq(2\pi)^{-1} \int_{\mathrm{i}\R+\mu}|\mathrm{d} v|\,e^{ \Re(v^2/2)}g(|v|)
\end{equation}
for ${t}\leq1$. In the case $x>0$ we shift the contour to negative $\mu$, thus obtaining contributions from residua as well as from the remaining integral. The latter can be bounded as before, while the residua are well-behaved functions, which converge uniformly on compact sets.
\end{proof}

\begin{proof}[Proof of Proposition~\ref{propStepKernel}]
Applying Proposition~\ref{PropWarren} for $\vec{\mu}=0$ and $\vec{\zeta}=0$ gives
\begin{equation}\label{eqPstep}
  \Pb\left(\vec{x}({t})\in \D\vec{\xi}|\vec{x}(0)=0\right)=\det_{1\leq k,l\leq N}[F_{k-l}(\xi_{N+1-l},{t})],
\end{equation}
where
\begin{equation}
	F_{k}(\xi,{t})=\frac{1}{2\pi\mathrm{i}}\int_{\mathrm{i}\R+1}\mathrm{d} w\,e^{{t} w^2/2+\xi w}w^k.
\end{equation}

Using repeatedly the identity
\begin{equation}
	F_{k}(\xi,{t})=\int^\xi_{-\infty} \mathrm{d} x\,F_{k+1}(x,{t}),
\end{equation}
relabeling $\xi_1^k:=\xi_k$, and introducing new variables $\xi_l^k$ for $2\leq l\leq k\leq N$,
we can write
\begin{equation}\label{eq5.1.36}
	\det_{1\leq k,l\leq N}\big[F_{k-l}(\xi_1^{N+1-l},{t})\big]=\int_{\mathcal D'} \det_{1\leq k,l\leq N}\big[F_{k-1}(\xi_l^{N},{t})\big]\prod_{2\leq l\leq k\leq N} \D\xi_l^k,
\end{equation}
where $\mathcal D' = \{\xi_l^k\in\R,2\leq l\leq k\leq N|\xi_l^k\leq \xi_{l-1}^{k-1}\}$.
Using the antisymmetry of the determinant we can apply Lemma~\ref{AppLemma3.3} to change the area of integration into ${\mathcal D}=\{\xi_l^k,2 \leq l \leq k \leq N | \xi_l^k<\xi_l^{k+1},\xi_l^k\leq \xi_{l-1}^{k-1}\}$.

The next step is to encode the constraint of the integration over $\mathcal D$ into
a formula and then consider the measure over $\{\xi_l^k , 1 \leq l \leq k \leq N \}$, which
turns out to have determinantal correlations functions. At this point the
allowed configuration are such that $\xi_l^k \leq \xi_{l+1}^k$. For a while, we still consider
ordered configurations at each level, i.e., with $\xi_1^k \leq \xi_2^k \leq \dots \leq \xi_k^k$ for
$1 \leq k \leq N$. Let us set
\begin{equation}
  \widetilde{ \mathcal D}=\{\xi_l^k,1 \leq l \leq k \leq N | \xi_l^k<\xi_l^{k+1},\xi_l^k\leq \xi_{l-1}^{k-1}\}
\end{equation}
Defining $\phi_n(x,y)=\Id_{x<y}$ and using the convention that $\Id_{\xi_n^{n-1}<y}=1$ for any $n$, it is easy to verify that
\begin{equation}
\prod_{n=2}^{N}\det_{1\leq k,l\leq n}\big[\phi_n(\xi_k^{n-1},\xi_l^n)\big]= \left\{
\begin{array}{ll}
1,&\textrm{if }\{\xi_l^k,1 \leq l \leq k \leq N\}\in \widetilde{\cal D},\\
0,&\textrm{otherwise}.
\end{array}
\right.
\end{equation}
Thus, the measure \eqref{eqPstep} is a marginal of
\begin{equation}\label{SExtM}\begin{aligned}
	\const \cdot\prod_{n=1}^{N}\det_{1\leq k,l\leq n}\big[\phi_n(\xi_k^{n-1},\xi_l^n)\big]\det_{1\leq k,l\leq N}\big[ F_{k-1}(\xi_l^{N},{t})\big].
\end{aligned}\end{equation}
Notice that the measure \eqref{SExtM} is symmetric in the
$\xi_l^k$'s since by permuting two of them (at the same level $k$) one gets twice a factor $−1$. Thus, we relax the constraint of ordered configurations at each level. The only effect is a modification of the normalization constant.

The measure \eqref{SExtM} has the appropriate form for applying Lemma~\ref{lemDetMeasure}. The composition of the $\phi$ functions can be evaluated explicitly as
\begin{equation}
\phi_{m,n}(x,y)=(\phi_{m+1}*\dots*\phi_{n})(x,y)=\frac{(y-x)^{n-m-1}}{(n-m-1)!}\Id_{x< y},
\end{equation}
for $n>m\geq0$. Define
\begin{equation}
	\Psi^n_{n-k}(\xi):=\frac{(-1)^{n-k}}{2\pi\mathrm{i}}\int_{\mathrm{i}\R-\e}\mathrm{d} w\,\,e^{{t} w^2/2+\xi w}w^{n-k},
\end{equation}
for some $\e>0$. In the case $n\geq k$ the integrand has no poles, which implies $\Psi^n_{n-k}=(-1)^{n-k}F_{n-k}$. The straightforward recursion
\begin{equation}
	(\phi_n*\Psi^n_{n-k})(\xi)=\Psi^{n-1}_{n-1-k}(\xi)
\end{equation}
eventually leads to condition \eqref{Sasdef_psi} being satisfied.

The space $V_n$ is generated by
\begin{equation}
	\{\phi_{0,n}(\xi_1^0,x),\dots,\phi_{n-2,n}(\xi_{n-1}^{n-2},x),\phi_{n-1,n}(\xi_n^{n-1},x)\},
\end{equation}
so a basis for $V_n$ is given by
\begin{equation}
	\{x^{n-1},x^{n-2},\dots,x,1\}.
\end{equation}
Choose functions $\Phi^n_{n-k}$ as follows
\begin{equation}
  \Phi^n_{n-k}(\xi)=\frac{(-1)^{n-k}}{2\pi\mathrm{i}}\oint_{\Gamma_0}\mathrm{d} z\,e^{-{t} z^2/2-\xi z}z^{-n+k-1},
\end{equation}
which are polynomials of order $n-k$ by elementary residue calculating rules, so these functions indeed generate $V_n$. To show \eqref{Sasortho}, we decompose the scalar product as follows:
\begin{equation}\label{eqS2.41}
\int_{\R_-} \mathrm{d} \xi\, \Psi^{n}_{n-k}(\xi) \Phi^n_{n-\ell}(\xi) + \int_{\R_+} \mathrm{d} \xi\, \Psi^{n}_{n-k}(\xi) \Phi^n_{n-\ell}(\xi).
\end{equation}
Since $n-k\geq0$ we are free to choose the sign of $\e$ as necessary. For the first term, we choose $\e<0$ and the path $\Gamma_0$ close enough to zero, such that always \mbox{$\Re(w-z)>0$}. Then, we can take the integral over $\xi$ inside and obtain
\begin{equation}
\int_{\R_-} \mathrm{d} \xi\, \Psi^{n}_{n-k}(\xi) \Phi^n_{n-\ell}(\xi) =\frac{(-1)^{k-l}}{(2\pi\mathrm{i})^2}\int_{\mathrm{i}\R-\e}\mathrm{d} w \oint_{\Gamma_0}\mathrm{d} z\, \frac{e^{{t} w^2/2} w^{n-k}}{e^{{t} z^2/2}z^{n-\ell+1}(w-z)}.
\end{equation}
For the second term, we choose $\e>0$ to obtain \mbox{$\Re(w-z)<0$}. Then again, we can take the integral over $\xi$ inside and arrive at the same expression up to a minus sign. The net result of \eqref{eqS2.41} is a residue at $w=z$, which is given by
\begin{equation}
\frac{(-1)^{k-l}}{2\pi\mathrm{i}}\oint_{\Gamma_0}\mathrm{d} z\, z^{\ell-k-1}=\delta_{k,\ell}.
\end{equation}

Furthermore, both $\phi_n(\xi_{n}^{n-1},x)$ and $\Phi_0^{n}(\xi)$ are constants, so the kernel has a simple form (compare with \eqref{SasK})
\begin{equation}
\mathcal{K}_\textrm{packed}(n_1,\xi_1;n_2,\xi_2)=-\phi_{n_1,n_2}(\xi_1,\xi_2)\Id_{n_2>n_1} + \sum_{k=1}^{n_2} \Psi_{n_1-k}^{n_1}(\xi_1) \Phi_{n_2-k}^{n_2}(\xi_2).
\end{equation}

Note that we are free to extend the summation over $k$ up to infinity, since the integral expression for $\Phi_{n-k}^{n}(\xi)$ vanishes for $k>n$ anyway. Taking the sum inside the integrals we can write
\begin{equation}\label{eqStepKernSum}
	\sum_{k\geq1} \Psi_{n_1-k}^{n_1}(\xi_1) \Phi_{n_2-k}^{n_2}(\xi_2)=\frac{1}{(2\pi\mathrm{i})^2}\int_{\mathrm{i}\R-\e}\hspace{-0.4cm}\mathrm{d} w \oint_{\Gamma_0}\mathrm{d} z\,	\frac{e^{{t} w^2/2+\xi_1 w}(-w)^{n_1}}{e^{{t} z^2/2+\xi_2 z}(-z)^{n_2}}\sum_{k\geq1}\frac{z^{k-1}}{w^k}.
\end{equation}

By choosing contours such that $|z|<|w|$, we can use the formula for a geometric series, resulting in
\begin{equation}
	\eqref{eqStepKernSum}=\frac{1}{(2\pi\mathrm{i})^2}\int_{\mathrm{i}\R-\e}\hspace{-0.4cm}\mathrm{d} w \oint_{\Gamma_0}\mathrm{d} z\,	\frac{e^{{t} w^2/2+\xi_1 w}(-w)^{n_1}}{e^{{t} z^2/2+\xi_2 z}(-z)^{n_2}(w-z)}=\mathcal{K}_0(n_1,\xi_1;n_2,\xi_2).
\end{equation}
\end{proof}

The lemma needed to alter the area of integration from $\cal D'$ to $\cal D$ is a continuous version of Lemma 3.3,~\cite{BFPS06}. As the proof is identical to the discrete case, we do not state it here.
\begin{lem}\label{AppLemma3.3} Let $f$ be an antisymmetric function of $\{x_1^{N},\ldots,x_N^{N}\}$. Then, whenever $f$ has enough decay to make the integrals finite,
\begin{equation}
\int_{\cal D} f(x_1^{N},\ldots,x_N^{N}) \prod_{2\leq l\leq k \leq N} \D x_l^k=\int_{\cal D'} f(x_1^{N},\ldots,x_N^{N}) \prod_{2\leq l\leq k \leq N} \D x_l^k
\end{equation}
where
\begin{eqnarray}
{\cal D}&=&\{x_l^k,2 \leq l \leq k \leq N | x_l^k<x_l^{k+1},x_l^k\leq x_{l-1}^{k-1}\},\nonumber \\
{\cal D'}&=&\{x_l^k,2 \leq l \leq k \leq N | x_l^k\leq x_{l-1}^{k-1}\},
\end{eqnarray}
and the positions $x_1^1<x_1^2<\ldots<x_1^N$ being fixed.
\end{lem}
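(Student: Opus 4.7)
The inclusion $\mathcal{D}\subseteq\mathcal{D}'$ reduces the claim to showing $\int_{\mathcal{D}'\setminus\mathcal{D}} f\,\prod\mathrm{d} x_l^k = 0$. I would follow the discrete template of BFPS06, Lemma~3.3, exploiting the antisymmetry of $f$ in the top-row variables through a sign-reversing involution.

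The key structural observation is that the $\mathcal{D}'$-constraints $x_l^k\leq x_{l-1}^{k-1}$ organize the integration variables into independent chains along the diagonals $k-l=c$, each chain terminating at a single top-row variable $x_{N-c}^N$ and with no direct constraint relating chains on distinct diagonals. The extra $\mathcal{D}$-constraints $x_l^k<x_l^{k+1}$ are cross-diagonal. A single violation $x_l^k\geq x_l^{k+1}$, combined with the two chains on diagonals $c=k-l$ and $c+1$, forces the corresponding adjacent top-row endpoints $x_{N-k+l-1}^N$ and $x_{N-k+l}^N$ to share the common upper bound $x_l^k$; the ordering $x_1^1<\cdots<x_1^N$ makes the direct $\mathcal{D}'$-bound on the higher-diagonal top-row endpoint redundant.

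The strategy is then either to telescope by adding the cross-diagonal constraints one at a time in a suitable order, or to apply inclusion–exclusion on subsets of violated constraints. For a single violation, after integrating out the relevant interior (non-top-row) variables, the effective integration region on the singled-out pair of top-row variables is symmetric in them; antisymmetry of $f$ then kills the integral. This already disposes of, e.g., $N=3$ directly: the sole extra constraint $x_2^2<x_2^3$, when violated, yields $x_2^3,x_3^3\leq x_2^2$ by combining with $x_3^3\leq x_2^2$ and $x_2^2\leq x_1^1<x_1^2$, and antisymmetry of $f$ in $(x_2^3,x_3^3)$ forces the inner double integral to vanish.

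The main obstacle is handling multi-violation configurations for general $N$, where several cross-diagonal constraints are simultaneously violated and the effective top-row region is not manifestly symmetric in a single pair. The BFPS06 discrete proof disposes of this via an LGV-style path-swapping involution on the configuration space (equivalently, by induction on $N$), which at the canonical (say, lex-smallest) violation swaps the top-row endpoints of the two ``crossing'' chains, changing the sign of $f$ while preserving the region. The continuous version follows by the identical combinatorial argument, the only additional care needed being that the equality loci $x_l^k = x_l^{k+1}$ have Lebesgue measure zero and hence do not contribute.
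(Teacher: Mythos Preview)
Your proposal is correct and matches the paper's approach exactly: the paper does not give a detailed proof at all, stating only that the lemma is a continuous version of Lemma~3.3 in \cite{BFPS06} and that ``the proof is identical to the discrete case, we do not state it here.'' Your sketch of that discrete argument (sign-reversing involution on the chain structure, with the continuous caveat that equality loci have Lebesgue measure zero) is precisely what the paper is deferring to.
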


\subsection{Asymptotic analysis}\label{secStepAsy}

According to \eqref{eqStepScaledProcess} we use the scaled variables
\begin{equation}\label{eqStepScaling}\begin{aligned}
	n_i&={t}+2{t}^{2/3}r_i\\
	\xi_i&=2{t}+2{t}^{2/3}r_i+{t}^{1/3}s_i.
\end{aligned}\end{equation}
Correspondingly, consider the rescaled (and conjugated) kernel
\begin{equation}
	\mathcal{K}^\textrm{resc}_\textrm{packed}(r_1,s_1;r_2,s_2)={t}^{1/3}e^{\xi_1-\xi_2}\mathcal{K}_\textrm{packed}(n_1,\xi_1;n_2,\xi_2),
\end{equation}
which naturally decomposes into
\begin{equation}
	\mathcal{K}^\textrm{resc}_\textrm{packed}(r_1,s_1;r_2,s_2)=-\phi_{r_1,r_2}^\textrm{resc}(s_1,s_2)\Id_{r_1<r_2}+\mathcal{K}_0^\textrm{resc}(r_1,s_1;r_2,s_2).
\end{equation}

In order to establish the asymptotics of the joint distributions, one needs both a pointwise limit of the kernel, as well as uniform bounds to ensure convergence of the Fredholm determinant expansion. The first time this approach was used is in~\cite{GTW00}. These results are contained in the following propositions.
\begin{prop}\label{propStepPointw}
Consider any $r_1,r_2$ in a bounded set and fixed $L$. Then, uniformly for $(s_1,s_2)\in[-L,L]^2$, the kernel converges as
\begin{equation}
	\lim_{{t}\to\infty}\mathcal{K}_\textrm{packed}^\textrm{resc}(r_1,s_1;r_2,s_2)=K_{\mathcal{A}_2}(r_1,s_1;r_2,s_2).
\end{equation}
\end{prop}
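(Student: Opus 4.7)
The plan is to split $\mathcal{K}_\textrm{packed}^\textrm{resc} = -\phi^\textrm{resc}\Id_{r_1<r_2}+\mathcal{K}_0^\textrm{resc}$ and analyze each piece separately by a standard saddle point / Laplace analysis.

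\textbf{Gamma piece.} The function $\phi_{n_1,n_2}(\xi_1,\xi_2)=(\xi_2-\xi_1)^{n_2-n_1-1}/(n_2-n_1-1)!$ is, up to the factor $e^{\xi_2-\xi_1}$, the density of a $\textrm{Gamma}(n_2-n_1,1)$ random variable at $\xi_2-\xi_1$. With the scaling $m:=n_2-n_1\sim 2(r_2-r_1)t^{2/3}$ and $y:=\xi_2-\xi_1=m+(s_2-s_1)t^{1/3}$, the deviation $y-m$ is on the Gaussian scale $\sqrt{m}$. Stirling's formula applied to $(m-1)!$, together with the Taylor expansion of $(m-1)\log y-y$ around $y=m$, yields uniformly for $(s_1,s_2)\in[-L,L]^2$ and $r_1<r_2$
\begin{equation*}
t^{1/3} e^{-(\xi_2-\xi_1)}\phi_{n_1,n_2}(\xi_1,\xi_2) \to \frac{1}{\sqrt{4\pi(r_2-r_1)}} e^{-(s_2-s_1)^2/(4(r_2-r_1))}=V_{r_1,r_2}(s_1,s_2).
\end{equation*}
For $r_1\geq r_2$ and $t$ large, both $\Id_{n_2>n_1}$ and $\Id_{r_1<r_2}$ vanish.

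\textbf{Saddle set-up for $\mathcal{K}_0$.} Factor the exponent in the integrand as $t\,g(w)-t\,g(z)$ plus lower-order corrections, with
\begin{equation*}
g(w)=\tfrac12 w^2+2w+\log(-w),\qquad g'(w)=(w+1)^2/w.
\end{equation*}
Thus $w=-1$ (and symmetrically $z=-1$) is a double critical point. Introduce the rescaling $w=-1+Wt^{-1/3}$, $z=-1+Zt^{-1/3}$ and Taylor expand. A direct computation gives the exponent as
\begin{equation*}
(\xi_2-\xi_1)+\bigl(s_1 W-r_1 W^2-\tfrac13 W^3\bigr)-\bigl(s_2 Z-r_2 Z^2-\tfrac13 Z^3\bigr)+O(t^{-1/3}),
\end{equation*}
uniformly for $W,Z$ in compact sets. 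The leading term $(\xi_2-\xi_1)$ is cancelled exactly by the outer $e^{\xi_1-\xi_2}$ conjugation. Combined with the Jacobian $\D w\,\D z=t^{-2/3}\,\D W\,\D Z$, the identity $1/(w-z)=t^{1/3}/(W-Z)$, and the outer $t^{1/3}$, the rescaled integrand converges pointwise to the contour-integral representation of $K_{r_1,r_2}$ in \eqref{contIntK}.

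\textbf{Steep descent and dominated convergence.} To justify interchanging limit and integral, deform $\mathrm{i}\R-\e$ into a steep-descent contour $\gamma_w$ for $\Re g$ through $-1$ with tangent directions $e^{\pm 2\pi\mathrm{i}/3}$, and deform $\Gamma_0$ into a closed loop $\gamma_z$ still encircling $0$ and passing through $-1$ with tangent directions $e^{\pm \pi\mathrm{i}/3}$ (so that $-\Re g$ is steep-descent on $\gamma_z$). On both contours $\pm\Re g$ decreases at least cubically away from $-1$, so outside $|w+1|\leq \delta$ the integrand is exponentially small in $t$ and contributes negligibly. Inside, the Taylor remainder is $O(W^4/t^{1/3})$ and is absorbed by the leading cubic decay, giving a $t$-uniform integrable majorant of type $e^{-c|W|^3-c|Z|^3}/|W-Z|$ times a polynomial. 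Dominated convergence delivers the claimed limit, uniformly in $(s_1,s_2)\in[-L,L]^2$.

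\textbf{Main obstacle.} The delicate step is geometric compatibility: both contours want to pass through $-1$, and their tangent directions cross transversally, while the integrand has a pole at $w=z$. The remedy is to offset the saddles by an $O(t^{-1/3})$ amount -- e.g.\ route $\gamma_w$ through $-1-c\,t^{-1/3}$ and $\gamma_z$ through $-1$ -- and to verify that the loop $\gamma_z$ can be closed back around $0$ without leaving the descent regime. After rescaling, this translates into separated $W$- and $Z$-contours at fixed distance $c$, consistent with the note in \eqref{contIntK} that these contours need not intersect. Checking global steep descent (not merely local) of $\gamma_z$ and controlling the non-local parts of the deformed contours are the main inputs to the uniform bound above.
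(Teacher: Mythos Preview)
Your outline is correct, but for $\mathcal{K}_0$ the paper takes a cleaner route that dissolves precisely the obstacle you flag. Instead of pushing both contours through the common saddle and then separating them by $O(t^{-1/3})$, the paper writes
\[
\frac{1}{w-z}=-\int_0^\infty t^{1/3}e^{-t^{1/3}x(z-w)}\,\mathrm{d} x\qquad(\Re(z-w)>0),
\]
which factorizes the double integral as
\[
\mathcal{K}_0^{\mathrm{resc}}(r_1,s_1;r_2,s_2)=-\int_0^\infty\alpha_t(r_1,s_1+x)\,\beta_t(r_2,s_2+x)\,\mathrm{d} x,
\]
where $\alpha_t,\beta_t$ are \emph{single} contour integrals. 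Each is analyzed once by single-variable steep descent (Lemmas~\ref{lemAlphaLimit} and~\ref{lemAlphaBound}), giving pointwise limits $\alpha,\beta$ and uniform bounds $|\alpha_t|,|\beta_t|\leq c_Le^{-s}$. Dominated convergence in $x$ then finishes the proof with no contour-collision issue at all. A side benefit is that the same two lemmas immediately yield the uniform exponential bound of Proposition~\ref{propStepK0Bound}, and are reused verbatim in the stationary and mixed cases later in the paper; your direct double-integral argument would have to be redone each time.

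For the $\phi$ piece the paper also argues differently: rather than Stirling on the Gamma density, it uses the integral representation $\phi_{n_1,n_2}(\xi_1,\xi_2)=\int_{\mathrm{i}\R-\delta}e^{z(\xi_1-\xi_2)}(-z)^{-(n_2-n_1)}\mathrm{d} z$, shifts to $z=-1+t^{-1/3}\zeta$, and applies dominated convergence via a Bernoulli-inequality bound on $|1-t^{-1/3}\zeta|^{-2t^{2/3}r}$. Your Stirling argument is equally valid here.
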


\begin{cor}\label{corStepBound}
Consider $r_1,r_2$ fixed. For any $L$ there exists ${t}_0$ such that for ${t}>{t}_0$ the bound
\begin{equation}
	\left|\mathcal{K}_\textrm{packed}^\textrm{resc}(r_1,s_1;r_2,s_2)\right|\leq \const_L
\end{equation}
holds for all $(s_1,s_2)\in[-L,L]^2$.
\end{cor}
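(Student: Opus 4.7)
The plan is to deduce the bound directly from Proposition~\ref{propStepPointw} by a triangle-inequality argument: since the rescaled kernel converges to $K_{\mathcal{A}_2}$ \emph{uniformly} on $[-L,L]^2$, and $K_{\mathcal{A}_2}$ is itself bounded on this compact square, the rescaled kernel must be uniformly bounded on $[-L,L]^2$ for all sufficiently large $t$.

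First I would record that, for the fixed parameters $r_1,r_2$, the function $(s_1,s_2)\mapsto K_{\mathcal{A}_2}(r_1,s_1;r_2,s_2)$ is continuous. The Gaussian part $V_{r_1,r_2}(s_1,s_2)\Id_{r_1<r_2}$ is manifestly continuous, and the integral part
\begin{equation}
\frac{e^{\frac{2}{3}r_2^3+r_2 s_2}}{e^{\frac{2}{3}r_1^3+r_1 s_1}}\int_0^\infty \mathrm{d} x\, e^{x(r_2-r_1)}\Ai(r_1^2+s_1+x)\Ai(r_2^2+s_2+x)
\end{equation}
is continuous in $(s_1,s_2)$ by dominated convergence, thanks to the super-exponential decay of the Airy function (which also ensures absolute convergence of the integral for any sign of $r_2-r_1$, after shifting the exponential prefactors). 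Consequently
\begin{equation}
M_L := \sup_{(s_1,s_2)\in[-L,L]^2} |K_{\mathcal{A}_2}(r_1,s_1;r_2,s_2)| < \infty.
\end{equation}

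Next I would invoke Proposition~\ref{propStepPointw}, which asserts that the convergence $\mathcal{K}^\textrm{resc}_\textrm{packed}\to K_{\mathcal{A}_2}$ is uniform on $[-L,L]^2$. Hence there exists $t_0$ so large that for all $t>t_0$,
\begin{equation}
|\mathcal{K}^\textrm{resc}_\textrm{packed}(r_1,s_1;r_2,s_2)-K_{\mathcal{A}_2}(r_1,s_1;r_2,s_2)|\leq 1 \qquad\text{for every } (s_1,s_2)\in[-L,L]^2.
\end{equation}
The triangle inequality then yields
\begin{equation}
|\mathcal{K}^\textrm{resc}_\textrm{packed}(r_1,s_1;r_2,s_2)|\leq M_L+1 =: \const_L,
\end{equation}
which is the claimed bound.

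There is essentially no obstacle in this corollary once Proposition~\ref{propStepPointw} is in hand; the only point that requires attention is that the convergence there is genuinely uniform in $(s_1,s_2)$ on the compact square (as stated), rather than merely pointwise. Under that uniformity, the present statement is immediate, and indeed it is precisely this kind of uniform bound that will later be needed, in combination with a decay estimate for $|s_i|$ large, in order to justify passage to the limit inside the Fredholm determinant expansion in the proof of Theorem~\ref{thmAsympStep}.
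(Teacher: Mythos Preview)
Your argument is correct and is exactly the intended one: the paper states this result as a corollary of Proposition~\ref{propStepPointw} with no separate proof, precisely because uniform convergence on $[-L,L]^2$ to a continuous limit immediately gives the uniform bound via the triangle inequality you wrote down.
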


\begin{prop}\label{propStepK0Bound}
For fixed $r_1,r_2,L$ there exists ${t}_0>0$ such that the estimate
\begin{equation}
	\left|\mathcal{K}_0^\textrm{resc}(r_1,s_1;r_2,s_2)\right|\leq\frac{1}{2} e^{-(s_1+s_2)}
\end{equation}
holds for any ${t}>{t}_0$ and $s_1,s_2>0$.
\end{prop}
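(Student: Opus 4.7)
My plan is to perform a steep descent estimate on the double contour integral for $\mathcal{K}_0$, choosing shifted contours that directly expose the $e^{-(s_1+s_2)}$ decay, while the remaining cubic integrand yields a uniformly finite contribution.

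With the Airy scaling $w=-1+W/t^{1/3}$ and $z=-1+Z/t^{1/3}$ (the same substitution used in the proof of Proposition~\ref{propStepPointw}), the exponent in the $w$-integrand equals $C(t)+s_1 W-r_1 W^2-W^3/3+\eta_t(W)$ with $\eta_t=O(t^{-1/3})$ uniformly on compact $W$-sets, and analogously for $z$; the divergent constants $C(t)$ are absorbed by the conjugation factor $e^{\xi_1-\xi_2}$ built into $\mathcal{K}_0^{\mathrm{resc}}$. I then deform the $w$-contour to a steep descent path $\Gamma_w$ through $-1-t^{-1/3}$ with tangent directions $e^{\pm 2\pi\mathrm{i}/3}$, and the $z$-contour to a path $\Gamma_z$ through $-1+t^{-1/3}$ with tangent directions $e^{\mp\pi\mathrm{i}/3}$ (closing $\Gamma_z$ far to the right, where the integrand is negligible). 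These deformations cross no poles: $\Gamma_w$ and $\Gamma_z$ stay on opposite sides of $-1$, so the pole at $w=z$ is avoided, and $\Gamma_z$ still encloses the pole at $z=0$.

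On $\Gamma_w\times\Gamma_z$ one has $\Re W\leq -1$ and $\Re Z\geq 1$, hence $|e^{s_1 W-s_2 Z}|\leq e^{-(s_1+s_2)}$, which extracts the factor in the statement. What remains is a $(s_1,s_2)$-independent integrand bounded by $\exp[-\Re(W^3)/3-r_1\Re(W^2)+\Re\eta_t(W)]/|W-Z|$ times its $Z$-analogue. On the bounded piece of the contours near $-1$, the cubic term dominates and the $O(t^{-1/3})$ error is negligible for $t>t_0$; on the tails one works directly with the exact exponent $t[f(w)-f(-1)]+2t^{2/3}r_1[g(w)-g(-1)]$, with $f(w)=w^2/2+2w+\log(-w)$ and $g(w)=w+\log(-w)$, and verifies that $\Re f$ decreases along the steep descent rays. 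Summing the bulk and tail contributions yields a uniform bound that can be made smaller than $1/2$ by enlarging $t_0$.

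The main obstacle is the global tail estimate: the local Taylor expansion around $w=-1$ controls the integrand only for $|w+1|\ll 1$, so one must verify directly that $\Re[f(w)-f(-1)]<0$ and grows along the entire deformed contour $\Gamma_w$, uniformly in $t$. Together with an analogous control of $\Re g$ and a pointwise bound on $|w-z|^{-1}$ (ensured by the $t^{-1/3}$ separation between $\Gamma_w$ and $\Gamma_z$ near $-1$), this gives the desired uniform estimate.
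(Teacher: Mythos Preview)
Your direct steep-descent approach on the double contour integral is workable and yields a bound of the form $C\,e^{-(s_1+s_2)}$ with $C$ depending only on $r_1,r_2$, which is all that is ever used downstream. However, there is a genuine gap in your final sentence: the remaining $(s_1,s_2)$-independent double integral does \emph{not} become small as $t_0\to\infty$; it converges to a fixed positive Airy-type integral. So enlarging $t_0$ cannot force the constant below $1/2$, and your argument as written does not deliver the specific constant in the statement.

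The paper's proof is organized quite differently and obtains the $1/2$ for free. Rather than estimating the double integral directly, it uses the identity $\tfrac{1}{w-z}=-\int_0^\infty t^{1/3}e^{-t^{1/3}x(z-w)}\,\mathrm{d} x$ (valid since the contours are arranged with $\Re(z-w)>0$) to \emph{factorize} the kernel as
\[
\mathcal{K}_0^{\mathrm{resc}}(r_1,s_1;r_2,s_2)=-\int_0^\infty \alpha_t(r_1,s_1+x)\,\beta_t(r_2,s_2+x)\,\mathrm{d} x,
\]
where $\alpha_t,\beta_t$ are single contour integrals. All of the steep-descent work you sketch (including the delicate tail control) is packaged once into Lemma~\ref{lemAlphaBound}, which gives $|\alpha_t(r,s)|,|\beta_t(r,s)|\le c_L e^{-s}$. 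Plugging these in yields
\[
\left|\mathcal{K}_0^{\mathrm{resc}}\right|\le \int_0^\infty e^{-(s_1+x)}e^{-(s_2+x)}\,\mathrm{d} x=\tfrac12 e^{-(s_1+s_2)},
\]
so the constant $1/2$ comes from $\int_0^\infty e^{-2x}\,\mathrm{d} x$ rather than from any optimization of the contour. Your approach avoids introducing the auxiliary functions but pays for it by having to handle the coupled double integral and by losing control of the explicit constant; the paper's factorization makes the proof of this proposition a two-line consequence of the single-integral lemma.
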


\begin{prop}[Proposition 5.4 of~\cite{FSW13}]\label{propStepPhiBound}
For fixed $r_1<r_2$ there exists ${t}_0>0$ and $C>0$ such that
\begin{equation}
	\left|\phi_{r_1,r_2}^\textrm{resc}(s_1,s_2)\right|\leq Ce^{-|s_1-s_2|}
\end{equation}
holds for any ${t}>{t}_0$ and $s_1,s_2\in\R$.
\end{prop}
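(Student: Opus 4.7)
The plan is to recognize the rescaled kernel as $t^{1/3}$ times the density of a Gamma distribution, and to bound it using Stirling's formula together with an elementary convexity inequality. Substituting the scaling \eqref{eqStepScaling} into the definition of $\phi_{n_1,n_2}$ from \eqref{eqKtStep}, and writing $N:=n_2-n_1-1$, $y:=\xi_2-\xi_1$, so that $N=2t^{2/3}(r_2-r_1)+O(1)$ and $u:=y-N=t^{1/3}(s_2-s_1)+O(1)$, one has
\begin{equation*}
\phi_{r_1,r_2}^\textrm{resc}(s_1,s_2)=t^{1/3}\,\frac{y^{N}e^{-y}}{N!}\,\Id_{y\geq 0}.
\end{equation*}
This is $t^{1/3}$ times the density of a Gamma$(N+1,1)$ law at the point $y$, concentrated near $y=N$ with width $\sqrt{N}\sim t^{1/3}\sqrt{2(r_2-r_1)}$. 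In the variable $s_2-s_1$ this is a distribution of width of order $1$ centered at $s_2=s_1$, already suggesting the desired exponential bound.

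To make this precise, invoke $N!\geq\sqrt{2\pi N}\,N^N e^{-N}$ to get
\begin{equation*}
\phi_{r_1,r_2}^\textrm{resc}(s_1,s_2)\leq\frac{t^{1/3}}{\sqrt{2\pi N}}\,(y/N)^N e^{N-y}\,\Id_{y\geq 0}.
\end{equation*}
The prefactor tends to $1/(2\sqrt{\pi(r_2-r_1)})$ as $t\to\infty$, so it is uniformly bounded by some $C_0$ for $t\geq t_0$. Write $(y/N)^N e^{N-y}=\exp\!\bigl(Ng(u/N)\bigr)$ with $g(v):=\log(1+v)-v=-\int_0^v\tfrac{w}{1+w}\,dw$. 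The monotonicity estimates $\tfrac{w}{1+w}\geq\tfrac{w}{1+v}$ for $0<w<v$ and $\tfrac{w}{1+w}\leq w$ for $-1<w<0$ then yield $g(v)\leq-v^2/(2(1+v))$ for $v>0$ and $g(v)\leq-v^2/2$ for $-1<v\leq 0$, whence
\begin{equation*}
\phi_{r_1,r_2}^\textrm{resc}(s_1,s_2)\leq C_0\exp\!\left(-\frac{u^2}{2(N+u^+)}\right),\qquad u^+:=\max(u,0).
\end{equation*}

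A short case analysis in $s_2-s_1$ then extracts the claim. For $|s_2-s_1|$ bounded, the right-hand side is at most $C_0$, which is covered by $Ce^{-|s_1-s_2|}$ once $C$ is chosen large enough. For $u\leq 0$ with $|s_2-s_1|$ large, the exponent equals to leading order $(s_2-s_1)^2/(4(r_2-r_1))$, which exceeds $|s_2-s_1|$ as soon as $|s_2-s_1|\geq 4(r_2-r_1)$. For $u>0$ in the Gaussian regime $u\leq N$, the same quadratic bound applies; for $u>N$ one has $u/(N+u)\geq 1/2$, so the exponent is at least $u/4\sim t^{1/3}(s_2-s_1)/4$, comfortably dominating $|s_2-s_1|$ for $t\geq t_0$. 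The main, albeit mild, obstacle is choreographing these regimes uniformly in $t$; the essential ingredient is the asymptotic $t^{1/3}/\sqrt{2\pi N}\to\const$, which prevents blow-up of the prefactor near the peak of the Gamma density, where the Gaussian tail factor is of order one.
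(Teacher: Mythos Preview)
Your argument is correct. The paper itself does not supply a proof of this proposition; it simply imports it as Proposition~5.4 of \cite{FSW13}. So there is no in-paper proof to compare against, but your approach stands on its own.

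A brief comparison with the ambient methodology is still worthwhile. In the surrounding material (the proof of Proposition~\ref{propStepPointw}) the paper handles $\phi^{\textrm{resc}}$ via its contour-integral representation
\[
\phi_{n_1,n_2}(\xi_1,\xi_2)=\frac{1}{2\pi\mathrm{i}}\int_{\mathrm{i}\R-\delta}\mathrm{d} z\, \frac{e^{z(\xi_1-\xi_2)}}{(-z)^{n_2-n_1}},
\]
and one expects the proof in \cite{FSW13} to proceed similarly, extracting the exponential decay by shifting the vertical contour and bounding $|1-t^{-1/3}\zeta|^{-2t^{2/3}(r_2-r_1)}$. Your route is more elementary: you recognize $t^{1/3}e^{-y}y^N/N!$ as a rescaled Gamma density, invoke the sharp Stirling lower bound $N!\geq\sqrt{2\pi N}\,N^Ne^{-N}$, and reduce everything to the one-variable concavity estimate $g(v)=\log(1+v)-v\leq -v^2/(2(1+v^+))$. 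The contour-integral method has the advantage of being uniform with the treatment of $\mathcal K_0$ and of generalizing painlessly to kernels that do not admit such a clean closed form; your method buys directness and avoids any analytic machinery. Both yield the same bound.

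One small remark on presentation: the $O(1)$ discrepancies in $N$ and $u$ (from the $-1$ in $N=n_2-n_1-1$ and, if present, from integer parts) deserve an explicit word in the final case analysis. They are harmless---they shift the thresholds $4(r_2-r_1)$, $8(r_2-r_1)$ by bounded amounts absorbed into the choice of $C$---but making this explicit would tighten the write-up.
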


Now we can prove the asymptotic theorem:
\begin{proof}[Proof of Theorem~\ref{thmAsympStep}]
 The joint distributions of the rescaled process $X_{t}(r)$ are given by the Fredholm determinant with series expansion
\begin{equation}\label{eqStepFredExp}\begin{aligned}
	&\Pb\bigg(\bigcap_{k=1}^m\big\{X_{t}(r_k)\leq s_k\big\}\bigg)\\
	&\ =\sum_{N\geq0}\frac{(-1)^N}{N!}\sum_{i_1,\dots,i_N=1}^m\int\prod_{k=1}^N\mathrm{d} x_k\,\Id_{x_k>\xi_{i_k}}\det_{1\leq k,l\leq N}\left[\mathcal{K}_\textrm{packed}(n_{i_k},x_k;n_{i_l},x_l)\right],
\end{aligned}\end{equation}
where $n_i$ and $\xi_i$ are given in (\ref{eqStepScaling}). By employing the change of variables \mbox{$\sigma_k={t}^{-1/3}(x_k-2{t}-2{t}^{2/3}r_{i_k})$} we obtain
\begin{equation}\label{eqStepFredExp2}\begin{aligned}
	\eqref{eqStepFredExp}&=\sum_{N\geq0}\frac{(-1)^N}{N!}\sum_{i_1,\dots,i_N=1}^m\int\prod_{k=1}^N\mathrm{d} \sigma_k\,\Id_{\sigma_k>s_{i_k}}\\
	&\quad\times\det_{1\leq k,l\leq N}\left[\mathcal{K}_\textrm{packed}^\textrm{resc}(r_k,\sigma_k;r_l,\sigma_l)\frac{(1+\sigma_l^2)^{m+1-i_l}}{(1+\sigma_k^2)^{m+1-i_k}}\right],
\end{aligned}\end{equation}
where the fraction inside the determinant is a new conjugation, which does not change the value of the determinant. This additional conjugation is necessary because without it the first part of the kernel does not decay as $\sigma_k\to\infty$, if there is some $l$, with $i_k<i_l$ and $\sigma_l$ close to $\sigma_k$.
Using Corollary~\ref{corStepBound} and Propositions~\ref{propStepK0Bound},~\ref{propStepPhiBound}, we can bound the $(k,l)$-coefficient inside the determinant by
\begin{equation}\label{eqStepCoeff}
	\const_1\left(e^{-|\sigma_k-\sigma_l|}\Id_{i_k<i_l}+e^{-(\sigma_k+\sigma_l)}\right)\frac{(1+\sigma_l^2)^{m+1-i_l}}{(1+\sigma_k^2)^{m+1-i_k}},
\end{equation}
assuming the $r_k$ are ordered. The bounds
\begin{equation}\begin{aligned}
	\frac{(1+x^2)^i}{(1+y^2)^j}e^{-|x-y|}&\leq \const_2\frac{1}{1+y^2},& \text{for } i&<j,\\
	\frac{(1+x^2)^i}{(1+y^2)^j}e^{-(x+y)}&\leq \const_3\frac{1}{1+y^2},& \text{for } j&\geq1,
\end{aligned}\end{equation}
which hold for $x$, $y$ bounded from below, lead to
\begin{equation}
	\eqref{eqStepCoeff}\leq\const_4\frac{1}{1+\sigma_k^2}.
\end{equation}
Using the Hadamard bound on the determinant, the integrand of \eqref{eqStepFredExp2} is therefore bounded by
\begin{equation}
	\const_4^NN^{N/2}\prod_{k=1}^N\Id_{\sigma_k>s_{i_k}}\frac{\D\sigma_k}{1+\sigma_k^2},
\end{equation}
which is integrable. Furthermore,
\begin{equation}
	|\eqref{eqStepFredExp}|\leq\sum_{N\geq0}\frac{\const_5^NN^{N/2}}{N!},
\end{equation}
which is summable, since the factorial grows like $(N/e)^N$, i.e., much faster than the numerator. Dominated convergence thus allows to interchange the limit ${t}\to\infty$ with the integral and the infinite sum. The pointwise convergence comes from Proposition~\ref{propStepPointw}, thus
\begin{equation}\begin{aligned}
	\lim_{{t}\to\infty} \Pb\bigg(\bigcap_{k=1}^m\big\{X_{t}(r_k)\leq s_k\big\}\bigg)&=\det\left(\Id-\chi_s K_{\mathcal{A}_2}\chi_s\right)_{L^2(\{r_1,\dots,r_m\}\times\R)}\\
	&=\Pb \biggl(\bigcap_{k=1}^m \bigl\{
\mathcal{A}_2(r_k)-r_k^2\leq s_k  \bigr\} \biggr).
\end{aligned}\end{equation}\end{proof}

By using just the exponential bounds on the kernel instead of the pointwise convergence, we can now prove the concentration inequality required for Proposition~\ref{propMconv} in a similar way:
\begin{proof}[Proof of Proposition~\ref{concIn}]
First notice that by $Y_{k,m}(t)\stackrel{d}{=}Y_{1,m-k+1}(t)$ we can restrict ourselves to $k=1$ without loss of generality, and thus have to show
\begin{equation}
 	\Pb\bigg(\frac{x_m(T)}{\sqrt{mT}}\geq 2+\delta\bigg)\leq \const\cdot e^{-m^{2/3}\delta},
\end{equation}
with $\vec{x}(t)$ being the system of reflected Brownian motions with packed initial conditions.

Applying Proposition~\ref{propStepKernel}, we get by the Fredholm series expansion
\begin{equation}\begin{aligned}
	\Pb&\bigg(x_{t}(t)\leq \xi\bigg)=\sum_{N\geq0}\frac{(-1)^N}{N!}\int_{(\xi,\infty)^N}\mathrm{d} \vec{x}\,\det_{1\leq k,l\leq N}\left[\mathcal{K}_\textrm{packed}(t,x_k;t,x_l)\right].
\end{aligned}\end{equation}
We recognize that the $N=0$ term is exactly $1$, so the probability of the complementary event is simply the negative of the series started at $N=1$.
Setting $\xi=2t+t^{1/3}s$ and using the change of variables \mbox{$\sigma_k={t}^{-1/3}(x_k-2{t})$}, we recognize the scaling \eqref{eqStepScaling} and obtain
\begin{equation}\begin{aligned}
	\Pb&\bigg(\frac{x_{t}(t)}{t}\geq 2+\frac{s}{t^{2/3}}\bigg)\leq\sum_{N\geq1}\frac{1}{N!}\int_{(s,\infty)^N}\mathrm{d} \vec{\sigma}\,\left|\det_{1\leq k,l\leq N}\left[\mathcal{K}_\textrm{packed}^\textrm{resc}(0,\sigma_k;0,\sigma_l)\right]\right|.
\end{aligned}\end{equation}
By Proposition~\ref{propStepK0Bound} the $(k,l)$-coefficient inside the determinant is bounded by a constant times $e^{-(\sigma_k+\sigma_l)}$. Using the Hadamard bound on the determinant, the integrand is therefore bounded by
\begin{equation}
	\const_1^NN^{N/2}\prod_{k=1}^N\Id_{\sigma_k>s}e^{-\sigma_k}\D\sigma_k,
\end{equation}
leading to
\begin{equation}\begin{aligned}
	\Pb&\bigg(\frac{x_{t}(t)}{t}\geq 2+t^{-2/3}s\bigg)\leq\sum_{N\geq1}\frac{\const_1^NN^{N/2}}{N!}e^{-Ns}.
\end{aligned}\end{equation}
For positive $s$ we have $e^{-Ns}\leq e^{-s}$, and the remaining sum over $N$ is finite, since the factorial grows like $(N/e)^N$, i.e., much faster than the numerator. We rename $m:=t$ and notice that for any positive $T$, by Brownian scaling:
\begin{equation}
 \frac{x_{m}(m)}{m}\stackrel{d}{=}\frac{x_{m}(T)}{\sqrt{mT}},
\end{equation}
which implies
\begin{equation}
	\Pb\bigg(\frac{x_{m}(T)}{\sqrt{mT}}\geq 2+m^{-2/3}s\bigg)\leq\const\cdot e^{-s}.
\end{equation}
Inserting $s=m^{2/3}\delta$ finishes the proof.
\end{proof}

Before showing Propositions~\ref{propStepPointw} and~\ref{propStepK0Bound}, we introduce some auxiliary functions and establish asymptotic results for them.
\begin{defin}
Using the scaling
\begin{equation}\begin{aligned}
 n(t,r)&=t+2t^{2/3}r\\
 \xi(t,r,s)&=2t+2t^{2/3}r+t^{1/3}s,
\end{aligned}\end{equation}
define the functions
\begin{equation}\begin{aligned}
	\alpha_{t}(r,s)&:=\frac{{t}^{1/3}}{2\pi\mathrm{i}}\int_{\mathrm{i}\R}\mathrm{d} w\,e^{{t}(w^2-1)/2+\xi(w+1)}(-w)^{n}\\
	&=\frac{{t}^{1/3}}{2\pi\mathrm{i}}\int_{\mathrm{i}\R}\mathrm{d} w\,e^{{t}(w^2-1)/2+(2{t}+2{t}^{2/3}r+{t}^{1/3}s)(w+1)}(-w)^{{t}+2{t}^{2/3}r},\\
	\beta_{t}(r,s)&:=\frac{{t}^{1/3}}{2\pi\mathrm{i}}\oint_{\Gamma_0}\mathrm{d} z\,e^{-{t}(z^2-1)/2-\xi(z+1)}(-z)^{-n}\\
	&=\frac{{t}^{1/3}}{2\pi\mathrm{i}}\oint_{\Gamma_0}\mathrm{d} z\,e^{-{t}(z^2-1)/2-(2{t}+2{t}^{2/3}r+{t}^{1/3}s)(z+1)}(-z)^{-{t}-2{t}^{2/3}r}.	
\end{aligned}\end{equation}
\end{defin}

These functions will be very useful in the rest of this work, since often times kernels can be written as integrals of them. We can then derive pointwise convergence and uniform bounds of a kernel by analogous results for the auxiliary functions, which are the subject of the two subsequent lemmas. The proofs of Propositions~\ref{propStepPointw} and~\ref{propStepK0Bound} as well as their counterparts in the Poisson case, Propositions~\ref{propPointw} and~\ref{propK0Bound}, are considerably shortened in this way, avoiding repeated steepest descent analysis.

\begin{lem}\label{lemAlphaLimit}
The limits
\begin{equation}\begin{aligned}
	\alpha(r,s)&:=\lim_{{t}\to\infty}\alpha_{t}(r,s)=\Ai(r^2+s)e^{-\frac{2}{3}r^3-rs}\\
	\beta(r,s)&:=\lim_{{t}\to\infty}\beta_{t}(r,s)=-\Ai(r^2+s)e^{\frac{2}{3}r^3+rs}
\end{aligned}\end{equation}
hold uniformly for $s$ and $r$ in a compact set.
\end{lem}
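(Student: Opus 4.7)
The proof is a coalescing-saddle steepest-descent analysis. The phase functions of the $\alpha_t$- and $\beta_t$-integrands, call them $f_t(w)$ and $g_t(z)$ respectively, have a single critical point at $w=-1$ (resp.\ $z=-1$) to leading order in $t$, with second derivative of order $t^{2/3}$ and third derivative of order $t$: a classic coalescing double saddle. The natural substitution is $w = -1 + t^{-1/3}W$ and $z = -1 + t^{-1/3}Z$, whose Jacobian $t^{-1/3}$ cancels the overall $t^{1/3}$ prefactor. Using Taylor's theorem together with $\log(1 - t^{-1/3}W) = -\sum_{k\ge 1} t^{-k/3}W^k/k$ and grouping by powers of $t$, one checks directly that the $O(t^{2/3})$ and $O(t^{1/3})$ contributions cancel identically, leaving
\begin{equation}
f_t(-1+t^{-1/3}W) = -\tfrac{1}{3}W^3 - rW^2 + sW + O(t^{-1/3}), \quad g_t(-1+t^{-1/3}Z) = \tfrac{1}{3}Z^3 + rZ^2 - sZ + O(t^{-1/3}),
\end{equation}
uniformly for $(r,s,W)$ resp.\ $(r,s,Z)$ in compact sets.

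Next, deform each contour to a steep-descent path through the saddle. For $\alpha_t$, the vertical line $\mathrm{i}\mathbb{R}$ is bent leftward so as to enter and leave $w=-1$ along the two cubic steep-descent rays at angles $\pm 2\pi/3$; the third such ray, at angle $0$, is excluded because it runs into the branch point of $\log(-w)$ at $w=0$. After rescaling, the $W$-contour runs from $e^{-2\pi\mathrm{i}/3}\infty$ through $0$ to $e^{2\pi\mathrm{i}/3}\infty$. For $\beta_t$, the counterclockwise loop $\Gamma_0$ around $0$ is deformed so as to pass through $z=-1$ along the rays at angles $\pm\pi/3$ (the third descent ray, at angle $\pi$, is excluded since its direction would prevent the deformed contour from still enclosing $0$); respecting the counterclockwise orientation, a winding-number calculation shows that the rescaled $Z$-contour runs from $e^{\pi\mathrm{i}/3}\infty$ through $0$ to $e^{-\pi\mathrm{i}/3}\infty$, which is the \emph{reverse} of the standard Airy orientation. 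The cubic term dominates the subleading quadratic $-t^{2/3}r(w+1)^2$ and linear $t^{1/3}s(w+1)$ corrections once $|w+1|\gtrsim t^{-1/3}$, giving stretched-exponential decay along the tails uniformly in $(r,s)$ compact; this permits truncation to a bounded neighborhood of the saddle at negligible cost. Dominated convergence and the pointwise limit then yield
\begin{equation}
\alpha(r,s) = \frac{1}{2\pi \mathrm{i}}\int_{e^{-2\pi \mathrm{i}/3}\infty}^{e^{2\pi \mathrm{i}/3}\infty} e^{-W^3/3 - rW^2 + sW}\,\mathrm{d}W, \quad
\beta(r,s) = \frac{1}{2\pi \mathrm{i}}\int_{e^{\pi \mathrm{i}/3}\infty}^{e^{-\pi \mathrm{i}/3}\infty} e^{Z^3/3 + rZ^2 - sZ}\,\mathrm{d}Z.
\end{equation}

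Finally, the shift $U = W + r$ (resp.\ $U = Z + r$) completes the cube and matches each integrand with the standard Airy contour representation $\Ai(x) = (2\pi \mathrm{i})^{-1}\int_{e^{-2\pi \mathrm{i}/3}\infty}^{e^{2\pi \mathrm{i}/3}\infty} e^{-U^3/3 + xU}\,\mathrm{d}U$, giving $\alpha(r,s) = \Ai(r^2+s)\,e^{-2r^3/3 - rs}$ directly and $\beta(r,s) = -\Ai(r^2+s)\,e^{2r^3/3 + rs}$, the extra minus sign being supplied by the reversed orientation of the $\beta$-contour. The main obstacle is the uniform tail bound: although the cubic term clearly dominates for $|w+1|\gtrsim t^{-1/3}$, the subleading quadratic correction can mildly raise $\Re f_t$ in the inner region $|w+1|\ll t^{-1/3}$ (though by at most $O(1)$, hence harmlessly), and one must assemble explicit contour estimates that absorb this and remain uniform in $(r,s)$ on a given compact set. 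The identification with the Airy integral and the sign bookkeeping for the $\beta$-contour are then routine.
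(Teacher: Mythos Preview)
Your approach is essentially the paper's: steepest descent at the doubly critical point $-1$, the substitution $z=-1+t^{-1/3}Z$, Taylor expansion to cubic order, and identification with the Airy integral. The paper treats $\beta_t$ first and $\alpha_t$ by the same recipe, using exactly your $\pm 2\pi/3$ contour for $\alpha_t$.

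There is one genuine technical gap, in your $\beta_t$ contour. The rays at $\pm\pi/3$ from $-1$ are \emph{not} global descent directions for the leading phase $f_3(z)=-(z^2-1)/2-2(z+1)-\ln(-z)$: along $z=-1+ue^{i\pi/3}$ one finds $\Re f_3\sim u^2/4\to+\infty$, since the dominant quadratic $-z^2/2$ blows up at any exit angle exceeding $\pi/4$. Hence the closed loop $\Gamma_0$ cannot be opened to these infinite rays, and your ``cubic term dominates once $|z+1|\gtrsim t^{-1/3}$'' argument only controls the portion of the contour inside the Taylor region, not the rest. The paper resolves this by choosing a compromise angle $\theta\in(\pi/6,\pi/4)$ --- large enough that the local cubic has $\cos 3\theta<0$, small enough that the quadratic tail decays and the closing arc $\gamma_2(R)$ vanishes as $R\to\infty$ --- and then verifies $\tfrac{d}{du}\Re f_3(-1+ue^{i\theta})<0$ for \emph{all} $u>0$ by a direct one-line computation. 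Only after truncation to a $\delta$-neighbourhood and rescaling does the paper remark that the \emph{limiting} cubic integral admits any exit angle in $(\pi/6,\pi/2)$, including your $\pi/3$. The error from replacing $G$ by its Taylor polynomial is controlled via $|e^x-1|\le |x|e^{|x|}$, giving an explicit $O(t^{-1/3})$ rate.
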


\begin{proof}
 We start by analyzing $\beta_{t}$. Defining functions as
\begin{equation}\begin{aligned}
	f_3(z)&=-(z^2-1)/2-2(z+1)-\ln(-z)\\
	f_2(z)&=-2r(z+1+\ln(-z))\\
	f_1(z)&=-s(z+1),
\end{aligned}\end{equation}
we can write $G(z)={t} f_3(z)+{t}^{2/3}f_2(z)+{t}^{1/3}f_1(z)$, leading to
\begin{equation}
	\beta_{t}(r,s)=\frac{{t}^{1/3}}{2\pi\mathrm{i}}\oint_{\Gamma_0}\mathrm{d} z\, e^{G(z)}.
\end{equation}
These types of limits of contour integrals appear frequently when studying models in the KPZ universality class. As usual, they will be computed via \emph{steep descent analysis}. The idea is that the limit is dominated by the leading order term $tf_3(z)$ at a point where $\Re(z)$ is maximal. One therefore chooses a contour that passes through a \emph{critical point} $z_0$, satisfying $f'(z_0)=0$, in such a way that $\Re(z)$ is strictly decreasing when moving away from $z_0$ along the contour. The vanishing derivative also ensures that the imaginary part is stationary at the maximum of the real part, such that no rapid oscillations occur, which might cause cancellations.
\begin{figure}
 \centering
 \psfrag{omega}[lb]{}
 \psfrag{-1}[lb]{\hspace{-0.3em}$-1$}
 \psfrag{Gamma}[lb]{$\Gamma$}
 \psfrag{theta}[lb]{$\theta$}
 \psfrag{R}[lb]{$R$}
 \includegraphics[height=5cm]{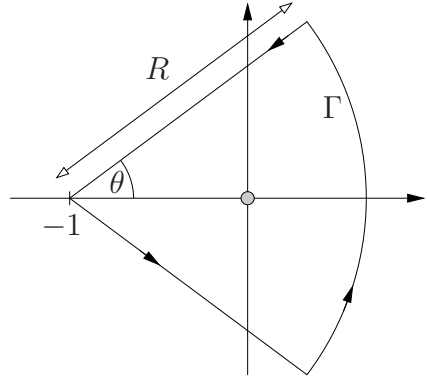}
 \caption{The contour $\Gamma=\gamma_1(R)\cup\overline{\gamma_1(R)}\cup\gamma_2(R)$ used in the steep descent analysis.}
 \label{figpointwContour}
\end{figure}

Let $\theta\in(\pi/6,\pi/4)$. We choose \mbox{$\Gamma=\gamma_1\cup\overline{\gamma_1(R)}\cup\gamma_2(R)$} as our steep descent contour, where
\begin{equation}\begin{aligned}
	\gamma_1(R)&=\{-1+ue^{\mathrm{i}\theta}, u\in[0,R]\},\\
	\gamma_2(R)&=\{-1+Re^{\mathrm{i} u}, u\in[-\theta,\theta]\},
\end{aligned}\end{equation}
with the direction of integration as in Figure~\ref{figpointwContour}.

The integrand is dominated by the $\exp(-z^2)$ term for large $|z|$ as \mbox{$\theta<\pi/4$}. Thus the contribution coming from $\gamma_2(R)$ converges to $0$ as \mbox{$R\to\infty$}. With $\gamma_1=\lim_{R\to\infty}\gamma_1(R)$ the remaining contour of integration is now \mbox{$\gamma_1\cup\overline{\gamma_1}$}. Let us show that the real part is indeed decreasing along $\gamma_1$:

\begin{equation}\begin{aligned}
\frac{\mathrm{d} \Re f_3(-1+ue^{\mathrm{i}\theta})}{\mathrm{d} u}&=\frac{\D}{\mathrm{d} u}\Re\left(-\frac{u^2}{2}e^{2\mathrm{i}\theta}-ue^{\mathrm{i}\theta}-\log(1-ue^{\mathrm{i}\theta})\right)\\
&=\Re\frac{u^2e^{3\mathrm{i}\theta}}{1-ue^{\mathrm{i}\theta}}=\frac{u^2}{||1-ue^{\mathrm{i}\theta}||^2}\Re\left(e^{3\mathrm{i}\theta}-ue^{2\mathrm{i}\theta}\right)<0.
\end{aligned}\end{equation}
Replacing $\theta$ by $-\theta$ gives the result for $\overline{\gamma_1}$. It is also evident that $f_3$ decreases quadratically in $u$, while $f_2$ and $f_1$ increase at most linearly in $u$. Convergence of the integral is therefore clear for arbitrary finite $t$.

We want to restrict the contour of integration to a small neighbourhood of our critical point $\Gamma_\delta=\{z\in\Gamma|\, |z+1|\leq\delta\}$. Since $f_3$ is a steep descent contour, the contribution coming from the remaining part will be bounded by a constant times $e^{-\mu t}$ where $\mu$ is a constant of order $\delta^3$,
\begin{equation}
\beta_t(r,s) = \Or\left(t^{1/3}e^{-\mu t}\right)+\frac{t^{1/3}}{2\pi\mathrm{i}}\int_{\Gamma_\delta}\mathrm{d} z\,e^{G(z)}.
\end{equation}

For the computation of the integral along $\Gamma_\delta$ we can now use Taylor expansion:
\begin{equation}\label{eqTaylf}\begin{aligned}
	{t} f_3(-1+\omega)&={t}\left(\omega^3/3+\Or(\omega^4)\right)\\
	{t}^{2/3}f_2(-1+\omega)&={t}^{2/3}r\left(\omega^2+\Or(\omega^3)\right)\\
	{t}^{1/3}f_1(-1+\omega)&=-\omega s{t}^{1/3}.
\end{aligned}\end{equation}
All error terms are to be understood uniformly in $s,{t},r$. Note that our critical point $z_0=-1$ is actually a \emph{doubly critical} point, i.e.\ the second derivative $f''_3(z_0)$ vanishes, too. This is a core feature of formulas appearing in the KPZ universality class. In a usual Gaussian scaling setting, the leading term of $tf_3$ would be $t\omega^2$, which required rescaling the integration variable $\omega$ by $t^{-1/2}$. This in turn led to the fluctuation scaling $st^{1/2}$ in order to obtain a non-degenerate limit. The limiting function had a second order polynomial in the exponent, resulting in the Fourier transform of a Gaussian function, i.e.\ again a Gaussian function. So a non-vanishing second derivative is connected to both the scaling exponent $1/2$ and the Gaussian limiting distribution. The actual term $t\omega^3$ appearing here is the reason for the fluctuation exponent $1/3$ as well as a third order polynomial in the exponent, giving an Airy function.

Let $\tilde{f}_i(-1+\omega)$ be the expression $f_i(-1+\omega)$ but without the error term, and define also $\tilde{G}(-1+\omega)$ correspondingly. We use the inequality \mbox{$|e^x-1|\leq|x|e^{|x|}$} to estimate the error we make by integrating over $\exp(\tilde{G})$ instead of $\exp(G)$:
\begin{equation}\label{eqTaylErr}\begin{aligned}
 \Big|\frac{t^{1/3}}{2\pi\mathrm{i}}&\int_{\Gamma_\delta}\mathrm{d} z\,\left(e^{G(z)}-e^{\tilde{G}(z)}\right)\Big|\\
 &\leq\int_{\Gamma_\delta+1}\mathrm{d} \omega\,\left|e^{\tilde{G}(-1+\omega)}\right|e^{\Or(\omega^4t+\omega^3t^{2/3}+\omega^2t^{1/3})}\Or\left(\omega^4t+\omega^3t^{2/3}+\omega^2t^{1/3}\right)\\
 &\leq\int_{\Gamma_\delta+1}\mathrm{d} \omega\,\left|e^{t\tilde{f}_3(-1+\omega)(1+\chi_3)+t^{2/3}\tilde{f}_2(-1+\omega)(1+\chi_2)+t^{1/3}\tilde{f}_1(-1+\omega)(1+\chi_1)}\right|\\&\quad\times\Or\left(\omega^4t+\omega^3t^{2/3}+\omega^2t^{1/3}\right),
\end{aligned}\end{equation}
where $\chi_1$, $\chi_2$ and $\chi_3$ are constants, which can be made as small as desired for $\delta$ small enough. The leading term in the exponential is
\begin{equation}
 t\widetilde{f}_3(-1+\omega)(1+\chi_3)=\frac{1}{3}\omega^3(1+\chi_3)t,
\end{equation}
which has negative real part and therefore ensures the integral to stay bounded for $t\to\infty$. By the change of variables $\omega=t^{-1/3}Z$ the prefactor $t^{1/3}$ cancels and the remaining $\Or$-terms imply that the overall error is $\Or(t^{-1/3})$.

\begin{equation}\label{eq3.2.26}\begin{aligned}
 \frac{t^{1/3}}{2\pi\mathrm{i}}\int_{\Gamma_\delta}\mathrm{d} z\,e^{\tilde{G}(z)}&=\frac{t^{1/3}}{2\pi\mathrm{i}}\int_{\Gamma_\delta+1}\mathrm{d} \omega\,e^{t\omega^3/3+t^{2/3}r\omega^2-t^{1/3}s\omega}\\&=\frac{1}{2\pi\mathrm{i}}\int_{e^{\theta\mathrm{i}}\delta t^{1/3}}^{-e^{\theta\mathrm{i}}\delta t^{1/3}}\mathrm{d} Z\,e^{Z^3/3+rZ^2-sZ}.
\end{aligned}\end{equation}
Letting $t\to\infty$ now just extends the integration contour up to infinity. Noticing that we are free to choose an arbitrary angle $\pi/6<\theta<\pi/2$, this is indeed the integral expression for $\beta(r,s)$.

One can carry out an analogous analysis for $\alpha_t$. Notice that with the same definition of the function $G$ we now have:
\begin{equation}
 \alpha_t(r,s)=\frac{{t}^{1/3}}{2\pi\mathrm{i}}\oint_{\mathrm{i}\R}\mathrm{d} z\, e^{-G(z)}.
\end{equation}
We choose the contour $\Gamma'=\{-1+|u|e^{\sgn(u)2\pi\mathrm{i}/3}, u\in\R\}$ and show that it is a steep descent curve:
\begin{equation}\begin{aligned}
\frac{\mathrm{d} \Re \left(-f_3(-1+ue^{2\pi\mathrm{i}/3})\right)}{\mathrm{d} u}=\frac{u^2}{||1-ue^{2\pi\mathrm{i}/3}||^2}\Re\left(-1+ue^{4\pi\mathrm{i}/3}\right)<0.
\end{aligned}\end{equation}
Repeating the other steps of the steep descent analysis in the obvious way one arrives at:
\begin{equation}
 \lim_{t\to\infty}\alpha_t(r,s)=\frac{1}{2\pi\mathrm{i}}\int_{e^{-2\pi\mathrm{i}/3}\infty}^{e^{2\pi\mathrm{i}/3}\infty}\mathrm{d} W\,e^{-W^3/3-rW^2+sW}.
\end{equation}

 \end{proof}

\begin{lem}\label{lemAlphaBound}
For fixed $r$ and $L$, there exist ${t}_0$, $c_L$ such that for all ${t}>{t}_0$ and $s>-L$ the following bounds hold
\begin{equation}\begin{aligned}
	|\alpha_{t}(r,s)|&\leq c_L e^{-s}\\
	|\beta_{t}(r,s)|&\leq c_L e^{-s}
\end{aligned}\end{equation}
\end{lem}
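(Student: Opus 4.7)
The plan is to prove both estimates by a steepest-descent-type analysis with the contour shifted to a fixed position independent of $s$, so as to extract the factor $e^{-s}$ directly from the linear-in-$s$ piece of the exponent. I describe the argument for $\beta_t$; the case of $\alpha_t$ is handled by an entirely symmetric argument on a reflected contour.

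After rescaling $z = -1 + t^{-1/3}Z$, the Taylor expansion carried out in the proof of Lemma~\ref{lemAlphaLimit} shows that the exponent of $\beta_t$ takes the form $Z^3/3 + rZ^2 - sZ + t^{-1/3}R_t(Z)$, where the remainder $R_t(Z) = \Or(|Z|^3 + |Z|^4)$ is uniformly controlled on compact subsets of $\C$. I then deform the original contour $\Gamma_0$ (which in the rescaled variable wraps once around the branch cut $[t^{1/3},+\infty)$ of $(1 - t^{-1/3}Z)^{-n}$) into a Hankel-type cone
\begin{equation*}
\gamma_K = \{K + V e^{\mathrm{i}\theta}: V\geq 0\} \cup \{K + V e^{-\mathrm{i}\theta}: V\geq 0\},
\end{equation*}
with apex at $Z = K$, where $K \geq 1$ is a fixed constant satisfying $K + r > 0$ and $\theta \in (\pi/4, \pi/2)$. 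The cone lies entirely to the left of the branch cut and wraps around it once, so the deformation is valid provided one verifies (using the cubic decay of the leading exponent) that the connecting arcs at infinity give vanishing contribution.

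A direct computation on the upper ray $Z = K + V e^{\mathrm{i}\theta}$ yields
\begin{equation*}
\Re\!\left(\tfrac{Z^3}{3} + rZ^2 - sZ\right) = \tfrac{K^3}{3} + rK^2 - sK + V\cos\theta\,(K^2 + 2rK - s) - V^2|\cos 2\theta|(K+r) - \tfrac{V^3}{3}|\cos 3\theta|,
\end{equation*}
with an analogous expression on the lower ray. Both the quadratic and cubic coefficients in $V$ are strictly negative by the choice of $\theta$ and $K$, while the linear coefficient is bounded in absolute value by a constant uniformly for $s > -L$. Hence the integral over $\gamma_K$ is bounded by $C_L\,e^{K^3/3 + rK^2 - sK}$. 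Since $K \geq 1$, one checks case by case that $e^{-sK} \leq e^{KL}\,e^{-s}$ for every $s > -L$, giving the claimed bound $|\beta_t(r,s)| \leq c_L e^{-s}$.

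The remainder $t^{-1/3}R_t(Z)$ is handled by splitting $\gamma_K$ into a bounded piece $V \leq V_0$ (where $t^{-1/3}R_t(Z) \to 0$ uniformly) and tails (where the cubic decay of the leading exponent dominates the polynomial growth of $R_t$); for $t \geq t_0$ large this gives $|e^{t^{-1/3}R_t(Z)}| \leq 2$ throughout $\gamma_K$. The bound for $\alpha_t$ is obtained identically using the reflected cone $\{-K + U e^{\pm \mathrm{i}\theta'}: U \geq 0\}$ with $\theta' \in (\pi/2, 3\pi/4)$ and $K > r$, on which the rescaled exponent $-W^3/3 - rW^2 + sW$ has real part of the form $K^3/3 - rK^2 - sK + (\textrm{bounded linear in } U) - U^2|\cos 2\theta'|(K-r) - (U^3/3)|\cos 3\theta'|$. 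The principal technical obstacle is the rigorous justification of the contour deformation in the presence of the non-integer branch cut of $(-z)^{\pm(t+2t^{2/3}r)}$; once this is set up, the bookkeeping mirrors the standard steepest-descent estimates already used in the proof of Lemma~\ref{lemAlphaLimit}.
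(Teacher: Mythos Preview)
Your approach has a genuine gap in the treatment of the tails. The rescaled exponent $Z^3/3+rZ^2-sZ$ is only the local Taylor approximation of the full exponent $G(-1+t^{-1/3}Z)$, valid for $|t^{-1/3}Z|\ll 1$. On the rays of your Hankel cone $\gamma_K$ with $\theta\in(\pi/4,\pi/2)$, once $V\gtrsim t^{1/3}$ the true exponent is dominated by the original quadratic term $-\tfrac{t}{2}z^2\approx -\tfrac{t^{1/3}}{2}Z^2$, whose real part is $-\tfrac{t^{1/3}}{2}|Z|^2\cos(2\theta)>0$. Hence $|e^{G(-1+t^{-1/3}Z)}|\to\infty$ along these rays: the integral over $\gamma_K$ diverges and the claimed opening of the closed loop $\Gamma_0$ into an infinite Hankel contour is not justified (the arcs at infinity do not vanish). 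Your remainder estimate $|e^{t^{-1/3}R_t(Z)}|\leq 2$ therefore cannot hold on the tails; there the ``remainder'' is not a small perturbation of the cubic but is the dominant contribution and has the wrong sign. (A minor separate slip: the linear coefficient $\cos\theta\,(K^2+2rK-s)$ is not bounded in absolute value for $s>-L$, only bounded above; this does not affect the argument but is stated incorrectly.)

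The paper avoids this by never passing to an infinite contour in the rescaled variable. It works in the original $z$-plane with an $s$-\emph{dependent} contour: a short vertical segment through $z_0=-1+\omega$ joined to two rays at angle $\theta\in(\pi/6,\pi/4)$, with $\omega=\min\{t^{-1/3}\sqrt{s},\e\}$. The restriction $\theta<\pi/4$ keeps $\Re(-tz^2/2)\to-\infty$ along the rays, so convergence at infinity is automatic. The $s$-dependent shift of the apex is what produces the factor $e^{-s}$: one checks $\Re G(z_0)\leq -\tfrac12\omega t^{1/3}s$ (the lower-order terms $t\omega^3/3$ and $rt^{2/3}\omega^2$ are dominated by $-st^{1/3}\omega$ for $\omega$ small and $s$ large), and since $\omega t^{1/3}=\min\{\sqrt{s},\e t^{1/3}\}\geq 2$ for $L,t_0$ large enough, this yields $e^{G(z_0)}\leq e^{-s}$. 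The integrals along the vertical segment and the rays, normalized by $e^{-G(z_0)}$, are then bounded uniformly in $s$ by separate Gaussian-type and steep-descent estimates. A fixed contour independent of $s$ does not seem to give the uniform bound directly.
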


\begin{proof}
In the case $s<L$ the result follows from the previous lemma. Let us assume $s\geq L$ from now on. Notice that we can require $L$ to be as large as necessary, since the claim of the lemma is stronger for $L$ large.

Start by analyzing $\beta_{t}$. We have again
\begin{equation}
	\beta_{t}(r,s)=\frac{{t}^{1/3}}{2\pi\mathrm{i}}\oint_{\Gamma_0}\mathrm{d} z\, e^{G(z)},
\end{equation}
with $G(z)$ as in the proof of Proposition~\ref{lemAlphaLimit}.
\begin{figure}
 \centering
 \psfrag{-1}[lb]{$-1$}
 \psfrag{omega}[lb]{$\omega$}
 \psfrag{Gamma}[lb]{$\overline{\Gamma}$}
 \psfrag{theta}[lb]{$\theta$}
 \psfrag{R}[lb]{$R$}
 \includegraphics[height=5cm]{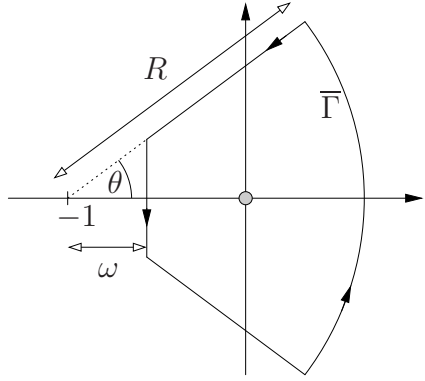}
 \caption{The contour $\overline{\Gamma}=\gamma_1(R)\cup\overline{\gamma_1(R)}\cup\gamma_2(R)\cup\gamma_3$ used for obtaining the uniform bounds.}
 \label{figboundContour}
\end{figure}

Define a new parameter $\omega$ given by
\begin{equation}\label{eqOmega}
	\omega=\min\left\{{t}^{-1/3}\sqrt{s},\e\right\},
\end{equation}
for some small, positive $\e$ chosen in the following, and let $\theta\in(\pi/6,\pi/4)$. We change the contour $\Gamma_0$ to \mbox{$\overline{\Gamma}=\gamma_1(R)\cup\overline{\gamma_1(R)}\cup\gamma_2(R)\cup\gamma_3$} as shown in Figure~\ref{figboundContour}, with
\begin{equation}\begin{aligned}
	\gamma_1(R)&=\{-1+ue^{\mathrm{i}\theta}, u\in[\omega/\cos\theta,R]\}\\
	\gamma_2(R)&=\{-1+Re^{\mathrm{i} u}, u\in[-\theta,\theta]\}\\
	\gamma_3&=\{-1+\omega(1+\mathrm{i} u\tan\theta), u\in[-1,1] \}.
\end{aligned}\end{equation}

If ${t}$ and $s$ are fixed, the integrand is dominated by the $\exp(-z^2)$ term for large $|z|$. Thus the contribution coming from $\gamma_2(R)$ converges to $0$ as $R\to\infty$. With $\gamma_1=\lim_{R\to\infty}\gamma_1(R)$ our choice for the contour of integration is now \mbox{$\gamma_1\cup\overline{\gamma_1}\cup\gamma_3$}.

We start by analyzing the contribution coming from $\gamma_3$,
\begin{equation}\label{eqAlphFac}
	\frac{{t}^{1/3}}{2\pi\mathrm{i}}\int_{\gamma_3}\mathrm{d} z\, e^{G(z)}=e^{G(z_0)}\frac{{t}^{1/3}}{2\pi}\int_{[-\omega\tan\theta,\omega\tan\theta]}\mathrm{d} u\, e^{G(z_0+\mathrm{i} u)-G(z_0)},
\end{equation}
where $z_0=-1+\omega$.

Let us consider the prefactor $e^{G(z_0)}$ at first. Since $\omega$ is small we can use Taylor expansion, as well as \eqref{eqOmega}, to obtain the bounds
\begin{equation}\begin{aligned}
	{t} f_3(z_0)&={t}\left(\omega^3/3+\Or(\omega^4)\right)\leq\frac{1}{3}\omega s{t}^{1/3}\left(1+\Or(\e)\right)\\
	{t}^{2/3}f_2(z_0)&={t}^{2/3}r\left(\omega^2+\Or(\omega^3)\right)\leq\omega \sqrt{s}{t}^{1/3}|r|\left(1+\Or(\e)\right)\\
	{t}^{1/3}f_1(z_0)&=-\omega s{t}^{1/3}.
\end{aligned}\end{equation}
All error terms are to be understood uniformly in $s,{t},r$. The $f_1$ term dominates both $f_2$, if $L$ is chosen large enough, and $f_3$, for $\e$ being small. This results in
\begin{equation}\label{eqG0bound}
	e^{G(z_0)}\leq e^{-\frac{1}{2}\omega{t}^{1/3}s}\leq e^{-s},
\end{equation}
since $\omega t^{1/3}$ can be made as large as desired by increasing $t_0$ and $L$ while keeping $\e$ fixed.

To show convergence of the integral part of \eqref{eqAlphFac} we first bound the real part of the exponent:
\begin{equation}\begin{aligned}
\Re&\left(G(z_0+\mathrm{i} u)-G(z_0)\right)\\
&=\Re\bigg[{t}\left(\frac{u^2-2z_0\mathrm{i} u}{2}-2\mathrm{i} u-\ln\frac{z_0+\mathrm{i} u}{z_0}\right)\\&\qquad\quad+{t}^{2/3}\cdot 2r\left(\mathrm{i} u+\ln\frac{z_0+\mathrm{i} u}{z_0}\right)-{t}^{1/3}s\mathrm{i} u\bigg]\\
&={t}\left(\frac{u^2}{2}-\frac{1}{2}\ln\left(1+\frac{u^2}{z_0^2}\right)\right)+{t}^{2/3}r\ln\left(1+\frac{u^2}{z_0^2}\right)\\
&\leq{t}\frac{u^2}{2}\left(1-\frac{1}{z_0^2}+\frac{u^2}{2z_0^4}\right)+{t}^{2/3}r\frac{u^2}{z_0^2}=:-\eta{t}^{2/3}u^2.
\end{aligned}\end{equation}
$\eta$ satisfies:
\begin{equation}\begin{aligned}
\eta&=\frac{{t}^{1/3}}{2}\left(\frac{1}{(1-\omega)^2}-1-\frac{u^2}{2(1-\omega)^4}\right)-\frac{r}{(1-\omega)^2}\\
&={t}^{1/3}\omega\left(1+\Or(\omega)\right)-r\left(1+\Or(\omega)\right),
\end{aligned}\end{equation}
where we used $|u|<\omega$. Given any $\e$ we can now choose both $L$ and ${t}_0$ large, such that the first term dominates. Consequently $\eta$ will be bounded from below by some positive constant $\eta_0$. The integral contribution coming from $\gamma_3$ can thus be bounded as
\begin{equation}\label{eqSDB1}\begin{aligned}
	|\eqref{eqAlphFac}|&= e^{G(z_0)}\frac{{t}^{1/3}}{2\pi}\left|\int_{[-\omega\tan\theta,\omega\tan\theta]}\mathrm{d} u\,e^{G(z_0+iu)-G(z_0)}\right|\\
	&\leq e^{-s}\frac{{t}^{1/3}}{2\pi}\int_\R \mathrm{d} u\,e^{-\eta_0{t}^{2/3}u^2}=\frac{e^{-s}}{2\pi}\int_\R \mathrm{d} u\,e^{-\eta_0u^2}
	=\frac{e^{-s}}{2\sqrt{\pi\eta_0}}.
\end{aligned}\end{equation}

Finally we need a corresponding bound on the $\gamma_1$ contribution to the integral. By symmetry this case covers also the contour $\overline{\gamma_1}$. Write
\begin{equation}\label{eqAlphFac2}
	\frac{{t}^{1/3}}{2\pi\mathrm{i}}\int_{\gamma_1}\mathrm{d} z\, e^{G(z)}=e^{G(z_1)}\frac{{t}^{1/3}e^{\mathrm{i}\theta}}{2\pi\mathrm{i}}\int_{\R_+}\mathrm{d} u\, e^{G(z_1+ue^{\mathrm{i}\theta})-G(z_1)},
\end{equation}
with $z_1=-1+\omega(1+\mathrm{i}\tan\theta)$. From the previous estimates one easily gets
\begin{equation}
	\left|e^{G(z_1)}\right|\leq e^{G(z_0)}\leq e^{-s},
\end{equation}
so the remaining task is to show boundedness of the integral part of \eqref{eqAlphFac2}.

At first notice that the real part of the $f_1$ contribution in the exponent is negative, so we can omit it, avoiding the problem of large $s$. By elementary calculus, we have for all $u\geq\omega/\cos\theta$,
\begin{equation}
	\frac{\D}{\mathrm{d} u}\Re\left(f_3(-1+ue^{\mathrm{i}\theta})\right)<0,
\end{equation}
that is, $\gamma_1$ is a steep descent curve for $f_3$. We can therefore restrict the contour to a neighbourhood of the critical point $z_1$, which we choose of size $\delta$. The error we make is exponentially small in $t$, so can be bounded by $1$ through choosing $t_0$ large enough:
\begin{equation}\label{eq3.43}\begin{aligned}
	\left|\frac{{t}^{1/3}}{2\pi}\int_{\R_+}\mathrm{d} u\, e^{G(z_1+ue^{\mathrm{i}\theta})-G(z_1)}\right|&\leq 1+\int_0^\delta\mathrm{d} u\, \left|e^{{t}\hat{f}_3(ue^{\mathrm{i}\theta})+{t}^{2/3}\hat{f}_2(ue^{\mathrm{i}\theta})}\right|
\end{aligned}\end{equation}
where $\hat{f}_i(z)=f_i(z_1+z)-f_i(z_1)$. Taylor expanding these functions leads to
\begin{equation}\begin{aligned}
	\Re({t}\hat{f}_3(ue^{\mathrm{i}\theta}))&={t} \Re(e^{3\mathrm{i}\theta})u\frac{\omega^2}{\cos^2\theta}\left(1+\Or(\delta)\right)\left(1+\Or(\e)\right)\\
&\leq-\chi_3{t}^{1/3}\omega \cdot{t}^{2/3}u\omega\\	\Re({t}^{2/3}\hat{f}_2(ue^{\mathrm{i}\theta}))&=2\Re(e^{2\mathrm{i}\theta}){t}^{2/3}ru\frac{\omega}{\cos\theta}\left(1+\Or(\delta)\right)\left(1+\Or(\e)\right)\\
&\leq\chi_2|r|\cdot{t}^{2/3}u\omega,
\end{aligned}\end{equation}
for some positive constants $\chi_2$, $\chi_3$, by choosing $\delta$ and $\e$ small enough. For large $L$ and ${t}_0$, $-\chi_3{t}^{1/3}\omega$ dominates over $\chi_2|r|$, so we can further estimate:
\begin{equation}
	\int_0^\delta\mathrm{d} u\, \left|e^{{t}\hat{f}_3(ue^{\mathrm{i}\theta})+{t}^{2/3}\hat{f}_2(ue^{\mathrm{i}\theta})+\hat{f}_0(ue^{\mathrm{i}\theta})}\right|\leq\int_0^\infty\mathrm{d} u\, e^{-\chi_3{t}\omega^2 u/2}\leq\frac{2}{\chi_3{t}\omega^2}.
\end{equation}
Combining this with \eqref{eqSDB1} gives us
\begin{equation}
 |\beta_{t}(r,s)|\leq e^{-s}\left(\frac{1}{2\sqrt{\pi\eta_0}}+2\left(1+\frac{2}{\chi_3{t}\omega^2}\right)\right) \leq c_L e^{-s}
\end{equation}

The bound on $\alpha_t$ can be obtained by the same line of arguments. In this case choose the contour $\Gamma'=\gamma'_1\cup\overline{\gamma'_1}\cup\gamma'_3$, with
\begin{equation}\begin{aligned}
	\gamma'_1&=\{-1+ue^{2\pi\mathrm{i}/3}, u\in[2\omega,R]\}\\
	\gamma'_3&=\{-1+\omega(-1+\mathrm{i} u\sqrt{3}), u\in[-1,1] \}.
\end{aligned}\end{equation}
\end{proof}

\begin{proof}[Proof of Proposition~\ref{propStepPointw}]
We start with the first part of the kernel. It has an integral representation:
\begin{equation}
 \phi_{n_1,n_2}(\xi_1,\xi_2)=\int_{\mathrm{i}\R-\delta}\mathrm{d} z\, \frac{e^{z(\xi_1-\xi_2)}}{(-z)^{n_2-n_1}}.
\end{equation}
Inserting the scaling gives
\begin{equation}\label{eq62}
	t^{1/3}e^{\xi_1-\xi_2}\phi_{n_1,n_2}(\xi_1,\xi_2) = \frac{t^{1/3}}{2\pi\mathrm{i}} \int_{\mathrm{i}\R-\delta}\mathrm{d} z\, \frac{e^{(z+1)(\xi_1-\xi_2)}}{(-z)^{n_2-n_1}}.
\end{equation}
Setting $\delta=1$ and using the change of variables $z=-1+t^{-1/3}\zeta$ as well as the shorthand $r=r_2-r_1$ and $s=s_2-s_1$, we have
\begin{equation}
		\eqref{eq62}= \frac{1}{2\pi\mathrm{i}} \int_{\mathrm{i}\R}\mathrm{d} \zeta\, \frac{e^{t^{-1/3}\zeta(\xi_1-\xi_2)}}{(1-t^{-1/3}\zeta)^{n_2-n_1}}=
		\frac{1}{2\pi\mathrm{i}} \int_{\mathrm{i}\R}\mathrm{d} \zeta\,e^{-s\zeta}f_t(\zeta,r)
\end{equation}
with
\begin{equation}
	f_t(\zeta,r)=\frac{e^{-2t^{1/3}r\zeta}}{(1-t^{-1/3}\zeta)^{2t^{2/3}r}}=e^{-2t^{1/3}r\zeta-2t^{2/3}r\log(1-t^{-1/3}\zeta)}.
\end{equation}
Since this integral is $0$ for $r\leq0$ we can assume $r>0$ from now on. The function $f_t(\zeta,r)$ satisfies the pointwise limit $\lim_{t\to\infty}f_t(\zeta,r)=e^{r\zeta^2}$, which is easy to see by Taylor expanding the logarithm in the exponent. Applying Bernoulli's inequality, we also obtain a $t$-independent integrable bound
\begin{equation}\begin{aligned}
	|f_t(\zeta,r)|&=|1-t^{-1/3}\zeta|^{-2t^{2/3}r}=\left(1+t^{-2/3}|\zeta|^2\right)^{-t^{2/3}r}
	\\&\leq(1+r|\zeta|^2)^{-1}.
\end{aligned}\end{equation}
Thus by dominated convergence
\begin{equation}
	\left|\frac{1}{2\pi\mathrm{i}} \int_{\mathrm{i}\R}\mathrm{d} \zeta\,\left(e^{-s\zeta}f_t(\zeta,r)-e^{-s\zeta+r\zeta^2}\right)\right|\leq \frac{1}{2\pi} \int_{\mathrm{i}\R}|\mathrm{d} \zeta|\,\big|f_t(\zeta,r)-e^{r\zeta^2}\big|\stackrel{t\to\infty}{\longrightarrow}0.
\end{equation}
This implies that the convergence of the integral is uniform in $s$. The limit is easily identified as
\begin{equation}\begin{aligned}
	\lim_{t\to\infty}-\phi^\textrm{resc}_{r_1,r_2}(s_1,s_2)=-\frac{1}{2\pi\mathrm{i}} \int_{\mathrm{i}\R}\mathrm{d} \zeta\,e^{-s\zeta+r\zeta^2}\Id_{r>0} =
		-\frac{1}{\sqrt{4\pi r}}e^{-s^2/4r}\Id_{r>0},
\end{aligned}\end{equation}
which is the first part of the kernel $K_{\mathcal{A}_2}$.

The remaining kernel can be rewritten as integrals over the previously defined functions $\alpha$ and $\beta$. Therefore, choose the contours in such a way that \mbox{$\Re(z-w)>0$} is ensured.
\begin{equation}\label{eqStep26}\begin{aligned}
&\mathcal{K}_0^\textrm{resc}(r_1,s_1;r_2,s_2)=  {t}^{1/3}e^{\xi_1-\xi_2}\mathcal{K}_0(n_1,\xi_1;n_2,\xi_2)\\
&=\frac{{t}^{1/3}}{(2\pi\mathrm{i})^2}\int_{\mathrm{i}\R-\e}\mathrm{d} w\oint_{\Gamma_0}\mathrm{d} z\frac{e^{{t} w^2/2+\xi_1(w+1)}}{e^{{t} z^2/2+\xi_2 (z+1)}}\frac{(-w)^{n_1}}{(-z)^{n_2}}\frac{1}{w-z}\\
	&=\frac{-{t}^{1/3}}{(2\pi\mathrm{i})^2}\int_{\mathrm{i}\R-\e}\mathrm{d} w\oint_{\Gamma_0}\mathrm{d} z\frac{e^{{t} (w^2-1)/2+\xi_1(w+1)}}{e^{{t} (z^2-1)/2+\xi_2 (z+1)}}\frac{(-w)^{n_1}}{(-z)^{n_2}}\int_0^\infty \mathrm{d} x\,{t}^{1/3}e^{-{t}^{1/3}x(z-w)}\\
	&=-\int_0^\infty \mathrm{d} x\,\alpha_{t}(r_1,s_1+x)\beta_{t}(r_2,s_2+x)
\end{aligned}\end{equation}

Using the previous lemmas we can deduce compact convergence of the kernel. Indeed (omitting the $r$-dependence for greater clarity) we can write:
\begin{equation}\begin{aligned}\label{eqStepSupK0}
	\sup_{s_1,s_2\in[-L,L]}&\left|\int_0^\infty \mathrm{d} x\,\alpha_{t}(s_1+x)\beta_{t}(s_2+x)-\int_0^\infty \mathrm{d} x\,\alpha(s_1+x)\beta(s_2+x)\right|\\
	\leq&\int_0^\infty \mathrm{d} x\,\sup_{s_1,s_2\in[-L,L]}\left|\alpha_{t}(s_1+x)\beta_{t}(s_2+x)-\alpha(s_1+x)\beta(s_2+x)\right|.
\end{aligned}\end{equation}
By Lemma~\ref{lemAlphaLimit} the integrand converges to zero for every $x>0$. Using Lemma~\ref{lemAlphaBound} we can bound it by $\const\cdot e^{-2x}$, thus ensuring that \eqref{eqStepSupK0} goes to zero, i.e., $\mathcal{K}_0^\textrm{resc}$ converges compactly. Applying the limit in \eqref{eqStep26} and inserting the expressions for $\alpha$ and $\beta$ finishes the proof.
\end{proof}
\begin{proof}[Proof of Proposition~\ref{propStepK0Bound}]
Inserting the bounds from Lemma~\ref{lemAlphaBound} into \eqref{eqStep26} results in
\begin{equation}
		\left|\mathcal{K}_0^\textrm{resc}(r_1,s_1;r_2,s_2)\right|\leq\int_0^\infty \mathrm{d} x\,e^{-(s_1+x)}e^{-(s_2+x)}=\frac{1}{2}e^{-(s_1+s_2)}.
\end{equation}
\end{proof}
\section{Periodic initial conditions}\label{secPer}
The periodic initial conditions have been analyzed in detail in~\cite{FSW13}. We refer the reader to this work for details and give only the main results as well as a sketch of the idea of the arguments. Notice that here the direction of space is reversed as compared to~\cite{FSW13}.
\subsection{Determinantal structure}
The first result is an expression for the joint distribution at fixed time $t$.
\begin{prop}\label{propFlat}
Let $\{x_n(t),n\in\Z\}$ be the system of one-sided reflected Brownian motions with initial condition $\vec{x}(0)=\vec{\zeta}^\textrm{flat}$. Then, for any finite subset $S$ of $\Z$, it holds
\begin{equation}
\Pb\left(\bigcap_{n\in S} \{x_n(t)\leq a_k\}\right)=\det(\Id-P_a K_t^\textrm{flat} P_a)_{L^2(\R\times S)},
\end{equation}
where $P_a(x,k)=\Id_{(a_k,\infty)}(x)$ and the kernel $K_t^\textrm{flat}$ is given by
\begin{equation}\label{eqKtflat}
\begin{aligned}
K_t^\textrm{flat}&(x_1,n_1;x_2,n_2)=-\frac{(x_2-x_1)^{n_2-n_1-1}}{(n_2-n_1-1)!}\Id(x_2\geq x_1)\Id(n_2>n_1)\\
&+\frac{1}{2\pi\mathrm{i}} \int_{\Gamma_-} \mathrm{d} z\frac{e^{t z^2/2} e^{z x_1}(-z)^{n_1}}{e^{t \varphi(z)^2/2} e^{\varphi(z) x_2} (-\varphi(z))^{n_2}}.
\end{aligned}
\end{equation}
Here $\Gamma_-$ is any path going from $\infty e^{-\theta \mathrm{i}}$ to $\infty e^{\theta \mathrm{i}}$ with $\theta\in [\pi/2,3\pi/4)$, crossing the real axis to the left of $-1$, and such that the function
\begin{equation}
\varphi(z)=L_0(z e^{z})
\end{equation}
is continuous and bounded. Here $L_0$ is the Lambert-W function, \textit{i.e.}, the principal solution for $w$ in $z=w e^w$, see Figure~\ref{FigContoursProp}.
\end{prop}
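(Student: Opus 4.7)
The plan is to mimic the proof of Proposition~\ref{propStepKernel} for the packed case, replacing the trivial initial data $\vec{\zeta}=0$ by $\zeta_n=n$. First I would truncate the infinite system to indices $-M,\ldots,N$, so that Proposition~\ref{PropWarren} applies (with zero drift) and yields an explicit determinantal transition density $\det_{1\le k,l\le N+M+1}[F_{k-l}(\xi_{N+1-l}-\zeta_{N+1-k},t)]$. Using Proposition~\ref{propMconv} and the exponential bounds from the packed analysis (Propositions~\ref{propStepK0Bound}--\ref{propStepPhiBound}, which control the tail of the Fredholm expansion uniformly in the truncation), one can then pass to the limit $M,N\to\infty$ at the end. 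The flatness $\zeta_n=n$ is admissible, so no new well-posedness issue arises.

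Next I would introduce virtual variables $\{\xi^k_l\}_{2\le l\le k\le N}$ exactly as in Section~\ref{secStep}, using the antisymmetrization Lemma~\ref{AppLemma3.3} to rewrite the measure as the product of determinants needed by Lemma~\ref{lemDetMeasure}. The convolution kernels $\phi_n$ and the composed $\phi_{m,n}(x,y)=(y-x)^{n-m-1}/(n-m-1)!\,\Id_{x<y}$ are unchanged from the packed case, since they encode only the dynamics, not the initial condition. Thus the only novelty lies in the identification of the biorthogonal functions $\Psi^n_{n-j}$ and $\Phi^n_{n-j}$.

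The $\Psi^n_{n-j}$ are forced by the data: they are obtained by folding the periodic initial condition into $F_{k-l}$, which after rearranging produces a contour-integral representation whose $k$-dependence appears in the combination $(we^w)^k$ rather than $w^k$. This is the clue that the Lambert-W function is natural: letting $\varphi(z)=L_0(ze^z)$, one has the pivotal identity $\varphi(z)e^{\varphi(z)}=ze^z$, which lets one re-express sums over the periodic labels as an integral in a new variable along the image of $\varphi$. I would then define $\Phi^n_{n-j}$ as a contour integral in $z$ along $\Gamma_-$ involving $e^{-t\varphi(z)^2/2-x\varphi(z)}(-\varphi(z))^{-(n-j+1)}$, and verify the orthogonality $\int \Psi^n_{n-i}\Phi^n_{n-j}=\delta_{i,j}$ by the same split-and-deform argument used in the packed case: the residue at $w=\varphi(z)$ yields a Kronecker delta after a single residue in $z$ at $0$. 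Once orthogonality holds, formula~\eqref{SasK} of Lemma~\ref{lemDetMeasure} assembles into the kernel~\eqref{eqKtflat}.

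The hard part will be the contour: one has to choose $\Gamma_-$ so that $\varphi=L_0(ze^z)$ is continuous and bounded along it, avoiding the branch cut of the principal Lambert branch and ensuring the relevant poles of $(w-\varphi(z))^{-1}$ lie on the correct side for the deformations used in the orthogonality check. Crossing the real axis strictly to the left of $-1$, as specified in the statement, is exactly what is needed to keep $\varphi$ on the principal sheet. Once the contour is in place, the remaining estimates for passing $M,N\to\infty$ in the Fredholm determinant follow the template of Section~\ref{secStepAsy}, so the genuine work is the algebraic step involving $\varphi$.
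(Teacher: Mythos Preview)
Your overall plan---truncate, apply Proposition~\ref{PropWarren}, extend to a Gelfand--Tsetlin array, invoke Lemma~\ref{lemDetMeasure}, then remove the truncation---matches the paper. The gap is in \emph{where} you insert the Lambert-W function. You propose to define $\Phi^n_{n-j}$ as an integral along the unbounded contour $\Gamma_-$ involving $\varphi$, but Lemma~\ref{lemDetMeasure} requires the $\Phi^n_{n-j}$ to span the finite-dimensional space $V_n$, which for zero drifts is a space of polynomials in $\xi$ of degree $<n$. An integral of $e^{-\xi\varphi(z)}$ along $\Gamma_-$ is not a polynomial, so your $\Phi$'s do not lie in $V_n$ and the hypothesis of the lemma fails. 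Correspondingly, in the orthogonality computation the $\xi$-integral of $\Psi\Phi$ produces a pole at $w=z$, exactly as in the packed case, not at $w=\varphi(z)$; nothing in the finite-truncation algebra generates that pole.

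The paper's route keeps the $\Phi$'s as genuine polynomials via small loops $\oint_{\Gamma_0}$: one first works with the half-flat system, taking $\Phi^n_{n-k}(\xi)\propto\oint_{\Gamma_0}\mathrm{d} z\,e^{-tz^2/2-z(\xi-k)}(1+z)z^{-(n-k+1)}$, checks orthogonality exactly as in Section~\ref{secStep}, and sums~\eqref{SasK} into a \emph{double} contour integral with pole factor $(1+z)e^z/(we^w-ze^z)$ via the geometric series in $ze^z/we^w$. Only \emph{after} the translation limit $x_n(t)=\lim_{M\to\infty}\big(x^{\hf}_{n+M}(t)-M\big)$ does the kernel pick up the extra factor $(we^w/ze^z)^{n_2}$, and only then is the $w$-integral evaluated by the residue at $we^w=ze^z$, i.e.\ at $w=\varphi(z)$, collapsing to the single integral~\eqref{eqKtflat}. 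In short, the Lambert-W function is a post-limit simplification of the kernel, not an ingredient of the finite-system biorthogonal functions.
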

\begin{figure}
\begin{center}
\includegraphics[height=5cm]{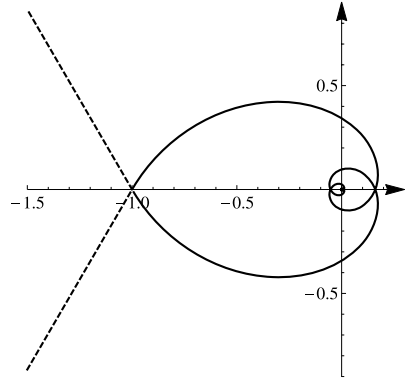}
\caption{A possible choice for the contour $\Gamma_-$ (dashed line) and its image by $\varphi$ (solid line).}
\label{FigContoursProp}
\end{center}
\end{figure}

Interesting and quite unexpected is the appearance of the Lambert function, defined as the multi-valued inverse of the function $z\mapsto ze^z$. It has a branch structure similar to the logarithm, but slightly more complicated. The Lambert function is of use in many different areas like combinatorics, exponential towers, delay-differential equations~\cite{Che02} and several problems from physics~\cite{BJ00,JK04,CJV00}. This function has been studied in detail, \textit{e.g.}  see~\cite{BPLPCS00,CJK97,CGHJ93}, with~\cite{CGHJK96} the standard reference. However, the specific behaviour needed for  our asymptotic analysis does not seem to be covered in the literature.

The proof of Proposition~\ref{propFlat} proceeds in several steps. The first step develops a determinantal for the half-periodic system $\{x^\hf_n(t),n\geq1\}$ with initial condition $\vec{x}^\shf(0)=\vec{\zeta}^\shf$ (see Section~\ref{secHF}), in a similar way as in the proof of Proposition~\ref{propStepKernel} but with a slightly more complex choice of the orthogonal polynomials. This system is subsequently scaled to the full periodic case by
\begin{equation}
 x_n(t)=\lim_{M\to\infty} \big(x^\hf_{n+M}(t)-M\big).
\end{equation}
The resulting kernel has a main part that is given by a double contour integral,
\begin{equation}
 -\frac{1}{(2\pi\mathrm{i})^2} \oint_{\Gamma_0}\mathrm{d} w \int_{\Gamma_-} \mathrm{d} z\frac{e^{t z^2/2} e^{z x_1}(-z)^{n_1}}{e^{t w^2/2} e^{w x_2} (-w)^{n_2}}\frac{(1+ w) e^{ w}}{z e^{ z}-w e^{ w}}\left(\frac{w e^{ w}}{z e^{ z}}\right)^{n_2},
\end{equation}
which can then be simplified to the form of \eqref{eqKtflat} using the Lambert-W function.
\subsection{Asymptotics}
The second main result contains a characterization of the law for the positions of the interacting Brownian motions in the large time limit. Due to the asymmetric reflections, the particles have an average velocity $1$, so that the macroscopic position of $x_{\lfloor t\rfloor}(t)$ is around $2t$. For large time $t$ the KPZ scaling theory predicts the positional fluctuations relative to the characteristic to be of order $t^{1/3}$. Nontrivial correlations between particles occur if the particle indices are of order $t^{2/3}$ apart from each other. The scaling is thus the same as the one used in the packed initial conditions case.
\begin{thm}\label{thmAsympFixedTime}
With $\{x_n(t),n\in\Z\}$ being the system of one-sided reflected Brownian motions with initial condition $\vec{x}(0)=\zeta^\textrm{flat}$, define the rescaled process
\begin{equation}\label{defXt}
	r\mapsto X_t^\textrm{flat}(r)=t^{-1/3}\left(x_{\lfloor t+2t^{2/3}r\rfloor}(t)-2t-2t^{2/3}r\right).
\end{equation}
In the sense of finite-dimensional distributions,
\begin{equation}
	\lim_{t\to\infty}X_t^\textrm{flat}(r) = 2^{1/3}\mathcal{A}_1(2^{-2/3}r).
\end{equation}
\end{thm}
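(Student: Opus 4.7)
The plan is to follow the same template used for the packed case in the proof of Theorem~\ref{thmAsympStep}. Starting from the Fredholm determinant expression for the joint distribution provided by Proposition~\ref{propFlat}, I would write
\begin{equation*}
\Pb\Big(\bigcap_{k=1}^{m}\{X_t^\textrm{flat}(r_k)\leq s_k\}\Big) = \sum_{N\geq 0}\frac{(-1)^N}{N!}\sum_{i_1,\dots,i_N=1}^{m}\int \prod_k \mathrm{d}\sigma_k\,\Id_{\sigma_k>s_{i_k}}\det\big[K_t^{\textrm{flat,resc}}(r_k,\sigma_k;r_l,\sigma_l)\big],
\end{equation*}
where $K_t^{\textrm{flat,resc}}$ is obtained from $K_t^\textrm{flat}$ by the change of variables $n_i=t+2t^{2/3}r_i$, $x_i=2t+2t^{2/3}r_i+t^{1/3}s_i$, multiplication by the Jacobian $t^{1/3}$, and a suitable conjugation (both by $e^{\xi_1-\xi_2}$ and by a factor depending on $1+\sigma^2$ to tame the $\phi$ term, exactly as in \eqref{eqStepFredExp2}). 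The limit is then reduced to showing (i) pointwise convergence $K_t^{\textrm{flat,resc}}\to 2^{1/3}K_{\mathcal{A}_1}(2^{-2/3}\cdot)$ and (ii) a uniform integrable upper bound permitting dominated convergence in the series.

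The $-\phi_{n_1,n_2}$ term in \eqref{eqKtflat} is algebraically identical to the one treated in Proposition~\ref{propStepPointw}, so after the same contour deformation and dominated convergence argument it converges to $-V_{r_1,r_2}(s_1,s_2)\Id_{r_1<r_2}$ with the same exponential bound as in Proposition~\ref{propStepPhiBound}. The main part of the analysis is the double contour integral. Writing
\begin{equation*}
tz^2/2+zx_1+n_1\log(-z)-tw^2/2-wx_2-n_2\log(-w)\Big|_{w=\varphi(z)}
\end{equation*}
and using $t$, $2t$, $t^{2/3}$, $t^{1/3}$ to separate scales, I would perform a steep descent analysis whose critical point in the $z$ variable is $z_c=-1$, where $z\mapsto ze^z$ has a critical point and $\varphi$ meets the branch point of $L_0$. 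The scaling $z=-1+t^{-1/3}Z$ (and correspondingly $\varphi(z)=-1+t^{-1/3}W(Z)$, obtained by inverting the branch locally) converts the cubic in the exponent into $Z^3/3+r_1Z^2-s_1Z$ and $W^3/3+r_2W^2-s_2W$, and the local expansion of $\varphi$ yields the relation $W=-Z+\Or(t^{-1/3})$. After this substitution the pre-factor $1/(ze^z-\varphi(z)e^{\varphi(z)})$ behaves like $1/(Z^2-W^2)$, leaving an integral of Airy$_1$-kernel type (with a factor of $2$ from $Z^2-(-Z)^2$ after the Taylor expansion) and, after the standard rescaling $s\leftrightarrow 2^{-2/3}s$, $r\leftrightarrow 2^{-2/3}r$, the exact form of $K_{\mathcal{A}_1}$.

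For the uniform bounds I would use steep-descent contours analogous to those of Lemma~\ref{lemAlphaBound}: a contour $\Gamma_-$ passing through a point of the form $-1-\omega$ with $\omega=\min\{t^{-1/3}\sqrt{s_1},\e\}$ deformed into the image $\varphi(\Gamma_-)$. On such contours $\Re[tz^2/2+zx]$ decays quadratically away from the critical point (as verified by the same elementary calculus used for $f_3$ in Lemma~\ref{lemAlphaLimit}), the Lambert function is bounded along the chosen branch, and the same argument used to obtain the $c_L e^{-s}$ bound for $\beta_t$ yields a pointwise estimate of the form $|K_t^{\textrm{flat,resc}}(r_1,s_1;r_2,s_2)|\leq C e^{-(s_1+s_2)}$ for large $s_1,s_2$, matching the decay of the Airy$_1$ kernel. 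Combined with the $\phi$-bound and the Hadamard inequality, this gives a summable bound exactly as in \eqref{eqStepFredExp2}, and dominated convergence finishes the proof.

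The principal obstacle is the steep descent through the doubly critical point $z_c=-1$: unlike the packed case, the contour must be chosen compatibly with the single-valued branch $\varphi=L_0(ze^z)$, and one has to verify that the image contour $\varphi(\Gamma_-)$ is itself steep descent for the $w$-integration while still giving a convergent (and $t$-uniformly bounded) integrand. Once $\varphi$ is controlled in a neighbourhood of $z=-1$ via its local expansion $\varphi(-1+t^{-1/3}Z)=-1-t^{-1/3}Z+\Or(t^{-2/3}Z^2)$ and the contours chosen as in~\cite{FSW13}, the remaining computations are routine residue and Airy-integral manipulations.
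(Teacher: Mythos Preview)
Your high-level strategy is correct and coincides with the paper's (which in turn defers the details to~\cite{FSW13}): rescale and conjugate the kernel of Proposition~\ref{propFlat}, prove pointwise convergence to the Airy$_1$ kernel by steep descent at the doubly critical point $-1$, and combine with uniform exponential bounds and Hadamard's inequality to pass to the limit in the Fredholm series. The $\phi$-part is indeed identical to the packed case.

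However, your account of the main term is confused in a way that would derail the computation. The kernel in \eqref{eqKtflat} is a \emph{single} contour integral in $z$; the variable $w$ has already been eliminated by the substitution $w=\varphi(z)$, and there is no separate $w$-integration left. In particular, the ``prefactor $1/(ze^z-\varphi(z)e^{\varphi(z)})$'' that you invoke is identically zero, since $\varphi(z)e^{\varphi(z)}=ze^z$ by the very definition of $\varphi$, and the expression ``a factor of $2$ from $Z^2-(-Z)^2$'' is likewise zero. You appear to be importing the analysis of the half-flat kernel $\mathcal{K}_1$ from Section~\ref{secHF}, which \emph{is} a genuine double integral with pole $1/(we^w-ze^z)$ and limits to $2Z/(W^2-Z^2)$; that produces $K_{\mathcal{A}_{2\to1}}$, not $K_{\mathcal{A}_1}$.

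For the flat kernel the correct mechanism is different: after $z=-1+t^{-1/3}Z$ one has $\varphi(z)=-1-t^{-1/3}Z+\Or(t^{-2/3})$, and the two cubic exponents \emph{add} rather than subtract, yielding a single contour integral of the form $\int e^{2Z^3/3+\cdots}\,\mathrm{d} Z$; this produces a single Airy function $\Ai(s_1+s_2+\cdots)$ (with the $2^{1/3}$ rescaling), matching the structure of $K_{\mathcal{A}_1}$, which contains one Airy function and not an integrated product of two. The delicate point, as the paper notes, is establishing steep descent for $\Gamma_-$ and controlling $\varphi$ along it; this is done via the differential identity $L_0'(z)=L_0(z)/[z(1+L_0(z))]$ and branch information, cf.\ the computations around \eqref{eqLambDesc}.
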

The proof of this theorem relies on the usual asymptotic analysis of the kernel via steep descent. However, rather unconventional claims about the behaviour of the Lambert-W function $L_k(z)$ are necessary, which do not seem to be covered in the literature. For a special choice of the integration contour,  these claims can be derived from the differential identity
\begin{equation}
 L'_k(z)=\frac{L_k(z)}{z(1+L_k(z))}
\end{equation}
as well as some information about the branch structure (see also the proof of Proposition~\ref{propK1Pointw}).
\begin{remark}
 Due to the translational invariance, other choices of the rescaled process are possible, for which the asymptotic theorem still holds. For example, one could choose
 \begin{equation}
  	r\mapsto \bar{X}_t^\textrm{flat}(r)=t^{-1/3}\left(x_{\lfloor -t+2t^{2/3}r\rfloor}(t)-2t^{2/3}r\right),
 \end{equation}
which is the way it has been proven in~\cite{FSW13} (up to constant factors).
\end{remark}

\chapter{Stationary initial conditions}\label{secPoi}
As stated more precisely in Proposition~\ref{propBM}, stationary initial conditions are given through a Poisson point process with uniform density.
Our tools can be adjusted to also cover such random initial data. The novel difficulty comes from the determinantal expressions being seemingly  ill-defined. This forces a more elaborate approximation scheme, in which one starts from a Poisson point process, which is uniform except for a jump at $0$.

\section{Poisson initial conditions}\label{secPoi1}
In the trio of basic initial conditions for KPZ we turn to the stationary initial condition as given by a Poisson point process. The final result is again an asymptotic theorem.
\begin{thm}\label{thmAsymp0}
With $\{x_n(t),n\in\Z\}$ being the system of one-sided reflected Brownian motions with initial condition $\vec{x}(0)=\vec{\zeta}^\textrm{stat}(1,1)$, define the rescaled process
\begin{equation}\label{eqScaledProcessOriginal}
	r\mapsto X_t^\textrm{stat}(r) = t^{-1/3}\big(x_{\lfloor t+2rt^{2/3}\rfloor}(t)-2t-2rt^{2/3} \big).\,
\end{equation}
In the sense of finite-dimensional distributions,
\begin{equation}\label{eqX0limit}
	\lim_{t\to\infty}X_t^\textrm{stat}(r)\stackrel{d}{=}\mathcal{A}_\mathrm{stat}(r).
\end{equation}
\end{thm}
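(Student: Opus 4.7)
The approach follows the strategy outlined at the end of Chapter~\ref{secDPP}: first work with the two-parameter Poisson initial condition $\vec{\zeta}^\textrm{stat}(\lambda,\rho)$ with $\rho<\lambda=1$ so that a clean determinantal formula is available, then pass to $\rho\to 1$ by analytic continuation, and finally argue via Proposition~\ref{propBoundMod} that fixing $\zeta_0=0$ is irrelevant under the scaling. By Proposition~\ref{propBM} and Remark~\ref{RemarkBurke}, conditioning on the left half-line reduces the problem to a half-infinite system $\{x_n(t),n\geq 0\}$ in which $x_0$ is a standard Brownian motion with drift $\rho$ and the gaps $\zeta_n-\zeta_{n-1}$ for $n\geq 1$ are i.i.d.\ $\textrm{Exp}(\lambda)$. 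For this system the transition density is given by Proposition~\ref{PropWarren} with $\vec{\mu}=(\rho,0,\ldots,0)$, and after integrating against the exponential initial distribution the resulting joint density has precisely the product-of-determinants form required by Lemma~\ref{lemDetMeasure} (once virtual particles $x_k^n$, $k<n$, are adjoined). This yields a Fredholm determinant expression on $L^2(\{n_1,\ldots,n_m\}\times\R)$ for $\Pb\!\big(\bigcap_{k=1}^m\{x_{n_k}(t)\leq a_k\}\big)$, with a correlation kernel carrying an explicit $\rho$-dependent rank-one perturbation coming from the drifted leftmost walker.

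Next I would perform a steep descent analysis under the scaling (\ref{eqScaledProcessOriginal}). The double critical point is at $w=z=-1$, exactly as in the packed case. Writing the kernel through the auxiliary functions $\alpha_t,\beta_t$ of Section~\ref{secStepAsy} and their drift-modified analogues, Lemma~\ref{lemAlphaLimit} produces the Airy$_2$ building blocks, while the rank-one contribution converges to the functions $f_{r_1},g_{r_2}$ of (\ref{eqKernelDef}) provided the drift is tuned critically as $\rho = 1-\delta t^{-1/3}$ for fixed $\delta>0$; this is the natural KPZ scaling at which the Brownian left-boundary perturbation remains visible in the limit. Pointwise convergence of the full rescaled kernel to $K^\delta$ of (\ref{eqKdelta}) then follows, and uniform exponential bounds in the style of Corollary~\ref{corStepBound} and Propositions~\ref{propStepK0Bound}, \ref{propStepPhiBound} (using Lemma~\ref{lemAlphaBound}) permit dominated convergence in the Fredholm expansion, yielding
\begin{equation}
\lim_{t\to\infty}\Pb\!\Big(\bigcap_{k=1}^m\{X_t^\textrm{stat}(r_k)\leq s_k\}\Big) \;=\; \Pb\!\Big(\bigcap_{k=1}^m\{\mathcal{A}_\textrm{stat}^{(\delta)}(r_k)\leq s_k\}\Big)
\end{equation}
for every $\delta>0$; the characteristic $(1+\delta^{-1}\sum_i\partial_{s_i})$ prefactor in Definition~\ref{DefAiryStat} is produced by a discrete derivative in the prelimit that arises from the conditioning on the first gap $\zeta_1$.

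The final step is the limit $\delta\to 0$, equivalently $\rho\to\lambda$. This is delicate because the determinantal representation of Lemma~\ref{lemDetMeasure} degenerates at $\rho=\lambda$ (the normalization $Z_N$ vanishes in the formal product form), so one cannot simply set $\rho=1$. Instead, one views $\delta\mapsto\Pb(\bigcap_k\{\mathcal{A}_\textrm{stat}^{(\delta)}(r_k)\leq s_k\})$ as an analytic function of $\delta$ on a complex neighbourhood of the positive axis, together with its limiting object at $\delta=0$, and uses Proposition~\ref{propAnalyt} of Section~\ref{SectAnCont} (whose setup is precisely tailored to resolving this $0/0$ in the kernel plus prefactor) to identify $\mathcal{A}_\textrm{stat}^{(0)}=\mathcal{A}_\textrm{stat}$. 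Combined with the previous step this gives the claimed convergence for the initial condition $\vec{\zeta}^\textrm{stat}(1,1)$ anchored at $\zeta_0=0$. Translating back to the truly stationary (translation-invariant) Poisson process requires Proposition~\ref{propBoundMod}, which shows that the positions of the two coupled systems differ by $O(1)$ uniformly on $[0,T]$; under the $t^{-1/3}$ scaling this vanishes, so the limiting finite-dimensional distributions coincide.

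The main obstacle will be this $\rho\to 1$ analytic continuation: one must simultaneously control an integrable singularity in the $\delta$-dependent kernel $K^\delta$ and a compensating $\delta^{-1}$ prefactor, uniformly in the locations $(r_k,s_k)$, and justify the exchange of $t\to\infty$ with the $\delta\to 0$ continuation via uniform-in-$\delta$ exponential tail bounds on the prelimit kernels. Everything else is an adaptation of the techniques already developed for packed and periodic data in Chapter~\ref{secPP}; the steep descent contours, the use of $\alpha_t,\beta_t$, and the Hadamard bound argument carry over with only bookkeeping changes to accommodate the drift $\rho$ and the additional rank-one term.
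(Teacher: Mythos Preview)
Your outline correctly identifies most of the architecture — Proposition~\ref{propKernel} for the determinantal formula at $\rho<\lambda$, the critical scaling $\rho=1-\delta t^{-1/3}$, steep descent via $\alpha_t,\beta_t$ to obtain $K^\delta$, and Proposition~\ref{propAnalyt} for the analytic continuation of the \emph{limiting} object to $\delta=0$. There is, however, a genuine gap in the step linking $X_t^{(\delta)}$ to $X_t^\textrm{stat}$.

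First a small but telling slip: the displayed limit you write with $X_t^\textrm{stat}$ on the left and $\mathcal{A}_\textrm{stat}^{(\delta)}$ on the right cannot hold, since the left side does not depend on $\delta$. The kernel analysis only ever produces
\[
\lim_{t\to\infty}\Pb\Big(\bigcap_k\{X_t^{(\delta)}(r_k)\leq s_k\}\Big)=\Pb\Big(\bigcap_k\{\mathcal{A}_\textrm{stat}^{(\delta)}(r_k)\leq s_k\}\Big),
\]
i.e.\ Theorem~\ref{thmAsymp}. You still need to go from $X_t^{(\delta)}$ to $X_t^\textrm{stat}=X_t^{(0)}$.

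Your plan for this — ``uniform-in-$\delta$ exponential tail bounds on the prelimit kernels'' — does not work, and the paper says so explicitly: the limits $t\to\infty$ and $\delta\downarrow 0$ do not commute at the level of kernels. Concretely, the only bound available on the rank-one part (Proposition~\ref{propK0Bound}) is $\mathrm{const}\cdot e^{-\min\{\delta,1\}s_2}$, whose decay rate collapses as $\delta\to 0$; you cannot dominate the Fredholm expansion uniformly in $\delta$. Proposition~\ref{propAnalyt} handles the \emph{post-limit} $\delta\to 0$ continuation of $\mathcal{A}_\textrm{stat}^{(\delta)}$, but it says nothing about the prelimit.

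The missing ingredient is a probabilistic coupling, not an analytic one. Via the last passage percolation representation the paper couples the $\rho$-systems on the same noise to get the sandwich
\[
X_t^{(\delta)}(r)\leq X_t^\textrm{stat}(r)\leq X_t^{(\delta)}(r)+\delta\, t^{-2/3}Z_{\lfloor t+2rt^{2/3}\rfloor}(t),
\]
where $Z_n(t)$ is the exit point of the maximizer from the boundary. The whole difficulty is then concentrated in Proposition~\ref{propExPoint}, which shows $Z_n(t)=O_\Pb(t^{2/3})$ via a comparison with a spiked GUE ensemble (Lemma~\ref{lemSpike}). Only with this exit-point tightness can you send $\delta\to 0$ \emph{after} $t\to\infty$ and combine with Proposition~\ref{propAnalyt}. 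Your proposal does not mention the exit point at all; without it the interchange of limits is unjustified.
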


A crucial step towards this theorem is a determinantal formula for the fixed time distribution of the process under the initial condition with two different densities on $\R_+$ and $\R_-$:

\begin{prop}\label{propKernel}
Let $\{x_n(t),n\in\Z\}$ be the system of one-sided reflected Brownian motions with initial condition $\vec{x}(0)=\zeta^\textrm{stat}(\lambda,\rho)$ for any $\lambda>\rho>0$. For any finite subset $S$ of $\Z_{\geq0}$, it holds
\begin{equation}\label{eq33}
\Pb\bigg(\bigcap_{n\in S} \{x_n(t)\leq a_n\}\bigg)=\bigg(1+\frac{1}{\lambda-\rho}\sum_{n\in S}\frac{\D}{\mathrm{d} a_n}\bigg)\det(\Id-\chi_a \mathcal{K}_\textrm{stat} \chi_a)_{L^2(S\times \R)},
\end{equation}
where $\chi_a(n,\xi)=\Id_{\xi>a_n}$. The kernel $\mathcal{K}_\textrm{stat}$ is given by
\begin{equation}\label{eqKtStat}
\begin{aligned}	\mathcal{K}_\textrm{stat}(n_1,\xi_1;n_2,\xi_2)&=-\phi_{n_1,n_2}(\xi_1,\xi_2)\Id_{n_2>n_1}+\mathcal{K}_0(n_1,\xi_1;n_2,\xi_2)\\&\quad+(\lambda-\rho)\mathpzc{f}(n_1,\xi_1)\mathpzc{g}(n_2,\xi_2).
\end{aligned}
\end{equation}
where
\begin{equation}\begin{aligned}
	\phi_{0,n_2}(\xi_1,\xi_2)&=\rho^{-n_2}e^{\rho \xi_2},&\quad& \text{for } n_2\geq0,\\
	\phi_{n_1,n_2}(\xi_1,\xi_2)&=\frac{(\xi_2-\xi_1)^{n_2-n_1-1}}{(n_2-n_1-1)!}\Id_{\xi_1\leq \xi_2},&& \text{for } 1\leq n_1<n_2,
	\end{aligned}\end{equation}
and	
	\begin{equation}\begin{aligned}	\mathcal{K}_0(n_1,\xi_1;n_2,\xi_2)&=\frac{1}{(2\pi\mathrm{i})^2}\int_{\mathrm{i}\R-\e}\mathrm{d} w\,\oint_{\Gamma_0}\mathrm{d} z\,\frac{e^{{t} w^2/2+\xi_1w}}{e^{{t} z^2/2+\xi_2 z}}\frac{(-w)^{n_1}}{(-z)^{n_2}}\frac{1}{w-z},\\
	\mathpzc{f}(n_1,\xi_1)&=\frac{1}{2\pi\mathrm{i}}\int_{\mathrm{i}\R-\e}\mathrm{d} w\,\frac{e^{{t} w^2/2+\xi_1w}(-w)^{n_1}}{w+\lambda},\\
	\mathpzc{g}(n_2,\xi_2)&=\frac{1}{2\pi\mathrm{i}}\oint_{\Gamma_{0,-\rho}}\mathrm{d} z\,\frac{e^{-{t} z^2/2-\xi_2 z}(-z)^{-n_2}}{z+\rho},
\end{aligned}\end{equation}
for any fixed $0<\e<\lambda$.
\end{prop}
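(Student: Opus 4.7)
My approach follows the blueprint of the packed case (proof of Proposition~\ref{propStepKernel}), adapted to handle the two novelties of the stationary setting: a drift $\rho$ on the leftmost particle and the random exponentially distributed initial gaps on $\Z_{\geq 1}$. The argument naturally splits into two parts: (i) deriving the bulk Fredholm-determinant expression with the kernel \eqref{eqKtStat}; (ii) establishing the derivative prefactor $1+\frac{1}{\lambda-\rho}\sum_{n\in S}\partial_{a_n}$ in front. Part (i) is a direct but bookkeeping-heavy generalization of the packed case; part (ii) follows the strategy developed for stationary TASEP in~\cite{BFP09}.

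\textbf{Reduction and joint density.} By Proposition~\ref{propBM} and Remark~\ref{RemarkBurke}, it is enough to study the half-infinite system in which $x_0(t)$ is a Brownian motion with drift $\rho$ starting at $\zeta_0=0$, while for $n\geq 1$ the initial positions $\zeta_n$ form the cumulative sums of i.i.d.\ $\textrm{Exp}(\lambda)$ gaps. Truncating to $N+1$ particles with $N\geq\max S$, the transition density from $\vec{\zeta}$ to $\vec{\xi}$ is given by Proposition~\ref{PropWarren} with $\vec{\mu}=(\rho,0,\ldots,0)$. Multiplying by the joint density $\lambda^{N}e^{-\lambda\zeta_{N}}\Id_{W^{N}}$ of the initial configuration (with $\zeta_0=0$ fixed) and integrating out $\zeta_1,\ldots,\zeta_N$ yields the joint density of $\vec{x}(t)$. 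Each integration can be done inside the determinant by iterating the primitive identity $F_{k,l}(\xi,t)=\int_{-\infty}^{\xi}F_{k,l-1}(x,t)\,dx$ from \eqref{eqFkl} (equivalently, shifting the $l$-index of the corresponding column); the exponential factors introduce a pole at $w=-\lambda$ in the resulting contour-integral entries.

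\textbf{Determinantal form.} After the integration, the fixed-time density is cast into the product-of-determinants form \eqref{Sasweight} by introducing virtual variables exactly as in the proof of Proposition~\ref{propStepKernel} (Lemma~\ref{AppLemma3.3}). The transition kernels $\phi_n(x,y)=\Id_{x\leq y}$ for $n\geq 2$ are unchanged from the packed case, while $\phi_1$ absorbs the $\lambda$-dependence from the exponential gap density and $\phi_0$ encodes the drift $\rho$ on $x_0$; the latter directly produces the kernel entry $\phi_{0,n_2}(\xi_1,\xi_2)=\rho^{-n_2}e^{\rho\xi_2}$. Lemma~\ref{lemDetMeasure} then applies, with biorthogonal functions $\Psi^n_{n-k}$ and $\Phi^n_{n-k}$ built as contour integrals carrying the poles at $0$, $-\lambda$, and $-\rho$; biorthogonality \eqref{Sasortho} is verified by a residue computation analogous to the one around \eqref{eqS2.41}. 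Summing $\sum_{k\geq 1}\Psi^{n_1}_{n_1-k}(\xi_1)\Phi^{n_2}_{n_2-k}(\xi_2)$ via a geometric-series identity reproduces $\mathcal{K}_0$, while the additional pole of $\Phi$ at $-\rho$ generates the rank-one correction $(\lambda-\rho)\,\mathpzc{f}(n_1,\xi_1)\mathpzc{g}(n_2,\xi_2)$; the factor $\lambda-\rho$ arises as the relevant residue, which also explains the hypothesis $\lambda>\rho$.

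\textbf{Derivative prefactor and main obstacle.} A short computation shows that $\det(\Id-\chi_a\mathcal{K}_\textrm{stat}\chi_a)$ does not equal the desired probability $f(\vec a):=\Pb(\cap_{n\in S}\{x_n(t)\leq a_n\})$ directly; rather, writing $D(\vec a)$ for this Fredholm determinant, one finds by tracking normalizations through Step 2 that $D(\vec a)$ corresponds to a modified system in which $\zeta_0$ itself is drawn from an $\textrm{Exp}(\lambda-\rho)$ distribution. Translational invariance then gives $D(\vec a)=\int_0^\infty(\lambda-\rho)e^{-(\lambda-\rho)u}f(\vec a-u\vec 1)\,du$. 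A single integration by parts, using $\sum_{n\in S}\partial_{a_n}f(\vec a-u\vec 1)=-\frac{d}{du}f(\vec a-u\vec 1)$, yields the identity $\bigl(1+\frac{1}{\lambda-\rho}\sum_{n\in S}\partial_{a_n}\bigr)D(\vec a)=f(\vec a)$, which is precisely \eqref{eq33}. The main technical hurdles I expect are: (a) correctly identifying the normalization constants produced by the virtual-variable step so that the Fredholm determinant really does compute $D(\vec a)$; and (b) justifying the $N\to\infty$ truncation limit via exponential decay estimates for the kernel entries along the lines of Lemma~\ref{lemFkl}.
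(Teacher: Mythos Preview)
Your overall architecture matches the paper's proof exactly: reduce to the half-infinite system via Remark~\ref{RemarkBurke}, integrate the transition density of Proposition~\ref{PropWarren} against the exponential initial data, cast into the form of Lemma~\ref{lemDetMeasure}, and then invoke the shift argument to pass from the auxiliary measure $\Pb_+$ to $\Pb$. Your Step~(ii) is essentially verbatim the paper's Step~2.

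The gap is in your ``Joint density'' step. The claim that the integration over the ordered $\zeta_1,\dots,\zeta_N$ can be done ``inside the determinant by iterating the primitive identity $F_{k,l}=\int^{\xi}_{-\infty}F_{k,l-1}$'' is not correct: that identity shifts the column index when integrating in the \emph{argument} $\xi$, but here you are integrating over $\zeta_{N-k}$, which sits in \emph{row} $k$, over a Weyl chamber with weight $e^{-\lambda\zeta_{N-1}}$. After the first such integration the new row depends on the next $\zeta$-variable, so the rows become coupled and the determinant structure is lost. What actually happens (this is the content of Proposition~\ref{PropRandomIC} and its proof) is that one must expand the determinant, carry out the nested Weyl-chamber integral to obtain $\prod_{k=1}^{N}\frac{1}{w_{\sigma(1)}+\dots+w_{\sigma(k)}+\lambda}$, and then invoke the nontrivial algebraic identity of Lemma~\ref{lemDetIdent} to recollapse this into $\det\big[\frac{w_l^{k}}{w_l+\lambda}\big]$. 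You make no mention of this lemma, and without it the argument stalls.

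Relatedly, the order in which you introduce $\Pb_+$ is backwards. If you fix $\zeta_0=0$ and integrate only $\zeta_1,\dots,\zeta_N$, the nested integral produces $\prod_{k=1}^{N-1}\frac{1}{\cdots}$ rather than $\prod_{k=1}^{N}$, leaving an extra symmetric factor $(w_1+\dots+w_N+\lambda)$ that obstructs the determinantal form entirely. The whole point of randomizing $\zeta_0\sim\exp(\lambda-\rho)$ \emph{before} integrating is that the drift prefactor $e^{-\rho\zeta_0}$ from Proposition~\ref{PropWarren} combines with the initial density to give a clean $e^{-\lambda\zeta_{N-1}}$ weight over the full chamber $\{0<\zeta_0<\dots<\zeta_{N-1}\}$, so that Lemma~\ref{lemDetIdent} applies. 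Only then does Lemma~\ref{lemDetMeasure} yield the Fredholm determinant, which is the $\Pb_+$-probability; the shift argument then recovers $\Pb$. (A minor point: in the resulting $L$-ensemble it is $\tilde\phi_1(x,y)=e^{\rho y}$ that carries the $\rho$-dependence from the drift, while the $\lambda$-pole sits in the $\widetilde F_{k}$, i.e.\ in the $\Psi$-functions --- opposite to what you wrote.)
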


Notice that this result holds for $\lambda>\rho$ only and not for the most interesting case $\lambda=\rho$. The latter can be accessed through a careful analytic continuation of the formula \eqref{eq33}. One of the novelties of this work is to perform the analytic continuation \emph{after} the scaling limit. This allows us to discover a new process, called \emph{finite-step Airy$_\textrm{{stat}}$ process},  describing the large time limit close to stationarity.

In general, the limits $t\to\infty$ and $\lambda-\rho\downarrow 0$ do not commute. Therefore we have to consider $\lambda-\rho>0$ (to be able to apply Proposition~\ref{propKernel}), but vanishing with a tuned scaling exponent as $t\to\infty$, a critical scaling. We set $\lambda-\rho=\delta t^{-1/3}$ for $\delta>0$.  These considerations give rise to the following positive-step asymptotic theorem:
\begin{thm}\label{thmAsymp}
With $\{x^{(\rho)}_n(t),n\in\Z\}$ being the system of one-sided reflected Brownian motions with initial condition $\vec{x}^{(\rho)}(0)=\vec{\zeta}^\textrm{stat}(1,\rho)$, define the rescaled process
\begin{equation}\label{eqScaledProcess}
	r\mapsto X^{(\delta)}_t(r) = t^{-1/3}\Big(x^{(1-t^{-1/3}\delta)}_{\lfloor t+2rt^{2/3}\rfloor}(t)-2t-2rt^{2/3}\Big).\,
\end{equation}
For every $\delta>0$, the rescaled process converges to the \emph{finite-step Airy$_\textrm{{stat}}$ process}
\begin{equation}
	\lim_{t\to\infty}X^{(\delta)}_t(r)\stackrel{d}{=}\mathcal{A}^{(\delta)}_\textrm{stat}(r),
\end{equation}
in the sense of finite-dimensional distributions.
\end{thm}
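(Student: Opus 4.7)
The proof will parallel that of Theorem~\ref{thmAsympStep} (packed case), applied to the determinantal formula of Proposition~\ref{propKernel} with $\lambda=1$ and $\rho=1-\delta t^{-1/3}$ so that $\lambda-\rho=\delta t^{-1/3}$. The first step is to set up the scaling. Using $n_i=t+2t^{2/3}r_i$ and $\xi_i=2t+2t^{2/3}r_i+t^{1/3}s_i$ as in \eqref{eqStepScaling} and introducing the conjugated rescaled kernel $\mathcal{K}_{\textrm{stat}}^{\textrm{resc}}(r_1,s_1;r_2,s_2)=t^{1/3}e^{\xi_1-\xi_2}\mathcal{K}_{\textrm{stat}}(n_1,\xi_1;n_2,\xi_2)$, the change of variables converts the derivative prefactor in \eqref{eq33} into $1+\tfrac{1}{\delta}\sum_k\partial_{s_k}$, exactly matching Definition~\ref{DefAiryStat}, since $\partial_{a_n}=t^{-1/3}\partial_{s_n}$ and $1/(\lambda-\rho)=t^{1/3}/\delta$.

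The second step is to establish the pointwise convergence $\mathcal{K}_{\textrm{stat}}^{\textrm{resc}}\to K^{\delta}$. The first two terms of $\mathcal{K}_{\textrm{stat}}$ are identical to those in $\mathcal{K}_{\textrm{packed}}$, so their convergence to $-V_{r_1,r_2}(s_1,s_2)\Id_{r_1<r_2}+K_{r_1,r_2}(s_1,s_2)$ is provided verbatim by Proposition~\ref{propStepPointw}. For the rank-one term, set $\mathpzc{f}_t(r_1,s_1)=e^{3t/2+2t^{2/3}r_1+t^{1/3}s_1}\mathpzc{f}(n_1,\xi_1)$ and $\mathpzc{g}_t(r_2,s_2)=e^{-3t/2-2t^{2/3}r_2-t^{1/3}s_2}\mathpzc{g}(n_2,\xi_2)$, so that
\begin{equation}
t^{1/3}e^{\xi_1-\xi_2}(\lambda-\rho)\mathpzc{f}(n_1,\xi_1)\mathpzc{g}(n_2,\xi_2)=\delta\,\mathpzc{f}_t(r_1,s_1)\mathpzc{g}_t(r_2,s_2)
\end{equation}
(the overall exponential and $t$-factors cancel exactly because $t^{1/3}(\lambda-\rho)=\delta$). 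Steepest descent at the doubly critical point $-1$ with the substitution $w=-1+t^{-1/3}W$ and $z=-1+t^{-1/3}Z$, as in Lemma~\ref{lemAlphaLimit}, yields limits identified with the contour integral representations \eqref{contIntfg}: for $\mathpzc{f}_t$ the denominator becomes $w+1=t^{-1/3}W$ producing a $1/W$ in the limit, while for $\mathpzc{g}_t$ the denominator becomes $z+\rho=t^{-1/3}(Z-\delta)$, producing the $1/(Z-\delta)$. Hence $\mathpzc{f}_t\to f_{r_1}$ and $\mathpzc{g}_t\to g_{r_2}$, so the full rank-one term converges to $\delta f_{r_1}(s_1)g_{r_2}(s_2)$.

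The third step is to establish uniform exponential bounds of the form $|\mathcal{K}_{\textrm{stat}}^{\textrm{resc}}(r_1,s_1;r_2,s_2)|\leq C\,e^{-|s_1-s_2|}\Id_{r_1<r_2}+C\,e^{-(s_1+s_2)}+\delta\,|\mathpzc{f}_t(r_1,s_1)\mathpzc{g}_t(r_2,s_2)|$. The first two pieces are controlled by Propositions~\ref{propStepK0Bound} and~\ref{propStepPhiBound}. For the new piece, one proves analogs of Lemma~\ref{lemAlphaBound}, namely $|\mathpzc{f}_t(r,s)|\leq c\,e^{-s}$ and $|\mathpzc{g}_t(r,s)|\leq c\,e^{-s}$ uniformly for $s>-L$ and $t>t_0$, by repeating the steep descent deformation of Lemma~\ref{lemAlphaBound}; the only modification is that the descent contours must be chosen to avoid the simple pole at $w=-1$ (for $\mathpzc{f}$) and at $z=-1+\delta t^{-1/3}$ (for $\mathpzc{g}$), but since both poles lie at distance $O(t^{-1/3})$ from the critical point this is easily accommodated. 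With these bounds in hand, the Hadamard-inequality argument from the proof of Theorem~\ref{thmAsympStep} gives convergence of the rescaled Fredholm determinant, $\det(\Id-\chi_s\mathcal{K}_{\textrm{stat}}^{\textrm{resc}}\chi_s)\to\det(\Id-\chi_s K^{\delta}\chi_s)$, dominated by a summable bound.

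The main obstacle is the final step: upgrading convergence of the Fredholm determinant to convergence of its partial derivatives $\partial_{s_i}$, which is what is actually needed in view of the derivative prefactor in \eqref{eq33} and \eqref{eq3.6}. The natural route is to note that both sides are analytic in each $s_i$ on a complex strip (the integrands in the kernel definitions are entire in $s_i$, and the exponential bounds of step three extend to any horizontal complex strip of bounded width by shifting contours). Once one has uniform convergence of $\det(\Id-\chi_s\mathcal{K}_{\textrm{stat}}^{\textrm{resc}}\chi_s)$ on compact subsets of such a strip, Vitali's convergence theorem (equivalently, Cauchy's integral formula for derivatives applied to a small disc around each real $s_i$) yields uniform convergence of the partial derivatives on compact real sets. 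Assembling the pieces gives, for each fixed $r_1<\cdots<r_m$ and $s_1,\ldots,s_m\in\R$,
\begin{equation}
\lim_{t\to\infty}\Pb\bigg(\bigcap_{k=1}^m\{X_t^{(\delta)}(r_k)\leq s_k\}\bigg)=\bigg(1+\frac{1}{\delta}\sum_{i=1}^m\partial_{s_i}\bigg)\det(\Id-\chi_s K^{\delta}\chi_s),
\end{equation}
which is precisely the finite-dimensional distribution of $\mathcal{A}^{(\delta)}_{\mathrm{stat}}$.
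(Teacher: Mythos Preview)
Your overall architecture matches the paper's: rescale, split the kernel, use Proposition~\ref{propStepPointw} and Propositions~\ref{propStepK0Bound},~\ref{propStepPhiBound} for the first two pieces, then handle the rank-one piece and the derivative prefactor. The places where your argument diverges from the paper are precisely the places where it has gaps.

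\textbf{The exponential bounds on the rank-one term are wrong.} You claim $|\mathpzc{f}_t(r,s)|\leq c\,e^{-s}$ and $|\mathpzc{g}_t(r,s)|\leq c\,e^{-s}$, arguing that the poles at $w=-1$ and $z=-\rho=-1+\delta t^{-1/3}$ are ``easily accommodated''. They are not. For $\mathpzc{f}$, the $w$-contour $\mathrm{i}\R-\e$ lies to the \emph{right} of the pole at $-1$; deforming it leftward to the steep-descent position of Lemma~\ref{lemAlphaBound} forces you to cross the pole and pick up a residue equal to $1$. Hence $\mathpzc{f}^{\textrm{resc}}(r_1,s_1)=1-\int_0^\infty\alpha_t(r_1,s_1+x)\,dx$ (this is the paper's \eqref{eqfInt}), which is bounded but does \emph{not} decay as $s_1\to\infty$; it tends to $1$. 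For $\mathpzc{g}$, the contour $\Gamma_{0,-\rho}$ must enclose the pole at $-\rho$, and its residue contributes a term behaving like $e^{-\delta s_2}$, not $e^{-s_2}$ (see \eqref{eqgInt} and \eqref{eqResLimit}). The correct bound on the product is therefore only $|\delta\,\mathpzc{f}^{\textrm{resc}}\mathpzc{g}^{\textrm{resc}}|\leq C\,e^{-\min\{\delta,1\}s_2}$, with no decay in $s_1$ at all (Proposition~\ref{propK0Bound}). This weaker, one-sided bound still suffices for the Hadamard argument after the extra $(1+\sigma^2)$ conjugation, but your version of step three, as written, is false.

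\textbf{The derivative step is handled differently (and more simply) in the paper.} You invoke analyticity in a complex strip and Vitali's theorem. The paper instead observes that in the Fredholm series expansion the variable $s_i$ appears only through the indicator $\Id_{\sigma_k>s_{i_k}}$; hence applying $\partial_{s_i}$ simply removes one integration and fixes that $\sigma_k$ at $s_i$. The resulting expression is then controlled by exactly the same dominated-convergence bounds already established for the determinant itself (see the last paragraph of the proof of Theorem~\ref{thmAsymp} around \eqref{eqShift}). This avoids having to argue analyticity of a Fredholm determinant whose $s$-dependence enters through integration limits.
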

Recognizing $X^{(0)}_t(r)=X^\textrm{stat}_t(r)$, in order to finally arrive at Theorem~\ref{thmAsymp0}, we have to prove that the limit $t\to\infty$ commutes with $\delta\downarrow 0$ on the left hand side, as well as convergence of the finite-step Airy$_\textrm{stat}$ process to the standard Airy$_\textrm{stat}$ process. This is the content of Section~\ref{SectAnCont}.

\section{Determinantal structure}

To obtain a representation as a signed determinantal point process we have to introduce a new measure. This measure $\Pb_+$ coincides with $\Pb$ on the sigma algebra which is generated by $\zeta_{k+1}^\textrm{stat}-\zeta_{k}^\textrm{stat}$, $k\in\Z$, and the driving Brownian motions $B_k$, $k\in\Z$. But under $\Pb_+$, $\zeta_0^\textrm{stat}$ is a random variable with an exponential distribution instead of being fixed at zero. Formally, \mbox{$\Pb_+=\Pb\otimes\Pb_0$}, with $\Pb_0$ giving rise to \mbox{$\zeta_0^\textrm{stat}\sim\exp(\lambda-\rho)$}, so that $\Pb$ is the result of conditioning $\Pb_+$ on the event $\{\zeta_0^\textrm{stat}=0\}$. This new measure satisfies a determinantal formula for the joint distribution at a fixed time.

\begin{prop}\label{PropRandomIC}
Under the modified initial condition specified by $\Pb_+$, the joint density of the positions of the asymmetrically reflected Brownian motions \mbox{$\{x_n({t}),0\leq n\leq N-1\}$} is given by
	\begin{equation}\begin{aligned}
	\Pb_+(\vec{x}({t})\in \D\vec{\xi})=(\lambda-\rho)\lambda^{N-1}e^{-{t}\rho^2/2+\rho \xi_0}
	 \det_{1\leq k,l\leq N}[\widetilde{F}_{k-l}(\xi_{N-l},{t})]\,\D\vec{\xi}
\end{aligned}\label{Prnd}\end{equation}
with
\begin{equation}
	\widetilde{F}_{k}(\xi,{t}):=\frac{1}{2\pi\mathrm{i}}\int_{\mathrm{i}\R+\e}\mathrm{d} w\,\frac{e^{{t} w^2/2+\xi w}w^k}{w+\lambda}.
\end{equation}
\end{prop}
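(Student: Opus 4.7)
The plan is to apply Remark~\ref{RemarkBurke} to reduce to an $N$-particle system, invoke Proposition~\ref{PropWarren} with $\vec\mu=(\rho,0,\ldots,0)$ for the transition density, and then integrate out the random initial positions by successive partial integrations interlaced with determinant row operations.

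By Remark~\ref{RemarkBurke}, under $\Pb_+$ the finite vector $(x_0(t),\ldots,x_{N-1}(t))$ has the same law as in an $N$-particle system in which only the leftmost particle $x_0$ carries a drift $\rho$, with random initial positions of joint density $(\lambda-\rho)\lambda^{N-1}e^{\rho\zeta_0-\lambda\zeta_{N-1}}$ on $\{0\leq\zeta_0\leq\cdots\leq\zeta_{N-1}\}$ (the first factor from $\zeta_0^\textrm{stat}\sim\exp(\lambda-\rho)$ under $\Pb_+$, the rest from the i.i.d.\ $\exp(\lambda)$ spacings). Proposition~\ref{PropWarren} with $\vec\mu=(\rho,0,\ldots,0)$ yields the transition density
\begin{equation*}
 r_t(\vec\zeta,\vec\xi)=e^{\rho(\xi_0-\zeta_0)-t\rho^2/2}\det_{1\leq k,l\leq N}[F_{k-l}(\xi_{N-l}-\zeta_{N-k},t)],
\end{equation*}
where $F_j(\xi,t)=\frac{1}{2\pi\mathrm{i}}\int_{\mathrm{i}\R+\mu}\mathrm{d} w\,e^{tw^2/2+\xi w}w^j$ with $\mu>0$, because in the products of \eqref{eqFkl} the only index $i$ for which $\mu_{N+1-i}\neq 0$ is $i=N$, which is not reached for $k,l\leq N$. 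Multiplying by the initial density cancels the $\zeta_0$-exponentials and puts $(\lambda-\rho)\lambda^{N-1}e^{\rho\xi_0-t\rho^2/2}$ outside the integral, leaving only the weight $e^{-\lambda\zeta_{N-1}}$ inside.

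I would next change variables to $\eta_k:=\zeta_{N-k}$, so that the region becomes $\eta_1\geq\cdots\geq\eta_N\geq 0$ and row $k$ of the determinant depends only on $\eta_k$. Integrating $\eta_N,\eta_{N-1},\ldots,\eta_2$ in this order, the identity
\begin{equation*}
 \int_0^{a}F_j(\xi-\eta,t)\,\mathrm{d}\eta=F_{j-1}(\xi,t)-F_{j-1}(\xi-a,t)
\end{equation*}
produces a bulk term and a boundary term at $a=\eta_{k-1}$; the boundary term coincides entry-by-entry with row $k-1$ of the current determinant (both equal $F_{k-l-1}(\xi_{N-l}-\eta_{k-1},t)$) and is therefore eliminated by multilinearity. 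The telescoping converts rows $2,\ldots,N$ into $F_{k-l-1}(\xi_{N-l},t)$, leaving only row $1$ with entries $F_{1-l}(\xi_{N-l}-\eta_1,t)$. Integrating $\eta_1$ over $[0,\infty)$ against $e^{-\lambda\eta_1}$ introduces the factor $(w+\lambda)^{-1}$ inside the Cauchy integral (the pole at $-\lambda$ lying safely to the left of the contour), producing exactly $\widetilde F_{1-l}(\xi_{N-l},t)$ in row $1$.

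To finish I would use the algebraic identity $F_{j-1}=\widetilde F_j+\lambda\widetilde F_{j-1}$, immediate from $w^{j-1}(w+\lambda)=w^j+\lambda w^{j-1}$, and inductively subtract $\lambda$ times row $k-1$ from row $k$ for $k=2,\ldots,N$; this turns every row into $\widetilde F_{k-l}(\xi_{N-l},t)$, yielding precisely~\eqref{Prnd}. The main piece of bookkeeping is ensuring that the boundary term of each partial integration matches the adjacent row of the current determinant without producing side contributions — this works because the $F_{k-l}\mapsto F_{k-l-1}$ shift from one antiderivative exactly balances the shift in row index from $k$ to $k-1$, so no spurious terms accumulate before one reaches the $\eta_1$ integration.
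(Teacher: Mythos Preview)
Your argument is correct and takes a genuinely different route from the paper's. The paper expands the determinant as a sum over permutations, integrates the $\zeta$-variables sequentially to produce the factor $\prod_{k=1}^N(w_{\sigma(1)}+\cdots+w_{\sigma(k)}+\lambda)^{-1}$, relabels $w_k\mapsto w_{\sigma(k)}$, and then invokes a separate combinatorial identity (Lemma~\ref{lemDetIdent}, proved by its own induction) to recombine the permutation sum into $\det_{k,l}[w_l^k/(w_l+\lambda)]$. Your approach keeps the determinant intact throughout: the iterated integrations in $\eta_N,\ldots,\eta_2$ together with the row-coincidence cancellations of the boundary terms do the work of that lemma, and the final passage from $F_{k-l-1}$ to $\widetilde F_{k-l}$ via $F_{j-1}=\widetilde F_j+\lambda\widetilde F_{j-1}$ and successive row subtractions is elementary. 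Your method is shorter and self-contained; the paper's method isolates an algebraic determinantal identity of possible independent interest, but that identity is not needed once one argues as you do.
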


\begin{proof}[Proof of Proposition~\ref{PropRandomIC}]
The fixed time distribution can be obtained by integrating the transition density \eqref{transDens} over the initial condition. Denote by $p_+(\vec{\xi})$ the probability density of $\vec{x}(t)$, i.e., $\Pb_+(\vec{x}({t})\in \D\vec{\xi})=p_+(\vec{\xi})\D\vec{\xi}$. Then $p_+(\vec{\xi})$ equals
\begin{equation}\label{eqPrnd1}\begin{aligned}
	&\int_{W^N\cap\{\zeta_0>0\}}\D\vec{\zeta}\, e^{\rho(\xi_0-\zeta_0)-{t}\rho^2/2}(\lambda-\rho)\lambda^{N-1}e^{\rho\zeta_0}e^{-\lambda\zeta_{N-1}}\\
        &\quad\times\det_{1\leq k,l\leq N}[F_{k,l}(\xi_{N-l}-\zeta_{N-k},{t})]\\
&=(\lambda-\rho)\lambda^{N-1}e^{-{t}\rho^2/2+\rho\xi_0}\int_{W^N\cap\{\zeta_0>0\}}\D\vec{\zeta}\, e^{\lambda\zeta_{N-1}} \\
	&\quad\times\det_{1\leq k,l\leq N}\left[\frac{1}{2\pi\mathrm{i}}\int_{\mathrm{i}\R+\mu}\mathrm{d} w_k\,e^{{t} w_k^2/2}e^{\xi_{N-l}w_k}e^{-\zeta_{N-k}w_k}w_k^{k-l}\right]\\
	&=\frac{(\lambda-\rho)\lambda^{N-1}}{e^{{t}\rho^2/2-\rho\xi_0}}\sum_{\sigma\in \mathcal{S}_N}(-1)^{|\sigma|}\prod_{k=1}^N\int_{\mathrm{i}\R+\mu}\frac{\mathrm{d} w_k}{2\pi\mathrm{i}}\,e^{{t} w_k^2/2}e^{\xi_{N-\sigma(k)}w_k}w_k^{k-\sigma(k)}\\ &\quad\times\int^{\infty}_0\mathrm{d} \zeta_0\dots\int^{\infty}_{\zeta_{N-2}}\mathrm{d} \zeta_{N-1}\,e^{-\lambda\zeta_{N-1}}e^{-\zeta_{N-1}w_1}e^{-\zeta_{N-2}w_2}\dots e^{-\zeta_0w_N}\\
&=\frac{(\lambda-\rho)\lambda^{N-1}}{e^{{t}\rho^2/2-\rho\xi_0}}\sum_{\sigma\in \mathcal{S}_N}(-1)^{|\sigma|}\prod_{k=1}^N\int_{\mathrm{i}\R+\mu}\frac{\mathrm{d} w_k}{2\pi\mathrm{i}}\,\frac{e^{{t} w_k^2/2}e^{\xi_{N-\sigma(k)}w_k}w_k^{k-\sigma(k)}}{w_1+\dots+w_k+\lambda}.
\end{aligned}\end{equation}
Since all $w_k$ are integrated over the same contour, we can replace $w_k$ by $w_{\sigma(k)}$:
\begin{equation}\label{eqPrnd2}\begin{aligned}
\eqref{eqPrnd1}&=\frac{(\lambda-\rho)\lambda^{N-1}}{e^{{t}\rho^2/2-\rho\xi_0}}\\
&\quad\times\sum_{\sigma\in \mathcal{S}_N}(-1)^{|\sigma|}\prod_{k=1}^N\int_{\mathrm{i}\R+\mu}\frac{\mathrm{d} w_k}{2\pi\mathrm{i}}\,\frac{e^{{t} w_k^2/2}e^{\xi_{N-\sigma(k)}w_{\sigma(k)}}w_{\sigma(k)}^{k-\sigma(k)}}{w_{\sigma(1)}+\dots+w_{\sigma(k)}+\lambda}\\ &=\frac{(\lambda-\rho)\lambda^{N-1}}{e^{{t}\rho^2/2-\rho\xi_0}}\prod_{k=1}^N\int_{\mathrm{i}\R+\mu}\frac{\mathrm{d} w_k}{2\pi\mathrm{i}}\,e^{{t} w_k^2/2}e^{\xi_{N-k}w_k}w_k^{-k}\\
&\quad\times\sum_{\sigma\in \mathcal{S}_N}(-1)^{|\sigma|}\prod_{k=1}^N\frac{w_{\sigma(k)}^{k}}{w_{\sigma(1)}+\dots+w_{\sigma(k)}+\lambda}.
\end{aligned}\end{equation}
We apply Lemma~\ref{lemDetIdent} below to the sum and finally obtain
\begin{equation}\begin{aligned}
p_+(\vec{x})&=\frac{(\lambda-\rho)\lambda^{N-1}}{e^{{t}\rho^2/2-\rho\xi_0}}\prod_{k=1}^N\int_{\mathrm{i}\R+\mu}\frac{\mathrm{d} w_k}{2\pi\mathrm{i}}\,e^{{t} w_k^2/2}e^{\xi_{N-k}w_k}w_k^{-k} \det_{1\leq k,l\leq N}\left[\frac{w_l^k}{w_l+\lambda}\right]\\
&=(\lambda-\rho)\lambda^{N-1}e^{-{t}\rho^2/2+\rho\xi_0}\det_{1\leq k,l\leq N}\left[\widetilde{F}_{k-l}(\xi_{N-l},{t})\right].
\end{aligned}\end{equation}
\end{proof}

\begin{lem}\label{lemDetIdent}
Given $N\in\Z_{>0}$, $\lambda>0$ and $w_1,\dots,w_N\in\C\setminus\R_-$, the following identity holds:
\begin{equation}\label{eqDetIdent}
	\sum_{\sigma\in \mathcal{S}_N}(-1)^{|\sigma|}\prod_{k=1}^N\frac{w_{\sigma(k)}^{k}}{w_{\sigma(1)}+\dots+w_{\sigma(k)}+\lambda}=\det_{1\leq k,l\leq N}\left[\frac{w_l^k}{w_l+\lambda}\right].
\end{equation}
\end{lem}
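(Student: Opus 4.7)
The plan is to prove the identity by induction on $N$. The base case $N=1$ is immediate since both sides equal $w_1/(w_1+\lambda)$.

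For the inductive step I would split the sum over $\sigma\in\mathcal{S}_N$ according to the value $j:=\sigma(N)$. The key observation is that $w_{\sigma(1)}+\cdots+w_{\sigma(N)}=w_1+\cdots+w_N=:W$ is independent of $\sigma$, so the last factor $w_j^N/(W+\lambda)$ pulls out of the inner sum. Writing $\sigma$ restricted to $\{1,\dots,N-1\}$ as a permutation $\pi$ of the reduced index set $\{1,\dots,N\}\setminus\{j\}$, the sign decomposes as $(-1)^{|\sigma|}=(-1)^{N-j}(-1)^{|\pi|}$, where the factor $(-1)^{N-j}$ comes from the inversions pairing position $N$ with positions whose $\sigma$-value exceeds $j$. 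The inner sum over $\pi$ is then exactly the left-hand side of the identity at size $N-1$ applied to $\vec w^{(j)}:=(w_l)_{l\neq j}$, so the induction hypothesis gives
\[
\text{LHS}=\frac{1}{W+\lambda}\sum_{j=1}^N(-1)^{N-j}w_j^N\prod_{l\neq j}\frac{w_l}{w_l+\lambda}\prod_{\substack{i<k\\ i,k\neq j}}(w_k-w_i).
\]

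Pulling $\prod_l w_l$ and $\prod_l (w_l+\lambda)^{-1}$ out of the sum rewrites this as
\[
\text{LHS}=\frac{\prod_l w_l}{(W+\lambda)\prod_l(w_l+\lambda)}\sum_{j=1}^N(-1)^{N-j}w_j^{N-1}(w_j+\lambda)\prod_{\substack{i<k\\ i,k\neq j}}(w_k-w_i).
\]
The remaining alternating sum is precisely the cofactor expansion along the bottom row of $\det M$, where $M$ is the $N\times N$ matrix with $M_{k,l}=w_l^{k-1}$ for $1\leq k\leq N-1$ and $M_{N,l}=w_l^N+\lambda\, w_l^{N-1}$.

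By multilinearity in the last row, $\det M=\det M_0+\lambda\det M_1$, where $M_1$ is the ordinary Vandermonde (last row $w_l^{N-1}$) so $\det M_1=\prod_{i<k}(w_k-w_i)$, while $M_0$ has last row $w_l^N$. For $M_0$, each $w_l$ satisfies $w_l^N=e_1 w_l^{N-1}-e_2 w_l^{N-2}+\cdots+(-1)^{N-1}e_N$ with $e_r=e_r(\vec w)$ the elementary symmetric polynomials; substituting this expression into the last row annihilates every term except the $e_1 w_l^{N-1}$ piece, since each remaining term is a scalar multiple of a row already present among the first $N-1$ rows of $M_0$. Hence $\det M_0=e_1\prod_{i<k}(w_k-w_i)=W\prod_{i<k}(w_k-w_i)$, so $\det M=(W+\lambda)\prod_{i<k}(w_k-w_i)$. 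The factor $W+\lambda$ cancels, leaving $\prod_l\tfrac{w_l}{w_l+\lambda}\prod_{i<k}(w_k-w_i)$, which equals the right-hand side once one factors $w_l/(w_l+\lambda)$ column-wise out of $\det[w_l^k/(w_l+\lambda)]$ and recognizes the ordinary Vandermonde in the residue.

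The main obstacle is the sign bookkeeping in the inductive reduction and the evaluation of the missing-power Vandermonde $\det M_0$; the latter can alternatively be obtained from the Jacobi bialternant identification of $\det M_0/\prod_{i<k}(w_k-w_i)$ with the Schur polynomial $s_{(1)}=e_1$, but the direct substitution via $\prod_l(x-w_l)$ used above keeps the argument self-contained.
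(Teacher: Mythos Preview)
Your proof is correct and follows the same inductive strategy as the paper: split by $\sigma(N)$, apply the hypothesis at size $N-1$, and reduce everything to the missing-power Vandermonde identity $\det[w_l^{k-1+\delta_{k,N}}]=(\sum_l w_l)\det[w_l^{k-1}]$. The only real difference is how this last identity is established: the paper introduces an auxiliary variable $w_{N+1}$, factors the $(N{+}1)\times(N{+}1)$ Vandermonde as $\prod_i(w_{N+1}-w_i)\det[w_l^{k-1}]$, and reads off the coefficient of $w_{N+1}^{N-1}$ on both sides, whereas you substitute $w_l^N=e_1 w_l^{N-1}-e_2 w_l^{N-2}+\cdots$ and use row operations (equivalently, the Jacobi bialternant identification with $s_{(1)}=e_1$). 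The paper's intermediate algebra also stays with the permutation sum a bit longer and recognizes the two determinants directly, while you convert to explicit Vandermonde products first and reassemble via cofactor expansion; both routes are clean and of comparable length.
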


\begin{proof}
We use induction on $N$. For $N=1$ the statement is trivial. For arbitrary $N$, rearrange the left hand side of \eqref{eqDetIdent} as
\begin{equation}\begin{aligned}\label{eqDI1}
	&\sum_{\sigma\in \mathcal{S}_N}(-1)^{|\sigma|}\prod_{k=1}^N\frac{w_{\sigma(k)}^{k}}{w_{\sigma(1)}+\dots+w_{\sigma(k)}+\lambda}\\&\quad=\sum_{l=1}^N\frac{w_l^N}{w_1+\dots+w_N+\lambda}\sum_{\sigma\in S_{N},\sigma(N)=l}(-1)^{|\sigma|}\prod_{k=1}^{N-1}\frac{w_{\sigma(k)}^{k}}{w_{\sigma(1)}+\dots+w_{\sigma(k)}+\lambda}\\
	&\quad=\sum_{l=1}^N\frac{w_l^N}{w_1+\dots+w_N+\lambda}\sum_{\sigma\in S_{N},\sigma(N)=l}(-1)^{|\sigma|}\prod_{k=1}^{N-1}\frac{w_{\sigma(k)}^{k}}{w_{\sigma(k)}+\lambda},
\end{aligned}\end{equation}
where we applied the induction hypothesis to the second sum. Further,
\begin{equation}\begin{aligned}\label{eqDI2}
	\eqref{eqDI1}=&\sum_{\sigma\in \mathcal{S}_N}(-1)^{|\sigma|}\frac{w_{\sigma(N)}+\lambda}{w_1+\dots+w_N+\lambda}\prod_{k=1}^{N}\frac{w_{\sigma(k)}^{k}}{w_{\sigma(k)}+\lambda}\\
	&=\frac{1}{w_1+\dots+w_N+\lambda}\prod_{l=1}^N\frac{w_l}{w_l+\lambda}\\
&\quad\times\bigg(\sum_{\sigma\in \mathcal{S}_N}(-1)^{|\sigma|}w_{\sigma(N)}\prod_{k=1}^Nw_{\sigma(k)}^{k-1}+\lambda\sum_{\sigma\in \mathcal{S}_N}(-1)^{|\sigma|}\prod_{k=1}^Nw_{\sigma(k)}^{k-1}\bigg)\\
	&=\frac{1}{w_1+\dots+w_N+\lambda}\prod_{l=1}^N\frac{w_l}{w_l+\lambda}\\
&\quad \times\left(\det_{1\leq k,l\leq N}\left[w_l^{k-1+\delta_{k,N}}\right]+\lambda\det_{1\leq k,l\leq N}\left[w_l^{k-1}\right]\right),
\end{aligned}\end{equation}
with $\delta_{k,N}$ being the Kronecker delta. Inserting the identity
\begin{equation}\label{eqMuir}
	\det_{1\leq k,l\leq N}\left[w_l^{k-1+\delta_{k,N}}\right]=(w_1+\dots+w_N)\det_{1\leq k,l\leq N}\left[w_l^{k-1}\right],
\end{equation}
we arrive at
\begin{equation}
\eqref{eqDI1}=\bigg(\prod_{l=1}^N\frac{w_l}{w_l+\lambda}\bigg)\det_{1\leq k,l\leq N}\left[w_l^{k-1}\right]=\det_{1\leq k,l\leq N}\bigg[\frac{w_l^k}{w_l+\lambda}\bigg].
\end{equation}

To show \eqref{eqMuir} we introduce the variable $w_{N+1}$ and consider the factorization
\begin{equation}
	\det_{1\leq k,l\leq N+1}\left[w_l^{k-1}\right]=\prod_{i=1}^N(w_{N+1}-w_i)\det_{1\leq k,l\leq N}\left[w_l^{k-1}\right],
\end{equation}
which follows directly from the explicit formula for a Vandermonde determinant. Expanding the determinant on the left hand side along the $(N+1)$-th column gives an explicit expression in terms of monomials in $w_{N+1}$. Examining the coefficient of $w_{N+1}^{N-1}$ on the left and right hand side respectively provides \eqref{eqMuir}.
\end{proof}

We can rewrite the measure in Proposition~\ref{PropRandomIC} in terms of a conditional \mbox{$L$-ensemble} (see Lemma 3.4 of~\cite{BFPS06} reported here as Lemma~\ref{lemDetMeasure}) and obtain a Fredholm determinant expression for the joint distribution of any subsets of particles position. Then it remains to relate the law under $\Pb_+$ and $\Pb$, which is the law of the reflected Brownian motions specified by the initial condition \eqref{statModel}. This is made using a \emph{shift argument}, analogue to the one used for the polynuclear growth model with external sources~\cite{BR00,SI04} or in the totally asymmetric simple exclusion process~\cite{PS02b,FS05a,BFP09}.

\begin{proof}[Proof of Proposition~\ref{propKernel}]
The proof is divided into two steps. In \emph{Step 1} we determine the distribution under $\Pb_+$ and in \emph{Step 2} we extend this result via a shift argument to $\Pb$.

\emph{Step 1.} We consider the law of the process under $\Pb_+$ for now. The first part of the proof is identical to the proof of Proposition 3.5~\cite{FSW13}, so it is only sketched here.
Using repeatedly the identity
\begin{equation}
	\widetilde{F}_{k}(\xi,{t})=\int^\xi_{-\infty} \mathrm{d} x\,\widetilde{F}_{k+1}(x,{t}),
\end{equation}
relabeling $\xi_1^k:=\xi_{k-1}$, and introducing new variables $\xi_l^k$ for $2\leq l\leq k\leq N$,
we can write
\begin{equation}
	\det_{1\leq k,l\leq N}\big[\widetilde{F}_{k-l}(\xi_1^{N+1-l},{t})\big]=\int_{\mathcal D'} \det_{1\leq k,l\leq N}\big[\widetilde{F}_{k-1}(\xi_l^{N},{t})\big]\prod_{2\leq l\leq k\leq N} \D\xi_l^k,
\end{equation}
where $\mathcal D' = \{\xi_l^k\in\R,2\leq l\leq k\leq N|x_l^k\leq x_{l-1}^{k-1}\}$. Proceeding as in the proof of Proposition~\ref{propStepKernel}, we obtain that the measure \eqref{Prnd} is a marginal of
\begin{equation}\label{ExtM}\begin{aligned}
	\const&\cdot e^{\rho\xi^1_1}\prod_{n=2}^{N}\det_{1\leq i,j\leq n}\left[\Id_{\xi_i^{n-1}\leq\xi_j^n}\right]\det_{1\leq k,l\leq N}\big[ \widetilde{F}_{k-1}(\xi_l^{N},{t})\big]\\=
	\const&\cdot \prod_{n=1}^{N}\det_{1\leq i,j\leq n}\big[\tilde{\phi}_n(\xi_i^{n-1},\xi_j^n)\big]\det_{1\leq k,l\leq N}\big[ \widetilde{F}_{k-1}(\xi_l^{N},{t})\big]
\end{aligned}\end{equation}
with
\begin{equation}\begin{aligned}
	\tilde{\phi}_n(x,y)&=\Id_{x\leq y}, \quad \text{for } n\geq2\\
	\tilde{\phi}_1(x,y)&=e^{\rho y},
\end{aligned}\end{equation}
and using the convention that $\xi_n^{n-1}\leq y$ always holds.

The measure \eqref{ExtM} has the appropriate form for applying Lemma~\ref{lemDetMeasure}. The composition of the $\tilde{\phi}$ functions can be evaluated explicitly as
\begin{equation}\begin{aligned}
	\tilde{\phi}_{0,n}(x,y)&=(\tilde{\phi}_1*\dots*\tilde{\phi}_{n})(x,y)=\rho^{1-n}e^{\rho y}, & &\text{for } n\geq1,\\
	\tilde{\phi}_{m,n}(x,y)&=(\tilde{\phi}_{m+1}*\dots*\tilde{\phi}_{n})(x,y)=\frac{(y-x)^{n-m-1}}{(n-m-1)!}\Id_{x\leq y}, & &\text{for } n>m\geq1.
\end{aligned}\end{equation}
Define
\begin{equation}
	\Psi^n_{n-k}(\xi):=\frac{(-1)^{n-k}}{2\pi\mathrm{i}}\int_{\mathrm{i}\R-\e}\mathrm{d} w\,\,\frac{e^{{t} w^2/2+\xi w}w^{n-k}}{w+\lambda},
\end{equation}
for $n,k\geq1$ and some $0<\e<\lambda$. In the case $n\geq k$ the integrand has no poles in the region $|w|<\lambda$, which implies $\Psi^n_{n-k}=(-1)^{n-k}\widetilde{F}_{n-k}$. The straightforward recursion
\begin{equation}
	(\tilde{\phi}_n*\Psi^n_{n-k})(\xi)=\Psi^{n-1}_{n-1-k}(\xi)
\end{equation}
eventually leads to condition \eqref{Sasdef_psi} being satisfied.

The space $V_n$ is generated by
\begin{equation}
	\{\tilde{\phi}_{0,n}(\xi_1^0,x),\dots,\tilde{\phi}_{n-2,n}(\xi_{n-1}^{n-2},x),\tilde{\phi}_{n-1,n}(\xi_n^{n-1},x)\},
\end{equation}
so a basis for $V_n$ is given by
\begin{equation}
	\{e^{\rho x},x^{n-2},x^{n-3},\dots,x,1\}.
\end{equation}
Choose functions $\Phi^n_{n-k}$ as follows
\begin{equation}
		\Phi^n_{n-k}(\xi)=\begin{cases}
\frac{(-1)^{n-k}}{2\pi\mathrm{i}}\oint_{\Gamma_0}\mathrm{d} z\,\frac{z+\lambda}{e^{{t} z^2/2+\xi z}z^{n-k+1}}& 2\leq k\leq n,\\
\frac{(-1)^{n-1}}{2\pi\mathrm{i}}\oint_{\Gamma_{0,-\rho}}\mathrm{d} z\,\frac{z+\lambda}{e^{{t} z^2/2+\xi z}z^{n-1}(z+\rho)}& k=1.
\end{cases}
\end{equation}
By residue calculating rules, $\Phi^n_{n-k}$ is a polynomial of order $n-k$ for $k\geq2$ and a linear combination of $1$ and $e^{\rho\xi}$
for $k=1$, so these functions indeed generate $V_n$. To show \eqref{Sasortho} for $\ell\geq2$, we decompose the scalar product as follows:
\begin{equation}\label{eq2.41}
\int_{\R_-} \mathrm{d} \xi\, \Psi^{n}_{n-k}(\xi) \Phi^n_{n-\ell}(\xi) + \int_{\R_+} \mathrm{d} \xi\, \Psi^{n}_{n-k}(\xi) \Phi^n_{n-\ell}(\xi).
\end{equation}
Since $n-k\geq0$ we are free to choose the sign of $\e$ as necessary. For the first term, we choose $\e<0$ and the path $\Gamma_0$ close enough to zero, such that always \mbox{$\Re(w-z)>0$}. Then, we can take the integral over $\xi$ inside and obtain
\begin{equation}\begin{aligned}
\int_{\R_-} \mathrm{d} \xi\,&\Psi^{n}_{n-k}(\xi) \Phi^n_{n-\ell}(\xi)\\
&=\frac{(-1)^{k-l}}{(2\pi\mathrm{i})^2}\int_{\mathrm{i}\R-\e}\mathrm{d} w \oint_{\Gamma_0}\mathrm{d} z\, \frac{e^{{t} w^2/2} w^{n-k}(z+\lambda)}{e^{{t} z^2/2}z^{n-\ell+1}(w+\lambda)(w-z)}.
\end{aligned}\end{equation}
For the second term, we choose $\e>0$ to obtain \mbox{$\Re(w-z)<0$}. Then again, we can take the integral over $\xi$ inside and arrive at the same expression up to a minus sign. The net result of \eqref{eq2.41} is a residue at $w=z$, which is given by
\begin{equation}
\frac{(-1)^{k-l}}{2\pi\mathrm{i}}\oint_{\Gamma_0}\mathrm{d} z\, z^{\ell-k-1}=\delta_{k,\ell}.
\end{equation}
The case $\ell=1$ uses the same decomposition and requires the choice $\e>\rho$ resp.\ $\e<0$, finally leading to
\begin{equation}
\eqref{eq2.41}=\frac{(-1)^{k-1}}{2\pi\mathrm{i}}\oint_{\Gamma_{0,-\rho}}\mathrm{d} z\, \frac{z^{1-k}}{z+\rho}=\delta_{k,1}.
\end{equation}

Furthermore, both $\tilde{\phi}_n(\xi_{n}^{n-1},x)$ and $\Phi_0^{n}(\xi)$ are constants, so the kernel has a simple form (compare with \eqref{SasK})
\begin{equation}
\tilde{\mathcal{K}}(n_1,\xi_1;n_2,\xi_2)=-\tilde{\phi}_{n_1,n_2}(\xi_1,\xi_2)\Id_{n_2>n_1} + \sum_{k=1}^{n_2} \Psi_{n_1-k}^{n_1}(\xi_1) \Phi_{n_2-k}^{n_2}(\xi_2).
\end{equation}
However, the relabeling $\xi_1^k:=\xi_{k-1}$ included an index shift, so the kernel of our system is actually
\begin{equation}\begin{aligned}
\mathcal{K}_\textrm{stat}(n_1,\xi_1;n_2,\xi_2)&=\tilde{\mathcal{K}}(n_1+1,\xi_1;n_2+1,\xi_2)\\&=-\phi_{n_1,n_2}(\xi_1,\xi_2)\Id_{n_2>n_1} + \sum_{k=1}^{n_2} \Psi_{n_1-k+1}^{n_1+1}(\xi_1) \Phi_{n_2-k+1}^{n_2+1}(\xi_2).
\end{aligned}\end{equation}

Note that we are free to extend the summation over $k$ up to infinity, since the integral expression for $\Phi_{n-k}^{n}(\xi)$ vanishes for $k>n$ anyway. Taking the sum inside the integrals we can write
\begin{equation}\label{eqKernSum}\begin{aligned}
	\sum_{k\geq1} \Psi_{n_1-k+1}^{n_1+1}(\xi_1) \Phi_{n_2-k+1}^{n_2+1}(\xi_2)=\frac{1}{(2\pi\mathrm{i})^2}\int_{\mathrm{i}\R-\e}\hspace{-0.5cm}\mathrm{d} w \oint_{\Gamma_{0,-\rho}}\hspace{-0.5cm}\mathrm{d} z\,	\frac{e^{{t} w^2/2+\xi_1 w}(-w)^{n_1}}{e^{{t} z^2/2+\xi_2 z}(-z)^{n_2}}\eta(w,z),
\end{aligned}\end{equation}
with
\begin{equation}
	\eta(w,z)=\frac{z+\lambda}{(w+\lambda)(z+\rho)}+\sum_{k\geq2}\frac{z^{k-2}(z+\lambda)}{w^{k-1}(w+\lambda)}.
\end{equation}
By choosing contours such that $|z|<|w|$, we can use the formula for a geometric series, resulting in
\begin{equation}\begin{aligned}
	\eta(w,z)&=\frac{z+\lambda}{(w+\lambda)(z+\rho)}+\frac{z+\lambda}{(w+\lambda)w}\frac{1}{1-z/w}\\
	&=\frac{1}{w-z}+\frac{\lambda-\rho}{(w+\lambda)(z+\rho)}.
\end{aligned}\end{equation}
Inserting this expression back into \eqref{eqKernSum} gives the kernel \eqref{eqKtStat}, which governs the multidimensional distributions of $x_n({t})$ under the measure $\Pb_+$, namely
\begin{equation}\label{eqDistP+}
\Pb_+\bigg(\bigcap_{n\in S} \{x_n({t})\leq a_n\}\bigg)=\det(\Id-\chi_a \mathcal{K}_\textrm{stat} \chi_a)_{L^2(S\times\R)}.
\end{equation}

\emph{Step 2.} The distributions under $\Pb$ and under $\Pb_+$ can be related via the following \emph{shift argument}. Introducing the shorthand
\begin{equation}
	\widetilde{\mathcal{E}}(S,\vec{a}):=\bigcap_{n\in S} \{x_n({t})\leq a_n\},
\end{equation}
we have
\begin{equation}\begin{aligned}
		\Pb_+&(\widetilde{\mathcal{E}}(S,\vec{a}))=\int_{\R_+} \mathrm{d} x\, \Pb_+(x_0(0)\in \mathrm{d} x)\Pb_+(\widetilde{\mathcal{E}}(S,\vec{a})|x_0(0)=x)\\
		&=\int_{\R_+} \mathrm{d} x\,(\lambda-\rho)e^{-(\lambda-\rho)x}\Pb(\widetilde{\mathcal{E}}(S,\vec{a}-x))\\
		&=-e^{-(\lambda-\rho)x}\,\Pb(\widetilde{\mathcal{E}}(S,\vec{a}-x))\Big|_0^\infty+\int_{\R_+} \mathrm{d} x\,e^{-(\lambda-\rho)x}\frac{\D}{\mathrm{d} x}\Pb(\widetilde{\mathcal{E}}(S,\vec{a}-x))\\
		&=\Pb(\widetilde{\mathcal{E}}(S,\vec{a}))-\int_{\R_+} \mathrm{d} x\,e^{-(\lambda-\rho)x}\sum_{n\in S}\frac{\D}{\mathrm{d} a_n}\Pb(\widetilde{\mathcal{E}}(S,\vec{a}))\\
		&=\Pb(\widetilde{\mathcal{E}}(S,\vec{a}))-\frac{1}{\lambda-\rho}\sum_{n\in S}\frac{\D}{\mathrm{d} a_n}\Pb_+(\widetilde{\mathcal{E}}(S,\vec{a})).
\end{aligned}\end{equation}
Combining the identity
\begin{equation}
	\Pb(\widetilde{\mathcal{E}}(S,\vec{a}))=\bigg(1+\frac{1}{\lambda-\rho}\sum_{n\in S}\frac{\D}{\mathrm{d} a_n}\bigg)\Pb_+(\widetilde{\mathcal{E}}(S,\vec{a}))
\end{equation}
with \eqref{eqDistP+} finishes the proof.
\end{proof}

\section{Asymptotic analysis}
Noticing that the change in variables
\begin{equation}\begin{aligned}
	x&\mapsto\lambda^{-1}x&\quad {t}&\mapsto\lambda^{-2}{t}
\end{aligned}\end{equation}
reproduces the same system with new parameters $\widetilde{\lambda}=1$ and $\widetilde{\rho}=\frac{\rho}{\lambda}$, we can restrict our considerations to $\lambda=1$ without loss of generality.
According to \eqref{eqScaledProcess} we use the scaled variables
\begin{equation}\label{eqScaling}\begin{aligned}
	n_i&={t}+2{t}^{2/3}r_i\\
	\xi_i&=2{t}+2{t}^{2/3}r_i+{t}^{1/3}s_i\\
	\rho&=1-{t}^{-1/3}\delta,
\end{aligned}\end{equation}
with $\delta>0$. Correspondingly, consider the rescaled (and conjugated) kernel
\begin{equation}
	\mathcal{K}^\textrm{resc}(r_1,s_1;r_2,s_2)={t}^{1/3}e^{\xi_1-\xi_2}\mathcal{K}(n_1,\xi_1;n_2,\xi_2),
\end{equation}
which decomposes into
\begin{equation}\begin{aligned}
	\mathcal{K}^\textrm{resc}(r_1,s_1;r_2,s_2)&=-\phi_{r_1,r_2}^\textrm{resc}(s_1,s_2)\Id_{r_1<r_2}+\mathcal{K}_0^\textrm{resc}(r_1,s_1;r_2,s_2)\\
	&\quad+\delta\mathpzc{f}^\textrm{resc}(r_1,s_1)\mathpzc{g}^\textrm{resc}(r_2,s_2),
\end{aligned}\end{equation}
by
\begin{equation}\begin{aligned}
	\mathpzc{f}^\textrm{resc}(r_1,s_1)&=e^{-{t}/2+\xi_1}\mathpzc{f}(n_1,\xi_1)\\
	\mathpzc{g}^\textrm{resc}(r_2,s_2)&=e^{{t}/2-\xi_2}\mathpzc{g}(n_2,\xi_2).
\end{aligned}\end{equation}

As before, the proof of Theorem~\ref{thmAsymp} relies both on the pointwise convergence as well as a uniform bound for the rescaled kernel:
\begin{prop}\label{propPointw}
Consider any $r_1,r_2$ in a bounded set and fixed $L$. Then, the kernel converges as
\begin{equation}
	\lim_{{t}\to\infty}\mathcal{K}_\textrm{stat}^\textrm{resc}(r_1,s_1;r_2,s_2)=K^\delta(r_1,s_1;r_2,s_2)
\end{equation}
uniformly for $(s_1,s_2)\in[-L,L]^2$.
\end{prop}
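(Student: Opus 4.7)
The plan is to decompose the rescaled kernel into the three pieces $-\phi^\textrm{resc}_{r_1,r_2}\Id_{r_1<r_2}$, $\mathcal{K}_0^\textrm{resc}$, and $\delta\,\mathpzc{f}^\textrm{resc}\mathpzc{g}^\textrm{resc}$, which already matches the structure of $K^\delta$. The first two pieces do not depend on $\rho$, coincide with the corresponding pieces in the packed case, and converge to $-V_{r_1,r_2}\Id_{r_1<r_2}$ and $K_{r_1,r_2}$ by Proposition~\ref{propStepPointw}. All new work lies in the rank-one correction, where $\mathpzc{f}^\textrm{resc}(r_1,s_1)\to f_{r_1}(s_1)$ and $\mathpzc{g}^\textrm{resc}(r_2,s_2)\to g_{r_2}(s_2)$ must be established separately.

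For $\mathpzc{f}^\textrm{resc}$ I would apply the change of variables $w=-1+t^{-1/3}W$ from the proof of Lemma~\ref{lemAlphaLimit}: the prefactor $1/(w+1)=t^{1/3}/W$ cancels the Jacobian $dw=t^{-1/3}dW$, the Taylor expansion produces the leading cubic $-W^3/3-r_1W^2+s_1W$, and the image contour $\{\Re W=t^{1/3}(1-\varepsilon)\}$ can be deformed, without crossing the pole at $W=0$, to the Airy-descent contour of the representation~\eqref{contIntfg} of $f_{r_1}(s_1)$. Uniform steep-descent tail bounds in the style of Lemma~\ref{lemAlphaBound} justify taking the limit under the integral sign and yield convergence uniformly in $s_1\in[-L,L]$.

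For $\mathpzc{g}^\textrm{resc}$ I would split $\Gamma_{0,-\rho}=\Gamma_{-\rho}\cup\Gamma_0'$ with $\Gamma_0'$ chosen inside $\{\Re(z+\rho)>0\}$. The $\Gamma_{-\rho}$ contribution is the residue $\rho^{-n_2}e^{-t(\rho^2-1)/2-\xi_2(1-\rho)}$, which after a careful Taylor expansion in $\rho=1-\delta t^{-1/3}$ (all $t^{2/3}$ and $t^{1/3}$ terms cancel) converges to $e^{\delta^3/3+r_2\delta^2-s_2\delta}$, exactly the first term of $g_{r_2}(s_2)$. On $\Gamma_0'$, writing $1/(z+\rho)=\int_0^\infty dy\,e^{-y(z+\rho)}$, applying Fubini and rescaling $y=t^{1/3}u$, the contribution becomes $\int_0^\infty du\,e^{u\delta}\beta_t(r_2,s_2+u)$, where $\beta_t$ is the auxiliary function from the packed chapter. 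Dominated convergence based on Lemmas~\ref{lemAlphaLimit}--\ref{lemAlphaBound} then identifies the limit as the remaining Airy integral in $g_{r_2}(s_2)$.

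The delicate step will be this last dominated convergence: the advertised bound $|\beta_t(r,s)|\le c_L e^{-s}$ from Lemma~\ref{lemAlphaBound} only controls $e^{u\delta}\beta_t(r_2,s_2+u)$ when $\delta<1$, whereas the proposition covers every $\delta>0$. Inspection of the proof of Lemma~\ref{lemAlphaBound} shows, however, that its intermediate estimate $e^{G(z_0)}\le e^{-\tfrac12\omega t^{1/3}s}$ with $\omega=\min(t^{-1/3}\sqrt{s},\varepsilon)$ in fact gives the genuine Airy tail $e^{-s^{3/2}/2}$ in the regime $s\lesssim \varepsilon^2 t^{2/3}$, and the even stronger $e^{-\tfrac12\varepsilon t^{1/3}s}$ outside it — either of which dominates $e^{u\delta}$ for any fixed $\delta>0$, closing the argument.
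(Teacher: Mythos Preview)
Your proposal is correct and follows the paper's argument closely: the same three-piece decomposition, the first two pieces handled by Proposition~\ref{propStepPointw}, and $\mathpzc{g}^\textrm{resc}$ treated exactly as you describe---residue at $-\rho$ plus $\int_0^\infty e^{\delta u}\beta_t(r_2,s_2+u)\,du$. The one difference is that the paper handles $\mathpzc{f}^\textrm{resc}$ symmetrically, by first shifting the $w$-contour across the pole at $w=-1$ (picking up the constant~$1$) and then expanding $1/(w+1)$ to get $\mathpzc{f}^\textrm{resc}=1-\int_0^\infty\alpha_t(r_1,s_1+x)\,dx$; this cleanly separates the pole and reuses the $\alpha_t$ machinery, whereas your direct change-of-variables route works but needs the contour deformation done in the $w$-variable before localizing near the critical point. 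Your final paragraph on the dominated-convergence issue for $\delta\geq1$ is in fact more careful than the paper, which simply writes ``by the same argument'' without noting that the stated bound $|\beta_t|\le c_Le^{-s}$ from Lemma~\ref{lemAlphaBound} is by itself insufficient there.
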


\begin{cor}\label{corBound}
Consider $r_1,r_2$ fixed. For any $L$ there exists ${t}_0$ such that for ${t}>{t}_0$ the bound
\begin{equation}
	\left|\mathcal{K}_\textrm{stat}^\textrm{resc}(r_1,s_1;r_2,s_2)\right|\leq \const_L
\end{equation}
holds for all $(s_1,s_2)\in[-L,L]^2$.
\end{cor}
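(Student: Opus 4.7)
The plan is to derive this statement as a direct consequence of Proposition~\ref{propPointw}, exploiting the fact that uniform convergence on a compact set to a continuous limit automatically produces a uniform bound for large enough $t$. The analogous Corollary~\ref{corStepBound} in the packed case should follow from Proposition~\ref{propStepPointw} in exactly the same way, and no new analytic input beyond what is already supplied by the pointwise proposition is required here.

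Concretely, I would first observe that the limit kernel $K^\delta(r_1,s_1;r_2,s_2)$ appearing in Proposition~\ref{propPointw} is continuous in $(s_1,s_2)$ for fixed $r_1,r_2$. This is visible directly from the formulas \eqref{eqKdelta} and \eqref{eqKernelDef}: the Gaussian kernel $V_{r_1,r_2}$ is manifestly smooth, the kernel $K_{r_1,r_2}$ is an integral of products of Airy functions with bounded exponential prefactors on any compact set, and the functions $f_{r_1}, g_{r_2}$ are similarly given by convergent integrals against the super-exponentially decaying Airy function. Consequently $|K^\delta(r_1,\cdot\,;r_2,\cdot)|$ is bounded by some finite constant $M_L$ on the compact square $[-L,L]^2$.

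Second, Proposition~\ref{propPointw} gives uniform convergence on $[-L,L]^2$, so for any fixed $\e>0$ there exists ${t}_0$ such that for all ${t}>{t}_0$ and all $(s_1,s_2)\in[-L,L]^2$ one has
\begin{equation}
|\mathcal{K}_\textrm{stat}^\textrm{resc}(r_1,s_1;r_2,s_2)-K^\delta(r_1,s_1;r_2,s_2)|<\e.
\end{equation}
Combining this with the bound on the limit gives $|\mathcal{K}_\textrm{stat}^\textrm{resc}(r_1,s_1;r_2,s_2)|\leq M_L+\e=:\const_L$ for all ${t}>{t}_0$ and all $(s_1,s_2)\in[-L,L]^2$, which is exactly the claim.

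There is essentially no obstacle to overcome: the only point requiring a brief verification is the continuity of $K^\delta$ on the compact set, which is immediate from the explicit formulas. All of the genuine analytic work, in particular the steep-descent analysis leading to the pointwise convergence and the control of error terms, is already performed in the proof of Proposition~\ref{propPointw}, so the corollary reduces to the elementary triangle-inequality argument above.
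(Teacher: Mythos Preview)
Your argument is correct and is precisely the reasoning the paper intends: the statement is labeled a corollary of Proposition~\ref{propPointw} with no separate proof given, and the standard triangle-inequality argument from uniform convergence to a continuous limit is exactly what fills the gap.
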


\begin{prop}\label{propK0Bound}
For fixed $r_1,r_2,L$ and $\delta>0$ there exists ${t}_0>0$ such that the estimate
\begin{equation}
	\left|\delta\mathpzc{f}^\textrm{resc}(r_1,s_1)\mathpzc{g}^\textrm{resc}(r_2,s_2)\right|\leq \const\cdot e^{-\min\{\delta,1\} s_2}
\end{equation}
holds for any ${t}>{t}_0$ and $s_1,s_2>0$.
\end{prop}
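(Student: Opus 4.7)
The plan is to bound $\mathpzc{f}^\textrm{resc}$ and $\mathpzc{g}^\textrm{resc}$ separately via steep descent analysis modeled on Lemma~\ref{lemAlphaBound}, combining residue extraction at the rational poles with the standard bounds for $\alpha_t$ and $\beta_t$. Under the scaling \eqref{eqScaling} and $\rho=1-\delta t^{-1/3}$, both integrands share the doubly critical point $\mp 1$ but carry an extra simple pole: at $w=-1$ itself for $\mathpzc{f}^\textrm{resc}$, and at $z=-\rho=-1+\delta t^{-1/3}$, which approaches the critical point as $t\to\infty$, for $\mathpzc{g}^\textrm{resc}$.

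For $\mathpzc{g}^\textrm{resc}$ I decompose $\oint_{\Gamma_{0,-\rho}}=2\pi\mathrm{i}\,\mathrm{Res}_{z=-\rho}+\oint_{\Gamma_0}$, with $\Gamma_0$ enclosing only the origin. The residue equals $\rho^{-n_2}e^{-t(\rho^2-1)/2-\xi_2(1-\rho)}$; substituting $\rho=1-\delta t^{-1/3}$ and Taylor expanding $\log(1-\delta t^{-1/3})$ to the relevant order, all powers of $t$ cancel exactly, leaving $e^{\delta^3/3+r_2\delta^2-\delta s_2}(1+\Or(t^{-1/3}))$, hence $\leq C_{r_2,\delta}\,e^{-\delta s_2}$. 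The remaining loop $\Gamma_0$ is deformed to the $\beta_t$-type steep descent contour $\overline\Gamma$ of Lemma~\ref{lemAlphaBound} with vertical segment $\gamma_3$ at $\Re z=-1+\omega$, choosing $\omega:=(\delta+1)t^{-1/3}$. For $t_0$ large this ensures $\omega>\delta t^{-1/3}$, so $-\rho$ lies strictly outside the deformed loop and no extra residue is produced, while $\omega\to 0$ keeps the cubic Taylor expansion at the critical point accurate. On $\overline\Gamma$, $|(z+\rho)^{-1}|\leq 1/(\omega-\delta t^{-1/3})=t^{1/3}$, which precisely compensates the missing $t^{1/3}$ prefactor of $\mathpzc{g}^\textrm{resc}$ compared to $\beta_t$. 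Running the piece-by-piece steep descent estimates underlying $|\beta_t|\leq c_L e^{-s}$ with this additional $(z+\rho)^{-1}$ factor yields a contribution of order $e^{-(\delta+1)s_2}$, and summing, $|\mathpzc{g}^\textrm{resc}|\leq C_\delta\,e^{-\min\{\delta,1\}s_2}$.

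For $\mathpzc{f}^\textrm{resc}$ only a uniform bound in $s_1>0$ is needed, since no $s_1$-decay is demanded. Deforming $\mathrm{i}\R-\e$ leftwards past the simple pole at $w=-1$ (with $F(-1)=1$) contributes a $+1$ and leaves the integral on the $\alpha_t$-type steep descent contour $\Gamma'$ of Lemma~\ref{lemAlphaBound}. For $s_1$ bounded away from $0$, the steep descent analysis on $\Gamma'$ with $\omega=\min\{t^{-1/3}\sqrt{s_1},\e_0\}$ gives an integral contribution of order $e^{G(z_0)}/\sqrt{s_1}$ with $e^{G(z_0)}\lesssim e^{-\frac{2}{3}s_1^{3/2}-r_1 s_1}$, which is explicitly bounded. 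For $s_1$ near $0$, the pointwise convergence $\mathpzc{f}^\textrm{resc}\to f_{r_1}$ to the bounded function $f_{r_1}$, established by the standard steep descent scheme as in Lemma~\ref{lemAlphaLimit}, supplies boundedness. Hence $|\mathpzc{f}^\textrm{resc}|\leq C$ uniformly on $s_1>0$. Multiplying the two bounds and absorbing the fixed $\delta$ into the overall constant gives the claim.

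The main technical obstacle is the joint constraint on $\omega$ in the $\mathpzc{g}^\textrm{resc}$ estimate: it must exceed $\delta t^{-1/3}$ to exclude the pole at $-\rho$, yet remain on the $t^{-1/3}$ scale for the cubic Taylor expansion at the doubly critical point to control the steep descent. The scaling $\omega\sim t^{-1/3}$ resolves this tension, and the interplay between the residue (contributing the rate $\delta$) and the contour integral (contributing the Airy-type rate controlled by $1$) is precisely what is reflected in the $\min\{\delta,1\}$ exponent.
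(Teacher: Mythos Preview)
Your decomposition—residue extraction at the rational pole plus a steep-descent bound on the remainder—is exactly the paper's route, but the paper executes the second step more cleanly. Rather than redoing the contour analysis with a custom $\omega$ and bounding $(z+\rho)^{-1}$ crudely by $t^{1/3}$, the paper writes the rational factor as a Laplace integral, which turns the post-residue pieces directly into integrals of the already-analysed auxiliary functions $\alpha_t,\beta_t$: one obtains $\mathpzc{f}^\textrm{resc}=1-\int_0^\infty\alpha_t(r_1,s_1+x)\,\mathrm{d} x$ and $\mathpzc{g}^\textrm{resc}=\textrm{Res}_{\mathpzc{g},-\rho}+\int_0^\infty\beta_t(r_2,s_2+x)e^{\delta x}\,\mathrm{d} x$ (equations \eqref{eqfInt} and \eqref{eqgInt}). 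Lemma~\ref{lemAlphaBound} then applies off the shelf, giving $|\mathpzc{f}^\textrm{resc}|\leq 1+e^{-s_1}$ and $\int_0^\infty e^{-(s_2+x)+\delta x}\,\mathrm{d} x=e^{-s_2}/(1-\delta)$ for the loop part of $\mathpzc{g}^\textrm{resc}$, with no new steep-descent work whatsoever.

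One technical slip in your version: with the fixed choice $\omega=(\delta+1)t^{-1/3}$, the quantity controlling the Gaussian decay on $\gamma_3$ in the proof of Lemma~\ref{lemAlphaBound} becomes $\eta\approx t^{1/3}\omega-r_2=(\delta+1)-r_2$, which fails to be positive once $r_2\geq\delta+1$. So you cannot simply ``run the piece-by-piece estimates'' of that lemma unchanged; its choice $\omega=\min\{t^{-1/3}\sqrt{s},\e\}$ is $s$-adaptive and is doing different work. Taking $\omega=Ct^{-1/3}$ with $C$ large depending on $r_2$ and $\delta$ repairs this, but the paper's integral representation sidesteps the whole issue.
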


Now we can prove the asymptotic theorem:
\begin{proof}[Proof of Theorem~\ref{thmAsymp}]The joint distributions of the rescaled process $X^{(\delta)}_{t}(r)$ under the measure $\Pb_+$ are given by the Fredholm determinant with series expansion
\begin{equation}\label{eqFredExp}\begin{aligned}
	\Pb_+&\bigg(\bigcap_{k=1}^m\big\{X_{t}^{(\delta)}(r_k)\leq s_k\big\}\bigg)\\
	&=\sum_{N\geq0}\frac{(-1)^N}{N!}\sum_{i_1,\dots,i_N=1}^m\int\prod_{k=1}^N\mathrm{d} x_k\,\Id_{x_k>\xi_{i_k}}\det_{1\leq k,l\leq N}\left[\mathcal{K}_\textrm{stat}(n_{i_k},\xi_k;n_{i_l},\xi_l)\right],
\end{aligned}\end{equation}
where $n_i$ and $\xi_i$ are understood as in (\ref{eqScaling}). Using the change of variables \mbox{$\sigma_k={t}^{-1/3}(x_k-2{t}-2{t}^{2/3}r_{i_k})$}, we obtain
\begin{equation}\label{eqFredExp2}\begin{aligned}
	\eqref{eqFredExp}&=\sum_{N\geq0}\frac{(-1)^N}{N!}\sum_{i_1,\dots,i_N=1}^m\int\prod_{k=1}^N\mathrm{d} \sigma_k\,\Id_{\sigma_k>s_{i_k}}\\
	&\quad\times\det_{1\leq k,l\leq N}\left[\mathcal{K}_\textrm{stat}^\textrm{resc}(r_k,\sigma_k;r_l,\sigma_l)\frac{(1+\sigma_l^2)^{m+1-i_l}}{(1+\sigma_k^2)^{m+1-i_k}}\right],
\end{aligned}\end{equation}
where the fraction inside the determinant is a new conjugation, which does not change the value of the determinant. This conjugation helps bounding the first part of the kernel, which would otherwise not decay when some $\sigma_k$ and $\sigma_l$ are close. By Proposition~\ref{propStepK0Bound} we have
\begin{equation}
 \left|\mathcal{K}_0^\textrm{resc}(r_1,s_1;r_2,s_2)\right|\leq\frac{1}{2} e^{-(s_1+s_2)}\leq \const\cdot e^{-\min\{\delta,1\}s_2}.
\end{equation}
Using this estimate together with Corollary~\ref{corBound} and Propositions~\ref{propK0Bound},~\ref{propStepPhiBound}, we can bound the $(k,l)$-coefficient inside the determinant by
\begin{equation}\label{eqCoeff}
	\const_1\left(e^{-|\sigma_k-\sigma_l|}\Id_{i_k<i_l}+e^{-\min\{\delta,1\}\sigma_l}\right)\frac{(1+\sigma_l^2)^{m+1-i_l}}{(1+\sigma_k^2)^{m+1-i_k}},
\end{equation}
assuming the $r_k$ are ordered. The bounds
\begin{equation}\begin{aligned}
	\frac{(1+x^2)^i}{(1+y^2)^j}e^{-|x-y|}&\leq \const_2\frac{1}{1+y^2},& \text{for } i&<j,\\
	\frac{(1+x^2)^i}{(1+y^2)^j}e^{-\min\{\delta,1\} x}&\leq \const_3\frac{1}{1+y^2},& \text{for } j&\geq1,
\end{aligned}\end{equation}
lead to
\begin{equation}
	\eqref{eqCoeff}\leq\const_4\frac{1}{1+\sigma_k^2}.
\end{equation}
Using the Hadamard bound on the determinant, the integrand of \eqref{eqFredExp2} is therefore bounded by
\begin{equation}
	\const_4^NN^{N/2}\prod_{k=1}^N\Id_{\sigma_k>s_{i_k}}\frac{\D\sigma_k}{1+\sigma_k^2},
\end{equation}
which is integrable. Furthermore,
\begin{equation}
	|\eqref{eqFredExp}|\leq\sum_{N\geq0}\frac{\const_5^NN^{N/2}}{N!},
\end{equation}
which is summable, since the factorial grows like $(N/e)^N$, i.e., much faster than the numerator. Dominated convergence thus allows to interchange the limit ${t}\to\infty$ with the integral and the infinite sum. The pointwise convergence comes from Proposition~\ref{propPointw}, thus
\begin{equation}
	\lim_{{t}\to\infty} \Pb_+\bigg(\bigcap_{k=1}^m\big\{X_{t}^{(\delta)}(r_k)\leq s_k\big\}\bigg)=\det\left(\Id-\chi_s K^\delta\chi_s\right)_{L^2(\{r_1,\dots,r_m\}\times\R)}.
\end{equation}

It remains to show that the convergence carries over to the measure $\Pb$. The identity
\begin{equation}
	\frac{\mathrm{d} s_i}{\D\xi_i}={t}^{-1/3}=\delta^{-1}(1-\rho)
\end{equation}
leads to
\begin{equation}\label{eqShift}
\Pb\bigg(\bigcap_{k=1}^m\big\{X_{t}^{(\delta)}(r_k)\leq s_k\big\}\bigg)=\bigg(1+\frac{1}{\delta}\sum_{i=1}^m\frac{\D}{\mathrm{d} s_i}\bigg)\Pb_+\bigg(\bigcap_{k=1}^m\big\{X_{t}^{(\delta)}(r_k)\leq s_k\big\}\bigg).
\end{equation}
Notice that in \eqref{eqFredExp2}, $s_i$ appears only in the indicator function, so differentiation just results in one of the $\sigma_k$ not being integrated but instead being set to $s_i$. Using the same bounds as before we can again show interchangeability of the limit ${t}\to\infty$ with the remaining integrals and the infinite sum.
\end{proof}

\begin{proof}[Proof of Proposition~\ref{propPointw}]
The kernel $\mathcal{K}_\textrm{stat}^\textrm{resc}$ consists of three parts, where compact convergence for the first two parts of the kernel comes directly from Proposition~\ref{propStepPointw}.

As we did with $\mathcal{K}_0^\textrm{resc}$, we rewrite the third part, which is the product of $\mathpzc{f}^\textrm{resc}$ and $\mathpzc{g}^\textrm{resc}$, as integrals over the previously defined functions $\alpha$ and $\beta$:
\begin{equation}\begin{aligned}\label{eqfInt}
&\mathpzc{f}^\textrm{resc}(r_1,s_1)=\frac{1}{2\pi\mathrm{i}}\int_{\mathrm{i}\R-\e}\mathrm{d} w\,\frac{e^{{t} (w^2-1)/2+\xi_1(w+1)}(-w)^{n_1}}{w+1}\\
	&=1+\frac{1}{2\pi\mathrm{i}}\int_{\mathrm{i}\R-\e-1}\mathrm{d} w\,\frac{e^{{t} (w^2-1)/2+\xi_1(w+1)}(-w)^{n_1}}{w+1}\\
	&=1-\frac{1}{2\pi\mathrm{i}}\int_{\mathrm{i}\R-\e-1}\mathrm{d} w\,e^{{t} (w^2-1)/2+\xi_1(w+1)}(-w)^{n_1}\int_0^\infty \mathrm{d} x\, {t}^{1/3}e^{{t}^{1/3}x(w+1)}\\&=1-\int_0^\infty \mathrm{d} x\, \alpha_{{t}}(r_1,s_1+x).
\end{aligned}\end{equation}

Now,
\begin{equation}\begin{aligned}\label{eqfSup}
	\sup_{s_1\in[-L,L]}&\left|\int_0^\infty \mathrm{d} x\, \alpha_{{t}}(r_1,s_1+x)-\int_0^\infty \mathrm{d} x\, \alpha(r_1,s_1+x)\right|\\
	\leq&\int_0^\infty \mathrm{d} x\,\sup_{s_1,s_2\in[-L,L]}\left|\alpha_{{t}}(r_1,s_1+x)-\alpha(r_1,s_1+x)\right|.
\end{aligned}\end{equation}
By Lemma~\ref{lemAlphaLimit} the integrand converges to zero for every $x>0$. Using Lemma~\ref{lemAlphaBound} we can bound it by $\const\cdot e^{-2x}$, thus ensuring that \eqref{eqfSup} goes to zero, i.e., $\mathpzc{f}^\textrm{resc}(r_1,s_1)$ converges to $f_{r_1}(s_1)$ uniformly on compact sets.

Similarly,
\begin{equation}\begin{aligned}\label{eqgInt}
	\mathpzc{g}^\textrm{resc}(r_2,s_2)&=\textrm{Res}_{\mathpzc{g},-\rho}+\int_0^\infty \mathrm{d} x\, \beta_{{t}}(r_2,s_2+x)e^{\delta x},
\end{aligned}\end{equation}
with
\begin{equation}
	\textrm{Res}_{\mathpzc{g},-\rho}=e^{{t}^{2/3}\delta-{t}^{1/3}\delta^2/2-\xi_2{t}^{-1/3}\delta}(1-{t}^{-1/3}\delta)^{-n_2}.
\end{equation}
The residuum satisfies the limit
\begin{equation}\label{eqResLimit}
	\lim_{{t}\to\infty}\textrm{Res}_{\mathpzc{g},-\rho}=e^{\delta^3/3+r_2\delta^2-s_2\delta}
\end{equation}
uniformly in $s_2$. By the same argument, uniform convergence holds again.
\end{proof}

\begin{proof}[Proof of Proposition~\ref{propK0Bound}]
The product $\delta\mathpzc{f}^\textrm{resc}(r_1,s_1)\mathpzc{g}^\textrm{resc}(r_2,s_2)$ can be bounded using Lemma~\ref{lemAlphaBound} in the representations \eqref{eqfInt} and \eqref{eqgInt}:

\begin{equation}\begin{aligned}\label{eqfgbound}
		\left|\delta\mathpzc{f}^\textrm{resc}(r_1,s_1)\mathpzc{g}^\textrm{resc}(r_2,s_2)\right|&\leq\delta\left(1+\int_0^\infty \mathrm{d} x\, e^{-(s_1+x)}\right)\\&\quad\cdot\left(\textrm{Res}_{\mathpzc{g},-\rho}+\int_0^\infty \mathrm{d} x\, e^{-(s_2+x)}e^{\delta x}\right)\\
		&=\delta\left(1+e^{-s_1}\right)\left(\textrm{Res}_{\mathpzc{g},-\rho}+\frac{e^{-s_2}}{1-\delta}\right)
\end{aligned}\end{equation}
Since the convergence \eqref{eqResLimit} is uniform in $s_2$ we can deduce
\begin{equation}
	\left|\textrm{Res}_{\mathpzc{g},-\rho}\right|\leq \const_1\cdot e^{-s_2\delta},
\end{equation}
resulting in
\begin{equation}
 |\eqref{eqfgbound}|\leq\const\cdot e^{-\min\{\delta,1\} s_2}.
\end{equation}

\end{proof}
\section{Path-integral style formula}
Using the results from~\cite{BCR13} we can transform the formula for the joint probability distribution of the finite-step Airy$_\textrm{stat}$ process from the current form involving a Fredholm determinant over the space \mbox{$L^2(\{r_1,\dots,r_m\}\times\R)$} into a path-integral style form, where the Fredholm determinant is over the simpler space $L^2(\R)$. The result of~\cite{BCR13} can not be applied at the stage of finite time as one of the assumption is not satisfied.
\begin{prop}\label{propPathInt}
For any parameters $\chi_k\in\R$, $1\leq k\leq m$, satisfying
\begin{equation}
  0<\chi_m<\dots<\chi_2<\chi_1<\max_{i<j}\left\{r_j-r_i,\delta\right\},
\end{equation}
define the multiplication operator $(M_{r_i}f)(x)=m_{r_i}(x)f(x)$, with
\begin{equation}
 m_{r_i}(x)=\begin{cases}e^{-\chi_ix} &\mbox{for } x\geq0 \\ e^{x^2} &\mbox{for } x<0.\end{cases}
\end{equation}
Writing $K^\delta_{r_i}(x,y):=K^\delta(r_i,x;r_i,y)$, the finite-dimensional distributions of the finite-step Airy$_\textrm{stat}$ process are given by
\begin{equation}\label{PathIntForm}\begin{aligned}
	\Pb &\bigg(\bigcap_{k=1}^m\big\{\mathcal{A}_\mathrm{stat}^{(\delta)}(r_k)\leq s_k\big\}\bigg)=\bigg(1+\frac{1}{\delta}\sum_{i=1}^m\frac{\D}{\mathrm{d} s_i}\bigg)
\det\Big(\Id+M_{r_1} Q M_{r_1}^{-1}\Big)_{L^2(\R)},
\end{aligned}\end{equation}
with
\begin{equation}
Q=-K^{\delta}_{r_1}+\bar{P}_{s_1}V_{r_1,r_2}\bar{P}_{s_2}\cdots V_{r_{m-1},r_m}\bar{P}_{s_m}V_{r_m,r_1}K^\delta_{r_1},
\end{equation}
where $\bar{P}_s=\Id-P_s$ denotes the projection operator on $(-\infty,s)$.
\end{prop}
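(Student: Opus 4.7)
The approach is to apply the general path-integral transformation developed in~\cite{BCR13}, which converts an extended Fredholm determinant on $L^2(\{r_1,\dots,r_m\}\times\R)$ into a single Fredholm determinant on $L^2(\R)$ provided the underlying extended kernel carries a compatible semigroup structure. The role of the conjugation by $M_{r_1}$ is to render the resulting operators trace-class, which is what fails at finite time.

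\emph{Step 1: Semigroup identities.} I would first verify that $K^\delta$ admits the decomposition
\begin{equation}
K^\delta(r_i,x;r_j,y)=-V_{r_i,r_j}(x,y)\Id_{i<j}+W_{r_i,r_j}(x,y),\qquad W_{r_i,r_j}:=K_{r_i,r_j}+\delta f_{r_i}\otimes g_{r_j},
\end{equation}
and that $W$ is propagated by the Gaussian semigroup: formally $W_{r_i,r_j}=V_{r_i,r_1}W_{r_1,r_1}V_{r_1,r_j}$. For the Airy part $K_{r_i,r_j}$ this follows from the double contour representation in \eqref{contIntK} together with the Gaussian identity $\int V_{r,r'}(x,y)e^{Wy}\,\D y=e^{(r'-r)W^2+Wx}$. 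The same computation, applied to the single contour integrals \eqref{contIntfg}, gives $V_{r_i,r_1}f_{r_1}=f_{r_i}$ and $g_{r_1}V_{r_1,r_j}=g_{r_j}$, so the rank-one piece $\delta f_{r_i}\otimes g_{r_j}$ propagates correctly as well.

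\emph{Step 2: Apply the BCR13 transformation.} The general theorem of~\cite{BCR13} then yields the algebraic identity
\begin{equation}
\det\bigl(\Id-\chi_sK^\delta\chi_s\bigr)_{L^2(\{r_1,\dots,r_m\}\times\R)}=\det\bigl(\Id-K^\delta_{r_1}+\bar P_{s_1}V_{r_1,r_2}\bar P_{s_2}\cdots V_{r_{m-1},r_m}\bar P_{s_m}V_{r_m,r_1}K^\delta_{r_1}\bigr)_{L^2(\R)},
\end{equation}
which is precisely $\det(\Id+Q)_{L^2(\R)}$. Since every $s_i$ enters only via the projections $\chi_s$ on the left and $\bar P_{s_i}$ on the right, the derivative operator $(1+\delta^{-1}\sum_i\partial_{s_i})$ passes through the identity unchanged, giving \eqref{PathIntForm} once Fredholm determinants on the two sides are made sense of.

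\emph{Step 3: Conjugation and trace class.} To legitimize Step 2, I would insert $M_{r_1}M_{r_1}^{-1}$ between every factor and check that each conjugated operator is Hilbert--Schmidt (so that their product is trace-class). On $\R_+$ the factor $e^{-\chi_1 x}$ dampens the polynomial-times-exponential growth produced by the backwards piece $V_{r_m,r_1}$ and by the $e^{\delta x}$-type growth hidden in $g_{r_j}$; the constraints $\chi_m<\dots<\chi_1$ and $\chi_1<\min_{i<j}(r_j-r_i)$ give a chain of telescoping bounds for the successive Gaussian convolutions, while $\chi_1<\delta$ handles the rank-one term. On $\R_-$ the factor $e^{x^2}$ is chosen to overwhelm the super-exponential Airy decay of $K^\delta_{r_1}$ and of $f_{r_1}$, so that each conjugated operator maps weighted $L^2$ boundedly into itself.

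\emph{Main obstacle.} The hard part is Step 3. The wraparound $V_{r_m,r_1}$ with $r_m>r_1$ is literally the backward heat flow and is unbounded on $L^2(\R)$; the BCR13 identity is only meaningful once a compensating weight absorbs this unboundedness, using the fact that $V_{r_m,r_1}$ appears composed with $K^\delta_{r_1}$ (which contributes Airy decay on $\R_-$) and is sandwiched between the projections $\bar P_{s_i}$. Carefully tracking these weighted bounds, \emph{uniformly} in the parameters $s_i$ so that the derivative in \eqref{PathIntForm} may be taken, is exactly what the non-standard multiplier $M_{r_1}$ with its combined exponential/Gaussian structure is designed for, and where the ordered inequalities on the $\chi_k$ enter.
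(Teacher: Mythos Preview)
Your proposal is correct and follows essentially the same route as the paper: apply Theorem~1.1 of~\cite{BCR13}, verify the semigroup/reversibility/right-invertibility relations for $K^\delta$ via the Gaussian identity on the contour representations~\eqref{contIntK},~\eqref{contIntfg}, and then establish the trace-class hypotheses by factoring each conjugated operator into a product of Hilbert--Schmidt pieces, using the ordered $\chi_k$ and the Airy decay. One small correction: the paper formulates the algebraic conditions exactly as the three numbered Assumption~2 items of~\cite{BCR13} (right-invertibility, semigroup, reversibility $V_{r_i,r_j}K^\delta_{r_j}=K^\delta_{r_i}V_{r_i,r_j}$) rather than your global factorization $W_{r_i,r_j}=V_{r_i,r_1}W_{r_1,r_1}V_{r_1,r_j}$, and it handles the backward operator $V_{r_m,r_1}$ by \emph{defining} it only on the range of $K^\delta_{r_1}$ (Remark~\ref{remV}) rather than by a separate weighted estimate.
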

\begin{remark}\label{remV}
 The operator $V_{r_j,r_i}$ for $r_i<r_j$ is defined only on the range of $K^\delta_{r_i}$ and acts on it in the following way:
 \begin{equation}
   V_{r_j,r_i}K_{r_i,r_k}=K_{r_j,r_k},\qquad
   V_{r_j,r_i}f_{r_i}=f_{r_j}.
 \end{equation}
In particular, we have also $V_{r_j,r_i}\mathbf{1}=\mathbf{1}$.
\end{remark}

\begin{proof}
We will denote conjugations by the operator $M$ by a hat in the following way:
\begin{equation}
 \begin{aligned}
  \widehat{V}_{r_i,r_j}&=M_{r_i}V_{r_i,r_j}M^{-1}_{r_j}, &\quad \widehat{f}_{r_i}&=M_{r_i}f_{r_i},\\
  \widehat{K}^\delta_{r_i}&=M_{r_i}K^\delta_{r_i}M^{-1}_{r_i}, &\quad \widehat{g}_{r_i}&=g_{r_i}M_{r_i}^{-1},\\
  \widehat{K}_{r_i,r_j}&=M_{r_i}K_{r_i,r_j}M^{-1}_{r_j}.
 \end{aligned}
\end{equation}
Applying the conjugation also in the determinant in \eqref{eq3.6}, the identity we have to show is:
\begin{equation}
\begin{aligned}
  \det&\left(\Id-\chi_s \widehat{K}^\delta\chi_s\right)_{L^2(\{r_1,\dots,r_m\}\times\R)}\\&= \det\Big(\Id-\widehat{K}^{\delta}_{r_1}+\bar{P}_{s_1}\widehat{V}_{r_1,r_2}\bar{P}_{s_2}\cdots \widehat{V}_{r_{m-1},r_m}\bar{P}_{s_m}\widehat{V}_{r_m,r_1}\widehat{K}^\delta_{r_1}\Big)_{L^2(\R)}
\end{aligned}
\end{equation}
This is done by applying Theorem 1.1~\cite{BCR13}.

It has three groups of assumptions we have to prove. We merged them into two by choosing the multiplication operators of Assumption 3 to be the identity.
\paragraph{Assumption 1}
\renewcommand{\theenumi}{\roman{enumi}}
\renewcommand{\labelenumi}{(\theenumi)}
\begin{enumerate}
	\item The operators $P_{s_i}\widehat{V}_{r_i,r_j}$, $P_{s_i}\widehat{K}^\delta_{r_i}$, $P_{s_i}\widehat{V}_{r_i,r_j}\widehat{K}^\delta_{r_j}$ and $P_{s_j}\widehat{V}_{r_j,r_i}\widehat{K}^\delta_{r_i}$ for $r_i<r_j$ preserve $L^2(\R)$ and are trace class in $L^2(\R)$.
	\item The operator $\widehat{V}_{r_i,r_1}\widehat{K}^{\delta}_{r_1}-\bar{P}_{s_i}\widehat{V}_{r_i,r_{i+1}}\bar{P}_{s_{i+1}}\cdots \widehat{V}_{r_{m-1},r_m}\bar{P}_{s_m}\widehat{V}_{r_m,r_1}\widehat{K}^\delta_{r_1}$ preserves $L^2(\R)$ and is trace class in $L^2(\R)$.
\end{enumerate}
\paragraph{Assumption 2}
\begin{enumerate}
	\item Right-invertibility: $\widehat{V}_{r_i,r_j}\widehat{V}_{r_j,r_i}\widehat{K}^\delta_{r_i}=\widehat{K}^\delta_{r_i}$
	\item Semigroup property: $\widehat{V}_{r_i,r_j}\widehat{V}_{r_j,r_k}=\widehat{V}_{r_i,r_k}$
	\item Reversibility relation: $\widehat{V}_{r_i,r_j}\widehat{K}^\delta_{r_j}=\widehat{K}^\delta_{r_i}\widehat{V}_{r_i,r_j}$
\end{enumerate}

The semigroup property is clear. To see the reversibility relation, start from the contour integral representation \eqref{contIntK} of $K_{r_j,r_j}$ and \eqref{contIntfg} of $f_{r_j}$ and use the Gaussian identity:
\begin{equation}\begin{aligned}
	\int_\R \mathrm{d} z \frac{1}{\sqrt{4\pi (r_j-r_i)}}e^{-(z-x)^2/4(r_j-r_i)}e^{-r_jW^2+zW}=e^{-r_iW^2+xW}.
\end{aligned}\end{equation}
This results in $\widehat{V}_{r_i,r_j}\widehat{K}^\delta_{r_j}=\widehat{K}_{r_i,r_j}+\delta \widehat{f}_{r_i}\otimes \widehat{g}_{r_j}$.
On the other hand we have
\begin{equation}\begin{aligned}
	\int_\R \mathrm{d} z \frac{1}{\sqrt{4\pi (r_j-r_i)}}e^{-(z-y)^2/4(r_j-r_i)}e^{r_iZ^2-zZ}=e^{r_jZ^2-yZ},
\end{aligned}\end{equation}
so $\widehat{K}^\delta_{r_i}\widehat{V}_{r_i,r_j}=\widehat{K}_{r_i,r_j}+\delta \widehat{f}_{r_i}\otimes \widehat{g}_{r_j}$, which proves Assumption 2 (iii). Noticing Remark~\ref{remV}, the right-invertibility follows immediately.

Assumption 1 (ii) can be deduced from Assumption 1 (i) as shown in Remark~3.2,~\cite{BCR13}. Using the previous identities we thus are left to show that the three operators $P_{s_i}\widehat{V}_{r_i,r_j}$, for $r_i<r_j$, as well as $P_{s_i}\widehat{K}_{r_i,r_j}$ and $P_{s_i}\widehat{f}_{r_i}\otimes \widehat{g}_{r_j}$, for arbitrary $r_i$, $r_j\in\R$, are all $L^2$-bounded and trace class.

First notice $V_{r_i,r_j}(x,y)=V_{0,r_j-r_i}(-x,-y)$. Using the shorthand $r=r_j-r_i$ and inserting this into the integral representation \eqref{contIntK} of $V$, we have
\begin{equation}
 V_{r_i,r_j}(x,y)=e^{\frac{2}{3}r^3}\int_\R\mathrm{d} \lambda\, \Ai(-x+\lambda)e^{r(-y+\lambda)}\Ai(r^2-y+\lambda)=\left(V^{(1)}V^{(2)}_r\right)(x,y),
\end{equation}
with the new operators
\begin{equation}\begin{aligned}
 V^{(1)}(x,y)&=\Ai(-x+y)\\
 V^{(2)}_r(x,y)&=e^{\frac{2}{3}r^3}e^{r(x-y)}\Ai(r^2+x-y).
\end{aligned}\end{equation}
Introducing yet another operator, $(Nf)(x)=\exp\left(-(\chi_i+\chi_j)x/2\right)f(x)$, we can write
\begin{equation}
 P_{s_i}\widehat{V}_{r_i,r_j}=(P_{s_i}M_{r_i}V^{(1)}N^{-1})(NV^{(2)}_rM_{r_j}^{-1}).
\end{equation}
The Hilbert-Schmidt norm of the first factor is given by
\begin{equation}\begin{aligned}
	\int_{\R^2}\mathrm{d} x\,\mathrm{d} y\, &\left|(P_{s_i}M_{r_i}V^{(1)}N^{-1})(x,y)\right|^2\\&=\int_{s_1}^{\infty}\mathrm{d} x\,\int_\R\mathrm{d} y\,m^2_{r_i}(x)\Ai^2(-x+y)e^{(\chi_i+\chi_j)y}\\
	&=\int_{s_1}^{\infty}\mathrm{d} x\,m^2_{r_i}(x)e^{(\chi_i+\chi_j)x}\int_\R\mathrm{d} z\,\Ai^2(z)e^{(\chi_i+\chi_j)z}.
\end{aligned}\end{equation}
The asymptotic behaviour of the Airy function and  \mbox{$\chi_i>\chi_j>0$} imply that both integrals are finite. Similarly,
\begin{equation}\begin{aligned}
	\int_{\R^2}&\mathrm{d} x\,\mathrm{d} y\, \left|(NV^{(2)}_rM_{r_j}^{-1})(x,y)\right|^2\\
	&=e^{\frac{4}{3}r^3}\int_{\R^2}\mathrm{d} x\,\mathrm{d} y\, e^{-(\chi_i+\chi_j)x}e^{2r(x-y)}\Ai^2(r^2+x-y)m^{-2}_{r_j}(y)\\
	&=e^{\frac{4}{3}r^3}\int_\R\mathrm{d} z\,e^{-(\chi_i+\chi_j)z}e^{2rz}\Ai^2(r^2+z)\int_\R\mathrm{d} y\,m^{-2}_{r_j}(y)e^{-(\chi_i+\chi_j)y}<\infty,
\end{aligned}\end{equation}
where we used $2r>\chi_i+\chi_j$ as well.
As a product of two Hilbert-Schmidt operators, $P_{s_i}\widehat{V}_{r_i,r_j}$ is thus $L^2$-bounded and trace class.

We decompose the operator $\widehat{K}_{r_i,r_j}$ as
\begin{equation}
 P_{s_i}\widehat{K}_{r_i,r_j}=(P_{s_i}M_{r_i}K'_{-r_i}P_0)(P_0K'_{r_j}M_{r_j}^{-1})
\end{equation}
where
\begin{equation}
 K'_r(x,y)=e^{\frac{2}{3}r^3}e^{r(x+y)}\Ai(r^2+x+y).
\end{equation}
Again, we bound the Hilbert-Schmidt norms,
\begin{equation}\begin{aligned}
	\int_{\R^2}\mathrm{d} x\,\mathrm{d} y\, &\left|(P_{s_i}M_{r_i}K'_{-r_i}P_0)(x,y)\right|^2\\
	&=e^{-\frac{4}{3}r_i^3}\int_{s_i}^\infty\mathrm{d} x\,\int_0^\infty\mathrm{d} y\,m^2_{r_i}(x)e^{-2r_i(x+y)}\Ai^2(r_i^2+x+y)\\
	&\leq e^{-\frac{4}{3}r_i^3}\int_{s_i}^\infty\mathrm{d} x\,m^2_{r_i}(x)\int_{s_i}^\infty\mathrm{d} z\,e^{-2r_iz}\Ai^2(r_i^2+z)<\infty,
\end{aligned}\end{equation}
as well as
\begin{equation}\begin{aligned}
	\int_{\R^2}\mathrm{d} x\,\mathrm{d} y\, &\left|(P_0K'_{r_j}M_{r_j}^{-1})(x,y)\right|^2\\
	&=e^{\frac{4}{3}r_j^3}\int_0^\infty\mathrm{d} x\,\int_\R\mathrm{d} y\,e^{2r_j(x+y)}\Ai^2(r_j^2+x+y)m^{-2}_{r_j}(y)\\
	&= e^{\frac{4}{3}r_j^3}\int_\R\mathrm{d} y\,m^{-2}_{r_j}(y)\int_y^\infty\mathrm{d} z\,e^{2r_jz}\Ai^2(r_j^2+z).
\end{aligned}\end{equation}
The superexponential decay of the Airy function implies that for every \mbox{$c_1>|r_j|$} we can find $c_2$ such that $e^{2r_jz}\Ai^2(r_j^2+z)\leq c_2e^{-c_1z}$. This proves finiteness of the integrals.

Regarding the last operator, start by decomposing it as
\begin{equation}
 P_{s_i}\widehat{f}_{r_i}\otimes \widehat{g}_{r_j}=(P_{s_i}\widehat{f}_{r_i}\otimes\phi)(\phi \otimes \widehat{g}_{r_j})
\end{equation}
for some function $\phi$ with $L^2$-norm $1$. Next, notice that
\begin{equation}\begin{aligned}
	\int_{\R^2}\mathrm{d} x\,\mathrm{d} y\, &\left|(P_{s_i}M_{r_i}f_{r_i}\otimes\phi)(x,y)\right|^2=\int_{s_i}^\infty\mathrm{d} x\,m^2_{r_i}(x)f^2_{r_i}(x)
\end{aligned}\end{equation}
It is easy to see that $\lim_{s\to\infty}f_{r_i}(s)=1$, so $f_{r_i}$ is bounded on the area of integration. But then the $m^2_{r_i}$ term ensures the decay, implying that the integral is finite. Furthermore,
\begin{equation}\begin{aligned}
	\int_{\R^2}\mathrm{d} x\,\mathrm{d} y\, &\left|(\phi \otimes g_{r_j}M_{r_j}^{-1})(x,y)\right|^2=\int_\R\mathrm{d} y\,m^{-2}_{r_j}(y)g^2_{r_j}(y).
\end{aligned}\end{equation}
Analyzing the asymptotic behaviour of $g_{r_j}$ we see that for large positive arguments, the first part decays exponentially with rate $-\delta$ and the second part even superexponentially. $\delta>\chi_j$ thus gives convergence on the positive half-line. For negative arguments, it is sufficient to see that $g_{r_j}$ does not grow faster than exponentially.
\end{proof}

\section{Analytic continuation}\label{SectAnCont}
We know from Theorem~\ref{thmAsymp} and Proposition~\ref{propPathInt} that:
\begin{equation}
\lim_{{t}\to\infty}\Pb\bigg(\bigcap_{k=1}^m\big\{X_{t}^{(\delta)}(r_k)\leq s_k\big\}\bigg)=\bigg(1+\frac{1}{\delta}\sum_{i=1}^m\frac{\D}{\mathrm{d} s_i}\bigg)\det(\Id-\widehat{\mathcal{P}}\widehat{K}^\delta_{r_1}).
\end{equation}
In this section we prove the main Theorem~\ref{thmAsymp0} by extending this equation to $\delta=0$. The right hand side can actually be analytically continued for all $\delta\in\R$ (see Proposition~\ref{propAnalyt}). Additionally we have to show that the left hand side is continuous at $\delta=0$. This proof relies mainly on Proposition~\ref{propExPoint}, which gives a bound on the exit point of the maximizing path from the lower boundary in the last passage percolation model.

\begin{proof}[Proof of Theorem~\ref{thmAsymp0}]
We adopt the point of view of last passage percolation discussed in the Sections~\ref{secStepOther} and~\ref{secInfLPP}. By the stationarity property, we know that $x_n(t)\stackrel{d}{=}\widetilde{x}^{(0)}_n(t)$ for $n\geq0$. We use the latter interpretation, i.e.\ consider $x_n(t)$ as being constructed from $\zeta_n$ with $n\geq1$, $B_n$ with $n\geq1$ and $\widetilde{B}_0$, and base coupling arguments also on these variables being fixed. In this way, we have
\begin{equation}
 x_n(t)=L_{(0,0)\to(t,n)},
\end{equation}
with background weights on both boundaries, i.e.\ Dirac weights $\zeta_k-\zeta_{k-1}$ on $(0,k)$, $k\geq1$, and a Lebesgue measure of density $\rho$ on the line $\{0\}\times\R_+$ additionally to the white noise $\mathrm{d} \widetilde{B}_0$.

We add superscripts to $x$, $L$ and $w$ indicating the choice of $\rho$, while $\lambda$ is always fixed at $1$. It is clear that for any path $\vec{\pi}$ the weight $w^{(\rho)}(\vec{\pi})$ is non-decreasing in $\rho$. But then the supremum is non-decreasing, too, and:
\begin{equation}\label{eq5.6a}
 x_n^{(\rho)}(t)\leq x_n^{(1)}(t),
\end{equation}
for $\rho<1$.
We know there exists a unique maximizing path \mbox{$\vec{\pi}^*\in\Pi(0,0;t;n)$}. We can therefore define $Z_n(t):=s_0^*$, the exit point from the lower boundary specifically with $\rho=1$. We want to derive the inequality
\begin{equation}\label{eq5.7}
   x_n^{(1)}(t)\leq x_n^{(\rho)}(t)+(1-\rho)Z_n(t).
\end{equation}
It can be seen as follows:
\begin{equation}\begin{aligned}
 L^{(1)}_{(0,0)\to(t,n)}-(1-\rho)Z_n(t)&=\sup_{\vec{\pi}\in\Pi(0,0;t,n)}w^{(1)}(\vec{\pi})-(1-\rho)Z_n(t)\\
 &=w^{(1)}(\vec{\pi}^*)-(1-\rho)s_0^*=w^{(\rho)}(\vec{\pi}^*).
\end{aligned}\end{equation}
Note that $\vec{\pi}^*$ maximizes $w^{(1)}(\vec\pi)$ and not necessarily $w^{(\rho)}(\vec{\pi})$. In particular we have
\begin{equation}
 w^{(\rho)}(\vec{\pi}^*)\leq\sup_{\vec{\pi}\in\Pi(0,0;t,n)}w^{(\rho)}(\vec{\pi})= L^{(\rho)}_{(0,0)\to(t,n)}.
\end{equation}
Combining the last two equations results in \eqref{eq5.7}.

\eqref{eq5.6a} and \eqref{eq5.7} imply that for the rescaled processes $X_{t}^{(\delta)}$, see (\ref{eqScaledProcess}), we have
\begin{equation}
 X^{(\delta)}_{t}(r)\leq X^\textrm{stat}_{t}(r)\leq X^{(\delta)}_{t}(r)+\delta{t}^{-2/3}Z_{{t}+2{t}^{2/3}r}({t}).
\end{equation}
For any $\e>0$ it holds
\begin{equation}\begin{aligned}
  \Pb &\bigg(\bigcap_{k=1}^m\{X^{(\delta)}_{t}(r_k)\leq s_k\}\bigg)\geq\Pb\bigg(\bigcap_{k=1}^m\{X^\textrm{stat}_{t}(r_k)\leq s_k\}\bigg)\\
  &\geq\Pb\bigg(\bigcap_{k=1}^m\{X^{(\delta)}_{t}(r_k)+\delta{t}^{-2/3}Z_{{t}+2{t}^{2/3}r}({t})\leq s_k\}\bigg)\\
  &\geq\Pb\bigg(\bigcap_{k=1}^m\{X^{(\delta)}_{t}(r_k)\leq s_k-\e\}\bigg)-\sum_{k=1}^m\Pb\left(\delta{t}^{-2/3}Z_{{t}+2{t}^{2/3}r}({t})>\e\right).
\end{aligned}\end{equation}
Then, taking ${t}\to\infty$, we obtain
\begin{equation}\begin{aligned}
  \Pb &\bigg(\bigcap_{k=1}^m\{\mathcal{A}_\mathrm{stat}^{(\delta)}(r_k)\leq s_k\}\bigg)
  \geq\limsup_{{t}\to\infty}\Pb\bigg(\bigcap_{k=1}^m\{X^\textrm{stat}_{t}(r_k)\leq s_k\}\bigg)\\
  &\geq\liminf_{{t}\to\infty}\Pb\bigg(\bigcap_{k=1}^m\{X^\textrm{stat}_{t}(r_k)\leq s_k\}\bigg)\\
  &\geq\Pb\bigg(\bigcap_{k=1}^m\{\mathcal{A}_\mathrm{stat}^{(\delta)}(r_k)\leq s_k-\e\}\bigg) -\sum_{k=1}^m\limsup_{{t}\to\infty}\Pb\left(Z_{{t}+2{t}^{2/3}r}({t})>{t}^{2/3}\e/\delta\right).
\end{aligned}\end{equation}
Using Proposition~\ref{propExPoint} on the last term and Proposition~\ref{propAnalyt} on the other terms, we can now take the limit $\delta\downarrow0$, resulting in
\begin{equation}\begin{aligned}
  \Pb &\bigg(\bigcap_{k=1}^m\{\mathcal{A}_\mathrm{stat}(r_k)\leq s_k\}\bigg)
  \geq\limsup_{{t}\to\infty}\Pb\bigg(\bigcap_{k=1}^m\{X^\textrm{stat}_{t}(r_k)\leq s_k\}\bigg)\\
  &\geq\liminf_{{t}\to\infty}\Pb\bigg(\bigcap_{k=1}^m\{X^\textrm{stat}_{t}(r_k)\leq s_k\}\bigg)
  \geq\Pb\bigg(\bigcap_{k=1}^m\{\mathcal{A}_\mathrm{stat}(r_k)\leq s_k-\e\}\bigg).
\end{aligned}\end{equation}
Continuity of \eqref{eqAiryDef} in the $s_k$ finishes the proof.
\end{proof}

\begin{prop}\label{propExPoint}
For any $r\in\R$,
 \begin{equation}\label{eq5.14}
  \lim_{\beta\to\infty}\limsup_{{t}\to\infty}\Pb\left(Z_{{t}+2{t}^{2/3}r}({t})>\beta{t}^{2/3}\right)=0.
 \end{equation}
\end{prop}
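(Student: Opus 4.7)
The plan is to use the variational characterization of the exit point $Z_n(t)=s_0^\ast$: it is the time at which the (a.s.\ unique) optimal last-passage path in $\Pi(0,0;t,n)$ leaves the lower boundary. Set $n=\lfloor t+2rt^{2/3}\rfloor$. A path exiting horizontally at time $s_0>0$ has weight $s_0+\widetilde B_0(s_0)+W(s_0)$, where $W(s_0)$ is the free last-passage weight from $(s_0,1)$ to $(t,n)$, distributionally equal to $Y_{1,n}(t-s_0)$, independent of $\widetilde B_0$, and non-increasing in $s_0$. On $\{Z_n(t)>\beta t^{2/3}\}$ this weight dominates that of any vertical-exit path, in particular the path with $s_0=0$ jumping straight from level $0$ to level $1$, of weight $\zeta_1+Y_{1,n}(t)\geq Y_{1,n}(t)$. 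Hence
\begin{equation*}
\{Z_n(t)>\beta t^{2/3}\}\subseteq\Big\{\sup_{s_0>\beta t^{2/3}}\big[s_0+\widetilde B_0(s_0)+W(s_0)\big]\geq Y_{1,n}(t)\Big\},
\end{equation*}
and it suffices to bound the probability of the right-hand event.

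The lower bound on $Y_{1,n}(t)$ follows from Theorem~\ref{thmAsympStep}: since $t^{-1/3}(Y_{1,n}(t)-2\sqrt{nt})$ converges to a shifted Tracy-Widom variable, for any $\varepsilon>0$ there is $C_0=C_0(\varepsilon,r)$ with $Y_{1,n}(t)\geq 2\sqrt{nt}-C_0 t^{1/3}$ on an event of probability $\geq 1-\varepsilon$ for all $t$ large. For the upper bound, parametrise $s_0=\gamma t^{2/3}$; a Taylor expansion gives
\begin{equation*}
s_0+2\sqrt{n(t-s_0)}=2\sqrt{nt}-\Big(\tfrac{\gamma^2}{4}+r\gamma\Big)t^{1/3}+o(t^{1/3}),
\end{equation*}
i.e.\ a deterministic gap of order $\gamma^2 t^{1/3}$ for $\gamma$ large. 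Proposition~\ref{concIn} controls the upper tail of $Y_{1,n}(t-s_0)$ around $2\sqrt{n(t-s_0)}$: a deviation of size $\gamma^2 t^{1/3}/10$ costs probability $e^{-c\gamma^2}$ (its $(n-k+1)^{2/3}$ exponent is exactly what converts the $\gamma^2 t^{1/3}$ gap into a Gaussian tail in $\gamma$), while the reflection principle bounds $\sup_{s\leq\gamma t^{2/3}}\widetilde B_0(s)$ above the same scale with probability $e^{-c\gamma^3}$.

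Finally, I decompose $(\beta t^{2/3},n]$ into blocks of width $t^{2/3}$, i.e.\ $\Delta\gamma=1$. Within each block $(\gamma_1 t^{2/3},(\gamma_1{+}1)t^{2/3}]$, the monotonicity of $W$ yields
\begin{equation*}
\sup_{s_0\in\text{block}}\big[s_0+\widetilde B_0(s_0)+W(s_0)\big]\leq(\gamma_1{+}1)t^{2/3}+\sup_{s\leq(\gamma_1{+}1)t^{2/3}}\widetilde B_0(s)+W(\gamma_1 t^{2/3}),
\end{equation*}
reducing the in-block supremum to a single point evaluation plus a Brownian supremum, both controlled by the ingredients of the previous paragraph. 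Summing the failure probabilities over the $O(t^{1/3})$ blocks gives $\sum_{\gamma\geq\beta}e^{-c\gamma^2}=O(e^{-c\beta^2})$, which vanishes as $\beta\to\infty$; combined with the $\varepsilon$ from the Tracy-Widom lower bound, this proves \eqref{eq5.14}. The main technical obstacle is the block-width choice: blocks of width $t^{2/3}$ are narrow enough that the deterministic part $s_0+2\sqrt{n(t-s_0)}$ varies by only $O(\gamma t^{1/3})$ on each block—well below the $\gamma^2 t^{1/3}/10$ safety margin—yet wide enough that the number of blocks, $O(t^{1/3})$, is easily absorbed by the Gaussian-in-$\gamma$ tail.
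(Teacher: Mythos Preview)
Your overall strategy is natural, but two steps break down. First, the pathwise monotonicity of $W(s_0)=L_{(s_0,1)\to(t,n)}$ is false: already for $n=1$ it reduces to $B_1(t)-B_1(s_0)$. What \emph{is} non-increasing is $W(s_0)+B_1(s_0)$ (the feasible set of jump times shrinks); you may be conflating $W(s_0)$ with $Y_{1,n}(t-s_0)$, which is pathwise non-increasing and agrees with $W(s_0)$ in law for each fixed $s_0$, but as processes in $s_0$ they differ, and only the former enters the inclusion you need. Second, even granting monotonicity, your displayed block bound decouples $s_0$ from $W(s_0)$ and loses the very $t^{2/3}$ cancellation you identify. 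Taking $r=0$ for clarity, the deterministic part of your right-hand side is $(\gamma_1{+}1)t^{2/3}+2\sqrt{n(t-\gamma_1 t^{2/3})}\approx 2t+t^{2/3}-\tfrac{\gamma_1^2}{4}t^{1/3}$, so comparing with $Y_{1,n}(t)\geq 2t-C_0 t^{1/3}$ would force the nonnegative quantity $\sup\widetilde B_0$ below $-t^{2/3}+O(\gamma_1^2 t^{1/3})$. This is impossible unless $\gamma_1\gtrsim t^{1/6}$, i.e.\ the bound is vacuous on the entire range $[\beta,O(t^{1/6})]$. Your remark that the sum $s_0+2\sqrt{n(t-s_0)}$ varies by $O(\gamma t^{1/3})$ across a block is correct, but the triangle-inequality bound you wrote does not exploit it.

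The paper sidesteps both issues by a single cut at $\beta t^{2/3}$: on $\{Z_n(t)>\beta t^{2/3}\}$ one has $L_{(0,0)\to(\cdot)}=L_{(0,0)\to(\beta t^{2/3},0)}+L_{(\beta t^{2/3},0)\to(\cdot)}$, and the second factor---which still carries the level-$0$ drift and hence implicitly the supremum over all exit points beyond $\beta t^{2/3}$---is controlled via the spiked-GUE limit (Lemma~\ref{lemSpike}). In effect, the block supremum you are trying to bound \emph{is} a critically spiked last-passage time, and Proposition~\ref{concIn} alone does not suffice to control it.
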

\begin{proof}
 By scaling of ${t}$ and $\beta$, \eqref{eq5.14} is equivalent to
  \begin{equation}\label{eq5.15}
  \lim_{\beta\to\infty}\limsup_{{t}\to\infty}\Pb\left(Z_{{t}}({t}+2{t}^{2/3}r)>\beta{t}^{2/3}\right)=0,
 \end{equation}
for any $r\in\R$, which is the limit we are showing. We introduce some new events:
\begin{equation}
 \begin{aligned}
  M_\beta&:=\{Z_{{t}}({t}+2{t}^{2/3}r)>\beta{t}^{2/3}\}\\
  E_\beta&:=\{L_{(0,0)\to(\beta{t}^{2/3},0)}+L_{(\beta{t}^{2/3},0)\to({t}+2{t}^{2/3}r,{t})}\leq2{t}+2{t}^{2/3}r+s{t}^{1/3}\}\\
  N_\beta&:=\{L_{(0,0)\to({t}+2{t}^{2/3}r,{t})}\leq2{t}+2{t}^{2/3}r+{t}^{1/3}s\},
 \end{aligned}
\end{equation}
where $L$ is to be understood as in the proof of Theorem~\ref{thmAsymp0}, with $\rho=1$.
Notice that if $M_\beta$ occurs, then
\begin{equation}
L_{(0,0)\to({t}+2{t}^{2/3}r,{t})}=L_{(0,0)\to(\beta{t}^{2/3},0)}+L_{(\beta{t}^{2/3},0)\to({t}+2{t}^{2/3}r,{t})},
\end{equation} resulting in $M_\beta\cap E_\beta\subseteq N_\beta$. We arrive at the inequality:
\begin{equation}\label{eqMbeta}
 \Pb(M_\beta)=\Pb(M_\beta\cap E_\beta)+\Pb(M_\beta\cap E_\beta^c)\leq\Pb(N_\beta)+\Pb(E_\beta^c).
\end{equation}
We further define new random variables
\begin{equation}\label{eq5.18}\begin{aligned}
 \xi_\textrm{spiked}^{({t})}& =\frac{L_{(\beta{t}^{2/3},0)\to({t}+2{t}^{2/3}r,{t})}-2{t}-2{t}^{2/3}(r-\beta/2)}{{t}^{1/3}}+(r-\beta/2)^2,\\
 \xi_\textrm{GUE}^{({t})}&=\frac{L^\textrm{packed}_{(0,1)\to({t}+2{t}^{2/3}r,{t})}-2{t}-2{t}^{2/3}r}{{t}^{1/3}}+r^2,\\
 \xi_\textrm{N}^{({t})}&=\frac{L_{(0,0)\to(\beta{t}^{2/3},0)}-\beta{t}^{2/3}}{\sqrt{\beta}{t}^{1/3}},
\end{aligned}\end{equation}
where $L^\textrm{packed}$ is to be understood as the last passage percolation time \emph{without} boundary weights.
By Theorem~\ref{thmTagged}, for any fixed $r\in\R$,
\begin{equation}\label{eqGUE}
 \xi_\textrm{GUE}^{({t})}\stackrel{d}{\to} \xi_\textrm{GUE},
\end{equation}
where $\xi_\textrm{GUE}$ has the GUE Tracy-Widom distribution. $\xi_\textrm{spiked}^{({t})}$ follows the distribution of the largest eigenvalue of a critically 	spiked GUE matrix, as will be shown in Lemma~\ref{lemSpike}. $\xi_\textrm{N}^{({t})}$ has the distribution of a standard normal random variable $\xi_\textrm{N}$ for any $\beta>0$, ${t}>0$.

Combining these definitions, we have:
\begin{equation}
 \begin{aligned}
  \Pb(E_\beta)=\Pb\big(\sqrt{\beta}\xi_\textrm{N}^{({t})}+\xi_\textrm{spiked}^{({t})}\leq  (r-\beta/2)^2+s\big).
 \end{aligned}
\end{equation}
Fix $s=3r^2-\beta^2/16$, such that:
\begin{equation}\begin{aligned}
  \left(r-\frac{\beta}{2}\right)^2+s&=4r^2-r\beta+\frac{\beta^2}{16}+\frac{\beta^2}{8}\geq \frac{\beta^2}{8}.
 \end{aligned}\end{equation}
Using the independence of $\xi_\textrm{N}^{({t})}$ and $\xi_\textrm{spiked}^{({t})}$, we obtain
\begin{equation}\label{eqEbeta}\begin{aligned}
  \Pb(E_\beta)&\geq\Pb\left(\sqrt{\beta}\xi_\textrm{N}^{({t})}+\xi_\textrm{spiked}^{({t})}\leq  \frac{\beta^2}{16}+\frac{\beta^2}{16}\right)\\&\geq\Pb\left(\xi_\textrm{N}^{({t})}\leq  \frac{\beta^{3/2}}{16}\text{ and }\xi_\textrm{spiked}^{({t})}\leq\frac{\beta^2}{16}\right)\\
  &=\Pb\left(\xi_\textrm{N}^{({t})}\leq  \frac{\beta^{3/2}}{16}\right)\Pb\left(\xi_\textrm{spiked}^{({t})}\leq\frac{\beta^2}{16}\right)
 \end{aligned}\end{equation}
Further, the inequality
\begin{equation}
 L^\textrm{packed}_{(0,1)\to({t}+2{t}^{2/3}r,{t})}\leq L_{(0,0)\to({t}+2{t}^{2/3}r,{t})}
\end{equation}
leads to
\begin{equation}\label{eqNbeta}
 \Pb(N_\beta)\leq\Pb\big(\xi_\textrm{GUE}^{({t})}\leq 4r^2-\beta^2/16\big).
\end{equation}
Inserting \eqref{eqEbeta} and \eqref{eqNbeta} into \eqref{eqMbeta}, we arrive at
\begin{equation}\begin{aligned}
 \Pb(M_\beta)\leq\Pb\left(\xi_\textrm{GUE}^{({t})}\leq 4r^2-\frac{\beta^2}{16}\right)+1-\Pb\left(\xi_\textrm{N}^{({t})}\leq  \frac{\beta^{3/2}}{16}\right)\Pb\left(\xi_\textrm{spiked}^{({t})}\leq\frac{\beta^2}{16}\right)
\end{aligned}\end{equation}
By \eqref{eqGUE} and Lemma~\ref{lemSpike} we can take limits:
\begin{equation}\begin{aligned}
0&\leq\limsup_{\beta\to\infty}\limsup_{{t}\to\infty}\Pb\left(M_\beta\right)\\
&\leq\lim_{\beta\to\infty}\bigg[\Pb\left(\xi_\textrm{GUE}\leq 4r^2-\frac{\beta^2}{16}\right)\\
&\qquad\qquad+1-\Pb\left(\xi_\textrm{N}\leq  \frac{\beta^{3/2}}{16}\right)\Pb\left(\xi_\textrm{spiked}(\beta)\leq\frac{\beta^2}{16}\right)\bigg]\\&=0.
\end{aligned}\end{equation}
\end{proof}

\begin{lem}\label{lemSpike}
Let $r\in\R$ be fixed. For any $\beta>2(r+1)$, as ${t}\to\infty$, the random variable
 \begin{equation}
   \xi_\textrm{spiked}^{({t})}=\frac{L_{(\beta{t}^{2/3},0)\to({t}+2{t}^{2/3}r,{t})}-2{t}-2{t}^{2/3}(r-\beta/2)}{{t}^{1/3}}+(r-\beta/2)^2
 \end{equation}
converges in distribution,
\begin{equation}
 \xi_\textrm{spiked}^{({t})}\stackrel{d}{\to} \xi_\textrm{spiked}(\beta).
\end{equation}
In addition, $\xi_\textrm{spiked}(\beta)$ satisfies
\begin{equation}
 \lim_{\beta\to\infty}\Pb\left(\xi_\textrm{spiked}(\beta)\leq\beta^2/16\right)=1.
\end{equation}

\end{lem}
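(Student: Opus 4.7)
My plan is to identify $L_{(\beta t^{2/3},0)\to(t+2t^{2/3}r,t)}$ with a subcritical spiked stationary LPP via translation invariance of the driving Brownian motions, apply a process-level Airy$_2$ scaling to obtain a variational limit, and then bound its tail using the quadratic penalty built into the variational expression.

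First I would normalize. By translation invariance, $L_{(\beta t^{2/3},0)\to(t+2t^{2/3}r,t)}\stackrel{d}{=}L^{(\rho=1)}_{(0,0)\to(T,t)}$ with $T:=t+(2r-\beta)t^{2/3}$. Writing $r_0:=(\beta-2r)/2$, the hypothesis $\beta>2(r+1)$ is exactly $r_0>1$, which places $\rho=1$ strictly below the characteristic density $\sqrt{t/T}\approx 1+r_0 t^{-1/3}$; we are in the subcritical spike regime, so the maximizing path exits the Brownian boundary within time of order $t^{2/3}$. Decomposing the LPP over the exit time $s_0=\beta t^{2/3}+vt^{2/3}$, $v\ge 0$, gives
\begin{equation*}
L\stackrel{d}{=}\sup_{v\ge 0}\big[vt^{2/3}+\widetilde B_0(vt^{2/3})+x_t(T-vt^{2/3})\big],
\end{equation*}
where $\widetilde B_0$ is an independent standard Brownian motion coming from the boundary.

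Second, applying Thm~\ref{thmAsympStep} as a process-level statement in the $v$ variable (with tightness provided by the uniform kernel bounds in Cor~\ref{corStepBound} and Prop~\ref{propStepK0Bound}) together with Brownian scaling $\widetilde B_0(vt^{2/3})\stackrel{d}{=}t^{1/3}\widetilde B(v)$, the rescaled supremum converges to
\begin{equation*}
\xi^{(t)}_{\mathrm{spiked}}\stackrel{d}{\longrightarrow}\xi_{\mathrm{spiked}}(\beta):=\sup_{v\ge 0}\big[\widetilde B(v)+\mathcal{A}_2(r_0+v/2)-(r_0+v/2)^2\big]+r_0^2.
\end{equation*}
The variational identities from Section~\ref{secVarId} (stationarity of the parabolically centered Airy$_2$ process combined with a Brownian time-shift) then recognize this supremum as an Airy$_{\mathrm{BM}\to 2}$-type random variable shifted by $r_0^2$.

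For the second claim, the quadratic penalty $-(r_0+v/2)^2$ in the variational expression dominates both the diffusive growth $\widetilde B(v)=O(\sqrt{v})$ and the $O(1)$ Tracy-Widom fluctuations of $\mathcal{A}_2(\cdot)-(\cdot)^2$, so the supremum is tight and its upper tail is governed by the GUE Tracy-Widom tail $\Pb(\xi_{\mathrm{GUE}}>C)\le c_1 e^{-c_2 C^{3/2}}$ uniformly in $\beta$; since $\beta^2/16\to\infty$ as $\beta\to\infty$, this gives $\Pb(\xi_{\mathrm{spiked}}(\beta)\le\beta^2/16)\to 1$. The main obstacle is lifting the one-point Airy$_2$ convergence of Thm~\ref{thmAsympStep} to a process-level convergence in $v$ jointly with the boundary Brownian motion, and controlling the supremum over the unbounded range $v\in[0,\infty)$; this requires a uniform decay estimate on the Airy--Brownian landscape for large $v$, analogous in spirit to the exponential bounds developed in Prop~\ref{propStepK0Bound}.
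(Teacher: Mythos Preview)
Your approach is genuinely different from the paper's and, as written, contains a real gap.

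The paper's proof is short and non-variational: it identifies the finite-time law of $L_{(\beta t^{2/3},0)\to(\beta t^{2/3}+\cdot,\cdot)}$ with the top line of Warren's process with a single drift (citing \cite{FF13}), hence with the largest eigenvalue of a rank-one spiked GUE matrix. The convergence in distribution then comes directly from the Baik--Wang result \cite{BW10} on critically spiked ensembles, which also gives an explicit formula
\[
\Pb(\xi_{\mathrm{spiked}}(\beta)\le s)=F_1(s;\alpha)=F_{\mathrm{GUE}}(s)\Big(1-\big\langle(\Id-P_sK_{0,0}P_s)^{-1}C_\alpha,P_s\Ai\big\rangle\Big),\qquad \alpha=\tfrac{\beta}{2}-r.
\]
The tail statement $\Pb(\xi_{\mathrm{spiked}}(\beta)\le\beta^2/16)\to1$ is then obtained by the elementary bound $|C_\alpha(\xi)|\le e^{-\xi}/(\alpha-1)$ (using $\alpha>1$, i.e.\ your $r_0>1$) together with standard Airy-kernel estimates. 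No process-level limit or variational argument is needed.

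By contrast, your route decomposes over the exit point and aims for a variational identity $\xi_{\mathrm{spiked}}(\beta)=r_0^2+\sup_{v\ge0}\{\widetilde B(v)+\mathcal{A}_2(r_0+v/2)-(r_0+v/2)^2\}$. The obstacle you name at the end is in fact the heart of the matter, not a residual technicality: Theorem~\ref{thmAsympStep} in this paper gives only convergence of finite-dimensional distributions, and the kernel bounds of Corollary~\ref{corStepBound} and Proposition~\ref{propStepK0Bound} do not by themselves yield tightness in a function space strong enough to pass the $\sup$ functional, let alone control of the tail of the supremum over the non-compact range $v\in[0,\infty)$. Both ingredients (functional convergence of the Airy$_2$ line ensemble and a localization estimate for the maximizer) are available in the literature, but they are substantially heavier than what is proved here, and invoking them would make the argument considerably longer than the paper's two-line appeal to \cite{FF13} and \cite{BW10}. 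Likewise, your tail argument ``quadratic penalty dominates diffusive growth'' is a heuristic; turning it into an actual bound uniform in $\beta$ again requires quantitative control on $\sup_u(\mathcal{A}_2(u)-u^2)$ that is not developed in this paper.
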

\begin{proof}
 The family of processes $L_{(\beta{t}^{2/3},0)\to(\beta{t}^{2/3}+t,n)}$ indexed by $n\in\Z_{\geq0}$ and time parameter $t\geq0$ is precisely a marginal of Warren's process with drifts, starting at zero, as defined in~\cite{FF13}. In our case only the first particle has a drift of $1$, and all the others zero. By Theorem 2~\cite{FF13}, the fixed time distribution of this process is given by the distribution of the largest eigenvalue of a spiked $n\times n$ GUE matrix, where the spikes are given by the drifts.

 Thus we can apply the results on spiked random matrices, more concretely we want to apply Theorem 1.1~\cite{BW10}, with the potential given by \mbox{$V(x)=-x^2/2$}. Since
 \begin{equation}
  L^*:=L_{(\beta{t}^{2/3},0)\to({t}+2{t}^{2/3}r,n)}
 \end{equation}
 represents a $n\times n$ GUE matrix diffusion $M(t)$ at time $t={t}+2{t}^{2/3}(r-\beta/2)$, it is distributed according to the density
 \begin{equation}
  p_n(M)=\frac{1}{Z_n}\exp\left(-\frac{\Tr(M-t{\textrm I}_{11})^2}{2t}\right),
 \end{equation}
 where ${\textrm I}_{11}$ is a $n\times n$ matrix with a one at entry $(1,1)$ and zeros elsewhere. In order to apply the theorem we need the density given in equation (1)~\cite{BW10}, i.e., consider the scaled quantity $L^*/\sqrt{nt}$. The size of the first-order spike is then:
 \begin{equation}
  a= t/\sqrt{nt} = \sqrt{1+2{t}^{-1/3}(r-\beta/2)} = 1+(r-\beta/2){t}^{-1/3}+\Or({t}^{-2/3}).
 \end{equation}
We are thus in the neighbourhood of the critical value ${\mathbf a}_c=1$. For $\alpha\geq0$, let
\begin{equation}
 C_\alpha(\xi)=\int_{-\infty}^0e^{\alpha x}\Ai(x+\xi) \mathrm{d} x.
\end{equation}
With $F_\textrm{GUE}(s)$ being the cumulative distribution function of the GUE Tracy-Widom distribution, and $K_{0,0}(s_1,s_2)$ as in \eqref{eqKernel2Def}, define:
\begin{equation}\label{eqF1}
 F_1(s;\alpha)=F_\textrm{GUE}(s)\Big(1-\left\langle(1-P_s K_{0,0}P_s)^{-1}C_\alpha,P_s\Ai\right\rangle\Big).
\end{equation}

Applying (28)~\cite{BW10}, we have
 \begin{equation}
  n^{2/3}(L^*/\sqrt{nt}-2)\to \xi_\textrm{spiked}(\beta),
 \end{equation}
with
\begin{equation}
 \Pb\left(\xi_\textrm{spiked}(\beta)\leq\beta^2/16\right)=F_1(\beta^2/16,\alpha),
\end{equation}
where $\alpha=\beta/2-r$. Since in our case $\alpha>1$, we can estimate:
\begin{equation}
 \left|C_\alpha(\xi)\right|\leq\int_{-\infty}^0e^{\alpha x}e^{-x-\xi} \mathrm{d} x=e^{-\xi}\frac{1}{\alpha-1}.
\end{equation}
Combining this with the usual bounds on the Airy kernel and the Airy function, we see that as $\beta\to\infty$, the scalar product in \eqref{eqF1} converges to zero and we are left with the limit of $F_0$ which is one.

On the other hand,
 \begin{equation}\begin{aligned}
  n^{2/3}(L^*/\sqrt{nt}-2)\leq s\quad\Leftrightarrow\quad L^*\leq\sqrt{nt}(2+n^{-2/3}s),
 \end{aligned}\end{equation}
  and
  \begin{equation}
   \sqrt{nt}(2+n^{-2/3}s)=2{t}+2{t}^{2/3}(r-\beta/2)+{t}^{1/3}\left(s-(r-\beta/2)^2\right)+\Or(1),
  \end{equation}
from which the claim follows.
\end{proof}

\begin{prop}\label{propAnalyt}
 The function $\delta\mapsto\delta^{-1}\det(\Id-\widehat{\mathcal{P}}\widehat{K}^\delta_{r_1})$ can be extended analytically in the domain $\delta\in\R$. Its value at $\delta=0$ is given by
 \begin{equation}
  G_m(\vec{r},\vec{s})\det\left(\Id-\mathcal{P}K\right)_{L^2(\R)}.
 \end{equation}
\end{prop}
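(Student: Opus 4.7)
The central structural fact is that $K^\delta_{r_1}$ is a rank-one perturbation of $K_{r_1,r_1}$ in $\delta$: from Definition~\ref{DefAiryStat}, $K^\delta_{r_1}(x,y)=K_{r_1,r_1}(x,y)+\delta\,f_{r_1}(x)\,g^{(\delta)}_{r_1}(y)$, where $g^{(\delta)}_{r_1}$ depends analytically on $\delta$ through the explicit factor $e^{\delta^3/3+r_1\delta^2-y\delta}$ and the $e^{(\delta+r_1)x}$ inside the Airy integrand. After the $M_{r_1}$-conjugation of Proposition~\ref{propPathInt}, the operator $\widehat{\mathcal P}\widehat K^\delta_{r_1}=\widehat{\mathcal P}\widehat K_{r_1,r_1}+\delta\,\widehat{\mathcal P}\widehat f_{r_1}\otimes\widehat g^{(\delta)}_{r_1}$ depends analytically on $\delta$ in trace-class norm: the Hilbert--Schmidt bounds verified in the proof of Proposition~\ref{propPathInt} for $\delta>0$ extend by direct inspection to all $\delta\in\mathbb R$, so continuity of the Fredholm determinant in trace norm gives that $\det(\Id-\widehat{\mathcal P}\widehat K^\delta_{r_1})$ is entire in $\delta$. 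This handles the analyticity part of the statement.

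To expose the $\delta$-dependence explicitly, I would apply the matrix determinant lemma for rank-one perturbations. With $A:=\widehat{\mathcal P}\widehat K_{r_1,r_1}$, invertibility of $\Id-A$ on $L^2(\R)$ follows from Lemma~\ref{lemInv} together with cyclicity of the Fredholm determinant under the $M_{r_1}$-conjugation (which also gives $\det(\Id-A)=\det(\Id-\mathcal P K)$), and one obtains
\begin{equation*}
\det(\Id-\widehat{\mathcal P}\widehat K^\delta_{r_1})=\det(\Id-\mathcal P K)\Bigl(1-\delta\,\bigl\langle\widehat g^{(\delta)}_{r_1},\,(\Id-A)^{-1}\widehat{\mathcal P}\widehat f_{r_1}\bigr\rangle\Bigr).
\end{equation*}
Since the inner product on the right is entire in $\delta$, the full Taylor expansion at $\delta=0$ is accessible term by term, and dividing by $\delta$ produces a function whose extension to the origin is exactly the derivative $\partial_\delta$ of the right-hand side at $\delta=0$, combined with the explicit $\det(\Id-\mathcal P K)$ baseline contribution.

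The principal obstacle is then matching this extended value with $G_m(\vec r,\vec s)\det(\Id-\mathcal P K)$. The plan is to Taylor-expand $\widehat g^{(\delta)}_{r_1}$ about $\delta=0$: its zeroth-order term equals (after undoing the $M_{r_1}$-conjugation) the function $g$ of Definition~\ref{DefAstat}, while its first-order term in $\delta$ produces the $-s_1$ contribution and the iterated-Airy-integral piece making up $\mathcal R$, arising respectively from differentiating $e^{\delta^3/3+r_1\delta^2-s_1\delta}$ and $e^{(\delta+r_1)x}$ at $\delta=0$. Using the identity $f^*=-K\mathbf{1}$ together with the algebraic relations $\widehat V\widehat K^\delta=\widehat K^\delta\widehat V$, the semigroup property, and right-invertibility verified in the proof of Proposition~\ref{propPathInt}, the operator expression $(\Id-A)^{-1}\widehat{\mathcal P}\widehat f_{r_1}$ can be rewritten as $(\Id-\mathcal P K)^{-1}\bigl(\mathcal P f^*+\mathcal P K P_{s_1}\mathbf{1}+(\mathcal P-P_{s_1})\mathbf{1}\bigr)$, which is precisely the argument of the inner product in $G_m$.

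The hardest step is this algebraic tracking: collapsing the three vectors $\mathcal P f^*$, $\mathcal P K P_{s_1}\mathbf{1}$, $(\mathcal P-P_{s_1})\mathbf{1}$ out of the rank-one perturbation requires inserting identities of the form $\widehat V_{r_1,r_j}\widehat V_{r_j,r_1}\widehat K^\delta_{r_1}=\widehat K^\delta_{r_1}$ at several places and peeling off one $\bar P_{s_j}$ at a time from $\widehat{\mathcal P}$, in parallel with the reductions carried out in Appendix~B of~\cite{BFP09} for the TASEP stationary case. Once this identification is complete, Lemma~\ref{lemInv} guarantees that $\det(\Id-\mathcal P K)>0$ so that the full expression $G_m\det(\Id-\mathcal P K)$ is finite, and the extended value of $\delta^{-1}\det(\Id-\widehat{\mathcal P}\widehat K^\delta_{r_1})$ at $\delta=0$ can be read off from the assembled Taylor coefficients.
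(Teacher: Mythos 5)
Your overall strategy coincides with the paper's: write $\widehat{K}^\delta_{r_1}$ as a rank-one perturbation, apply the determinant identity to factor out $\det(\Id-\mathcal{P}K)$, and identify the remaining scalar with $G_m(\vec{r},\vec{s})$. But the proposal contains a genuine error at the decisive step. You assert that $\det(\Id-\widehat{\mathcal{P}}\widehat{K}^\delta_{r_1})$ is entire in $\delta$ and that the inner product $\langle\widehat{g}^{(\delta)}_{r_1},(\Id-A)^{-1}\widehat{\mathcal{P}}\widehat{f}_{r_1}\rangle$ is entire in $\delta$. These two claims, combined with your own factorization, are inconsistent with the statement you are trying to prove: if the inner product were entire, then $\delta^{-1}\det(\Id-\widehat{\mathcal{P}}\widehat{K}^\delta_{r_1})=\det(\Id-\mathcal{P}K)\,(\delta^{-1}-\langle\cdots\rangle)$ would have a simple pole at $\delta=0$ with residue $\det(\Id-\mathcal{P}K)\neq 0$ (Lemma~\ref{lemInv}), and no analytic extension would exist. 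In fact the inner product has a simple pole at $\delta=0$ with residue exactly $1$: decomposing $f_{r_1}=\mathbf{1}+f^*$, the piece $\langle P_{s_1}\mathbf{1},g_{r_1}\rangle=\int_{s_1}^\infty g_{r_1}(s)\,\D s$ contains $\int_{s_1}^\infty e^{\delta^3/3+r_1\delta^2-s\delta}\,\D s=\delta^{-1}e^{\delta^3/3+r_1\delta^2-s_1\delta}$, which the paper rewrites as $\delta^{-1}-\mathcal{R}_\delta$ with $\mathcal{R}_\delta$ analytic. Extracting this pole — not Taylor-expanding an entire function — is what cancels the explicit $\delta^{-1}$ and produces $\mathcal{R}$ in $G_m$. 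Relatedly, your identity $(\Id-A)^{-1}\widehat{\mathcal{P}}\widehat{f}_{r_1}=(\Id-\mathcal{P}K)^{-1}(\mathcal{P}f^*+\mathcal{P}KP_{s_1}\mathbf{1}+(\mathcal{P}-P_{s_1})\mathbf{1})$ is off by the term $P_{s_1}\mathbf{1}$, which is precisely the term carrying the singularity.

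A second gap: the claim that the Hilbert--Schmidt bounds of Proposition~\ref{propPathInt} "extend by direct inspection to all $\delta\in\R$" is false. The conjugation there requires $0<\chi_1<\delta$ for the operator $P_{s_1}\widehat{f}_{r_1}\otimes\widehat{g}_{r_1}$ to be Hilbert--Schmidt, because $\widehat{g}_{r_1}(y)=g_{r_1}(y)m_{r_1}^{-1}(y)$ behaves like $e^{(\chi_1-\delta)y}$ as $y\to+\infty$ and fails to be square integrable once $\delta\leq\chi_1$; for $\delta\leq 0$ the function $g_{r_1}$ itself grows at $+\infty$. So the conjugated determinant is not defined as a Fredholm determinant of a trace-class operator on a neighbourhood of $\delta=0$, and the analytic continuation cannot be performed at the operator level. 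It must be done, as in the paper, on the scalar quantity $\delta^{-1}-\langle(\Id-\mathcal{P}K)^{-1}\mathcal{P}f_{r_1},g_{r_1}\rangle$ after the pole has been isolated, together with the superexponential-decay estimates that make the remaining inner product converge for every real $\delta$.
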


\begin{proof}
We use the identity $\det(\Id+A)\det(\Id+B)=\det(\Id+A+B+AB)$ and Lemma~\ref{lemInv}  to factorize
\begin{equation}\label{eqFac}\begin{aligned}
	\delta^{-1}\det\big(\Id-\widehat{\mathcal{P}}\widehat{K}^\delta_{r_1}\big)&=\delta^{-1}\det(\Id-\widehat{\mathcal{P}}\widehat{K}^\delta_{r_1})
=\delta^{-1}\det\big(\Id-\widehat{\mathcal{P}}\widehat{K}-\delta\widehat{\mathcal{P}}\widehat{f}_{r_1}\otimes \widehat{g}_{r_1}\big)\\
	&=\delta^{-1}\det\big(\Id-\delta(\Id-\widehat{\mathcal{P}}\widehat{K})^{-1}\widehat{\mathcal{P}}\widehat{f}_{r_1}\otimes \widehat{g}_{r_1}\big)\cdot\det\big(\Id-\widehat{\mathcal{P}}\widehat{K}\big)\\
	&=\big(\delta^{-1}-\big\langle (\Id-\widehat{\mathcal{P}}\widehat{K})^{-1}\widehat{\mathcal{P}}\widehat{f}_{r_1},\widehat{g}_{r_1}\big\rangle\big)
\cdot\det\big(\Id-\widehat{\mathcal{P}}\widehat{K}\big)\\
	&=\big(\delta^{-1}-\big\langle (\Id-\mathcal{P}K)^{-1}\mathcal{P}f_{r_1},g_{r_1}\big\rangle\big)\cdot\det\big(\Id-\mathcal{P}K\big).
\end{aligned}\end{equation}
Since the second factor is independent of $\delta$, the remaining task is the analytic continuation of the first. Using \eqref{contIntfg}, decompose $f_{r_1}$  as
\begin{equation}\label{eqfDec}
	f_{r_1}(s)=1+\frac{1}{2\pi\mathrm{i}}\int_{\rangle0} \mathrm{d} W\frac{e^{-W^3/3-r_1W^2+sW}}{W}=1+f^*(s).
\end{equation}
Now,
\begin{equation}\label{eq49}\begin{aligned}
		\langle P_{s_1}\mathbf{1},g_{r_1}\rangle&=\int_{s_1}^\infty \mathrm{d} s\,\frac{1}{2\pi\mathrm{i}}\int_{0\langle\delta} \mathrm{d} Z\frac{e^{Z^3/3+r_1Z^2-sZ}}{Z-\delta}\\
&=\frac{1}{2\pi\mathrm{i}}\int_{0\langle\delta} \mathrm{d} Z\frac{e^{Z^3/3+r_1Z^2-s_1Z}}{Z(Z-\delta)}\\&=\frac{1}{\delta}+\frac{1}{2\pi\mathrm{i}}\int_{\langle0,\delta} \mathrm{d} Z\frac{e^{Z^3/3+r_1Z^2-s_1Z}}{Z(Z-\delta)}=\frac{1}{\delta}-\mathcal{R}_\delta.
\end{aligned}\end{equation}
The function $\mathcal{R}_\delta$ is analytic in $\delta\in\R$. Using these two identities as well as $(\Id-\mathcal{P}K)^{-1}=\Id+(\Id-\mathcal{P}K)^{-1}\mathcal{P}K$, we can rearrange the inner product as follows:
\begin{equation}\label{eq51}\begin{aligned}
	\frac{1}{\delta}-&\left\langle (\Id-\mathcal{P}K)^{-1}\mathcal{P}f_{r_1},g_{r_1}\right\rangle=\frac{1}{\delta}-\left\langle(\Id-\mathcal{P}K)^{-1}\mathcal{P}\mathbf{1}+(\Id-\mathcal{P}K)^{-1}\mathcal{P}f^*,g_{r_1}\right\rangle
	\\&=\frac{1}{\delta}-\left\langle\mathcal{P}\mathbf{1}+(\Id-\mathcal{P}K)^{-1}(\mathcal{P}K\mathcal{P}\mathbf{1}+\mathcal{P}f^*),g_{r_1}\right\rangle
\\&=\frac{1}{\delta}-\left\langle P_{s_1}\mathbf{1},g_{r_1}\right\rangle-\left\langle(\mathcal{P}-P_{s_1})\mathbf{1}+(\Id-\mathcal{P}K)^{-1}(\mathcal{P}K\mathcal{P}\mathbf{1}+\mathcal{P}f^*),g_{r_1}\right\rangle\\
&=\mathcal{R}_\delta-\left\langle(\Id-\mathcal{P}K)^{-1}\left(\mathcal{P}f^*+\mathcal{P}KP_{s_1}\mathbf{1}+(\mathcal{P}-P_{s_1})\mathbf{1}\right),g_{r_1}\right\rangle
\end{aligned}\end{equation}
Since $g_{r_1}$ is evidently analytic in $\delta\in\R$, we are left to show convergence of the scalar product.

All involved functions are locally bounded, so to establish convergence it is enough to investigate their asymptotic behaviour. $g_{r_1}$ may grow exponentially at arbitrary high rate, depending on $r_1$ and $\delta$, for both large positive and large negative arguments. We therefore need superexponential bounds on the function:
\begin{equation}
 (\Id-\mathcal{P}K)^{-1}\left(\mathcal{P}f^*+\mathcal{P}KP_{s_1}\mathbf{1}+(\mathcal{P}-P_{s_1})\mathbf{1}\right).
\end{equation}
For this purpose we first need an expansion of the operator $\mathcal{P}$:
\begin{equation}\label{eqPExp}
 \mathcal{P}=\sum_{k=1}^n\bar{P}_{s_1}V_{r_1,r_2}\dots \bar{P}_{s_{k-1}}V_{r_{k-1},r_k}P_{s_k}V_{r_k,r_1}.\\
\end{equation}
Notice that all operators $P_{s_i}$, $\bar{P}_{s_i}$ and $V_{r_i,r_j}$ map superexponentially decaying functions onto superexponentially decaying functions. Moreover $P_{s_i}$ and $\bar{P}_{s_i}$ generate superexponential decay for large negative resp. positive arguments.

The function $f^*$ decays superexponentially for large arguments but may grow exponentially for small ones. Since every part of the sum contains one projection $P_{s_k}$, $\mathcal{P}f^*$ decays superexponentially on both sides.

Examining $(\mathcal{P}-P_{s_1})\mathbf{1}$, notice that the $k=1$ contribution in \eqref{eqPExp} is equal to $P_{s_1}$, which is cancelled out here. All other contributions contain both $\bar{P}_{s_1}$ and $P_{s_k}$, which ensure superexponential decay.

Using the usual asymptotic bound on the Airy function, we see that the operator $K$ maps any function in its domain onto one which is decreasing superexponentially for large arguments. By previous arguments, functions in the image of $\mathcal{P}K$ decay on both sides, in particular $\mathcal{P}KP_{s_1}\mathbf{1}$.

Now, in order to establish the finiteness of the scalar product, decompose the inverse operator as $(\Id-\mathcal{P}K)^{-1}=\Id+\mathcal{P}K(\Id-\mathcal{P}K)^{-1}$. The contribution coming from the identity has just been settled. As inverse of a bounded operator, $(\Id-\mathcal{P}K)^{-1}$ is also bounded. Because of the rapid decay, the functions $\mathcal{P}f^*$, $\mathcal{P}KP_{s_1}\mathbf{1}$ and $(\mathcal{P}-P_{s_1})\mathbf{1}$ are certainly in $L^2(\R)$ and thus mapped onto $L^2(\R)$ by this operator. Finally, the image of an $L^2(\R)$-function under the operator $\mathcal{P}K$ is decaying superexponentially on both sides.

The expression \eqref{eq51} is thus an analytic function in $\delta$ in the domain $\R$. Setting $\delta=0$ returns the value of $G_m(\vec{r},\vec{s})$. Combining these results with \eqref{eqFac} finishes the proposition.
\end{proof}
\chapter{More general initial conditions and their asymptotics}\label{secMixed}
Besides the three basic initial conditions analyzed in the previous two chapters, a further class of natural initial measures is obtained
by first restricting the basic measures to a half-space and then joining pairwise. They all satisfy an asymptotic theorem, where the limit process is given by a crossover Airy process, which interpolates between two of the processes Airy$_2$, Airy$_1$ and Airy$_\textrm{stat}$. All the results can be further generalized to \emph{bounded modifications} of the discussed initial conditions, as well as to asymptotics along space-like paths instead of just at a fixed time.
\section{Half-Periodic initial conditions}\label{secHF}
Let $\vec{\zeta}^\shf\in\R^{\Z_{>0}}$ be the vector defined by $\zeta^\hf_n=n$. The determinantal structure for this case has already been obtained as a byproduct of the full periodic case. We can thus directly state the kernel, in its alternative form as mentioned in Remark~4.3~\cite{FSW13}. Notice again that the direction of space is reversed in our convention.
\begin{prop}[Proposition 4.2 of~\cite{FSW13}]\label{propHfKernel}
Let $\{x_n(t),n\geq1\}$ be the system of one-sided reflected Brownian motions satisfying the initial condition \mbox{$\vec{x}(0)=\vec{\zeta}^\shf$}. Then for any finite subset $S$ of $\Z_{>0}$, it holds
\begin{equation}
\Pb\bigg(\bigcap_{n\in S} \{x_n(t)\leq a_n\}\bigg)=\det(\Id-\chi_a \mathcal{K}_\hf\chi_a)_{L^2(S\times \R)},
\end{equation}
where $\chi_a(n,\xi)=\Id_{\xi>a_n}$. The kernel $\mathcal{K}_\hf$ is given by
\begin{equation}\label{eqKtHf}
\mathcal{K}_\hf(n_1,\xi_1;n_2,\xi_2)=-\phi_{n_1,n_2}(\xi_1,\xi_2)\Id_{n_2>n_1}+\mathcal{K}_1(n_1,\xi_1;n_2,\xi_2),
\end{equation}
with
\begin{equation}\label{eqK1def}
\begin{aligned}	\phi_{n_1,n_2}(\xi_1,\xi_2)&=\frac{(\xi_2-\xi_1)^{n_2-n_1-1}}{(n_2-n_1-1)!}\Id_{\xi_1\leq \xi_2}\\
\mathcal{K}_1(n_1,\xi_1;n_2,\xi_2)&=\frac{1}{(2\pi\mathrm{i})^2}\int_{\mathrm{i}\R-1}\mathrm{d} w\,\oint_{\Gamma_0}\mathrm{d} z\,\frac{e^{{t} w^2/2+\xi_1w}}{e^{{t} z^2/2+\xi_2 z}}\frac{(-w)^{n_1}}{(-z)^{n_2}}\frac{(1+z)e^z}{we^w-ze^z},
\end{aligned}
\end{equation}
where $\Gamma_0$ is chosen in such a way that $|ze^z|<|we^w|$ holds always.
\end{prop}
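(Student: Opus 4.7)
The plan is to follow the same orthogonal polynomial strategy used for the packed initial condition in Proposition~\ref{propStepKernel}, with the modifications forced by the shifted initial data $\zeta_n^\hf = n$. First I would invoke Proposition~\ref{PropWarren} with drifts $\vec\mu=0$ to write the joint density of $\vec{x}(t)$, given the deterministic starting configuration $\vec\zeta^\hf$, as the determinant $\det[F_{k-l}(\xi_{N+1-l}-(N+1-k),t)]$ where $F_k$ is the contour integral of $e^{tw^2/2+\xi w}w^k$. Because the initial condition is deterministic, no integration against a starting measure is required, and the shifts $\zeta_n^\hf-\zeta_{n-1}^\hf=1$ will instead be absorbed into the propagators $\phi_n$ of the extended measure.

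Next I would introduce virtual variables $\xi_l^k$ for $2\leq l\leq k\leq N$ by iterating the identity $F_k(\xi,t)=\int^\xi F_{k+1}(x,t)\,\D x$, use antisymmetry of the outer determinant together with Lemma~\ref{AppLemma3.3} to unconstrain the domain, and rewrite the resulting integrand in the form \eqref{Sasweight} of Lemma~\ref{lemDetMeasure}. The resulting one-step kernels $\phi_n$ encode the unit spacing of the initial data, and their iterated convolutions produce exactly the $\phi_{n_1,n_2}$ appearing in \eqref{eqK1def}. The functions $\Psi_{n-k}^n$ are inherited directly from $F_{k-1}$ and are identical in shape to those of the packed case; the true new content of the proof is the construction of a family $\Phi_{n-k}^n$ satisfying the orthogonality relation \eqref{Sasortho}.

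The correct ansatz, motivated by the packed case but twisted by the initial shifts, is to take $\Phi_{n-k}^n(\xi)$ as a contour integral around $z=0$ of
\begin{equation}
e^{-tz^2/2-\xi z}\,\frac{(1+z)^{-1}e^{-z}\,(ze^z)^{k-1}}{(we^w)^{?}\cdot z^{n}},
\end{equation}
adjusted so that $\Phi$ spans $V_n$ and so that pairing against $\Psi$ produces $\delta_{ij}$. Verifying orthogonality follows the pattern of the packed and stationary proofs: split $\R=\R_-\cup\R_+$, choose the sign of the real part of the $w$-contour on each piece so the $\xi$-integral converges, apply Fubini, and reduce the double contour to a residue at $w=z$ giving $\delta_{k,\ell}$. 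Once the orthogonality is in place, Lemma~\ref{lemDetMeasure} yields the kernel as $-\phi_{n_1,n_2}\Id_{n_2>n_1}+\sum_{k\geq1}\Psi_{n_1-k}^{n_1}(\xi_1)\Phi_{n_2-k}^{n_2}(\xi_2)$, and taking the sum inside the double contour integral and summing the geometric series with ratio $ze^z/we^w$ (under the contour requirement $|ze^z|<|we^w|$) collapses it to the explicit factor $(1+z)e^z/(we^w-ze^z)$ appearing in \eqref{eqK1def}.

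The main technical obstacle is the correct guess of $\Phi_{n-k}^n$ together with the control of the non-standard contours: the Lambert-type ratio $we^w/ze^z$ replaces the elementary ratio $w/z$ of the packed case, so one must verify that the geometric series converges uniformly along admissible contours and that those contours can simultaneously host the orthogonality computation and the final summation. Everything beyond this choice is mechanical. Since the resulting identity is precisely Proposition~4.2 of~\cite{FSW13} up to reversal of the spatial orientation, the computation can be transported from that reference, with sign conventions adjusted throughout.
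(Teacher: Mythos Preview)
Your approach is essentially the one the paper uses (it is only sketched after Proposition~\ref{propFlat} and cited from \cite{FSW13}, but carried out in full in the proof of Proposition~\ref{propStatFlatKernel}, which produces the same $\mathcal{K}_1$). Two small corrections: the unit shifts $\zeta_k=k$ are absorbed into $\Psi$ and $\Phi$ via factors $e^{-wk}$, $e^{zk}$, not into the propagators $\phi_n$, which remain the plain indicators $\Id_{x\leq y}$; and the correct ansatz has a factor $(1+z)$, not $(1+z)^{-1}$, namely $\Phi^n_{n-k}(\xi)\propto\oint_{\Gamma_0}\mathrm{d} z\,e^{-tz^2/2-z(\xi-k)}z^{k-n-1}(1+z)$, so that after the residue at $w=z$ the substitution $Z=ze^z$ (Jacobian $(1+z)e^z$) reduces the orthogonality integral to $\oint Z^{\ell-k-1}\mathrm{d} Z=\delta_{k,\ell}$.
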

The law of large numbers now depends on the region we examine:
\begin{equation}
 \lim_{t\to\infty}\frac{x_{\alpha^2 t}(t)}{t}=\begin{cases}2\alpha, &\text{ for } 0\leq\alpha\leq1 \\1+\alpha^2,&\text{ for } \alpha>1.\end{cases}
\end{equation}
Notice that the first case is the same scaling as for the packed initial condition and the second case is the scaling for the full periodic case. This analogy carries over to the behaviour of the fluctuations around the macroscopic position. For $0<\alpha<1$ they are of order $t^{1/3}$ and given by the Airy$_2$ process, while for $\alpha>1$ they are of the same order and given by the Airy$_1$ process. These limits are not proven here, instead we focus on the more interesting transition regions. The transition at $\alpha=0$ is simple, as for any finite $n$, $x_n(t)$ is a bounded modification of the system with packed initial condition. By Lemma~\ref{propBoundMod} this bounded modification stays bounded and is therefore irrelevant on the scale $\sqrt{t}$. This implies that $\lim_{t\to\infty}x_n(t)/\sqrt{t}$ behaves as the top line of a $n$-particle Dyson's Brownian motion.

At the transition point $\alpha=1$ we find a new process that interpolates between $\mathcal{A}_2$ and $\mathcal{A}_1$ and is therefore called the Airy$_{2\to1}$ process, $\mathcal{A}_{2\to1}$:

\begin{thm}\label{thmAsympHf}
With $\{x_n(t),n\geq1\}$ being the system of one-sided reflected Brownian motions with initial condition $\vec{x}(0)=\vec{\zeta}^\shf$, define the rescaled process
\begin{equation}\label{eqHfScaledProcess}
	r\mapsto X_t^\hf(r) = t^{-1/3}\big(x_{\lfloor t+2rt^{2/3}\rfloor}(t)-2t-2rt^{2/3} \big).\,
\end{equation}
In the sense of finite-dimensional distributions,
\begin{equation}
	\lim_{t\to\infty}X_t^\hf(r)\stackrel{d}{=}\mathcal{A}_{2\to1}(r).
\end{equation}
\end{thm}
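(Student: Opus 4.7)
The plan is to mimic the strategy used for packed initial conditions in Section~\ref{secStep}, since the determinantal representation in Proposition~\ref{propHfKernel} has the same structural form as in Proposition~\ref{propStepKernel}, with the simple kernel $\frac{1}{w-z}$ replaced by $\frac{(1+z)e^z}{we^w-ze^z}$. Concretely, I would write the joint distribution in the scaling \eqref{eqHfScaledProcess} as a Fredholm determinant on $L^2(\{r_1,\ldots,r_m\}\times\R)$ with rescaled and conjugated kernel
\[
\mathcal{K}_\hf^{\textrm{resc}}(r_1,s_1;r_2,s_2) = t^{1/3} e^{\xi_1-\xi_2}\mathcal{K}_\hf(n_1,\xi_1;n_2,\xi_2),
\]
decompose as
\[
\mathcal{K}_\hf^{\textrm{resc}}=-\phi^{\textrm{resc}}_{r_1,r_2}\Id_{r_1<r_2}+\mathcal{K}_1^{\textrm{resc}},
\]
and show the analogues of Proposition~\ref{propStepPointw} (pointwise convergence to $K_{\mathcal{A}_{2\to1}}$) and Proposition~\ref{propStepK0Bound} (uniform exponential bounds). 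The $\phi^{\textrm{resc}}$ contribution and its bound (Proposition~\ref{propStepPhiBound}) are \emph{identical} to the packed case and give $-V_{r_1,r_2}(s_1,s_2)\Id_{r_1<r_2}$ in the limit, so the only genuinely new work is the asymptotic analysis of $\mathcal{K}_1^{\textrm{resc}}$.

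For the pointwise limit I would perform steep descent around the doubly critical point $w=z=-1$, using the same contour deformations as in the proof of Lemma~\ref{lemAlphaLimit}. Writing $w=-1+t^{-1/3}W$, $z=-1+t^{-1/3}Z$, the exponential factors reproduce the Airy-type weights $e^{-W^3/3-r_1W^2+s_1W}$ and $e^{Z^3/3+r_2Z^2-s_2Z}$ as in the packed case. The crucial new computation is the local behaviour of the rational factor: the Taylor expansion $ue^u=e^{-1}(-1+\tfrac12 v^2+\tfrac13 v^3+O(v^4))$ at $u=-1+v$ gives
\[
\frac{(1+z)e^z}{we^w-ze^z}=\frac{t^{-1/3}Z\bigl(1+O(t^{-1/3})\bigr)}{\tfrac12 t^{-2/3}(W^2-Z^2)\bigl(1+O(t^{-1/3})\bigr)}
=\frac{2Z\,t^{1/3}}{W^2-Z^2}\bigl(1+O(t^{-1/3})\bigr),
\]
and this combines with the Jacobian $dw\,dz=t^{-2/3}dW\,dZ$ and the prefactor $t^{1/3}$ to yield exactly the contour integrand of \eqref{eqKAi2to1Int}. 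The contours $\gamma_W,\gamma_Z$ can be chosen so that \emph{both} $\gamma_W$ and $-\gamma_W$ lie to the left of $\gamma_Z$, which is automatic from the steep-descent contour near $-1$ provided one starts with $\Gamma_0$ small enough that $|ze^z|<|we^w|$ on the $\mathrm{i}\R-1$ contour.

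For the uniform bounds I would follow the template of Lemma~\ref{lemAlphaBound}: deform the $w$-contour to a wedge opening at angle $\theta\in(\pi/6,\pi/4)$ through $-1+\omega$, with $\omega=\min\{t^{-1/3}\sqrt{s},\varepsilon\}$, and the $z$-contour to a small loop around $-1$; the same steep-descent estimates give the prefactor $e^{-s_1-s_2}$ (or more generally the required pair exponential decay), and the extra kernel factor is bounded by a multiple of $t^{1/3}/|w+z|$, which on these contours is harmless because $w+z$ stays bounded away from zero. Once this bound and the pointwise limit are in hand, the usual Hadamard-type argument combined with the conjugation by $(1+\sigma^2)^{m+1-i}$ (exactly as in the proof of Theorem~\ref{thmAsympStep}) allows exchanging limit, integral, and series in the Fredholm expansion, yielding $\det(\Id-\chi_sK_{\mathcal{A}_{2\to1}}\chi_s)$.

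The main obstacle I anticipate is the bookkeeping around the double pole structure $w^2=z^2$ in the limit: the half-flat kernel has poles both at $W=Z$ and at $W=-Z$, and the standard steep-descent contour for the packed case passes through $W=-Z$. I would handle this by inflating $\Gamma_0$ slightly away from $-1$ in the pre-limit, which keeps $|we^w|\neq|ze^z|$ on the chosen contours, and then showing that the contribution of the second pole is captured correctly in the limit by picking up an extra integral representation for $2Z/(W^2-Z^2)$ split into $1/(W-Z)-1/(W+Z)$. Establishing the uniform bound \emph{in the presence of this second pole} is the technically delicate step, but it is controlled by the same $\omega$-neighborhood argument as in Lemma~\ref{lemAlphaBound}, once one verifies that on the deformed contours $|W+Z|$ is bounded below by a constant of order $\omega t^{1/3}$.
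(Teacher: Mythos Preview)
Your overall strategy and the local Taylor computation producing $\frac{2Z}{W^2-Z^2}$ match the paper exactly. The genuine gap is the choice of the $z$-contour, which you treat as routine. The constraint in Proposition~\ref{propHfKernel} is $|ze^z|<|we^w|$; on the $w$-contour $\mathrm{i}\R-1$ the minimum of $|we^w|$ equals $e^{-1}$, attained precisely at the critical point $w=-1$. A $z$-contour that approaches $-1$ must therefore be a \emph{sublevel set} of $|ze^z|$ just below $e^{-1}$. The wedge contours of Lemma~\ref{lemAlphaLimit} do not have this property: along $z=-1+ue^{\mathrm{i}\theta}$ one has $|ze^z|>e^{-1}$ once $u$ is of order one, so deforming $\Gamma_0$ to such a wedge crosses the singularity $we^w=ze^z$ and produces uncontrolled residues. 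The paper's device is to parametrise the $z$-contour via the Lambert $W$ function, $\gamma^\omega(u)=L_0\bigl(-(1-\omega^2/2)e^{-1+\mathrm{i} u}\bigr)$, on which $|ze^z|=(1-\omega^2/2)e^{-1}$ is constant; steep descent on this curve follows from the differential identity $L_0'(z)=L_0(z)/\bigl(z(1+L_0(z))\bigr)$, and the contour passes at distance $\omega$ from $-1$.

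For the uniform bound your claim that ``$|W+Z|$ is bounded below by a constant of order $\omega t^{1/3}$'' fails on the contours you propose: with the $w$-wedge through $-1-\omega$ (the $\alpha_t$-type contour) and the $z$-wedge through $-1+\omega$, the scaled variables satisfy $W\approx -\omega t^{1/3}$ and $Z\approx \omega t^{1/3}$ at the respective critical points, so $W+Z$ is near zero and the pole $W=-Z$ is hit. The paper does not attempt a direct adaptation of Lemma~\ref{lemAlphaBound}; instead it passes to an alternative representation $\mathcal{K}_1=\mathcal{K}_{1(a)}+\mathcal{K}_{1(b)}$ where $\mathcal{K}_{1(b)}$ is a single contour integral along yet another Lambert $W$ curve $\widehat\gamma^\omega$ (bounded in~\cite{FSW13}), and for $\mathcal{K}_{1(a)}$ uses the level-set contours to get $|we^w-ze^z|\geq c\,\omega^2$ uniformly via the reverse triangle inequality. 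Only one-sided decay $|\mathcal{K}_\hf^{\mathrm{resc}}|\leq\const\cdot e^{-s_1}$ is obtained, which suffices for the Fredholm argument.
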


\begin{proof}
With
\begin{equation}\label{eqScalinghfl}\begin{aligned}
	n_i&={t}+2{t}^{2/3}r_i\\
	\xi_i&=2{t}+2{t}^{2/3}r_i+{t}^{1/3}s_i,
\end{aligned}\end{equation}
define a rescaled and conjugated kernel by
\begin{equation}
	\mathcal{K}_\hf^\textrm{resc}(r_1,s_1;r_2,s_2)={t}^{1/3}e^{\xi_1-\xi_2}\mathcal{K}_\hf(n_1,\xi_1;n_2,\xi_2).
\end{equation}
Once the Propositions~\ref{propK1Pointw} and~\ref{propK1Bound} are established, the result follows in the same way as in the proof of Theorem~\ref{thmAsymp}.
\end{proof}

\begin{prop}\label{propK1Pointw}
Consider any $r_1,r_2$ in a bounded set and fixed $L$. Then, the kernel converges as
\begin{equation}
	\lim_{{t}\to\infty}\mathcal{K}_\hf^\textrm{resc}(r_1,s_1;r_2,s_2)=K_{\mathcal{A}_{2\to1}}(r_1,s_1;r_2,s_2)
\end{equation}
uniformly for $(s_1,s_2)\in[-L,L]^2$.
\end{prop}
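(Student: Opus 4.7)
The plan is to perform a steep descent analysis on the double contour integral defining $\mathcal{K}_1$, localized at the doubly critical point $w=z=-1$, exactly as in the treatment of the auxiliary functions $\alpha_t$, $\beta_t$ (Lemmas~\ref{lemAlphaLimit} and~\ref{lemAlphaBound}). First, the $\phi_{n_1,n_2}$ piece of $\mathcal{K}_\hf^\textrm{resc}$ is literally the same as in the packed case, so its convergence to $-V_{r_1,r_2}(s_1,s_2)\Id_{r_1<r_2}$ is already covered by the corresponding portion of the proof of Proposition~\ref{propStepPointw}. It therefore suffices to analyze the double integral $\mathcal{K}_1^\textrm{resc}$ and identify its limit as $-V_{r_1,r_2}\Id_{r_1<r_2}$ removed plus the remaining double integral appearing in the contour representation \eqref{eqKAi2to1Int} of $K_{\mathcal{A}_{2\to 1}}$.

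Next, I would introduce the rescaling $w=-1+t^{-1/3}W$, $z=-1+t^{-1/3}Z$, with $dw\,dz = t^{-2/3}dW\,dZ$. The Taylor expansions
\begin{equation*}
	\tfrac{t}{2}(w^2-1) + \xi_1(w+1) + n_1\log(-w) \longrightarrow W^3/3 + r_1 W^2 - s_1 W,
\end{equation*}
together with its analogue for $z$, are identical to the ones carried out for $\alpha_t$ and $\beta_t$ and thus reuse the computations of Lemma~\ref{lemAlphaLimit}. The nontrivial new ingredient is the factor $(1+z)e^z/(we^w-ze^z)$. Since $x\mapsto xe^x$ also has a critical point at $x=-1$, the denominator is doubly small: using $(1-U)e^U=1-U^2/2-U^3/3+O(U^4)$ gives
\begin{equation*}
	we^w - ze^z = -e^{-1}\bigl[(W^2-Z^2)/2 + O(t^{-1/3})\bigr]\,t^{-2/3},
\qquad (1+z)e^z = e^{-1}Z\,t^{-1/3}(1+O(t^{-1/3})),
\end{equation*}
so that, combined with the $t^{1/3}$ prefactor of the conjugated kernel, the singular factor converges pointwise to $-2Z/[(W-Z)(W+Z)] = 1/(Z-W)+1/(Z+W)$, exactly the factor in \eqref{eqKAi2to1Int}. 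This gives the pointwise identification of the limit.

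To upgrade pointwise convergence to uniform convergence on $[-L,L]^2$ and justify the swap of limit and integral under the Fredholm expansion, I would deform the $w$-contour to the steep-descent contour through $-1$ of angles $\pm 2\pi/3$ and the $z$-contour $\Gamma_0$ to a loop through $-1$ of angles $\pm\theta$, $\theta\in(\pi/6,\pi/4)$, exactly as in the proofs of the bounds on $\alpha_t$ and $\beta_t$ (Lemma~\ref{lemAlphaBound}). Off a neighbourhood of size $t^{-1/3}$ of the critical point, the integrand is exponentially small in $t$; on this neighbourhood the change of variables yields an integrand bounded by $|2Z/(Z^2-W^2)|\,e^{-c(|W|^3+|Z|^3)}$ times polynomial error factors, integrable in $(W,Z)$. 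This produces a bound of the form $\const_L$ on $\mathcal{K}_1^\textrm{resc}$, matching Corollary~\ref{corStepBound}, and dominated convergence gives the claimed uniform limit.

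The main obstacle is the interplay between the singular factor $1/(we^w-ze^z)$ and the choice of contours. The kernel $\mathcal{K}_1$ is defined only when $|ze^z|<|we^w|$ (to make the geometric series in the proof of Proposition~\ref{propHfKernel} converge), and after rescaling this constraint translates into the requirement that both $\gamma_W$ and $-\gamma_W$ lie to the left of $\gamma_Z$, which is exactly the condition imposed in the definition \eqref{eqKAi2to1Int} of $K_{\mathcal{A}_{2\to 1}}$. Verifying that the steep-descent contours above satisfy this separation uniformly for large $t$ (not merely locally near $-1$) requires a careful monotonicity argument for $x\mapsto|xe^x|$ along the chosen arcs, since the second zero of $we^w-ze^z$ at $w+z\approx -2$ must also remain outside the integration region to avoid spurious residues; this is the step that has no direct analogue in the packed-case proof and must be established separately.
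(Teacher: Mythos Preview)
Your high-level strategy is correct and matches the paper: steep descent at the doubly critical point $w=z=-1$, the Taylor expansions for the exponents, and your identification of the limiting singular factor $(1+z)e^z/(we^w-ze^z)\to 2Z/(W^2-Z^2)$ are all right. The $\phi$ part is indeed identical to the packed case.

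The genuine gap is exactly the one you flag at the end, and your proposed contours do not resolve it. First, rays from $-1$ at angles $\pm\theta$ do not form a closed loop around $0$, so this is not a legitimate deformation of $\Gamma_0$. More seriously, with \emph{both} contours passing through $-1$ (or through nearby shifted points $-1\pm\omega$ as in Lemma~\ref{lemAlphaBound}), the constraint $|ze^z|<|we^w|$ is not globally satisfied: both quantities vary along the respective rays, and near the critical point they coincide to leading order, so the pole $we^w=ze^z$ may be crossed. A ``monotonicity argument for $x\mapsto|xe^x|$ along the chosen arcs'' does not exist in any simple form because the relevant level sets of $|xe^x|$ are precisely what the Lambert~W function parameterizes.

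The paper's resolution is to keep the $w$-contour on the vertical line $\mathrm{i}\R-1$ (which is already steep descent for $-f_3$, with $|we^w|=\sqrt{1+u^2}\,e^{-1}\geq e^{-1}$) and to take for the $z$-contour the closed loop
\[
\gamma^\omega(u)=L_0\bigl(-(1-\omega^2/2)e^{-1+\mathrm{i}u}\bigr),\qquad u\in[0,2\pi),
\]
where $L_0$ is the principal branch of the Lambert~W function. By construction $|ze^z|=(1-\omega^2/2)e^{-1}<e^{-1}\leq|we^w|$, so the constraint holds \emph{globally} for every $\omega>0$, and $\gamma^\omega(0)=-1+\omega+\Or(\omega^2)$ approaches the critical point. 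The steep descent property of $f_3$ along $\gamma^\omega$ then follows in one line from the differential identity $L_0'(z)=L_0(z)/[z(1+L_0(z))]$, giving $\frac{\mathrm{d}}{\mathrm{d}u}\Re f_3(\gamma^\omega(u))=\Im\gamma^\omega(u)$. This Lambert~W contour is the missing technical device; once it is in place, the rest of your argument goes through.
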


\begin{proof}

The proof is conceptually similar to the case of packed initial conditions. However, some new issues arise, mainly due to the \emph{double} contour integral. The convergence of $\phi$ is clear from previous cases so we jump directly to the main part of the kernel.
Defining functions as
\begin{equation}\begin{aligned}
	f_3(z)&=-(z^2-1)/2-2(z+1)-\ln(-z)\\
	f_2(z,r)&=-2r(z+1+\ln(-z))\\
	f_1(z,s)&=-s(z+1),
\end{aligned}\end{equation}
we can write $G(z,r,s)={t} f_3(z)+{t}^{2/3}f_2(z,r)+{t}^{1/3}f_1(z,s)$, leading to
\begin{equation}\label{eqK1G}
	\mathcal{K}_1^\textrm{resc}(r_1,s_1;r_2,s_2)=\frac{{t}^{1/3}}{(2\pi\mathrm{i})^2}\int_{\mathrm{i}\R-1}\mathrm{d} w\,\oint_{\Gamma_0}\mathrm{d} z\,e^{G(z,r_2,s_2)-G(w,r_1,s_1)}\frac{(1+z)e^z}{we^w-ze^z}.
\end{equation}

The contour of the variable $w$ is already a steep descent curve for the leading order function $-f_3(w)$:
\begin{equation}
 \frac{\D\Re\left(-f_3(-1+\mathrm{i} u)\right)}{\mathrm{d} u}=\Re\left(\frac{(\mathrm{i} u)^2}{1+\mathrm{i} u}\mathrm{i}\right)=\frac{-u^3}{1+u^2},
\end{equation}
which means that the real part is maximal at $w=-1$ and strictly decreasing when moving away from it, quadratically fast for large $|w|$.

The choice for the contour of $z$ is trickier. Not only do we need a steep descent curve that comes close to the critical point $z=-1$, but we also have to ensure that we are not crossing any poles by deforming it, i.e.\ respect the inequality $|ze^z|<|we^w|$. Let $0<\omega<\sqrt{2}$ be a parameter and $L_0(z)$ the principal branch of the \emph{Lambert W function} defined by the inverse of the function $D_L=\{a+\mathrm{i} b\in\C,a+b\cot(b)>0\text{ and } -\pi<b<\pi\}\to\C$, $z\mapsto ze^z$. Choose
\begin{equation}
 \gamma^\omega(u)=L_0\left(-(1-\omega^2/2)e^{-1+\mathrm{i} u}\right).
\end{equation}

\begin{figure}
\centering
\includegraphics{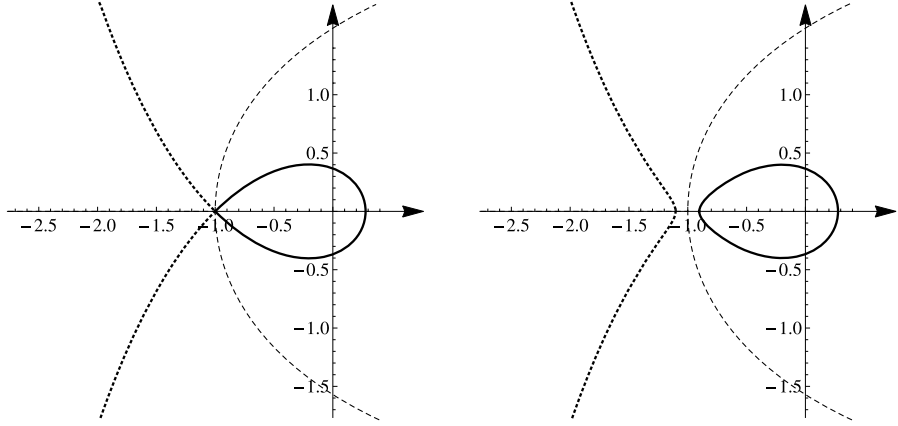}

\caption{The contours $\gamma^\omega$ (solid line) and $\widehat{\gamma}^\omega$ (dotted line)
 for $\omega=0$ (left picture) and
some small positive $\omega$ (right
picture). The dashed lines separate the ranges
of the principal branch $0$ (right) and the branches $1$ (top left)
and $-1$ (bottom left) of the Lambert W function.}
\label{figGamma}
\end{figure}

We have to show that this is actually a simple loop around the origin. For points in the domain $D_L$ of the principal branch of the Lambert function, we have
\begin{equation}
\sgn\Im \bigl((a+\mathrm{i} b)e^{a+\mathrm{i} b} \bigr)=\sgn (a\sin b +b\cos b )=\sgn b.
\end{equation}
So the function $z\mapsto ze^z$ preserves the sign of the imaginary part, and consequently its inverse does the same. $\gamma^\omega(u)$ is thus in the lower half-plane for $0<u<\pi$ and in the upper one for $\pi<u<2\pi$ and meets the real line in the two points $\gamma^\omega(0)<0$ and $\gamma^\omega(\pi)>0$. The latter two inequalities follow from the monotonicity of $z\mapsto ze^z$ on $[-1,\infty)$.

It immediately follows that for any $z\in\gamma^\omega$,
\begin{equation}
 |ze^z|=|1-\omega^2/2|e^{-1}< \sqrt{1+u^2}e^{-1}=|we^w|.
\end{equation}
To see the steep descent property, first notice that $L_0$ satisfies the following differential identity:
\begin{equation}\label{eqDiffId}
 L_0'(z)=\frac{L_0(z)}{z(1+L_0(z))}.
\end{equation}
Combining this formula with $f_3'(z)=-(z+1)^2/z$ and using the shorthand \mbox{$z(u)=-(1-\omega^2/2)e^{-1+\mathrm{i} u}$} leads to
\begin{equation}\label{eqLambDesc}\begin{aligned}
 \frac{\D\Re\left(f_3(\gamma^\omega(u))\right)}{\mathrm{d} u}&=\Re\left(-\frac{(\gamma^\omega(u)+1)^2}{\gamma^\omega(u)}\cdot\frac{\gamma^\omega(u)}{z(u)(1+\gamma^\omega(u))}\cdot z(u)\cdot\mathrm{i}\right)\\
 &=\Re\left(-\mathrm{i}(\gamma^\omega(u)+1)\right)=\Im(\gamma^\omega(u)).
\end{aligned}\end{equation}
So the real part is strictly decreasing along $0<u<\pi$, reaches its minimum at $\gamma^\omega(\pi)$ and then increases along $\pi<u<2\pi$ back to its maximum at $\gamma^\omega(0)=\gamma^\omega(2\pi)$.

By (4.22) in~\cite{CGHJK96} the Lambert W function can be expanded
around the branching point $-e^{-1}$ as
\begin{equation}
L_0(z)=-1+p-\frac{1}{3}p^2+
\frac{11}{72}p^3+\cdots,
\end{equation}
with $p(z)=\sqrt{2(ez+1)}$. This means that
\begin{equation}
 \gamma^\omega(0)=-1+\omega+\Or(\omega^2)
\end{equation}
for small $\omega$.

As both contours are steep descent curves, we can restrict them to a neighbourhood of their critical point while making an error which is exponentially small in $t$. We do this by simply intersecting both curves with the ball $B_{-1}(2\omega)=\{z\in\C,|z+1|<2\omega\}$.

Now we can apply Taylor expansion of the function $G$ as in \eqref{eqTaylf} and estimate the error made by omitting the higher order terms in the same way as in \eqref{eqTaylErr}. With $z=-1+Zt^{-1/3}$ and $w=-1+Wt^{-1/3}$ the term constant in $t$ converges as
\begin{equation}
 \frac{t^{-1/3}(1+z)e^z}{we^w-ze^z}=\frac{t^{-2/3}Ze^{Zt^{-1/3}}}{-1+W^2t^{-2/3}/2+1-Z^2t^{-2/3}/2+\Or(t^{-1})}\to\frac{2Z}{W^2-Z^2}.
\end{equation}
Applying this transformation of variable to the integral results in
\begin{equation}\label{eqhfprefinal}\begin{aligned}
 \mathcal{K}_1^\textrm{resc}(r_1,s_1&;r_2,s_2)+\Or(t^{1/3}e^{-\const t})+\Or(t^{-1/3})\\&=\frac{1}{(2\pi\mathrm{i})^2}\int^{ 2\omega\mathrm{i} t^{1/3}}_{-2\omega \mathrm{i} t^{1/3}}\mathrm{d} W\,\int_{\widetilde{\gamma}^\omega_t}\mathrm{d} Z\,\frac{e^{Z^3/3+r_2Z^2-s_2Z}}{e^{W^3/3+r_1W^2-s_1W}}\frac{2Z}{W^2-Z^2},
\end{aligned}\end{equation}
with $\widetilde{\gamma}^\omega_t=t^{1/3}(\gamma^\omega\cap B_{-1}(2\omega)+1)$. The endpoints of this contour are at $\pm2\omega e^{i\theta}$, where $\theta$ is close to $\pi/4$ for $\omega$ being small. The limit $t\to\infty$ now extends both contours up to infinity. While $t\to\infty$ we have to keep deforming $\widetilde{\gamma}^\omega_t$ in order for it to not vanish to infinity. This is allowed, since as long as $|\arg(Z)|<\pi/4$, we have $\Re(W^2-Z^2)<0$, i.e.\ we are not crossing any poles. We can then change the contours to the generic Airy contours  with the added restriction given in Definition~\ref{defAiry2to1} to arrive at the final result:
\begin{equation}
 \lim_{t\to\infty}\mathcal{K}_1^\textrm{resc}(r_1,s_1;r_2,s_2)=\frac{1}{(2\pi\mathrm{i})^2}\int_{\gamma_W}\hspace{-3pt}\mathrm{d} W\,\int_{\gamma_Z}\hspace{-3pt}\mathrm{d} Z\,\frac{e^{Z^3/3+r_2Z^2-s_2Z}}{e^{W^3/3+r_1W^2-s_1W}}\frac{2Z}{W^2-Z^2}
\end{equation}
\end{proof}

\begin{prop}\label{propK1Bound}
For fixed $r_1,r_2,L$ there exists ${t}_0>0$ such that the estimate
\begin{equation}
	\left|\mathcal{K}_\hf^\textrm{resc}(r_1,s_1;r_2,s_2)\right|\leq\const\cdot e^{-s_1}
\end{equation}
holds for any ${t}>{t}_0$ and $s_1,s_2>0$.
\end{prop}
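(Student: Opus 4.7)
The kernel $\mathcal{K}_\hf^\textrm{resc}$ splits as $-\phi_{r_1,r_2}^\textrm{resc}\,\Id_{r_1<r_2}+\mathcal{K}_1^\textrm{resc}$; the $\phi$--part is already controlled by Proposition~\ref{propStepPhiBound}, so the main task is to produce a bound of the form $|\mathcal{K}_1^\textrm{resc}(r_1,s_1;r_2,s_2)|\leq \const\cdot e^{-s_1}$ valid for $s_1,s_2>0$ and all sufficiently large $t$. My plan is to adapt the steep descent argument of Lemma~\ref{lemAlphaBound} to the \emph{double} contour integral
\begin{equation}
\mathcal{K}_1^\textrm{resc}(r_1,s_1;r_2,s_2)=\frac{t^{1/3}}{(2\pi\mathrm{i})^2}\int_{\mathrm{i}\R-1}\mathrm{d} w\oint_{\Gamma_0}\mathrm{d} z\,e^{G(z,r_2,s_2)-G(w,r_1,s_1)}\frac{(1+z)e^z}{we^w-ze^z},
\end{equation}
with $G$ as in the proof of Proposition~\ref{propK1Pointw}. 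For $s_1$ bounded by some $L$ the bound is a direct consequence of Proposition~\ref{propK1Pointw}, so I would focus on $s_1\geq L$ with $L$ as large as needed.

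The first step is to deform the $w$--contour so as to extract the $e^{-s_1}$ factor. Set $\omega=\min\{t^{-1/3}\sqrt{s_1},\e\}$ for a small $\e>0$ and push the $w$--contour onto the same steep descent curve (centred at $w_0=-1-\omega$ and going off at angle $2\pi/3$) used for $\alpha_t$ in Lemma~\ref{lemAlphaBound}. Exactly as there, $|e^{-G(w_0,r_1,s_1)}|\leq e^{-\tfrac12\omega t^{1/3}s_1}\leq e^{-s_1}$ once $L$ and $t_0$ are chosen large enough, and the Gaussian-type integral in the transverse direction contributes only a constant after the change of variable $w=w_0+t^{-1/3}W$.

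The second step is to pick a $z$--contour that (i) stays on a steep descent curve for $G(\,\cdot\,,r_2,s_2)$, (ii) respects the pole condition $|ze^z|<|we^w|$, and (iii) keeps $|we^w-ze^z|$ bounded away from zero. Following the proof of Proposition~\ref{propK1Pointw} I would take a Lambert-type loop $\gamma^{\omega'}$ with $\omega'$ chosen slightly larger than $\omega$ so that near the common critical point $-1$ the expansions $|we^w|^2\approx e^{-2}(1-\omega^2+\dots)$ and $|ze^z|^2\approx e^{-2}(1-\omega'^2+\dots)$ give $|ze^z|<|we^w|$; a comfortable choice is $\omega'=2\omega$. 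The identity \eqref{eqLambDesc} then makes $\gamma^{\omega'}$ steep descent for $f_3$, and the $z$--exponential is of the form $|e^{-G(z_0,r_2,s_2)}|\leq e^{+C\omega' t^{1/3}s_2}$, which is \emph{not} decaying in $s_2$ but, crucially, is bounded independently of $s_2$ because the term $-t^{1/3}s_2(z+1)$ at $z_0=-1+\omega'$ gives $-t^{1/3}s_2\omega'$, contributing only a harmless $O(1)$ factor once combined with the $e^{-s_1}$ already extracted (indeed $\omega'\omega \to 0$ is not what we need; rather the $z$-integration range in $W$ is $O(1)$, and the steep descent makes the $s_2$-dependence cancel uniformly). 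Away from the small $t^{-1/3}$-neighbourhood of $-1$ the steep descent property makes both contributions exponentially small.

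The main obstacle is coordinating the two contours: both critical points are at $-1$, so the denominator $we^w-ze^z$ vanishes quadratically in $(w+1,z+1)$ along the coinciding diagonal. After the rescaling $w=-1-\omega+t^{-1/3}W$, $z=-1+\omega'+t^{-1/3}Z$ the denominator behaves like $\tfrac12 e^{-1}((W+\omega t^{1/3})^2-(Z-\omega' t^{1/3})^2)t^{-2/3}$, and one must check that with the above choice $\omega'=2\omega$ the difference of squares stays bounded away from zero along the deformed contours -- this is the delicate geometric point, analogous to but more intricate than the $1/(w-z)$ estimate in the packed case. Once this separation is secured, the $t^{1/3}$ prefactor in $\mathcal{K}_1^\textrm{resc}$ is absorbed by the two Jacobians $t^{-2/3}\mathrm{d} W\mathrm{d} Z$ and the blow-up of $(1+z)e^z/(we^w-ze^z)$, yielding an $O(1)$ Gaussian integral multiplied by $e^{-s_1}$, which is the claimed bound.
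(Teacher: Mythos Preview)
Your overall plan---split off $\phi_{r_1,r_2}^\textrm{resc}$, then shift the $w$--critical point by an amount $\omega$ to extract $e^{-s_1}$ while keeping $z$ on a steep descent loop---is sound, and it is exactly what the paper does. The gap is in your specific choice of $w$--contour.

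The contour from Lemma~\ref{lemAlphaBound} for $\alpha_t$ consists of a short vertical piece through $-1-\omega$ and then rays at angle $2\pi/3$ emanating from $-1$. Along those rays $\Re(w)\to-\infty$, hence $|we^w|=|w|e^{\Re w}\to 0$. Your $z$--loop $\gamma^{\omega'}$ has $|ze^z|=(1-\omega'^2/2)e^{-1}$, so for large $|w|$ the inequality $|ze^z|<|we^w|$ is violated and in fact the pole $we^w=ze^z$ (i.e.\ $w=L_{\pm 1}(ze^z),L_{\pm 2}(ze^z),\dots$) lies on or is crossed by the contour. Thus the deformation you describe does not give back $\mathcal{K}_1^\textrm{resc}$: you pick up residue contributions you do not account for, and your ``main obstacle'' paragraph addresses only the \emph{local} behaviour of the denominator near $(-1,-1)$, not this global failure. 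Your separation estimate $|we^w-ze^z|\gtrsim \omega^2$ is correct near the saddle but meaningless once the two images actually meet on the tails.

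The paper handles this by first passing to a \emph{different representation} of the kernel (imported from \cite{FSW13}):
\[
\mathcal{K}_1^\textrm{resc}=\mathcal{K}_{1(a)}^\textrm{resc}+\mathcal{K}_{1(b)}^\textrm{resc},
\]
where $\mathcal{K}_{1(b)}^\textrm{resc}$ is a single contour integral (bounded by $\const\cdot e^{-(s_1+s_2)}$ via \cite{FSW13}), and in $\mathcal{K}_{1(a)}^\textrm{resc}$ the $w$--contour is the Lambert curve $\widehat{\gamma}^\omega=\{L_{\lfloor u\rfloor}(-(1-\omega^2/2)e^{2\pi\mathrm{i} u-1}),\ u\notin[0,1)\}$ on the non-principal branches, while $z$ runs over $\gamma^0$. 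The point of this choice is that $|we^w|=(1-\omega^2/2)e^{-1}$ is \emph{exactly constant} along $\widehat{\gamma}^\omega$ and $|ze^z|=e^{-1}$ is exactly constant along $\gamma^0$; the inequality is reversed ($|ze^z|>|we^w|$), but the separation is uniformly $\tfrac12\omega^2 e^{-1}$ on the whole product contour, giving
\[
\sup_{w,z}\Big|\frac{t^{-2/3}e^z}{we^w-ze^z}\Big|\leq \frac{\const}{\omega^2 t^{2/3}},
\]
which is bounded once $\omega t^{1/3}$ is large. The Lambert contour is still steep descent for $-f_3$ (same differential identity \eqref{eqDiffId}), and passes through $w_0=-1-\omega+\Or(\omega^2)$, so your $e^{-G(w_0,r_1,s_1)}\leq e^{-s_1}$ step goes through verbatim. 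The $s_2$--dependence disappears because $\Re(z+1)\geq 0$ on $\gamma^0$.

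If you want to stay closer to your own idea and avoid the alternative representation, a workable fix is to replace the angle--$2\pi/3$ rays by the \emph{vertical} line $\mathrm{i}\R-1-\omega$: on it $|we^w|=\sqrt{(1+\omega)^2+u^2}\,e^{-1-\omega}\geq(1+\omega)e^{-1-\omega}\approx e^{-1}(1-\omega^2/2)$ globally, and one checks directly that $\Re(-f_3(-1-\omega+\mathrm{i} u))$ is maximised at $u=0$. With $\omega'=2\omega$ this keeps $|ze^z|<|we^w|$ everywhere, no poles are crossed in the deformation, and the rest of your argument applies. But as written, the angle--$2\pi/3$ choice does not.
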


\begin{proof}
Presumably we could show exponential decay in $s_2$, too, but the analysis would be even more involved, and the single-sided decay is enough for proving the convergence of the Fredholm determinant.

 We start with another representation of the kernel, which can be found in (4.38),~\cite{FSW13} combined with (4.51) therein.
 Defining a new contour
 \begin{equation}
  \widehat{\gamma}^\omega=\{L_{\lfloor u\rfloor}(-(1-\omega^2/2)e^{2\pi\mathrm{i} u-1}),u\in\R\setminus[0,1)\}
 \end{equation}
 this representation is given by
 \begin{equation}\begin{aligned}
   \mathcal{K}_1^\textrm{resc}(r_1,s_1;r_2,s_2)&=\mathcal{K}_{1(a)}^\textrm{resc}(r_1,s_1;r_2,s_2)+\mathcal{K}_{1(b)}^\textrm{resc}(r_1,s_1;r_2,s_2),
 \end{aligned}\end{equation}
 with
 \begin{equation}\begin{aligned}
   \mathcal{K}_{1(a)}^\textrm{resc}(r_1,s_1;r_2,s_2)&=\frac{{t}^{1/3}}{(2\pi\mathrm{i})^2}\int_{\widehat{\gamma}^\omega}\mathrm{d} w\,\oint_{\gamma^0}\mathrm{d} z\,e^{G(z,r_2,s_2)}e^{-G(w,r_1,s_1)}\frac{(1+z)e^z}{we^w-ze^z}\\
   \mathcal{K}_{1(b)}^\textrm{resc}(r_1,s_1;r_2,s_2)&=\frac{t^{1/3}}{2\pi\mathrm{i}}\int_{\widehat{\gamma}^0}\mathrm{d} w\,\frac{e^{tw^2/2+(w+1)\xi_1}(-w)^{n_1}}{e^{t\varphi(w)^2/2+(\varphi(w)+1)\xi_2}(-\varphi(w))^{n_2}},
 \end{aligned}\end{equation}
where $\varphi(w)=L_0(we^w)$. For a visualization of $\widehat{\gamma}^\omega$ see Figure~\ref{figGamma} and proven results about its pathway can be found in Lemma A.1~\cite{FSW13}. Notice that now we have $|ze^z|=e^{-1}>|we^w|$ for any $\omega>0$.

By Corollary 5.2 and Proposition 5.3~\cite{FSW13} $\mathcal{K}_{1(b)}^\textrm{resc}$ is bounded by a constant times $e^{-(s_1+s_2)}$ under the scaling
\begin{equation}\begin{aligned}
 n_i&=-t+2^{5/3}r_i\\
 \xi_i&=2^{5/3}r_i+(2t)^{1/3}.
\end{aligned}\end{equation}
Noticing $\varphi(w)e^{\varphi(w)}=we^w$ we see that the kernel is invariant under simultaneous shifts $n_i\to n_i+k$, $\xi_i\to\xi_i+k$. Using $k=2t$ we recover our scaling up to constant factors, which are irrelevant for our purposes.

We are thus left to show boundedness of $\mathcal{K}_{1(a)}$:
\begin{equation}\label{eqK1Dec}\begin{aligned}
  |\mathcal{K}_{1(a)}^\textrm{resc}&(r_1,s_1;r_2,s_2)|\\
  &\leq\sup_{w\in\widehat{\gamma}^\omega,z\in\gamma^0}\left|\frac{e^z}{we^w-ze^z}\right|
    \oint_{\gamma^0}\mathrm{d} z\,\left|e^{G(z,r_2,s_2)}(1+z)\right|\\
    &\quad\times\frac{{t}^{1/3}}{(2\pi)^2}\int_{\widehat{\gamma}^\omega}\mathrm{d} w\,\left|e^{-G(w,r_1,s_1)}\right|\\
  &\leq e^{-G(w_0,r_1,s_1)}
    \sup_{w\in\widehat{\gamma}^\omega,z\in\gamma^0}\left|\frac{t^{-2/3}e^z}{we^w-ze^z}\right|
    \frac{t^{2/3}}{2\pi}\oint_{\gamma^0}\mathrm{d} z\,\left|e^{G(z,r_2,s_2)}(1+z)\right|\\
   &\quad\times \frac{{t}^{1/3}}{2\pi}\int_{\widehat{\gamma}^\omega}\mathrm{d} w\,\left|e^{-G(w,r_1,s_1)+G(w_0,r_1,s_1)}\right|.
\end{aligned}\end{equation}
The point $w_0$ is defined by
\begin{equation}
 w_0=\widehat{\gamma}^\omega(0)=-1-\omega+\Or(\omega^2),
\end{equation}
and $\omega$ is chosen specifically as $\omega:=\min\{t^{-1/3}\sqrt{s_2},\e\}$ for some small positive $\e$ chosen in the following.

We can estimate the factor $e^{G(w_0,r_1,s_1)}$ as in the proof of Lemma~\ref{lemAlphaBound}, see \eqref{eqG0bound} and the preceding arguments:
\begin{equation}
 e^{-G(w_0,r_1,s_1)}\leq e^{-\frac{1}{2}\omega t^{1/3}s_1}\leq c_L e^{-s_1}.
\end{equation}
Noticing $|ze^z|=e^{-1}$ and $|we^w|=|1-\omega^2/2|e^{-1}$ the remaining prefactor can be estimated as
\begin{equation}
 \sup_{w\in\widehat{\gamma}^\omega,z\in\gamma^0}\left|\frac{t^{-2/3}e^z}{we^w-ze^z}\right|\leq \const \cdot\frac{1}{\omega^2 t^{2/3}}
\end{equation}
by choosing $\e$ small enough. This can be bounded, as for any fixed $\e$, $\omega t^{1/3}$ is large by choosing $t_0$ and $L$ large.

It remains to show that the two integral expressions converge, starting with the $z$ integral. First notice that by $s_2\geq 0$, and $\Re(z)\geq -1$, we have $|e^{G(z,r_2,s_2)}|\leq |e^{G(z,r_2,0)}|$, avoiding the problem of large $s_2$. Now we can apply standard steep descent analysis, i.e.\ restrict the contour to a $\delta$ neighbourhood of the critical point $-1$, Taylor expand the integrand and estimate the error term as in \eqref{eqTaylErr}. The analogue of \eqref{eq3.2.26} becomes
\begin{equation}\begin{aligned}
 \frac{t^{2/3}}{2\pi}\int_{\gamma^0_\delta}|\mathrm{d} z|\,|e^{\tilde{G}(z)}(1+z)|&=\frac{t^{2/3}}{2\pi}\int_{\gamma^0_\delta+1}|\mathrm{d} \omega|\,|e^{t\omega^3/3+t^{2/3}r\omega^2}\omega|\\&\leq\frac{1}{2\pi}\int_{e^{\pi\mathrm{i}/4}\infty}^{-e^{\pi\mathrm{i}/4}\infty}|\mathrm{d} Z|\,|e^{Z^3/3+rZ^2}Z|<\infty
\end{aligned}\end{equation}

Similarly the inequalities $\Re(w-w_0)<0$ and $s_1\geq0$ allow us to simply set $s_1=0$ in the $w$ integral. The differential identity \eqref{eqDiffId} holds for any branch of the Lambert function, so \eqref{eqLambDesc} can be derived analogously for $\widehat{\gamma}^\omega(u)=L_{\lfloor u\rfloor}(-(1-\omega^2/2)e^{2\pi\mathrm{i} u-1}),u\in\R\setminus[0,1)$:
\begin{equation}\begin{aligned}
 \frac{\D\Re\left(-f_3(\gamma(u))\right)}{\mathrm{d} u}=-\Im(\gamma(u)).
\end{aligned}\end{equation}
$\widehat{\gamma}^\omega$ is thus a steep descent curve for the leading order term and the rate of decline along the path is even quadratic in $|w|$, allowing us to restrict the contour to a small neighbourhood of $w_0$. The path $\widehat{\gamma}^\omega$ is close to vertical in a neighbourhood of $w_0$, thus we can proceed as in the proof of Lemma~\ref{lemAlphaBound} from here on.
\end{proof}

\section{Half-Poisson initial conditions}
In this section we will study the initial condition given by a Poisson process on the positive half-line and no particles on the negative one. Let therefore be $\{\textrm{Exp}_n,n\in\Z_{\geq0}\}$ be i.i.d. random variables with exponential distribution with parameter $1$. For a parameter $\lambda>0$ define the initial condition \mbox{$\vec{x}(0)=\vec{\zeta}^\shst(\lambda)$} by
\begin{equation}\label{halfstatModel}
 \begin{aligned}
  \zeta^\hst_0&=\lambda^{-1}\,\textrm{Exp}_0,\\
  \zeta^\hst_n-\zeta^\hst_{n-1}&=
  \lambda^{-1}\,\textrm{Exp}_n, \quad &\text{for } n\geq1.
 \end{aligned}
\end{equation}
As before, a variation of $\zeta_0$ introduces only a bounded modification of the initial condition and is thus irrelevant in the scaling limit as long as it stays bounded almost surely. We chose it in this way to keep the determinantal structure simple.

Notice that if we let $\rho\to0$ in the initial condition $\vec{\zeta}^\textrm{stat}(\lambda,\rho)$ all particles with negative label vanish to $-\infty$ and can consequently be ignored. So $\vec{\zeta}^\shst(\lambda)$ is simply $\vec{\zeta}^\textrm{stat}(\lambda,0)$ shifted by $\zeta^\hst_0$. This means that we can obtain the determinantal structure as a corollary of Proposition~\ref{propKernel}. Strictly speaking, $\vec{\zeta}^\shst(\lambda)$ equals $\vec{\zeta}^\textrm{stat}(\lambda,0)$ \emph{under the modified measure} $\Pb_+$. This means the shift argument, i.e.\ \emph{Step 2} of the proof of Proposition~\ref{propKernel}, is not necessary and we obtain the determinantal formula by simply specifying \eqref{eqDistP+} to $\rho=0$:

\begin{prop}\label{propHalfStatKernel}
Let $\{x_n(t),n\in\Z_{\geq0}\}$ be the system of one-sided reflected Brownian motions with initial condition $\vec{x}(0)=\zeta^\hst(\lambda)$ for any $\lambda>0$. For any finite subset $S$ of $\Z_{\geq0}$, it holds
\begin{equation}
\Pb\bigg(\bigcap_{n\in S} \{x_n(t)\leq a_n\}\bigg)=\det(\Id-\chi_a \mathcal{K}_\hst \chi_a)_{L^2(S\times \R)},
\end{equation}
where $\chi_a(n,\xi)=\Id_{\xi>a_n}$. The kernel $\mathcal{K}_\hst$ is given by
\begin{equation}\label{eqKthalfStat}
\begin{aligned}	\mathcal{K}_\hst(n_1,\xi_1;n_2,\xi_2)&=-\phi_{n_1,n_2}(\xi_1,\xi_2)\Id_{n_2>n_1}+\mathcal{K}_0(n_1,\xi_1;n_2,\xi_2)\\&\quad+\lambda\mathpzc{f}(n_1,\xi_1)\mathpzc{g}_0(n_2,\xi_2).
\end{aligned}
\end{equation}
where
\begin{equation}\begin{aligned}
	\phi_{n_1,n_2}(\xi_1,\xi_2)&=\frac{(\xi_2-\xi_1)^{n_2-n_1-1}}{(n_2-n_1-1)!}\Id_{\xi_1\leq \xi_2}, \quad\text{ for } 0\leq n_1<n_2,\\	\mathcal{K}_0(n_1,\xi_1;n_2,\xi_2)&=\frac{1}{(2\pi\mathrm{i})^2}\int_{\mathrm{i}\R-\e}\mathrm{d} w\,\oint_{\Gamma_0}\mathrm{d} z\,\frac{e^{{t} w^2/2+\xi_1w}}{e^{{t} z^2/2+\xi_2 z}}\frac{(-w)^{n_1}}{(-z)^{n_2}}\frac{1}{w-z},\\
	\mathpzc{f}(n_1,\xi_1)&=\frac{1}{2\pi\mathrm{i}}\int_{\mathrm{i}\R-\e}\mathrm{d} w\,\frac{e^{{t} w^2/2+\xi_1w}(-w)^{n_1}}{w+\lambda},\\
	\mathpzc{g}_0(n_2,\xi_2)&=\frac{-1}{2\pi\mathrm{i}}\oint_{\Gamma_{0}}\mathrm{d} z\,e^{-{t} z^2/2-\xi_2 z}(-z)^{-n_2-1},
\end{aligned}\end{equation}
for any fixed $0<\e<\lambda$.
\end{prop}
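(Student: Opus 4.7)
The plan is to essentially copy \emph{Step 1} of the proof of Proposition~\ref{propKernel}, observing that half-Poisson initial data under $\Pb$ coincides with stationary initial data with $\rho=0$ under the modified measure $\Pb_+$, so that the shift argument (\emph{Step 2}) is unnecessary. More precisely, the definition of $\vec{\zeta}^\hst(\lambda)$ in \eqref{halfstatModel} exactly produces $\zeta_0 \sim \exp(\lambda)$ followed by i.i.d.\ exponential gaps of parameter $\lambda$, which is precisely the law of $\vec{\zeta}^\textrm{stat}(\lambda,\rho)|_{n\geq 0}$ under $\Pb_+$ when $\rho=0$ (the non-positive indexed particles drift to $-\infty$ as $\rho\to 0$ but are irrelevant since $S\subseteq\Z_{\geq 0}$).

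First, I would redo the computation of Proposition~\ref{PropRandomIC} with $\rho=0$. The integration over the initial condition simplifies because the factor $e^{-t\rho^2/2+\rho\xi_0}$ becomes $1$ and the residue structure in the symmetrization step is unchanged (Lemma~\ref{lemDetIdent} is independent of $\rho$), yielding
\begin{equation*}
\Pb(\vec{x}(t)\in \D\vec\xi)=\lambda^{N}\det_{1\le k,l\le N}[\widetilde{F}_{k-l}(\xi_{N-l},t)]\,\D\vec\xi,
\end{equation*}
with the same $\widetilde{F}_k$ as before. Next, I would expand this as a marginal of a measure in Gelfand--Tsetlin form exactly as in \eqref{ExtM} (the functions $\tilde\phi_n$ are unchanged for $n\geq 2$; for $n=1$, $\tilde\phi_1(x,y)=e^{0\cdot y}=1$, so the convolutions become $\tilde\phi_{0,n}(x,y)=y^{n-1}/(n-1)!\cdot \mathbf{1}_{\{x\le y\}}$ with a trivial modification). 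Then apply Lemma~\ref{lemDetMeasure} with the same $\Psi_{n-k}^n$ and with the $\Phi_{n-k}^n$ obtained from the ones in the proof of Proposition~\ref{propKernel} by specializing the contour $\Gamma_{0,-\rho}$ to $\Gamma_0$ and the denominator $z+\rho$ to $z$; the orthogonality check \eqref{eq2.41} continues to hold, with the $\ell=1$ residue computation becoming $\tfrac{(-1)^{k-1}}{2\pi\mathrm{i}}\oint_{\Gamma_0}z^{-k}\,\mathrm{d} z = \delta_{k,1}$, which is still valid.

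Finally, I would specialize the summation \eqref{eqKernSum} at $\rho=0$. The function $\eta(w,z)$ becomes
\begin{equation*}
\eta(w,z)\big|_{\rho=0} = \frac{1}{w-z} + \frac{\lambda}{(w+\lambda)\,z},
\end{equation*}
giving the clean factorization $\mathcal{K}_\hst = -\phi_{n_1,n_2}\,\Id_{n_2>n_1} + \mathcal{K}_0 + \lambda\,\mathpzc{f}\otimes\mathpzc{g}_0$, with $\mathpzc{g}_0$ read off from $\oint_{\Gamma_0}e^{-tz^2/2-\xi_2 z}(-z)^{-n_2}\,\mathrm{d} z/z$; the sign in the statement matches via the identity $(-z)^{-n_2}/z = -(-z)^{-n_2-1}$.

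There is no real obstacle here; the only thing to be careful about is that several manipulations in the proof of Proposition~\ref{propKernel} relied on the pole at $-\rho$ being \emph{strictly inside} $\Gamma_{0,-\rho}$ (notably the residue computations of Lemma~\ref{lemDetIdent} and the geometric series identity $\sum_{k\geq 2}z^{k-2}/w^{k-1}=1/(w(1-z/w))$ requiring $|z|<|w|$). At $\rho=0$ these constraints are still satisfiable by choosing $\Gamma_0$ as a small loop around $0$ inside the vertical line $\mathrm{i}\R-\e$, so every step goes through verbatim. No shift argument is needed since $\Pb_+$ and $\Pb$ already agree for this initial condition, which is why the derivative terms of \eqref{eq33} are absent from the final formula.
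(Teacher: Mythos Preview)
Your approach is exactly the paper's: specialize \eqref{eqDistP+} (Step~1 of Proposition~\ref{propKernel}) to $\rho=0$, noting that $\vec{\zeta}^\hst(\lambda)$ under $\Pb$ is $\vec{\zeta}^\textrm{stat}(\lambda,0)$ under $\Pb_+$, so the shift argument is unneeded. One small slip: with $\tilde\phi_1\equiv1$ the convolution $\tilde\phi_{0,n}$ actually diverges rather than giving $y^{n-1}/(n-1)!\,\Id_{x\le y}$, but this is harmless since after the index shift only $\tilde\phi_{m,n}$ with $m\ge1$ appear in the final kernel; your checks of the $\Phi^n_{n-1}$ specialization, the $\ell=1$ orthogonality residue, and the $\eta(w,z)$ formula are all correct.
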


We assume $\lambda=1$ from now on. It is clear that the initial macroscopic shape in this case is the same as in the half periodic case. We thus have again the law of large numbers,
\begin{equation}
 \lim_{t\to\infty}\frac{x_{\alpha^2 t}(t)}{t}=\begin{cases}2\alpha, &\text{ for } 0\leq\alpha\leq1 \\1+\alpha^2,&\text{ for } \alpha>1.\end{cases}
\end{equation}
The fluctuations in the case $0<\alpha<1$ are of order $t^{1/3}$ and given by the Airy$_2$ process, while for $\alpha>1$ they are now governed by the Airy$_\textrm{stat}$ process. For finite $n$, $x_n(t)$ fluctuates on the scale $\sqrt{t}$ according to Dyson's Brownian motion. At the interesting transition point $\alpha=1$, however, the Airy$_{2\to\textrm{BM}}$ process will appear:

\begin{thm}\label{thmAsympHalfStat}
With $\{x_n(t),n\in\Z_{\geq0}\}$ being the system of one-sided reflected Brownian motions with initial condition $\vec{x}(0)=\zeta^\hst(1)$, define the rescaled process
\begin{equation}
	r\mapsto X_t^\hst(r) = t^{-1/3}\big(x_{\lfloor t+2rt^{2/3}\rfloor}(t)-2t-2rt^{2/3} \big).\,
\end{equation}
In the sense of finite-dimensional distributions,
\begin{equation}
	\lim_{t\to\infty}X_t^\hst(r)\stackrel{d}{=}\mathcal{A}_{2\to \textrm{BM}}(r)-r^2.
\end{equation}
\end{thm}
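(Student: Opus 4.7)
The plan is to apply Proposition~\ref{propHalfStatKernel} (with $\lambda=1$) and then follow the template of the proofs of Theorems~\ref{thmAsympStep} and~\ref{thmAsymp}: establish pointwise convergence of the rescaled kernel together with uniform bounds suitable for dominated convergence on the Fredholm series, and identify the limit with the kernel $K_{\mathcal{A}_{2\to\textrm{BM}}}$ evaluated at the shifted arguments $(r_k, s_k+r_k^2)$. Since $\lim_{t\to\infty} X_t^\hst(r) \leq s$ corresponds to $\mathcal{A}_{2\to\textrm{BM}}(r)-r^2 \leq s$, the appropriate argument shift is $s_i \mapsto s_i + r_i^2$, and this will be absorbed into a standard conjugation that leaves Fredholm determinants invariant (cf.\ the Remark connecting $K_{\mathcal{A}_2}$ and $K'_{\mathcal{A}_2}$).

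I would introduce the same scaling $n_i = t+2t^{2/3}r_i$, $\xi_i = 2t+2t^{2/3}r_i+t^{1/3}s_i$, and form the rescaled kernel $\mathcal{K}_\hst^\textrm{resc} = t^{1/3}e^{\xi_1-\xi_2}\mathcal{K}_\hst$. It decomposes into three pieces: $-\phi_{r_1,r_2}^\textrm{resc}\Id_{r_1<r_2}$, $\mathcal{K}_0^\textrm{resc}$, and the rank-one piece $\mathpzc{f}^\textrm{resc}(r_1,s_1)\,t^{1/3}\mathpzc{g}_0^\textrm{resc}(r_2,s_2)$. The first two pieces are identical to those appearing under packed initial data, so Proposition~\ref{propStepPointw}, Proposition~\ref{propStepPhiBound}, Corollary~\ref{corStepBound} and Proposition~\ref{propStepK0Bound} all apply verbatim, yielding convergence to $K_{\mathcal{A}_2}$ and the bounds required. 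For $\mathpzc{f}^\textrm{resc}$, the representation $\mathpzc{f}^\textrm{resc}(r_1,s_1) = 1 - \int_0^\infty \alpha_t(r_1,s_1+x)\,\mathrm{d} x$ from the proof of Proposition~\ref{propPointw} combined with Lemmas~\ref{lemAlphaLimit} and~\ref{lemAlphaBound} gives both pointwise convergence to $f_{r_1}(s_1)$ and a uniform bound. For $\mathpzc{g}_0^\textrm{resc}$, the integrand differs from that of $\beta_t$ only by the harmless factor $(-z)^{-1}$, which equals $1$ at the critical point $z=-1$; a steep-descent analysis identical to the one in Lemma~\ref{lemAlphaLimit} then produces the pointwise limit
\[
\lim_{t\to\infty} t^{1/3}\mathpzc{g}_0^\textrm{resc}(r_2,s_2) = \Ai(r_2^2+s_2)\,e^{\frac{2}{3}r_2^3+r_2 s_2},
\]
and a parallel adaptation of Lemma~\ref{lemAlphaBound} yields the uniform bound $|t^{1/3}\mathpzc{g}_0^\textrm{resc}(r_2,s_2)| \leq c_L\,e^{-s_2}$ for $s_2 \geq -L$. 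Multiplying gives the rank-one piece of $K_{\mathcal{A}_{2\to\textrm{BM}}}$ at shifted arguments, namely $f_{r_1}(s_1+r_1^2)\Ai(r_2^2+s_2)$ times the conjugation factor $e^{\frac{2}{3}r_1^3+r_1 s_1}/e^{\frac{2}{3}r_2^3+r_2 s_2}$, which after comparing with the definition of $K_{\mathcal{A}_{2\to\textrm{BM}}}$ is exactly what is required.

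To transfer kernel convergence to Fredholm-determinant convergence I would, as in the proof of Theorem~\ref{thmAsymp}, insert the auxiliary conjugation $(1+\sigma_\ell^2)^{m+1-i_\ell}/(1+\sigma_k^2)^{m+1-i_k}$ inside the determinant, apply Hadamard's inequality, and combine the three bounds to produce an integrable and summable envelope $\const^N N^{N/2}/N!$; dominated convergence then completes the argument. The one genuinely new point, and the place where care is needed, is the uniform bound on $t^{1/3}\mathpzc{g}_0^\textrm{resc}$: the extra pole of $(-z)^{-n_2-1}$ at $z=0$ means that the small-contour piece in the steep-descent argument cannot degenerate to pass through the origin, so one has to choose the contour $\overline{\Gamma}$ near $z=-1$ slightly differently than in Lemma~\ref{lemAlphaBound}. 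This is essentially cosmetic, since the saddle-point estimate at $z=-1$ is unchanged and the prefactor $(-z)^{-1}$ is $1+O(\omega)$ on the chosen neighbourhood, so the same exponential decay $e^{-s_2}$ is obtained; nevertheless this is the only step that does not reduce immediately to an already-established lemma.
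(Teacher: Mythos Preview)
Your approach is correct and essentially identical to the paper's: same scaling, same three-piece decomposition, same reuse of the packed-kernel results for the first two pieces, same treatment of $\mathpzc{f}^\textrm{resc}$ via \eqref{eqfInt}, and the same dominated-convergence argument on the Fredholm series.

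The only place where you diverge is in handling $\mathpzc{g}_0^\textrm{resc}$, and here the paper is noticeably cleaner. You propose redoing the steep-descent analysis of Lemma~\ref{lemAlphaLimit}/Lemma~\ref{lemAlphaBound} while carrying along the ``harmless'' extra factor $(-z)^{-1}$, and you flag this as the one new step requiring care because of a pole at $z=0$. The paper instead observes that this extra power of $(-z)^{-1}$ can be absorbed \emph{exactly} into the parameters of $\beta_t$: one checks directly that
\[
\mathpzc{g}_0^\textrm{resc}(r_2,s_2)=-\beta_t\bigl(r_2+t^{-2/3}/2,\;s_2-t^{-1/3}\bigr),
\]
so Lemmas~\ref{lemAlphaLimit} and~\ref{lemAlphaBound} apply verbatim with no new contour work whatsoever. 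Your concern about the pole at $z=0$ is therefore misplaced: the function $\beta_t$ already has $(-z)^{-n}$ in its integrand, and the existing contours in Lemma~\ref{lemAlphaBound} never approach the origin, so nothing changes when the exponent goes from $-n_2$ to $-n_2-1$. What you call the ``one genuinely new point'' is in fact not new at all.

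A minor slip: you write the limiting rank-one piece as $f_{r_1}(s_1+r_1^2)\Ai(r_2^2+s_2)$ times a conjugation factor with the ratio inverted. The correct limit of the rank-one piece is $f_{r_1}(s_1)\,\Ai(r_2^2+s_2)\,e^{\frac{2}{3}r_2^3+r_2 s_2}$, and the conjugation factor relating the limit kernel to $K_{\mathcal{A}_{2\to\textrm{BM}}}$ is $e^{\frac{2}{3}r_2^3+r_2 s_2}/e^{\frac{2}{3}r_1^3+r_1 s_1}$, not its reciprocal. These are bookkeeping errors and do not affect the argument.
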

\begin{proof}[Proof of Theorem~\ref{thmAsympHalfStat}]
With
\begin{equation}\begin{aligned}
	n_i&={t}+2{t}^{2/3}r_i\\
	\xi_i&=2{t}+2{t}^{2/3}r_i+{t}^{1/3}s_i,
\end{aligned}\end{equation}
define the rescaled kernel
\begin{equation}
	\mathcal{K}_\hst^\textrm{resc}(r_1,s_1;r_2,s_2)={t}^{1/3}e^{\xi_1-\xi_2}\mathcal{K}_\hst(n_1,\xi_1;n_2,\xi_2).
\end{equation}
It decomposes into
\begin{equation}\label{eqKHfResc}\begin{aligned}
	\mathcal{K}_\hst^\textrm{resc}(r_1,s_1;r_2,s_2)&=-\phi_{r_1,r_2}^\textrm{resc}(s_1,s_2)\Id_{r_1<r_2}+\mathcal{K}_0^\textrm{resc}(r_1,s_1;r_2,s_2)\\
	&\quad+\mathpzc{f}^\textrm{resc}(r_1,s_1)\mathpzc{g}_0^\textrm{resc}(r_2,s_2),
\end{aligned}\end{equation}
by
\begin{equation}\begin{aligned}
	\mathpzc{f}^\textrm{resc}(r_1,s_1)&=e^{-{t}/2+\xi_1}\mathpzc{f}(n_1,\xi_1)\\
	\mathpzc{g}_0^\textrm{resc}(r_2,s_2)&=e^{{t}/2-\xi_2}t^{1/3}\mathpzc{g}_0(n_2,\xi_2).
\end{aligned}\end{equation}

We can apply the proof of Theorem~\ref{thmAsymp} once we have shown compact convergence as well as uniform boundedness of the kernel. The former means that for any $r_1,r_2$ in a bounded set and fixed $L$, the kernel converges as
\begin{equation}\label{eqHalfStatConv}
	\lim_{{t}\to\infty}\mathcal{K}_\hst^\textrm{resc}(r_1,s_1;r_2,s_2)=\frac{e^{\frac{2}{3}r_2^3+r_2s_2}}{e^{\frac{2}{3}r_1^3+r_1s_1}} K_{{\cal A}_{2\to\textrm{BM}}}(r_1,s_1+r_1^2,;r_2,s_2+r_2^2)
\end{equation}
uniformly for $(s_1,s_2)\in[-L,L]^2$.

The first two parts of \eqref{eqKHfResc} are simply $\mathcal{K}_\textrm{packed}^\textrm{resc}$, so we can apply Proposition~\ref{propStepPointw} to them. In the proof of Proposition~\ref{propPointw} it is shown that $\mathpzc{f}^\textrm{resc}(r_1,s_1)$ converges compactly to $f_{r_1}(s_1)$. Finally,
\begin{equation}\
 \mathpzc{g}_0^\textrm{resc}(r_2,s_2)=-\beta_t(r+t^{-2/3}/2,s-t^{-1/3}),
\end{equation}
allows applying Lemma~\ref{lemAlphaLimit}. Putting this together, we have that as $t\to\infty$, the kernel $\mathcal{K}_\hst^\textrm{resc}$ converges compactly to
\begin{equation}\begin{aligned}
 \mathcal{K}_{\mathcal{A}_2}&(r_1,s_1;r_2,s_2)\\&+\left(1-e^{-\frac{2}{3}r_1^3-r_1s_1}\int_0^\infty\mathrm{d} x\, \Ai(r_1^2+s_1+x)e^{-r_1x}\right)\Ai(r_2^2+s_2)e^{\frac{2}{3}r_2^3+r_2s_2},
\end{aligned}\end{equation}
which is the right hand side of \eqref{eqHalfStatConv}.

Since we can apply Proposition~\ref{propStepPhiBound} and Proposition~\ref{propStepK0Bound} on the first two parts of the rescaled kernel, the only bound we have to show is the following:

For fixed $r_1,r_2,L$ there exists ${t}_0>0$ such that the estimate
\begin{equation}\label{eqHalfStatBound}
	\left|\mathpzc{f}^\textrm{resc}(r_1,s_1)\mathpzc{g}_0^\textrm{resc}(r_2,s_2)\right|\leq\const\cdot e^{-s_2}
\end{equation}
holds for any ${t}>{t}_0$ and $s_1,s_2>0$.

This can be seen by:
\begin{equation}\begin{aligned}
 \Big|\mathpzc{f}^\textrm{resc}&(r_1,s_1)\mathpzc{g}_0^\textrm{resc}(r_2,s_2)\Big|\\
 &=\left|1-\int_0^\infty \mathrm{d} x\, \alpha_{{t}}(r_1,s_1+x)\right|\left|\beta_t(r_2+t^{-2/3}/2,s_2-t^{-1/3})\right|\\
 &\leq\left(1+\int_0^\infty \mathrm{d} x\, e^{-(s_1+x)}\right)e^{-s_2+t^{-1/3}}\leq\const\cdot e^{-s_2},
\end{aligned}\end{equation}
which finishes the proof.
\end{proof}

\section{Poisson-Periodic initial conditions}\label{secStFl}
The remaining mixed initial condition consists of equally spaced particles on the positive half-axis and a Poisson process on the negative one.  Let therefore be $\{\textrm{Exp}_n,n\in\Z_{\leq1}\}$ be i.i.d. random variables with exponential distribution with parameter $1$. For a parameter $\rho>0$ define the initial condition \mbox{$\vec{x}(0)=\vec{\zeta}^\stf(\rho)$} by
\begin{equation}\label{statflatModel}
 \begin{aligned}
  \zeta^\stf_n&=n, \quad &\text{for } n\geq1\\
  \zeta^\stf_n-\zeta^\hst_{n-1}&=  \rho^{-1}\,\textrm{Exp}_n, \quad &\text{for } n\leq1.
 \end{aligned}
\end{equation}
By the stationarity property, we know that $x_1(t)=1+\widetilde{B}(t)+\rho t$ for a Brownian motion $\widetilde{B}(t)$ that is independent of $B_n(t)$, $n\geq2$. We can therefore restrict our analysis to a half-infinite system, which is in fact the same system as in the half-periodic setting except for the drift of the first particle $x_1(t)$.

\begin{prop}\label{propStatFlatKernel}
Let $\{x_n(t),n\in\Z\}$ be the system of one-sided reflected Brownian motions with initial condition \mbox{$\vec{x}(0)=\vec{\zeta}^\stf(\rho)$}, $\rho>0$. Then for any finite subset $S$ of $\Z_{>0}$, it holds
\begin{equation}
\Pb\bigg(\bigcap_{n\in S} \{x_n(t)\leq a_n\}\bigg)=\det(\Id-\chi_a \mathcal{K}_\stf\chi_a)_{L^2(S\times \R)},
\end{equation}
where $\chi_a(n,\xi)=\Id_{\xi>a_n}$. The kernel $\mathcal{K}_\stf$ is given by
\begin{equation}\begin{aligned}\label{eqKtstf}
\mathcal{K}_\stf&(n_1,\xi_1;n_2,\xi_2)\\
&=\mathcal{K}_\hf(n_1,\xi_1;n_2,\xi_2) + \Psi^{n_1}_{{n_1}-1}(\xi_1)\left(\widehat{\Phi}^{n_2}_{(1)}(\xi_2)+\widehat{\Phi}^{n_2}_{(2)}(\xi_2)\right),
\end{aligned}\end{equation}
with $\mathcal{K}_\hf$ as in \eqref{eqKtHf} and
\begin{equation}
\begin{aligned}	\Psi^n_{n-1}(\xi)&=\frac{1}{2\pi\mathrm{i}}\int_{\mathrm{i}\R-\e}\mathrm{d} w\,\,e^{{t} w^2/2+w(\xi-1)}(-w)^{n-1}\\
\widehat{\Phi}^n_{(1)}(\xi)&=\frac{(-1)^{n}}{2\pi\mathrm{i}}\oint_{\Gamma_0}\mathrm{d} z\,\frac{e^{-t z^2/2-z(\xi-1)}}{z^{n}}\frac{\rho(1+z)}{\rho+ze^{\rho+z}},\\
\widehat{\Phi}^n_{(2)}(\xi)&=\rho^{1-n}e^{-t\rho^2/2+\rho(\xi-1)}.
\end{aligned}
\end{equation}
\end{prop}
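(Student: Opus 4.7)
The proof follows the template of the half-periodic case, Proposition~\ref{propHfKernel} (Proposition~4.2 of~\cite{FSW13}), with a single modification coming from the drift that the Poisson reservoir induces on the leftmost particle. By Proposition~\ref{propBM} and Remark~\ref{RemarkBurke}, all particles with negative index may be collapsed, as far as the joint law of $\{x_n(t),\,n\geq 1\}$ is concerned, into an effective drift $\rho$ on $x_1$. We are thus reduced to a half-infinite system with initial data $\zeta_n=n$ in which only the leftmost particle carries the drift $\rho$. Applying Proposition~\ref{PropWarren} to the truncation of this system to $N$ particles with drift vector $\vec\mu=(\rho,0,\ldots,0)$, one observes from~\eqref{eqFkl} that the factor $(w+\mu_1)=(w+\rho)$ would appear in the products only when $k-1\geq N$, which never happens for $k\leq N$; hence the functions $F_{k,l}$ reduce to the driftless $F_{k-l}$ of the half-periodic case, and all drift dependence is absorbed into the scalar prefactor $e^{\rho(\xi_1-1)-t\rho^2/2}$.

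I would then run the integration-and-relabeling procedure used in the proofs of Propositions~\ref{propStepKernel} and~\ref{propFlat}: introduce virtual variables $\xi^k_l$, extend the integration domain via Lemma~\ref{AppLemma3.3}, and rewrite the density in the product-of-determinants form required by Lemma~\ref{lemDetMeasure}. The only departure from the half-periodic case is that the extra factor $e^{\rho\xi^1_1}$ is absorbed into the boundary weight $\tilde\phi_1$, exactly as in Step~1 of the proof of Proposition~\ref{propKernel}. As a consequence, the space $V_n$ acquires the new basis element $e^{\rho\xi}$ in place of the monomial $\xi^{n-1}$ of the half-periodic case. The $\Psi^n_{n-k}$'s are inherited unchanged from the half-periodic case, the polynomial dual functions $\Phi^n_{n-k}$ for $k\geq 2$ are taken identical to their half-periodic counterparts, and for $k=1$ I would take
\begin{equation*}
 \Phi^n_{n-1}(\xi)=\frac{(-1)^n}{2\pi\mathrm{i}}\oint_{\Gamma_{0,-\rho}}\mathrm{d} z\,\frac{e^{-tz^2/2-z(\xi-1)}}{z^n}\frac{\rho(1+z)}{\rho+ze^{\rho+z}},
\end{equation*}
with the loop enclosing both $0$ and $-\rho$. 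Since $\rho(1+z)/(\rho+ze^{\rho+z})=1$ at $z=0$, the $z=0$ residue produces a polynomial of degree $n-1$, while the simple pole at $z=-\rho$ supplies the missing direction $e^{\rho\xi}$. Orthogonality $\int \Psi^n_{n-k}\Phi^n_{n-l}\,\mathrm{d}\xi=\delta_{k,l}$ is verified by the same contour manipulation as in the Poisson case; compare the calculation around~\eqref{eq2.41}.

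Plugging these choices into Lemma~\ref{lemDetMeasure} yields a kernel of the form $-\phi_{n_1,n_2}\Id_{n_2>n_1}+\sum_{k=1}^{n_2}\Psi^{n_1}_{n_1-k}(\xi_1)\Phi^{n_2}_{n_2-k}(\xi_2)$. Because $\Psi^{n_1}_{n_1-k}$ and $\Phi^{n_2}_{n_2-k}$ for $k\geq 2$ coincide with their half-periodic counterparts, the partial sum over $k\geq 2$ reproduces $\mathcal K_\hf$ term by term, and only the $k=1$ contribution $\Psi^{n_1}_{n_1-1}(\xi_1)\Phi^{n_2}_{n_2-1}(\xi_2)$ is new. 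Deforming $\Gamma_{0,-\rho}$ into $\Gamma_0$ plus a small loop around $-\rho$ splits $\Phi^{n_2}_{n_2-1}$ into two pieces: the $\Gamma_0$ integral is $\widehat\Phi^{n_2}_{(1)}$, while, using $(1+z)|_{z=-\rho}=1-\rho$ and $\partial_z(\rho+ze^{\rho+z})|_{z=-\rho}=1-\rho$, the residue at $z=-\rho$ evaluates cleanly to $\rho^{1-n_2}e^{-t\rho^2/2+\rho(\xi_2-1)}=\widehat\Phi^{n_2}_{(2)}$. The $N\to\infty$ limit is straightforward since the $\Psi$'s and $\Phi$'s depend on $N$ only through the truncation $k\leq n_2$ of the sum.

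The main obstacle will be the orthogonality check for the new $\Phi^n_{n-1}$: one must ensure that $\Gamma_{0,-\rho}$ can be drawn so as to enclose only $z=0$ and $z=-\rho$ among the zeros of $\rho+ze^{\rho+z}=0$. The other zeros are the non-principal values $L_k(-\rho e^{-\rho})$ of the Lambert $W$ function, and they must be shown to lie outside the relevant contour, in the same spirit as the branch-structure analysis carried out for the half-periodic case. A mild secondary issue is the apparent degeneracy at $\rho=1$, where the $(1-\rho)$ factors in the residue cancel; continuity in $\rho$, or a direct limiting argument from $\rho\neq 1$, handles this.
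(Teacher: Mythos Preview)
Your overall architecture matches the paper's proof: reduce to a half-infinite system with drift $\rho$ on $x_1$ via Proposition~\ref{propBM}, apply Proposition~\ref{PropWarren} with $\vec\mu=(\rho,0,\dots,0)$, absorb the drift into the prefactor $e^{\rho(\xi_1-1)-t\rho^2/2}$, set $\tilde\phi_1(x,y)=e^{\rho y}$, and invoke Lemma~\ref{lemDetMeasure}. However, your choice of $\Phi^n_{n-1}$ is wrong, and this breaks the argument at two places.

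First, your $\Phi^n_{n-1}$ does not lie in $V_n$. With $\tilde\phi_1(x,y)=e^{\rho y}$, the space $V_n$ is spanned by $\{e^{\rho\xi},1,\xi,\dots,\xi^{n-2}\}$; it contains no polynomial of degree $n-1$. Your own observation that the $\Gamma_0$-part of your integral is a polynomial of degree $n-1$ therefore shows that your $\Phi^n_{n-1}\notin V_n$, so Lemma~\ref{lemDetMeasure} is not applicable. Second, and more fatally, your $\Phi^n_{n-1}$ is orthogonal to \emph{every} $\Psi^n_{n-k}$, including $k=1$. By your residue computation, your function equals $\widehat\Phi^n_{(1)}+\widehat\Phi^n_{(2)}$, and a direct contour calculation (carried out in the paper) gives
\[
\int_\R\Psi^n_{n-k}\,\widehat\Phi^n_{(1)}\,\mathrm{d}\xi=-(\rho e^{-\rho})^{1-k},\qquad
\int_\R\Psi^n_{n-k}\,\widehat\Phi^n_{(2)}\,\mathrm{d}\xi=+(\rho e^{-\rho})^{1-k},
\]
so the two pieces cancel for all $k\geq 1$. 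In particular $\langle\Psi^n_{n-1},\Phi^n_{n-1}\rangle=0\neq 1$, and the biorthogonality \eqref{Sasortho} fails. The analogy with the Poisson computation around~\eqref{eq2.41} is misleading precisely because here the $z=0$ and $z=-\rho$ contributions happen to cancel exactly.

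The remedy, which is what the paper does, is to \emph{add} the half-periodic dual function $\widetilde\Phi^n_{n-1}(\xi)=\frac{(-1)^{n-1}}{2\pi\mathrm{i}}\oint_{\Gamma_0}\mathrm{d} z\,e^{-tz^2/2-z(\xi-1)}z^{-n}(1+z)$ to your expression, i.e.\ take $\Phi^n_{n-1}=\widetilde\Phi^n_{n-1}+\widehat\Phi^n_{(1)}+\widehat\Phi^n_{(2)}$. The leading $\xi^{n-1}$ terms of $\widetilde\Phi^n_{n-1}$ and $\widehat\Phi^n_{(1)}$ carry opposite signs and cancel, pushing $\Phi^n_{n-1}$ back into $V_n$; the half-periodic orthogonality $\langle\Psi^n_{n-k},\widetilde\Phi^n_{n-1}\rangle=\delta_{k,1}$ then restores \eqref{Sasortho}. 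This also repairs your kernel bookkeeping: it is the full sum $\sum_{k\geq 1}\Psi^{n_1}_{n_1-k}\widetilde\Phi^{n_2}_{n_2-k}$ (not the $k\geq 2$ partial sum) that produces $\mathcal{K}_1$ via the geometric-series identity, while the additional $k=1$ piece $\Psi^{n_1}_{n_1-1}(\widehat\Phi^{n_2}_{(1)}+\widehat\Phi^{n_2}_{(2)})$ gives the rank-one correction in~\eqref{eqKtstf}.
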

\begin{proof}[Proof of Proposition~\ref{propStatFlatKernel}]
 Applying Proposition~\ref{PropWarren} with parameters $\mu_1=\rho$, $\mu_k=0$ for $2\leq k\leq N$ gives
\begin{equation}\label{eqSFPstep}
  \Pb\left(\vec{x}({t})\in \D\vec{\xi}\right)=e^{\rho(\xi_1-1)-t\rho^2/2}\det_{1\leq k,l\leq N}[F_{k-l}(\xi_{N+1-l}-\zeta_{N+1-k},{t})],
\end{equation}
where
\begin{equation}
	F_{k}(\xi,{t})=\frac{1}{2\pi\mathrm{i}}\int_{\mathrm{i}\R+1}\mathrm{d} w\,e^{{t} w^2/2+\xi w}w^k.
\end{equation}

Using repeatedly the identity
\begin{equation}
	F_{k}(\xi,{t})=\int^\xi_{-\infty} \mathrm{d} x\,F_{k+1}(x,{t}),
\end{equation}
relabeling $\xi_1^k:=\xi_k$, and introducing new variables $\xi_l^k$ for $2\leq l\leq k\leq N$,
we can write
\begin{equation}\begin{aligned}
	\det_{1\leq k,l\leq N}&\big[F_{k-l}(\xi_1^{N+1-l}-\zeta_{N-1+k},{t})\big]\\&
	=\int_{\mathcal D'} \det_{1\leq k,l\leq N}\big[F_{k-1}(\xi_l^{N}-\zeta_{N-1+k},{t})\big]\prod_{2\leq l\leq k\leq N} \D\xi_l^k,
\end{aligned}\end{equation}
where $\mathcal D' = \{\xi_l^k\in\R,2\leq l\leq k\leq N|x_l^k\leq x_{l-1}^{k-1}\}$. Using the antisymmetry of the determinant and encoding the constraint on the integration variables into indicator functions, we obtain that the measure \eqref{eqSFPstep} is a marginal of
\begin{equation}\label{SFExtM}\begin{aligned}
	\const&\cdot e^{\rho\xi_1^1} \prod_{n=2
	}^{N}\det_{1\leq i,j\leq n}\left[\Id_{\xi_i^{n-1}\leq\xi_j^n}\right]\det_{1\leq k,l\leq N}\big[ F_{k-1}(\xi_l^{N}-\zeta_{N-1+k},{t})\big]\\=
	\const&\cdot \prod_{n=1}^{N}\det_{1\leq i,j\leq n}\big[\phi_n(\xi_i^{n-1},\xi_j^n)\big]\det_{1\leq k,l\leq N}\big[ F_{k-1}(\xi_l^{N}-\zeta_{N-1+k},{t})\big]
\end{aligned}\end{equation}

with
\begin{equation}\begin{aligned}
	\tilde{\phi}_n(x,y)&=\Id_{x\leq y}, \quad \text{for } n\geq2\\
	\tilde{\phi}_1(x,y)&=e^{\rho y},
\end{aligned}\end{equation}
and using the convention that $\xi_n^{n-1}\leq y$ always holds.

The measure \eqref{SFExtM} has the appropriate form for applying Lemma~\ref{lemDetMeasure}. The composition of the $\tilde{\phi}$ functions can be evaluated explicitly as
\begin{equation}\begin{aligned}
	\tilde{\phi}_{0,n}(x,y)&=(\tilde{\phi}_1*\dots*\tilde{\phi}_{n})(x,y)=\rho^{1-n}e^{\rho y}, & &\text{for } n\geq1,\\
	\tilde{\phi}_{m,n}(x,y)&=(\tilde{\phi}_{m+1}*\dots*\tilde{\phi}_{n})(x,y)=\frac{(y-x)^{n-m-1}}{(n-m-1)!}\Id_{x\leq y}, & &\text{for } n>m\geq1.
\end{aligned}\end{equation}
Define
\begin{equation}
	\Psi^n_{n-k}(\xi):=\frac{(-1)^{n-k}}{2\pi\mathrm{i}}\int_{\mathrm{i}\R-\e}\mathrm{d} w\,\,e^{{t} w^2/2+w(\xi-k)}w^{n-k},
\end{equation}
for $n,k\geq1$ and some $\e>0$. In the case $n\geq k$ the integrand has no poles, which implies $\Psi^n_{n-k}(\xi)=(-1)^{n-k}F_{n-k}(\xi-\zeta_k)$, now understood specifically with $\zeta_k=\zeta^\stf_k=k$. Reversing the order of the index $k$ one sees that the second determinant in \eqref{SFExtM} is equal to $\det_{k,l}\big[ \Psi^N_{N-k}(\xi_l^{N})\big]$. The straightforward recursion
\begin{equation}
	(\tilde{\phi}_n*\Psi^n_{n-k})(\xi)=\Psi^{n-1}_{n-1-k}(\xi)
\end{equation}
eventually leads to condition \eqref{Sasdef_psi} being satisfied.
A basis for the space $V_n$ is given by
\begin{equation}
	\{e^{\rho x},x^{n-2},x^{n-3},\dots,x,1\}.
\end{equation}
Choose functions $\Phi^n_{n-k}$ as follows
\begin{equation}\label{eq6.3.11}
		\Phi^n_{n-k}(\xi)=\begin{cases}
\widetilde{\Phi}^n_{n-k}(\xi)& 2\leq k\leq n,\\
\widetilde{\Phi}^n_{n-1}(\xi) +\widehat{\Phi}^n_{(1)}(\xi)+\widehat{\Phi}^n_{(2)}(\xi)& k=1,
\end{cases}
\end{equation}
where
\begin{equation}\begin{aligned}
\widetilde{\Phi}^n_{n-k}(\xi)&=\frac{(-1)^{n-k}}{2\pi\mathrm{i}}\oint_{\Gamma_0}\mathrm{d} z\,\frac{e^{-t z^2/2-z(\xi-k)}}{z^{n-k+1}}(1+z),\\
\widehat{\Phi}^n_{(1)}(\xi)&=\frac{(-1)^{n}}{2\pi\mathrm{i}}\oint_{\Gamma_0}\mathrm{d} z\,\frac{e^{-t z^2/2-z(\xi-1)}}{z^{n}}\frac{\rho(1+z)}{\rho+ze^{\rho+z}},\\
\widehat{\Phi}^n_{(2)}(\xi)&=\rho^{1-n}e^{-t\rho^2/2+\rho(\xi-1)}.
\end{aligned}\end{equation}
The functions $\widetilde{\Phi}^n_{n-k}$ are polynomials of order $n-k$ by elementary residue calculating rules. $\widetilde{\Phi}^n_{n-1}+\widehat{\Phi}^n_{(1)}$ is in fact a polynomial of order $n-2$, since the poles of order $n$ in each integrand cancel each other out exactly. So the functions \eqref{eq6.3.11} indeed generate $V_n$. To show the orthogonality \eqref{Sasortho}, we decompose the scalar product as follows:
\begin{equation}\label{eqSF2.41}
\int_{\R_-} \mathrm{d} \xi\, \Psi^{n}_{n-k}(\xi) \widetilde{\Phi}^n_{n-\ell}(\xi) + \int_{\R_+} \mathrm{d} \xi\, \Psi^{n}_{n-k}(\xi) \widetilde{\Phi}^n_{n-\ell}(\xi).
\end{equation}
Since $n-k\geq0$ we are free to choose the sign of $\e$ as necessary. For the first term, we choose $\e<0$ and the path $\Gamma_0$ close enough to zero, such that always \mbox{$\Re(w-z)>0$}. Then, we can take the integral over $\xi$ inside and obtain
\begin{equation}
\int_{\R_-} \mathrm{d} \xi\, \Psi^{n}_{n-k}(\xi) \widetilde{\Phi}^n_{n-\ell}(\xi) =\frac{(-1)^{k-l}}{(2\pi\mathrm{i})^2}\int_{\mathrm{i}\R-\e}\mathrm{d} w \oint_{\Gamma_0}\mathrm{d} z\, \frac{e^{{t} w^2/2} w^{n-k}e^{-wk}(1+z)}{e^{{t} z^2/2}z^{n-\ell+1}e^{-z\ell}(w-z)}.
\end{equation}
For the second term, we choose $\e>0$ to obtain \mbox{$\Re(w-z)<0$}. Then again, we can take the integral over $\xi$ inside and arrive at the same expression up to a minus sign. The net result of \eqref{eqSF2.41} is a residue at $w=z$, which is given by
\begin{equation}
\frac{(-1)^{k-l}}{2\pi\mathrm{i}}\oint_{\Gamma_0}\mathrm{d} z\, \left(ze^z\right)^{\ell-k}\frac{1+z}{z}=\frac{(-1)^{k-l}}{2\pi\mathrm{i}}\oint_{\Gamma_0}\mathrm{d} Z\, Z^{\ell-k-1}=\delta_{k,\ell},
\end{equation}
where we made the change of variables $Z=ze^z$. In the same way we get
\begin{equation}\label{eq6.3.16}\begin{aligned}
\int_\R \mathrm{d} \xi\, \Psi^{n}_{n-k}(\xi) \widehat{\Phi}^n_{(1)}(\xi)&=\frac{(-1)^{k}}{2\pi\mathrm{i}}\oint_{\Gamma_0}\mathrm{d} z\, \left(ze^z\right)^{1-k}\frac{\rho(1+z)}{(\rho+ze^{\rho+z})z}\\
&=\frac{1}{2\pi\mathrm{i}}\oint_{\Gamma_0}\mathrm{d} Z\, \left(-Z\right)^{-k}\frac{1}{(1+Ze^\rho/\rho)}\\&=\frac{1}{2\pi\mathrm{i}}\oint_{\Gamma_0}\mathrm{d} Z\,\left(-Z\right)^{-k}\sum_{i\geq0}\left(-Z\frac{e^\rho}{\rho}\right)^i=-\left(\rho e^{-\rho}\right)^{1-k}.
\end{aligned}\end{equation}
Regarding the scalar product with $\widehat{\Phi}^n_{(2)}$, choose $\e<\rho$ for the integral over $\R_-$ and $\e>\rho$ for the one over $\R_+$. We are left with the following:
\begin{equation}\begin{aligned}
\int_\R \mathrm{d} \xi\, \Psi^{n}_{n-k}(\xi) \widehat{\Phi}^n_{(2)}(\xi)&=\frac{(-1)^{n-k}\rho^{1-n}}{2\pi\mathrm{i}}\oint_{\Gamma_{-\rho}}\mathrm{d} w\, e^{t(w^2-\rho^2)/2-wk-\rho}w^{n-k}\frac{1}{w+\rho}\\&=\left(\rho e^{-\rho}\right)^{1-k},
\end{aligned}\end{equation}
which cancels out \eqref{eq6.3.16}, eventually proving the orthogonality relation \mbox{$\left\langle\Psi^n_{n-k}(\xi),\Phi^n_{n-l}(\xi)\right\rangle=\delta_{k,l}$}.

Furthermore, both $\phi_n(\xi_{n}^{n-1},x)$ and $\Phi_0^{n}(\xi)$ are constants, so the kernel has a simple form \eqref{SasK}:
\begin{equation}
\mathcal{K}_\stf(n_1,\xi_1;n_2,\xi_2)=-\phi_{n_1,n_2}(\xi_1,\xi_2)\Id_{n_2>n_1} + \sum_{k=1}^{n_2} \Psi_{n_1-k}^{n_1}(\xi_1) \Phi_{n_2-k}^{n_2}(\xi_2).
\end{equation}

Note that we are free to extend the summation over $k$ up to infinity, since the integral expression for $\Phi_{n-k}^{n}(\xi)$ vanishes for $k>n$ anyway. Taking the sum inside the integrals we can write
\begin{equation}\begin{aligned}\label{eqStepFKernSum}
	\sum_{k\geq1}\;&\Psi_{n_1-k}^{n_1}(\xi_1) \widetilde{\Phi}_{n_2-k}^{n_2}(\xi_2)\\&=\frac{1}{(2\pi\mathrm{i})^2}\int_{\mathrm{i}\R-\e}\hspace{-0.4cm}\mathrm{d} w \oint_{\Gamma_0}\mathrm{d} z\,	\frac{e^{{t} w^2/2+\xi_1 w}(-w)^{n_1}}{e^{{t} z^2/2+\xi_2 z}(-z)^{n_2}}\frac{z+1}{z}\sum_{k\geq1}\frac{(ze^z)^k}{(we^w)^k}.
\end{aligned}\end{equation}

By choosing contours such that $|z|<|w|$, we can use the formula for a geometric series, resulting in
\begin{equation}\begin{aligned}
	\eqref{eqStepFKernSum}&=\frac{1}{(2\pi\mathrm{i})^2}\int_{\mathrm{i}\R-\e}\hspace{-0.4cm}\mathrm{d} w \oint_{\Gamma_0}\mathrm{d} z\,	\frac{e^{{t} w^2/2+\xi_1 w}(-w)^{n_1}}{e^{{t} z^2/2+\xi_2 z}(-z)^{n_2}}\frac{(z+1)e^z}{we^w-ze^z}\\
	&=\mathcal{K}_1(n_1,\xi_1;n_2,\xi_2).
\end{aligned}\end{equation}
Note also that we required $n_1>0$, in which case the definition of the function $\phi$ is equal to the one in \eqref{eqK1def}, allowing us to combine these equations to \eqref{eqKtstf}.
\end{proof}

The asymptotic behaviour of the Poisson-periodic initial condition depends substantially on the parameter $\rho$. For $\rho>1$ we have a \emph{shock} moving in positive direction, where the macroscopic particle density changes discontinuously. In the case $0<\rho<1$ there is a region of linearly increasing density between the two plateaus of density $\rho$ and $1$, and at the right edge of this region we have again the Airy$_{2\to1}$ process. We will not focus more on these two cases and refer the reader to~\cite{BFS09} where these results are derived for TASEP.

Instead we will study the case $\rho=1$, that where the last of the crossover Airy processes appears. There is a constant macroscopic density of $1$, which also results in the particles having an average speed of $1$. The law of large numbers therefore looks like this:
\begin{equation}
 \lim_{t\to\infty}\frac{x_{\alpha t}(t)}{t}=1+\alpha.
\end{equation}
The behaviour of the fluctuations is not clear immediately. One thing that follows directly from the Poisson case, is that for $\alpha\leq0$ we have fluctuations of order $t^{1/3}$ described by $\mathcal{A}_\mathrm{stat}$, if we normalize along the characteristic direction, i.e.\ consider the position relative to $x_{(\alpha-1)t}(0)$ (otherwise it is just Gaussian fluctuations on a $t^{1/2}$ scale). It is also to be expected, that for large $\alpha$, the periodic initial condition will dominate, leading to $t^{1/3}$ fluctuations given by $\mathcal{A}_1$. As it turns out, there is again a single transition point, and it is $\alpha=1$. The limit process is $\mathcal{A}_\mathrm{stat}$ for $\alpha<1$, $\mathcal{A}_1$ for $\alpha>1$, and a transition process $\mathcal{A}_\mathrm{BM\to1}$ for $\alpha=1$. We will prove the limit only in the transition case.

\begin{thm}\label{thmAsympStatFlat}
With $\{x_n(t),n\in\Z\}$ being the system of one-sided reflected Brownian motions with initial condition $\vec{x}(0)=\vec{\zeta}^\stf(1)$, define the rescaled process
\begin{equation}
	r\mapsto X_t^\stf(r) = t^{-1/3}\big(x_{\lfloor t+2rt^{2/3}\rfloor}(t)-2t-2rt^{2/3} \big).\,
\end{equation}
In the sense of finite-dimensional distributions,
\begin{equation}
	\lim_{t\to\infty}X_t^\stf(r)\stackrel{d}{=}\mathcal{A}_\mathrm{BM \to 1}(r)-r^2.
\end{equation}
\end{thm}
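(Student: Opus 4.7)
The plan is to follow the scheme of Theorems~\ref{thmAsympHalfStat} and~\ref{thmAsympHf}: introduce the scaling
\begin{equation*}
n_i = t + 2t^{2/3}r_i, \qquad \xi_i = 2t + 2t^{2/3}r_i + t^{1/3}s_i,
\end{equation*}
and the rescaled kernel $\mathcal{K}_\stf^\textrm{resc}(r_1,s_1;r_2,s_2) = t^{1/3}e^{\xi_1-\xi_2}\mathcal{K}_\stf(n_1,\xi_1;n_2,\xi_2)$, then establish compact convergence together with uniform exponential bounds strong enough to dominate the Fredholm series, exactly as in the proof of Theorem~\ref{thmAsympHalfStat}. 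By Proposition~\ref{propStatFlatKernel} the rescaled kernel splits as
\begin{equation*}
\mathcal{K}_\stf^\textrm{resc} = \mathcal{K}_\hf^\textrm{resc} + \Psi^\textrm{resc}(r_1,s_1)\cdot \Phi^\textrm{resc}(r_2,s_2),
\end{equation*}
with $\Psi^\textrm{resc}(r_1,s_1)=t^{1/3}e^{\xi_1-t/2-1}\Psi^{n_1}_{n_1-1}(\xi_1)$ and $\Phi^\textrm{resc}(r_2,s_2)=e^{-\xi_2+t/2+1}(\widehat\Phi^{n_2}_{(1)}(\xi_2)+\widehat\Phi^{n_2}_{(2)}(\xi_2))$, the conjugation factor being split so that $e^{-\xi_2+t/2+1}\widehat\Phi^{n_2}_{(2)}(\xi_2)\equiv 1$. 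The convergence of $\mathcal{K}_\hf^\textrm{resc}$ to $K_{\mathcal{A}_{2\to 1}}$ and the accompanying uniform bound are already provided by Propositions~\ref{propK1Pointw} and~\ref{propK1Bound}, so everything reduces to the rank-one perturbation.

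For $\Psi^\textrm{resc}$ I will do a standard steep descent at the doubly critical point $w=-1$ of the exponent $tw^2/2+w(\xi_1-1)+(n_1-1)\log(-w)$. Substituting $w=-1+Wt^{-1/3}$ and Taylor expanding, all subleading orders cancel exactly against the external factor $t^{1/3}e^{\xi_1-t/2-1}$, leaving the Airy integrand $e^{-W^3/3-r_1W^2+s_1W}$. The same pointwise convergence and uniform exponential bound proved for $\alpha_t$ in Lemmas~\ref{lemAlphaLimit} and~\ref{lemAlphaBound} then apply with only cosmetic modifications and give compact convergence $\Psi^\textrm{resc}(r_1,s_1)\to\alpha(r_1,s_1) = e^{-\frac{2}{3}r_1^3-r_1s_1}\Ai(r_1^2+s_1)$, together with the bound $|\Psi^\textrm{resc}(r_1,s_1)|\le c_L\,e^{-s_1}$ for $s_1>-L$.

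The hard part of the proof will be the asymptotic analysis of $\widehat\Phi^{n_2}_{(1)}(\xi_2)$: the saddle of its leading exponent is again $z=-1$, but the rational factor $(1+z)/(1+ze^{1+z})$ has a simple pole at precisely this point. Expanding $z=-1+y$ one finds $1+ze^{1+z}=y^2/2+y^3/3+O(y^4)$, so
\begin{equation*}
\frac{1+z}{1+ze^{1+z}} = \frac{2}{1+z}-\frac{4}{3}+O(1+z),
\end{equation*}
and the residue of this rational factor at $z=-1$ is exactly $2$. I will deform $\Gamma_0$ into the Lambert contour $\gamma^\omega$ of the proof of Proposition~\ref{propK1Pointw}, choosing $\omega>0$ fixed small for the pointwise limit and $\omega=\min(t^{-1/3}\sqrt{s_2},\e)$ for the uniform bound, following the template of Lemma~\ref{lemAlphaBound}. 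Since $\gamma^\omega$ passes to the right of $z=-1$ and does not enclose it, no residue is lost in the deformation, and the steep descent property of $\gamma^\omega$ (established in Proposition~\ref{propK1Pointw}) confines the essential part of the integral to a neighborhood of $z=-1$. Substituting $z=-1+Zt^{-1/3}$, the usual cancellation in the exponent takes place, the scaled contour becomes a vertical line $\Re Z = c$ with $c>0$, and the rational factor contributes $2t^{1/3}/Z+O(1)$, so that the local integrand converges to $(2/Z)\,e^{Z^3/3+r_2Z^2-s_2Z}$. Using $1/Z=\int_0^\infty e^{-Zx}\,\mathrm{d} x$ on $\Re Z>0$ together with the Airy identity
\begin{equation*}
\frac{1}{2\pi\mathrm{i}}\int_{c-\mathrm{i}\infty}^{c+\mathrm{i}\infty} e^{Z^3/3+r_2Z^2-(s_2+x)Z}\,\mathrm{d} Z = e^{\frac{2}{3}r_2^3+r_2(s_2+x)}\Ai(r_2^2+s_2+x),
\end{equation*}
this yields
\begin{equation*}
e^{-\xi_2+t/2+1}\widehat\Phi^{n_2}_{(1)}(\xi_2)\longrightarrow -2\,e^{\frac{2}{3}r_2^3+r_2s_2}\int_0^\infty e^{r_2x}\Ai(r_2^2+s_2+x)\,\mathrm{d} x,
\end{equation*}
which, combined with the contribution $1$ from $\widehat\Phi^{n_2}_{(2)}$ and with the limit $\alpha(r_1,s_1)$ of $\Psi^\textrm{resc}$, reproduces exactly the additional piece of $K_{\mathcal{A}_{\mathrm{BM}\to 1}}-K_{\mathcal{A}_{2\to 1}}$ from Definition~\ref{defAiBto1}. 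The peculiar factor $2$ in the kernel of $\mathcal{A}_{\mathrm{BM}\to1}$ is thus precisely the residue of $(1+z)/(1+ze^{1+z})$ at the coinciding saddle and pole.

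The uniform bound on $\Phi^\textrm{resc}$ is obtained from the same steep descent estimate: with $\omega=\min(t^{-1/3}\sqrt{s_2},\e)$ the prefactor $e^{G(w_0)}$ produces decay in $s_2$ (as in the proof of Lemma~\ref{lemAlphaBound}), while the tail along $\gamma^\omega$ is controlled by the steep descent property, and $\widehat\Phi^{n_2}_{(2)}$ contributes merely a bounded term, so $\Phi^\textrm{resc}$ is uniformly bounded. Together with the bound $|\Psi^\textrm{resc}|\le c_L e^{-s_1}$ and the bounds of Propositions~\ref{propStepPhiBound},~\ref{propStepK0Bound} and~\ref{propK1Bound}, the rescaled kernel then satisfies a bound of the type required to dominate the Fredholm expansion, exactly as in the proof of Theorem~\ref{thmAsympHalfStat}. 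Exchanging limit and series yields the claimed convergence of finite-dimensional distributions; identifying the limit kernel as $e^{\frac{2}{3}r_2^3+r_2s_2}/e^{\frac{2}{3}r_1^3+r_1s_1}$ times $K_{\mathcal{A}_{\mathrm{BM}\to 1}}(r_1,s_1+r_1^2;r_2,s_2+r_2^2)$ (the conjugation is Fredholm-invariant, and the argument shift $s_k\mapsto s_k+r_k^2$ produces the $-r^2$ in the theorem statement) completes the proof.
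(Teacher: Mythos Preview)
Your approach is essentially the paper's: decompose into $\mathcal{K}_\hf^\textrm{resc}$ plus the rank-one piece, handle $\Psi^{n_1}_{n_1-1}$ via $\alpha_t$, note that $\widehat\Phi_{(2)}$ contributes the constant $1$, and run a steep-descent on $\widehat\Phi_{(1)}$ with the singular factor $(1+z)/(1+ze^{1+z})\sim 2/(1+z)$ producing the factor $2/Z$ and hence the term $-2e^{\frac{2}{3}r_2^3+r_2s_2}\int_0^\infty e^{r_2x}\Ai(r_2^2+s_2+x)\,\mathrm{d} x$. The paper treats $\widehat\Phi_{(1)}$ with a small rightward deformation of the straight-line contour from Lemma~\ref{lemAlphaLimit} rather than the Lambert contour $\gamma^\omega$, and for the bound simply uses $e^{-t^{1/3}s_2\Re(z+1)}\le 1$ on that contour to get $s_2$-uniform boundedness of $\widehat\Phi_{(1)}$ (decay in $s_2$ is not needed since $\widehat\Phi_{(2)}$ already contributes $1$), but your variant works too.

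One bookkeeping slip: with the paper's convention in Definition~\ref{defAiBto1} the limit of $\mathcal{K}_\stf^\textrm{resc}(r_1,s_1;r_2,s_2)$ is $K_{\mathcal{A}_{\mathrm{BM}\to1}}(r_1,s_1;r_2,s_2)$ \emph{directly}, not the conjugated and argument-shifted expression you write at the end. The analogy with Theorem~\ref{thmAsympHalfStat} is misleading here because $K_{\mathcal{A}_{2\to\mathrm{BM}}}$ is built from $K'_{\mathcal{A}_2}$ (the already-shifted Airy$_2$ kernel), whereas $K_{\mathcal{A}_{\mathrm{BM}\to1}}$ is built from $K_{\mathcal{A}_{2\to1}}$ and hence from the unshifted $K_{\mathcal{A}_2}$; compare your limit of the rank-one piece term by term with the extra summand in Definition~\ref{defAiBto1} and you will see they coincide without any further conjugation or shift.
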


\begin{proof}
With
\begin{equation}\begin{aligned}
	n_i&={t}+2{t}^{2/3}r_i\\
	\xi_i&=2{t}+2{t}^{2/3}r_i+{t}^{1/3}s_i,
\end{aligned}\end{equation}
define the rescaled kernel
\begin{equation}
	\mathcal{K}_\stf^\textrm{resc}(r_1,s_1;r_2,s_2)={t}^{1/3}e^{\xi_1-\xi_2}\mathcal{K}_\stf(n_1,\xi_1;n_2,\xi_2).
\end{equation}

 Once the Propositions~\ref{propKsfPointw} and~\ref{propKsfBound} are established, the result follows in the same way as in the proof of Theorem~\ref{thmAsymp}.
\end{proof}

\begin{prop}\label{propKsfPointw}
Consider any $r_1,r_2$ in a bounded set and fixed $L$. Then, the kernel converges as
\begin{equation}
	\lim_{{t}\to\infty}\mathcal{K}_\stf^\textrm{resc}(r_1,s_1;r_2,s_2)=K_{\mathcal{A}_\mathrm{BM \to 1}}(r_1,s_1;r_2,s_2)
\end{equation}
uniformly for $(s_1,s_2)\in[-L,L]^2$.
\end{prop}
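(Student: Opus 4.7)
The plan is to use the decomposition from Proposition~\ref{propStatFlatKernel},
\begin{equation*}
\mathcal{K}_\stf^\textrm{resc}(r_1,s_1;r_2,s_2)=\mathcal{K}_\hf^\textrm{resc}(r_1,s_1;r_2,s_2)+t^{1/3}e^{\xi_1-\xi_2}\,\Psi^{n_1}_{n_1-1}(\xi_1)\bigl(\widehat{\Phi}^{n_2}_{(1)}(\xi_2)+\widehat{\Phi}^{n_2}_{(2)}(\xi_2)\bigr),
\end{equation*}
combined with Proposition~\ref{propK1Pointw}, which already gives $\mathcal{K}_\hf^\textrm{resc}\to K_{\mathcal{A}_{2\to 1}}$ compactly. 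The task thus reduces to showing that the rank-one correction converges compactly to $K_{\mathcal{A}_\mathrm{BM\to 1}}-K_{\mathcal{A}_{2\to 1}}$. Writing $h_{r}(s):=e^{\tfrac23 r^3+rs}\int_0^\infty \Ai(r^2+s+x)e^{rx}\,\mathrm{d} x$, the target is $\alpha(r_1,s_1)\bigl(1-2h_{r_2}(s_2)\bigr)$ with $\alpha$ the limit function from Lemma~\ref{lemAlphaLimit}.

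The factor $\Psi^{n_1}_{n_1-1}(\xi_1)$ is handled by direct comparison with the auxiliary function $\alpha_t$ of Lemma~\ref{lemAlphaLimit}. With the shifts $\tilde r_1:=r_1-\tfrac12 t^{-2/3}$, $\tilde s_1:=s_1$, matching the integrands gives the exact identity
\begin{equation*}
\Psi^{n_1}_{n_1-1}(\xi_1)=t^{-1/3}e^{t/2-\xi_1+1}\,\alpha_t(\tilde r_1,s_1),
\end{equation*}
which converges uniformly on compact sets to $t^{-1/3}e^{t/2-\xi_1+1}\alpha(r_1,s_1)$. Combined with the exact identity $\widehat{\Phi}^{n_2}_{(2)}(\xi_2)=e^{-t/2+\xi_2-1}$ at $\rho=1$, this yields
\begin{equation*}
t^{1/3}e^{\xi_1-\xi_2}\Psi^{n_1}_{n_1-1}(\xi_1)\,\widehat{\Phi}^{n_2}_{(2)}(\xi_2)=\alpha_t(\tilde r_1,s_1)\longrightarrow\alpha(r_1,s_1),
\end{equation*}
which accounts exactly for the constant $1$ in the bracket of the target.

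The crux is the steep-descent analysis of $\widehat{\Phi}^{n_2}_{(1)}(\xi_2)$ at $\rho=1$. Its integrand carries the same leading function $f_3(z)=-z^2/2-2z-\log(-z)$ and doubly critical point $z=-1$ as in the proof of Proposition~\ref{propK1Pointw}, but now with the additional factor $(1+z)/(1+ze^{1+z})$. Since $1+ze^{1+z}$ vanishes to order two at $z=-1$ (the principal branches $L_0,L_{-1}$ of the Lambert W function merge at the branch point $-e^{-1}$), this factor has a simple pole of residue $2$ sitting exactly at the critical point, with local expansion $(1+z)/(1+ze^{1+z})=2/(1+z)+\mathcal{O}(1)$. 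I would deform $\Gamma_0$ to the Lambert steep-descent contour $\widehat\gamma^\omega$ of Proposition~\ref{propK1Pointw}, verifying that no residues from the non-principal branches are swept across, and rescale $z=-1+Z/t^{1/3}$, so that the pole factor becomes $2t^{1/3}/Z$, the measure $\mathrm{d} z$ absorbs the factor $t^{-1/3}$, and the Taylor expansion of $G$ produces $e^{\xi_2-t/2-1}\exp(Z^3/3+r_2Z^2-s_2Z)(1+o(1))$ exactly as in Lemma~\ref{lemAlphaLimit}. Rewriting the defining integral of $h_{r_2}$ by means of the Airy contour representation of $\Ai$ gives the identity $h_{r_2}(s_2)=\tfrac{1}{2\pi\mathrm{i}}\int \frac{e^{Z^3/3+r_2Z^2-s_2Z}}{Z}\,\mathrm{d} Z$ along a suitable Airy-oriented contour, and collecting the rescaled steep-descent contribution together with the residue at $z=-1$ yields
\begin{equation*}
\widehat{\Phi}^{n_2}_{(1)}(\xi_2)=e^{\xi_2-t/2-1}\bigl(-2\,h_{r_2}(s_2)+o(1)\bigr).
\end{equation*}
Multiplying with the previously obtained factor $\alpha(r_1,s_1)$ and invoking the dominated-convergence-type estimates of Lemma~\ref{lemAlphaBound} and Proposition~\ref{propK1Bound} to upgrade pointwise to uniform convergence on $[-L,L]^2$ completes the proof.

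The principal obstacle is precisely the coincidence of the simple pole of $(1+z)/(1+ze^{1+z})$ with the doubly critical point $z=-1$. The contour deformation from $\Gamma_0$ (around $0$) to the Lambert steep-descent curve $\widehat\gamma^\omega$ (through $z=-1$) must be organized so that the residue contribution at $z=-1$ and the rescaled Airy saddle contribution combine cleanly into the factor $-2h_{r_2}(s_2)$, rather than producing a naive pure-residue answer or a divergent integral; tracking orientations around the pole in the $Z$-plane is the only delicate point. Once this combinatorial identification is performed, the remaining uniformity statements are routine adaptations of the arguments in Proposition~\ref{propK1Pointw}, Proposition~\ref{propK1Bound} and Lemma~\ref{lemAlphaBound}.
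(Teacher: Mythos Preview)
Your overall strategy is the paper's: decompose via Proposition~\ref{propStatFlatKernel}, invoke Proposition~\ref{propK1Pointw} for $\mathcal{K}_\hf^\textrm{resc}$, identify $t^{1/3}e^{\xi_1-t/2-1}\Psi^{n_1}_{n_1-1}(\xi_1)=\alpha_t(r_1-\tfrac12 t^{-2/3},s_1)$ via Lemma~\ref{lemAlphaLimit}, and note $e^{-\xi_2+t/2+1}\widehat\Phi^{n_2}_{(2)}(\xi_2)=1$ at $\rho=1$. The crucial local expansion $(1+z)/(1+ze^{1+z})=2/(1+z)+\Or(1)$ near the double critical point is exactly what drives the paper's computation as well.

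Where you diverge is in the contour for $\widehat\Phi_{(1)}$. You propose the Lambert contour $\widehat\gamma^\omega$, which is not a loop around $0$ but an open contour through non-principal branches, so it cannot be obtained from $\Gamma_0$ by a homotopy without boundary contributions; this is what forces you to worry about residues at $z=-1$ and about ``combining'' a residue with a saddle contribution. The paper avoids all of this: it simply reuses the $\beta_t$ steep-descent contour from Lemma~\ref{lemAlphaLimit} (rays $-1+ue^{\pm\mathrm{i}\theta}$ with $\theta\in(\pi/6,\pi/4)$ closed at large radius), deformed slightly so that it crosses the real axis just to the \emph{right} of $-1$. This keeps the simple pole at $z=-1$ outside the loop, so no residue is ever picked up. After the change of variables $z=-1+Zt^{-1/3}$ the extra factor becomes $\tfrac{2}{Zt^{1/3}}(1+\Or(\omega))$, the $Z$-contour passes to the right of $0$, and one lands directly on
\[
\frac{1}{2\pi\mathrm{i}}\int_{\text{right of }0}\mathrm{d} Z\,e^{Z^3/3+r_2Z^2-s_2Z}\frac{2}{Z}
=-2e^{\frac{2}{3}r_2^3+r_2s_2}\int_0^\infty\Ai(r_2^2+s_2+x)e^{r_2x}\,\mathrm{d} x,
\]
with no bookkeeping of orientations around the pole. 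In short, your analysis is right in substance, but the Lambert contour is an unnecessary detour; the obstacle you identify as ``the only delicate point'' disappears once you choose the simpler $\beta_t$-type contour passing to the right of $-1$.
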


\begin{proof}
 We can apply Proposition~\ref{propK1Pointw} to the $\mathcal{K}_\hf^\textrm{resc}$ part of the kernel and are left with studying the additional part of \eqref{eqKtstf} given as a product.

 We attach a factor $e^{-t/2-1}$ to $\Psi$ and its inverse to both $\widehat{\Phi}_{(1)}$ and $\widehat{\Phi}_{(2)}$. Now,
 \begin{equation}\begin{aligned}
  t^{1/3}e^{\xi_1-t/2-1}\Psi^{n_1}_{{n_1}-1}(\xi_1)&=\frac{t^{1/3}}{2\pi\mathrm{i}}\int_{\mathrm{i}\R-\e}\mathrm{d} w\,\,e^{{t} (w^2-1)/2+(w+1)(\xi_1-1)}(-w)^{n_1-1}\\
  &=\alpha_t(r_1-t^{-2/3}/2,s_1)
 \end{aligned}\end{equation}
which converges to $e^{-\frac{2}{3}r_1^3-r_1s_1}\Ai(r_1^2+s_1)$ uniformly for $s_1$, $r_1$ in a compact set.

The function $\widehat{\Phi}_{(2)}$ satisfies
\begin{equation}\label{eqPhi2}
 e^{-\xi_2+t/2+1}\widehat{\Phi}^n_{(2)}(\xi_1)=1.
\end{equation}

We are thus left to prove the limit of $\widehat{\Phi}_{(1)}$. Recognize that
\begin{equation}\begin{aligned}
 e^{-\xi_2+t/2+1}\widehat{\Phi}^n_{(1)}(\xi_1)&=\frac{{t}^{1/3}}{2\pi\mathrm{i}}\oint_{\Gamma_0}\hspace{-2pt}\mathrm{d} z\,e^{-{t}(z^2-1)/2-(\xi_2-1)(z+1)}(-z)^{-n_2}\frac{(1+z)t^{-1/3}}{1+ze^{1+z}}
\end{aligned}\end{equation}
is precisely $\beta_t(r_2,s_2-t^{-1/3})$ up to the fraction appearing in the integrand. We can thus carry over the proof of Lemma~\ref{lemAlphaLimit} almost completely. Up to \eqref{eq3.2.26} the generalization is straightforward, and the analogue of this equation becomes:
\begin{equation}\label{eq6.3.32}\begin{aligned}
  \frac{t^{1/3}}{2\pi\mathrm{i}}\int_{\widetilde{\Gamma}_\delta}\mathrm{d} z\,&e^{\tilde{G}(z)}\frac{(1+z)t^{-1/3}}{1+ze^{1+z}}\\
  &=\frac{t^{1/3}}{2\pi\mathrm{i}}\int_{\widetilde{\Gamma}_\delta+1}\mathrm{d} \omega\,e^{t\omega^3/3+t^{2/3}r\omega^2-t^{1/3}s\omega}\frac{2}{\omega t^{1/3}}(1+\Or(\omega))\\&=\frac{1+\Or(\delta)}{2\pi\mathrm{i}}\int_{e^{\theta\mathrm{i}}\delta t^{1/3},\text{ right of 0}}^{-e^{\theta\mathrm{i}}\delta t^{1/3}}\mathrm{d} Z\,e^{Z^3/3+rZ^2-sZ}\frac{2}{Z},
\end{aligned}\end{equation}
where $\widetilde{\Gamma}$ is a small deformation of $\Gamma$ such that it passes the real line to the right of $-1$.
Letting $t\to\infty$ and $\delta\to0$ we express the $1/Z$ term as an integral and recognize the contour integral representation of the Airy function to arrive at the desired limit
\begin{equation}
 -2e^{\frac{2}{3}r_2^3+r_2s_2}\int_0^\infty\mathrm{d} x\,\Ai(r_2^2+s_2+x)e^{r_2 x}.
\end{equation}
\end{proof}

\begin{prop}\label{propKsfBound}
For fixed $r_1,r_2,L$ there exists ${t}_0>0$ such that the estimate
\begin{equation}
	\left|\mathcal{K}_\stf^\textrm{resc}(r_1,s_1;r_2,s_2)\right|\leq\const\cdot e^{-s_1}
\end{equation}
holds for any ${t}>{t}_0$ and $s_1,s_2>0$.
\end{prop}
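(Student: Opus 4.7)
The plan is to split $\mathcal{K}_\stf^\textrm{resc}$ according to the decomposition \eqref{eqKtstf} into the kernel $\mathcal{K}_\hf^\textrm{resc}$ from the half-periodic case plus a rank-one product, writing the overall conjugation $e^{\xi_1-\xi_2}$ as $e^{\xi_1-t/2-1}\cdot e^{-\xi_2+t/2+1}$ and attaching each factor to the appropriate side of the product. Proposition~\ref{propK1Bound} already yields $|\mathcal{K}_\hf^\textrm{resc}|\leq \const\cdot e^{-s_1}$, so it suffices to bound the rescaled product by the same expression.

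A direct computation, already carried out in the proof of Proposition~\ref{propKsfPointw}, identifies the left factor as
$$t^{1/3}e^{\xi_1-t/2-1}\Psi^{n_1}_{n_1-1}(\xi_1) = \alpha_t\bigl(r_1-t^{-2/3}/2,\,s_1\bigr),$$
so Lemma~\ref{lemAlphaBound} supplies a bound $c_L e^{-s_1}$ for this side. The trivial identity $e^{-\xi_2+t/2+1}\widehat\Phi^{n_2}_{(2)}(\xi_2)=1$ takes care of one half of the right side. The entire claim thus reduces to establishing a uniform bound
$$\bigl|\,e^{-\xi_2+t/2+1}\widehat\Phi^{n_2}_{(1)}(\xi_2)\bigr|\leq \const\qquad\text{for all } s_2>0,\ t\geq t_0.$$

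This uniform bound is the main obstacle. The integrand is that of $t^{-1/3}\beta_t(r_2,s_2-t^{-1/3})$ multiplied by the meromorphic factor $(1+z)/(1+ze^{1+z})$, which carries a simple pole at $z=-1$ since $1+ze^{1+z}\sim(1+z)^2/2$ there. I would repeat the steep descent analysis of Lemma~\ref{lemAlphaBound}, deforming $\Gamma_0$ to the closed contour $\overline{\Gamma}$ passing through the shifted critical point $z_0=-1+\omega$ with $\omega:=\min\{t^{-1/3}\sqrt{s_2},\epsilon\}$; because $\omega>0$, the contour stays strictly to the right of $-1$ and avoids the pole, and the only singularity enclosed remains $z=0$. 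At $z_0$ the exponential factor is still controlled by $e^{G(z_0)}\leq e^{-s_2/2}$ as in the estimate leading to \eqref{eqG0bound}, while the extra meromorphic factor has magnitude of order $1/\omega$. The decisive observation is that the length of the vertical arc $\gamma_3$ through $z_0$ is itself of order $\omega$: using $|(1+z)/(1+ze^{1+z})|\leq 2/|\omega+iu|$ along $\gamma_3$, the inner integral becomes, after the substitution $u=\omega v$,
$$\int_{-\omega\tan\theta}^{\omega\tan\theta}\frac{2\,e^{-\eta_0 t^{2/3}u^2}}{|\omega+iu|}\,du = 2\int_{-\tan\theta}^{\tan\theta}\frac{e^{-\eta_0\omega^2 t^{2/3}v^2}}{\sqrt{1+v^2}}\,dv,$$
which is uniformly bounded in $\omega$. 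Along the outer rays $\gamma_1\cup\overline{\gamma_1}$, the quadratic lower bound on $|1+ze^{1+z}|$ together with the inequality $\omega\leq|1+z|$ keeps the extra factor uniformly bounded, so the exponential decay estimates from Lemma~\ref{lemAlphaBound} transfer without modification.

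For $s_2$ in a bounded neighbourhood of $0$ the same steep descent bound still applies (with $\omega$ replaced by the fixed value $\epsilon$), so the uniform bound on $\widehat\Phi^{n_2}_{(1)}$ persists throughout $s_2>0$. Multiplying together the pieces $|\mathcal{K}_\hf^\textrm{resc}|\leq \const\cdot e^{-s_1}$, $|\alpha_t(r_1-t^{-2/3}/2,s_1)|\leq c_L e^{-s_1}$, and the uniform bound on the rescaled $\widehat\Phi^{n_2}_{(1)}+\widehat\Phi^{n_2}_{(2)}$ yields the claim. The delicate point is the pole/arc-length cancellation described above; once it is secured, the rest follows the template of the arguments used in the packed and Poisson cases.
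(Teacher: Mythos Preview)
Your decomposition and the handling of $\mathcal{K}_\hf^\textrm{resc}$, of the $\Psi$-factor via $\alpha_t$, and of $\widehat\Phi_{(2)}$ are exactly what the paper does. The only divergence is in bounding $e^{-\xi_2+t/2+1}\widehat\Phi_{(1)}^{n_2}(\xi_2)$ uniformly in $s_2>0$.

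You choose an $s_2$-dependent contour through $-1+\omega$ with $\omega=\min\{t^{-1/3}\sqrt{s_2},\epsilon\}$ and then balance the $1/\omega$ blow-up of the meromorphic factor against the $\Or(\omega)$ arc length of $\gamma_3$. This is correct and in fact yields something stronger (decay in $s_2$) than is needed. The paper takes a shortcut: since the contour $\widetilde\Gamma$ from the pointwise proof stays strictly to the right of $-1$, one has $\Re(z+1)>0$ along it, so $|e^{-t^{1/3}s_2(z+1)}|\leq 1$ for every $s_2\geq 0$. Dropping this factor removes the $s_2$-dependence entirely, and what remains is precisely the integral already analysed in \eqref{eq6.3.32} with a \emph{fixed} contour, giving a finite $s_2$-independent bound without any pole/arc-length bookkeeping. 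Your route works; the paper's is shorter because only a uniform bound---not decay---in $s_2$ is required here, the $e^{-s_1}$ decay being supplied entirely by the $\alpha_t$ factor.
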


\begin{proof}
 The bound on the first part of the kernel is already established by Proposition~\ref{propK1Bound}. We decompose the product as in the proof of the pointwise convergence:
  \begin{equation}\begin{aligned}
  |t^{1/3}e^{\xi_1-t/2-1}\Psi^{n_1}_{{n_1}-1}(\xi_1)|&=|\alpha_t(r_1-t^{-2/3}/2,s_1)|\leq c_L e^{-s_1},
 \end{aligned}\end{equation}
as shown in Lemma~\ref{lemAlphaBound}. Recalling \eqref{eqPhi2}, we are finished if we show that $\widehat{\Phi}_{(1)}$ has a $s_2$-independent upper bound.

Start by
\begin{equation}
 |e^{-\xi_2+t/2+1}\widehat{\Phi}^n_{(1)}(\xi_1)|\leq \frac{{t}^{1/3}}{2\pi}\oint_{\widetilde{\Gamma}}|\mathrm{d} z|\,e^{\Re\,G(z,r_2,s_2)}\left|\frac{(1+z)t^{-1/3}}{1+ze^{1+z}}\right|.
\end{equation}
We can now use $e^{\Re(-t^{1/3}s_2(z+1))}\leq1$ since $s_2\geq0$ and $z$ stays to the right of $-1$. Continuing with the steep descent analysis as in the proof of the convergence we arrive at the absolute value analogue of equation \eqref{eq6.3.32},
\begin{equation}
 |e^{-\xi_2+t/2+1}\widehat{\Phi}^n_{(1)}(\xi_1)|\leq \const \int_{e^{\theta\mathrm{i}}\infty,\text{ right of 0}}^{-e^{\theta\mathrm{i}}\infty}|\mathrm{d} Z|\,e^{\Re(Z^3/3+rZ^2)}\left|\frac{2}{Z}\right|,
\end{equation}
which is finite.

\end{proof}


\section{Attractiveness and a more general class of initial data}\label{secAttr}
We can relax the strict assumptions on the initial conditions by recognizing that our model shows \emph{attractiveness}. A stochastic particle system is called attractive, if for two distinct initial configurations evolving under the same noise their order is preserved.

\begin{prop}\label{propBoundMod}
Let us consider two admissible initial conditions, denoted by $\vec{a}\in\R^\Z$, $\vec{b}\in\R^\Z$. Under the same
noise they evolve to $x_m^a(t)$ and $x_m^b(t)$. If there is $M>0$ such
that $|a_m-b_m|\leq M$ for all $m\in\mathbb{Z}$, then also
\begin{equation}
\bigl|x_m^a(t)-x_m^b(t)\bigr|\leq M\qquad
\forall m\in\mathbb{Z}, t>0.
\end{equation}
\end{prop}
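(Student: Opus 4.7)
The plan is to exploit the explicit variational representation \eqref{eq2.3.7}, which expresses each $x_n(t)$ as a maximum over a family of random variables $Y_{k,n}(t)$ that depend only on the driving Brownian motions and not on the initial data. Since the coupling in the statement means both evolutions are driven by the \emph{same} Brownian motions $\{B_j\}$, the $Y_{k,n}(t)$'s coincide for the two systems, and comparing the two maxima becomes an elementary exercise.

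Concretely, I would first note that the hypotheses $|a_m-b_m|\leq M$ together with the admissibility of $\vec a$ imply admissibility of $\vec b$ (for any $\chi'\in(\tfrac12,\chi)$, since $b_n-b_{-M}\geq a_n-a_{-M}-2M$ and $M^{\chi'}\leq M^\chi - 2M$ for all sufficiently large $M$), so that Proposition~\ref{propMconv} applies to both initial conditions. We may therefore write, almost surely,
\begin{equation}
x_n^a(t)=\max_{k\leq n}\{Y_{k,n}(t)+a_k\},\qquad x_n^b(t)=\max_{k\leq n}\{Y_{k,n}(t)+b_k\},
\end{equation}
with the same family $\{Y_{k,n}(t)\}_{k\leq n}$ in both formulas.

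The comparison is then immediate: for every $k\leq n$,
\begin{equation}
Y_{k,n}(t)+a_k \;\leq\; Y_{k,n}(t)+b_k+M \;\leq\; \max_{k'\leq n}\{Y_{k',n}(t)+b_{k'}\}+M \;=\; x_n^b(t)+M.
\end{equation}
Taking the maximum over $k$ on the left yields $x_n^a(t)\leq x_n^b(t)+M$, and exchanging the roles of $\vec a$ and $\vec b$ gives the reverse inequality, proving $|x_n^a(t)-x_n^b(t)|\leq M$ for every $n\in\mathbb Z$, $t>0$, almost surely.

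The argument is essentially a one-line consequence of \eqref{eq2.3.7}, so there is no serious obstacle. The only point requiring mild care is the verification that $\vec b$ is admissible, which is needed to guarantee the maximum in the variational formula for $x_n^b(t)$ is finite and hence that the inequalities above involve well-defined quantities; once this is in place, attractiveness follows without further work.
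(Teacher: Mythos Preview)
Your proof is correct and follows essentially the same argument as the paper: both exploit the variational formula \eqref{eq2.3.7} and the fact that the $Y_{k,n}(t)$'s are shared under the coupling, reducing the claim to taking a maximum over the termwise inequalities $Y_{k,n}(t)+a_k\leq Y_{k,n}(t)+b_k+M$. Your remark on admissibility is extra care (the statement already assumes both $\vec a$ and $\vec b$ are admissible), but harmless.
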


The same property holds for the standard coupling of the TASEP, as
explained in Section~2.1 of~\cite{BFS07}.

As an immediate consequence, the limit result of Theorem~\ref{thmAsympFixedTime} holds for bounded modifications of the initial condition $\vec{x}(0)=\vec{\zeta}^\textrm{\,flat}$, since an error of size $M$ vanishes under the $t^{1/3}$ scaling. For example, one could choose a unit cell of length 1 and take an arbitrary initial condition with the only restriction that there are $\ell$ particles in each cell. Then the convergence to the Airy$_1$ process holds.

Furthermore, this Proposition allows us to choose $\zeta_0=0$ in Theorem~\ref{thmAsymp0} instead of considering the true Poisson case, since $\zeta_0$ gives rise to a global shift, that is bounded with probability $1$.

\begin{proof}[Proof of Proposition~\ref{propBoundMod}]
By definition,
%
\begin{eqnarray}
 x_m^a(t)&=&-\max
_{ k\leq m}\bigl\{Y_{k,m}(t)-a_k\bigr\},
\nonumber
\\[-8pt]
\\[-8pt]
\nonumber
x_m^b(t)&=&-\max_{ k\leq m}\bigl
\{Y_{k,m}(t)-b_k\bigr\}.
\end{eqnarray}
Since the inequality
%
\begin{equation}
Y_{k,m}(t)-a_k\leq Y_{k,m}(t)-b_k+M
\end{equation}
holds for each $k$, the maximum can be taken on each side, resulting in
%
\begin{eqnarray}
\max_{ k\leq m}\bigl
\{Y_{k,m}(t)-a_k\bigr\}&\leq&\max_{ k\leq m}
\bigl\{ Y_{k,m}(t)-b_k\bigr\} + M,
\nonumber
\\[-8pt]
\\[-8pt]
\nonumber
x_m^a(t)&\geq& x_m^b(t)-M.
\end{eqnarray}
Correspondingly, one has $x_m^b(t)\geq x_m^a(t)-M$.
\end{proof}

\section{Asymptotics along space-like paths and slow decorrelations}\label{SectTagged}
The rescaled process at fixed time is not the only one in which the Airy limit processes appears. It is also the case for the joint distributions of the positions of a tagged Brownian motion at different times, which means correlations along the $t$ direction. This is the content of Theorem~\ref{thmTagged} below. It is a consequence of a phenomenon shared by many models in the KPZ universality class, which is called \emph{slow decorrelations}~\cite{Fer08,CFP10b}. This means that in the time-like direction the correlation length is not of order $t^{2/3}$ but $t$, as proven in Proposition~\ref{PropSlowDecorrelation}. Thus instead of evaluating the distribution along points with fixed $t$ only, we can shift the points in the time-like direction up to some $t^\nu$ with $\nu<1$, and still keep the same limit result. These statistics on space-like paths are proven in Theorem~\ref{ThmAsymptSpaceTime}.

\begin{thm}\label{thmTagged}
For each one of the initial conditions
\begin{equation}
 \vec{\zeta}^{*}\in\{\vec{\zeta}^\textrm{packed},\vec{\zeta}^\textrm{flat},\vec{\zeta}^\textrm{stat},\vec{\zeta}^\shf,\vec{\zeta}^\shst,\vec{\zeta}^\stf\},
\end{equation}
define a rescaled process
\begin{equation}
\widetilde{X}_t^{*}\in\{\widetilde{X}_t^\textrm{packed},\widetilde{X}_t^\textrm{flat},\widetilde{X}_t^\textrm{stat},\widetilde{X}_t^\hf,\widetilde{X}_t^\hst,\widetilde{X}_t^\stf\}
\end{equation}
by
\begin{equation}
\tau\mapsto \widetilde{X}^{*}_t(\tau) := t^{-1/3}\left(x_{\lfloor t\rfloor}(t+2\tau t^{2/3})-2t- 2\tau t^{2/3}\right),
\end{equation}
where the process $\vec{x}(t)$ on the right hand side is subject to \mbox{$\vec{x}(0)=\vec{\zeta}^{*}$}.

In the large time limit,
\begin{equation}\begin{aligned}
	\lim_{t\to\infty} \widetilde{X}_t^\textrm{packed}(\tau) &= \mathcal{A}_2(\tau),\\
	\lim_{t\to\infty} \widetilde{X}_t^\textrm{flat}(\tau) &= 2^{1/3}\mathcal{A}_1(2^{-2/3}\tau),\\
	\lim_{t\to\infty} \widetilde{X}_t^\textrm{stat}(\tau) &= \mathcal{A}_\mathrm{stat}(\tau),\\
	\lim_{t\to\infty} \widetilde{X}_t^\hf(\tau) &= \mathcal{A}_{2\to1}(\tau),\\
	\lim_{t\to\infty} \widetilde{X}_t^\hst(\tau) &= \mathcal{A}_{2\to\textrm{BM}}(\tau),\\
	\lim_{t\to\infty} \widetilde{X}_t^\stf(\tau) &= \mathcal{A}_\mathrm{BM\to1}(\tau),
\end{aligned}\end{equation}
hold in the sense of finite-dimensional distributions.
\end{thm}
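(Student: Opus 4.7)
The plan is to derive Theorem~\ref{thmTagged} as a direct consequence of the space‐like path asymptotics of Theorem~\ref{ThmAsymptSpaceTime}, whose essential ingredient is the slow decorrelation statement of Proposition~\ref{PropSlowDecorrelation}. Indeed, as $\tau$ varies, the rescaled trajectory $\tau\mapsto\widetilde{X}_t^{*}(\tau)$ is evaluated at the points $(t+2\tau t^{2/3},\lfloor t\rfloor)$, for which the time coordinate is strictly increasing while the index coordinate is constant; this is the boundary case of the admissible configurations $t_i\leq t_{i+1}$, $n_i\geq n_{i+1}$ that define a space-like path. Once the space-like asymptotics are established, the tagged-trajectory statement is simply a specialization.

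The mechanism is slow decorrelation in the last-passage picture of Section~\ref{secInfLPP}. Through the macroscopic point $(n,t)=(t,t)$, the level curves of the packed-initial-condition limit shape $g(n,t)=2\sqrt{nt}$ have tangent direction $(1,-1)$, and two points on the same level curve, separated by $O(t^{2/3})$, contribute the same first-order value $2\sqrt{nt}$. In particular,
\begin{equation}
 \mathbbm{E}\bigl[x_{\lfloor t\rfloor}(t+2\tau t^{2/3})\bigr] \;=\; 2\sqrt{\lfloor t\rfloor (t+2\tau t^{2/3})} \;=\; \mathbbm{E}\bigl[x_{\lfloor t+2\tau t^{2/3}\rfloor}(t)\bigr] + O(1),
\end{equation}
so the pair of points $(t+2\tau t^{2/3},\lfloor t\rfloor)$ and $(t,\lfloor t+2\tau t^{2/3}\rfloor)$ lies on a common level curve up to $O(1)$. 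Proposition~\ref{PropSlowDecorrelation} promotes this first-moment agreement to a probabilistic one on the fluctuation scale: for any fixed $\tau$,
\begin{equation}
 t^{-1/3}\Bigl(x_{\lfloor t\rfloor}(t+2\tau t^{2/3}) \;-\; x_{\lfloor t+2\tau t^{2/3}\rfloor}(t)\Bigr) \;\xrightarrow{\Pb}\; 0,
\end{equation}
and the joint version over any finite collection $\tau_1<\cdots<\tau_m$ holds by the same argument. Combined with the fixed-time asymptotic theorems of Chapters~\ref{secPP}--\ref{secMixed} and Slutsky's lemma, this identifies the limit of $\bigl(\widetilde{X}_t^{*}(\tau_k)\bigr)_{k=1}^m$ with that of $\bigl(X_t^{*}(\tau_k)\bigr)_{k=1}^m$, which is precisely the claimed Airy-type process (understood up to the parabolic centering built into the respective fixed-time limit).

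The hard part is establishing Proposition~\ref{PropSlowDecorrelation} itself. The natural route is the sandwich argument of~\cite{Fer08,CFP10b} transported to Brownian LPP: bound the difference above and below by last-passage times along slightly modified path geometries, whose fluctuations are controlled individually by the one-point concentration bound of Proposition~\ref{concIn}. The cancellation of leading fluctuations between the two sandwich bounds then forces the $o_{\Pb}(t^{1/3})$ decay of the difference, while the deterministic $O(1)$ correction noted above does not contribute at the scale $t^{1/3}$. For the packed initial condition this is direct; for the periodic and half-flat cases attractiveness (Proposition~\ref{propBoundMod}) allows a comparison with packed systems differing by bounded shifts; for the stationary, half-stationary and Poisson–periodic cases one additionally invokes the tail bounds on the exit point (Proposition~\ref{propExPoint}) to control the boundary contribution, after which the packed-case decorrelation estimate transfers to the general admissible initial data covered by Theorem~\ref{thmTagged}.
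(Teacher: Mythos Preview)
Your high-level plan --- derive Theorem~\ref{thmTagged} by specializing Theorem~\ref{ThmAsymptSpaceTime} --- is exactly the paper's one-line proof. The paper simply sets $\theta_k=2\tau_k t^{2/3}$ and $r_k=-\tau_k$, so that $\widehat{X}_t(-\tau_k,2\tau_k t^{2/3})=\widetilde{X}^{*}_t(\tau_k)$ and Theorem~\ref{ThmAsymptSpaceTime} transfers the fixed-time limit at $r=-\tau_k$ to the tagged particle.

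However, your elaboration of the mechanism contains a genuine error: you compare the tagged point $(n,s)=(\lfloor t\rfloor,\,t+2\tau t^{2/3})$ with the fixed-time point $(\lfloor t+2\tau t^{2/3}\rfloor,\,t)$, i.e.\ with $X^{*}_t(+\tau)$. Proposition~\ref{PropSlowDecorrelation} does \emph{not} justify this. Slow decorrelation holds along the characteristic direction $(1,1)$ in $(n,s)$ (this is built into the definition of $\widehat{X}_t$: varying $\theta$ moves both index and time by $\theta$). Moving the tagged point back to time $t$ along the characteristic lands at index $t-2\tau t^{2/3}$, i.e.\ at $X^{*}_t(-\tau)$, which is why the paper takes $r_k=-\tau_k$. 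Your two points lie on a common \emph{level curve} of $2\sqrt{nt}$ (tangent direction $(1,-1)$), and level curves are precisely the direction of \emph{nontrivial} spatial correlations, not of slow decorrelation. Concretely, your displayed claim
\[
t^{-1/3}\bigl(x_{\lfloor t\rfloor}(t+2\tau t^{2/3})-x_{\lfloor t+2\tau t^{2/3}\rfloor}(t)\bigr)\xrightarrow{\Pb}0
\]
is false: by the correct slow-decorrelation comparison the left-hand side converges in law to $\mathcal{A}_*(-\tau)-\mathcal{A}_*(\tau)$, a nondegenerate random variable (e.g.\ for the half-flat case $\mathcal{A}_{2\to1}$ is not even reflection-symmetric). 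The agreement of first moments that you note is exactly the statement that the two points are on a level curve; it says nothing about coupling of fluctuations.

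A secondary remark: your sketch for Proposition~\ref{PropSlowDecorrelation} (two-sided sandwich via Proposition~\ref{concIn}, then case-by-case over initial conditions using attractiveness and exit-point bounds) is more elaborate than what the paper actually does. The paper's proof is uniform in the initial condition: it uses the one-sided LPP superadditivity
\[
x_{n+\theta}(t+\theta)\;\geq\;x_n(t)+Y_{n+1,n+\theta}(t,t+\theta),
\]
observes that both $\widehat{X}_t(r,\theta)$ and $\widehat{X}_t(r,0)$ converge in law to the same $\mathcal{A}_*(r)$ (the former by a time shift), and concludes via the squeezing Lemma~\ref{BAC_lemma}. No separate treatment of the six initial conditions, no Proposition~\ref{propExPoint}, is needed here.
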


Theorem~\ref{thmTagged} is a corollary of the following result on space-like paths:

\begin{thm}\label{ThmAsymptSpaceTime}
Let $\vec{x}(t)$ be the system of one-sided reflected Brownian motions subject to some initial condition $\vec{x}(0)=\vec{\zeta}$. Let us fix a $\nu\in [0,1)$, choose any $\theta_1,\ldots,\theta_m\in [-t^{\nu},t^{\nu}]$, $r_1,\ldots,r_m\in\R$ and define the rescaled random variables
\begin{equation}\label{eq5.1}
\widehat{X}_t(r_k,\theta_k):=t^{-1/3}\left(x_{[t+2r_k  t^{2/3}+\theta_k]}(t+\theta_k)-2t-2\theta_k-2r_k  t^{2/3}\right).
\end{equation}
Let $s_1,\ldots,s_m\in\R$ and $\widehat{X}_t$ satisfy the limit
\begin{equation}\label{eqA*limit}
\lim_{t\to\infty}\Pb\left(\bigcap_{k=1}^m \left\{\widehat{X}_t(r_k,0)\leq s_k\right\}\right) = \Pb\left(\bigcap_{k=1}^m \left\{{\cal A}_*(r_k)\leq s_k\right\}\right),
\end{equation}
for some process ${\cal A}_*(r)$ whose joint distribution function is continuous.

Then it holds
\begin{equation}
\lim_{t\to\infty}\Pb\left(\bigcap_{k=1}^m \left\{\widehat{X}_t(r_k,\theta_k)\leq s_k\right\}\right) = \Pb\left(\bigcap_{k=1}^m \left\{{\cal A}_*(r_k)\leq s_k\right\}\right).
\end{equation}
\end{thm}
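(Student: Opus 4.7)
The strategy is to reduce the space-like asymptotics to the fixed-time asymptotics via slow decorrelation along characteristics. The key input is Proposition~\ref{PropSlowDecorrelation} (referenced earlier), which provides the estimate that for each fixed $r\in\R$ and each sequence $\theta_t\in[-t^\nu,t^\nu]$ with $\nu<1$,
\begin{equation}
\widehat{X}_t(r,\theta_t)-\widehat{X}_t(r,0)\longrightarrow 0\quad\text{in probability as }t\to\infty.
\end{equation}
This expresses the fact that the ``correlation length'' of the height process in the time-like direction is of order $t$ rather than $t^{2/3}$: moving by $\theta=o(t)$ along the characteristic produces a fluctuation of order $t^{\nu/3}$, which is negligible compared to the $t^{1/3}$ scale of $\widehat X_t$.

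Granted this input, the remainder is a standard comparison. For each $\epsilon>0$,
\begin{equation}
\bigcap_{k=1}^m \{\widehat{X}_t(r_k,\theta_k)\leq s_k\}\subseteq\bigcap_{k=1}^m \{\widehat{X}_t(r_k,0)\leq s_k+\epsilon\}\cup\bigcup_{k=1}^m\{|\widehat{X}_t(r_k,\theta_k)-\widehat{X}_t(r_k,0)|>\epsilon\},
\end{equation}
so that
\begin{equation}
\Pb\bigg(\bigcap_{k=1}^m\{\widehat{X}_t(r_k,\theta_k)\leq s_k\}\bigg)\leq\Pb\bigg(\bigcap_{k=1}^m\{\widehat{X}_t(r_k,0)\leq s_k+\epsilon\}\bigg)+\sum_{k=1}^m\Pb\big(|\widehat{X}_t(r_k,\theta_k)-\widehat{X}_t(r_k,0)|>\epsilon\big).
\end{equation}
By the hypothesis \eqref{eqA*limit} the first term converges to $\Pb(\bigcap_k\{\mathcal{A}_*(r_k)\leq s_k+\epsilon\})$, and by slow decorrelation the sum vanishes as $t\to\infty$. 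Letting $\epsilon\downarrow 0$ and invoking continuity of the joint distribution of $\mathcal{A}_*$ yields the $\limsup$ inequality. A symmetric argument, starting from the reverse inclusion with $s_k-\epsilon$ in place of $s_k+\epsilon$, gives the matching $\liminf$ inequality, and the theorem follows.

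The only genuine work lies in Proposition~\ref{PropSlowDecorrelation}, which is the main obstacle but is proven separately in the cited section. Conceptually, its proof relies on the attractiveness property (Proposition~\ref{propBoundMod}) together with a variance-type comparison: coupling to a stationary version of the process and exploiting the fact that two-point correlations along the characteristic are dominated by the variance at a single point, which itself is of order $t^{2/3}$; thus the difference along a characteristic line of length $\theta=O(t^\nu)$ has fluctuations of order $t^{\nu/3}=o(t^{1/3})$. Once this concentration is in hand, the passage to joint distributions carried out above is purely soft, relying only on the continuity of the limit process $\mathcal{A}_*$.
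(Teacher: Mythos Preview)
Your reduction of Theorem~\ref{ThmAsymptSpaceTime} to Proposition~\ref{PropSlowDecorrelation} is correct and matches the paper's own proof essentially line for line: define the differences $\Xi_k=\widehat X_t(r_k,\theta_k)-\widehat X_t(r_k,0)$, use that they vanish in probability, sandwich with $s_k\pm\epsilon$, and let $\epsilon\downarrow 0$ via continuity of the limit distribution.

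Your final paragraph, however, misdescribes the mechanism behind Proposition~\ref{PropSlowDecorrelation}. The paper does \emph{not} use attractiveness or a coupling to a stationary version. Instead it exploits the last-passage/variational representation \eqref{eqSlow3} to obtain the superadditive lower bound
\[
x_{n+\theta}(t+\theta)\ge x_n(t)+Y_{n+1,n+\theta}(t,t+\theta),
\]
where the increment is independent of $x_n(t)$ and, after centering and rescaling, is $(\theta/t)^{1/3}$ times a tight (asymptotically GUE Tracy--Widom) random variable by Theorem~\ref{thmAsympStep}. This gives $\widehat X_t(r,\theta)\ge \widehat X_t(r,0)+\chi(t)$ with $\chi(t)\to 0$ in probability. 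Since both $\widehat X_t(r,\theta)$ and $\widehat X_t(r,0)$ converge in law to the same limit $\mathcal A_*(r)$ (the first by a time shift $t\mapsto t+\theta$ in the hypothesis~\eqref{eqA*limit}), Lemma~\ref{BAC_lemma} forces their difference to vanish in probability. So the heuristic ``fluctuations of order $t^{\nu/3}$'' is right, but its source is the packed (GUE) fluctuation bound together with a one-sided inequality, not attractiveness or stationary comparison.
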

Notice that specifying \eqref{eq5.1} to $\theta_k=0$ gives exactly the familiar scaling where all our asymptotic results hold. Thus this theorem is applicable to all six of the treated initial conditions.
\begin{proof}[Proof of Theorem~\ref{thmTagged}]
This follows by taking $\theta_k=2\tau_kt^{2/3}$ and \mbox{$r_k=-\tau_k$}  in Theorem~\ref{ThmAsymptSpaceTime}.
\end{proof}

For the proof of Theorem~\ref{ThmAsymptSpaceTime} we need this slow decorrelation property:
\begin{prop}\label{PropSlowDecorrelation}
Let $\widehat{X}_t(r,\theta)$ be defined as in Theorem~\ref{ThmAsymptSpaceTime} and also satisfy \eqref{eqA*limit}. For a $\nu \in [0,1)$, consider $\theta\in [-t^\nu,t^\nu]$ and some $r\in\R$. Then, for any $\e>0$,
\begin{equation}
\lim_{t\to\infty}\Pb\left( |\widehat{X}_t(r,\theta)-\widehat{X}_t(r,0)|\geq \e\right)=0.
\end{equation}
\end{prop}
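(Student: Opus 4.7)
The plan is to combine a one-sided inequality coming from the LPP representation of Section~\ref{secInfLPP} with the assumed distributional convergence in order to upgrade to convergence in probability. Assume without loss of generality $\theta\geq0$; the case $\theta<0$ is symmetric. Set $n_1=\lfloor t+2rt^{2/3}\rfloor$, $n_2=\lfloor t+2rt^{2/3}+\theta\rfloor$, and $p_i=(t+(i-1)\theta,n_i)$. Since any path from the origin to $p_2$ can be forced through $p_1$, one has the superadditivity estimate
\begin{equation*}
x_{n_2}(t+\theta)\;\geq\;x_{n_1}(t)+L_{p_1\to p_2},
\end{equation*}
where $L_{p_1\to p_2}$ uses only the Brownian increments on the strip $[t,t+\theta]\times\{n_1,\dots,n_2\}$, hence is independent of $(x_k(s))_{s\leq t,\,k\in\Z}$. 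Its distribution coincides with that of a packed LPP over time $\theta$ with $n_2-n_1=\theta+O(1)$ particles.

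By Theorem~\ref{thmAsympStep} applied with $\theta$ in place of $t$, the rescaled quantity $\theta^{-1/3}(L_{p_1\to p_2}-2\theta)$ is tight (it converges in distribution to a GUE Tracy--Widom variable), so $L_{p_1\to p_2}-2\theta=O_P(\theta^{1/3})=O_P(t^{\nu/3})$. Dividing by $t^{1/3}$ and using $\nu<1$ yields the one-sided bound
\begin{equation*}
\widehat{X}_t(r,\theta)-\widehat{X}_t(r,0)\;\geq\;t^{-1/3}\bigl(L_{p_1\to p_2}-2\theta\bigr)\;=\;O_P\bigl(t^{(\nu-1)/3}\bigr)\;\longrightarrow\;0\quad\text{in probability.}
\end{equation*}
(If $\theta$ remains bounded as $t\to\infty$, then $L_{p_1\to p_2}$ and $2\theta$ are themselves bounded and the same conclusion is immediate.)

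Next I would upgrade \eqref{eqA*limit} to $\widehat{X}_t(r,\theta)\Rightarrow\mathcal A_*(r)$. Writing $\tilde t=t+\theta$ and $r'_t=r(t/\tilde t)^{2/3}$, so that $n_2=\lfloor\tilde t+2r'_t\tilde t^{2/3}\rfloor$, a direct computation gives
\begin{equation*}
\widehat{X}_t(r,\theta)\;=\;(\tilde t/t)^{1/3}\,\widehat{X}_{\tilde t}(r'_t,0).
\end{equation*}
Since $\tilde t/t\to 1$ and $r'_t\to r$, and since the convergence in \eqref{eqA*limit} is in fact uniform on compacts in $r$ (as follows from the uniform Fredholm-determinant bounds already established in the asymptotic analyses of the preceding chapters), one concludes $\widehat X_t(r,\theta)\Rightarrow\mathcal A_*(r)$.

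The proof is then completed by the following elementary coupling lemma: if $A_L\geq B_L-\eta_L$ with $\eta_L\to 0$ in probability, and if $A_L$ and $B_L$ both converge in distribution to the same continuously-distributed variable $X$, then $A_L-B_L\to 0$ in probability. Indeed, joint tightness produces a subsequential limit $(X',Y')$ with $X'\geq Y'$ almost surely and $X'\stackrel{d}{=}Y'\stackrel{d}{=}X$; any such pair forces $X'=Y'$ almost surely, so $A_L-B_L$ converges in distribution to the constant $0$, equivalent to convergence in probability. The main obstacle is precisely this distributional-convergence step: the hypothesis \eqref{eqA*limit} is stated only for fixed $r$, so strictly speaking one needs either the uniform-in-$r$ version (automatic here from the Fredholm-determinant proofs, which depend continuously on $r$ and come with $r$-uniform bounds) or an independent tightness argument for $\widehat X_t(r,\theta)$.
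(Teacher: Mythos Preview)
Your proof is correct and follows essentially the same approach as the paper: the superadditivity inequality from the LPP representation gives the one-sided bound $\widehat X_t(r,\theta)\geq \widehat X_t(r,0)+\chi(t)$ with $\chi(t)\to 0$ in probability, and then the coupling lemma (which the paper quotes as Lemma~4.1 of~\cite{BC09}) upgrades this to two-sided convergence once both $\widehat X_t(r,0)$ and $\widehat X_t(r,\theta)$ are known to converge to $\mathcal A_*(r)$. The paper handles the latter step with the one-line observation $\widehat X_t(r,\theta)=\widehat X_{t+\theta}(r+\Or(t^{\nu-1}),0)$, glossing over exactly the uniformity-in-$r$ issue you identify and discuss more carefully; so your treatment is, if anything, slightly more explicit than the original on this point.
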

\begin{proof}
Without loss of generality we consider $\theta\geq 0$. For $\theta<0$ one just have to denote $\tilde t=t+\theta$ so that $\tilde t-\theta=t$ and the proof remains valid with $t$ replaced by $\tilde t$. Recall that by definition we have
\begin{equation}\label{eqSlow3}
x_m(t)=\max_{k\leq m}\{Y_{k,m}(t)+\zeta_k\}.
\end{equation}
First we need an inequality, namely
\begin{equation}\label{eqCoupling}
\begin{aligned}
	x_{n+\theta}(t+\theta)&=\max_{k\leq n+\theta}\{\zeta_k+Y_{k,n+\theta}(t+\theta)\}\geq\max_{k\leq n}\{\zeta_k+Y_{k,n+\theta}(t+\theta)\}\\
	&=\max_{k\leq n}\{\zeta_k+\sup_{0\leq s_{k}\leq \ldots\leq s_{n+\theta}=t+\theta}\sum_{i=k}^{n+\theta} (B_i(s_{i})-B_i(s_{i-1}))\}\\
	&\geq\max_{k\leq n}\{\zeta_k+\sup_{\begin{subarray}{c}0\leq s_{k}\leq\ldots\leq s_{n+\theta}=t+\theta\\ \textrm{with }s_{n}=t\end{subarray}}\sum_{i=k}^{n+\theta} (B_i(s_{i})-B_i(s_{i-1}))\}\\
	&=x_n(t) + Y_{n+1,n+\theta}(t,t+\theta),
\end{aligned}\end{equation}
with
\begin{equation}
Y_{n+1,n+\theta}(t,t+\theta)=\sup_{t\leq s_{n+1}\leq \ldots\leq s_{n+\theta}= t+\theta}\sum_{i=n+1}^{n+\theta} (B_i(s_{i})-B_i(s_{i-1})).
\end{equation}
Remark that $x_n(t)$ and $Y_{n+1,n+\theta}(t,t+\theta)$ are independent. Specifying this to $n=t+2rt^{2/3}$, the inequality can be rewritten as
\begin{equation}\label{eqIneqRewr}
 \widehat{X}_t(r,\theta)\geq \widehat{X}_t(r,0)+\chi(t),
\end{equation}
where
\begin{equation}
 \chi(t)=\frac{Y_{n+1,n+\theta}(t,t+\theta)-2\theta}{t^{1/3}}\stackrel{d}{=}\frac{Y_{1,\theta}(\theta)-2\theta}{t^{1/3}}.
\end{equation}
From Theorem~\ref{thmAsympStep} we know that $(t/\theta)^{1/3}\chi(t)$ converges to a Tracy-Widom distributed random variable, which means that $\chi(t)$ itself converges to $0$ in distribution. By \eqref{eqA*limit}, we have
\begin{equation}
\widehat{X}_t(r,0)\stackrel{D}{\Longrightarrow}{\cal A}_*(r),
\end{equation}
and also
\begin{equation}
\widehat{X}_t(r,\theta)\stackrel{D}{\Longrightarrow}{\cal A}_*(r),
\end{equation}
which follows from $\widehat{X}_t(r,\theta)=\widehat{X}_{t+\theta}(r+\Or(t^{\nu-1)}),0)$.

Since both sides of \eqref{eqIneqRewr} converge in distribution to ${\cal A}_*(r)$, by Lemma~4.1 of~\cite{BC09} (reported below) we have  $\widehat{X}_t(r,\theta)-\widehat{X}_t(r,0)-\chi(t)\to 0$ in probability as $t\to\infty$. As $\chi(t)\to 0$ in probability, too, the proof is finished.
\end{proof}

\begin{lem}[Lemma~4.1 of~\cite{BC09}]\label{BAC_lemma}
Consider two sequences of random variables $\{X_n\}$ and $\{\tilde{X}_n\}$ such that for each $n$,
$X_n$ and $\tilde{X}_n$ are defined on the same probability space $\Omega_n$. If $X_n\geq\tilde{X}_n$ and $X_n\Rightarrow D$
as well as $\tilde{X}_n\Rightarrow D$ then $X_n-\tilde{X}_n$ converges to zero in probability.
Conversely if $\tilde{X}_n\Rightarrow D$ and $X_n-\tilde{X}_n$ converges to zero in probability then $X_n\Rightarrow D$ as well.
\end{lem}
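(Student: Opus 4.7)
The lemma has two assertions. The direct part (ordering plus equal-limit implies difference vanishes in probability) is the substantive one; the converse is essentially Slutsky's theorem. I will tackle them in that order.

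For the direct part, let $F$ denote the common (distributional) limit's CDF. Because $X_n\geq \tilde X_n$ we have the pointwise CDF ordering $\Pb(X_n\leq s)\leq \Pb(\tilde X_n\leq s)$ for all $s$, and the hypothesis gives that both sides converge to $F(s)$ at every continuity point. The plan is to show that almost all of the mass of $\tilde X_n$ sits where $X_n-\tilde X_n$ must be small. Fix $\varepsilon>0$ and $\delta>0$. Choose continuity points $s_0<s_1<\dots<s_N$ of $F$ with consecutive gaps at most $\varepsilon/2$, with $F(s_0)<\delta$ and $F(s_N)>1-\delta$. Using $X_n\geq \tilde X_n$, slice the event $\{X_n-\tilde X_n>\varepsilon\}$ by the location of $\tilde X_n$:
\begin{equation}
\Pb(X_n-\tilde X_n>\varepsilon)\leq \Pb(\tilde X_n\leq s_0)+\Pb(X_n>s_N)+\sum_{i=0}^{N-1}\Pb(s_i<\tilde X_n\leq s_{i+1},\,X_n>s_{i+1}+\varepsilon/2),
\end{equation}
since on $\{s_i<\tilde X_n\leq s_{i+1}\}$ a gap exceeding $\varepsilon$ forces $X_n>s_i+\varepsilon\geq s_{i+1}+\varepsilon/2$.

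For each middle term, the key observation is that $\{X_n\leq s_{i+1}\}\subseteq\{\tilde X_n\leq s_{i+1},\,X_n\leq s_{i+1}+\varepsilon/2\}$, so
\begin{equation}
\Pb(s_i<\tilde X_n\leq s_{i+1},\,X_n>s_{i+1}+\varepsilon/2)\leq \Pb(\tilde X_n\leq s_{i+1})-\Pb(X_n\leq s_{i+1}).
\end{equation}
Since $s_{i+1}$ is a continuity point of $F$, each such difference tends to $0$. Combining, $\limsup_n \Pb(X_n-\tilde X_n>\varepsilon)\leq 2\delta$, and letting $\delta\downarrow 0$ gives convergence in probability (recall $X_n-\tilde X_n\geq 0$, so only the one-sided event matters).

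For the converse, write $X_n=\tilde X_n+(X_n-\tilde X_n)$. The second summand tends to $0$ in probability by assumption and the first converges in distribution to $D$, so Slutsky's theorem yields $X_n\Rightarrow D$. The main obstacle, and the only nontrivial bookkeeping, is the slicing argument in the direct part: one must exploit the ordering to replace a joint event by a difference of marginal CDFs at the \emph{same} point, which is exactly what lets the equal-limit hypothesis collapse the telescoping sum to zero.
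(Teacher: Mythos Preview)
The paper does not supply its own proof of this lemma; it is merely quoted from the reference~\cite{BC09} and used as a black box in the proof of Proposition~\ref{PropSlowDecorrelation}. Your argument is correct and self-contained: the slicing by continuity points of the limiting CDF together with the ordering $X_n\geq\tilde X_n$ reduces each joint term to the difference $\Pb(\tilde X_n\leq s_{i+1})-\Pb(X_n\leq s_{i+1})$, which vanishes in the limit, and the converse is indeed just Slutsky's theorem. (A minor simplification: since $\{X_n\leq s_{i+1}\}\subseteq\{\tilde X_n\leq s_{i+1}\}$ by the ordering, one has directly $\Pb(\tilde X_n\leq s_{i+1},\,X_n>s_{i+1})=\Pb(\tilde X_n\leq s_{i+1})-\Pb(X_n\leq s_{i+1})$, which already dominates the summand without the extra $\varepsilon/2$ shift.)
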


Finally we come to the proof of Theorem~\ref{ThmAsymptSpaceTime}.
\begin{proof}[Proof of Theorem~\ref{ThmAsymptSpaceTime}]
Let us define the random variables
\begin{equation}
\Xi_k:=\widehat{X}_t(r_k,\theta_k)-\widehat{X}_t(r_k,0).
\end{equation}
such that
\begin{equation}\label{eq5.111}
\Pb\bigg(\bigcap_{k=1}^m\{\widehat{X}_t(r_k,\theta_k)\leq s_k\}\bigg) = \Pb\bigg(\bigcap_{k=1}^m\{\widehat{X}_t(r_k,0)+\Xi_k\leq s_k\}\bigg).
\end{equation}
The slow decorrelation result (Proposition~\ref{PropSlowDecorrelation}) implies $\Xi_k\to 0$ in probability as $t\to\infty$. Introducing $\e>0$ we can use inclusion-exclusion to decompose
\begin{equation}\begin{aligned}
	\eqref{eq5.111}&=\Pb\bigg(\bigcap_{k=1}^m\{\widehat{X}_t(r_k,0)+\Xi_k\leq s_k\}\cap\{|\Xi_k|\leq \e\}\bigg)+\sum_j \Pb\left(R_j\right).
\end{aligned}\end{equation}
The sum on the right hand side is finite and each $R_j$ satisfies \mbox{$R_j\subset\{|\Xi_k|> \e\}$} for at least one $k$, implying $\lim_{t\to\infty}\Pb(R_j)=0$.
Using \eqref{eqA*limit} leads to
\begin{equation}\begin{aligned}
	\limsup_{t\to\infty}\Pb\bigg(\bigcap_{k=1}^m\{\widehat{X}_t(r_k,\theta_k)\leq s_k\}\bigg) &\leq \Pb\bigg(\bigcap_{k=1}^m\{{\cal A}_*(r_k)\leq s_k+\e\}\bigg),\\
	\liminf_{t\to\infty}\Pb\bigg(\bigcap_{k=1}^m\{\widehat{X}_t(r_k,\theta_k)\leq s_k\}\bigg) &\geq \Pb\bigg(\bigcap_{k=1}^m\{{\cal A}_*(r_k)\leq s_k-\e\}\bigg).
\end{aligned}\end{equation}
\newpage
Finally, the continuity assumption allows us to take the limit $\e\to 0$ and obtain
\begin{equation}
\lim_{t\to\infty}\Pb\bigg(\bigcap_{k=1}^m\{\widehat{X}_t(r_k,\theta_k)\leq s_k\}\bigg) = \Pb\bigg(\bigcap_{k=1}^m\{{\cal A}_*(r_k)\leq s_k\}\bigg).
\end{equation}
\end{proof}

\backmatter
\bibliographystyle{plain}
\bibliography{Biblio}

\end{document}